\documentclass[11pt]{article}

\usepackage{dsfont}

\oddsidemargin=-0.1in \evensidemargin=-0.1in \topmargin=-.5in
\textheight=9in \textwidth=6.5in
\parindent=18pt

\usepackage{amsthm,amsfonts,amsmath,amssymb,epsfig,color,float,graphicx,verbatim, enumitem}
\usepackage{multirow}
\usepackage{algorithm}
\usepackage[noend]{algpseudocode}
\usepackage{enumitem}

\newif\ifhyper\IfFileExists{hyperref.sty}{\hypertrue}{\hyperfalse}
\hypertrue
\ifhyper\usepackage{hyperref}\fi

\usepackage{enumitem}

\usepackage{framed}
\usepackage{nicefrac}

\newtheorem{theorem}{Theorem}[section]

\newtheorem{cond}[theorem]{Condition}
\newtheorem{lemma}[theorem]{Lemma}
\newtheorem{informal theorem}[theorem]{Theorem (informal statement)}

\newtheorem{proposition}[theorem]{Proposition}

\newtheorem{claim}[theorem]{Claim}
\newtheorem{fact}[theorem]{Fact}

\theoremstyle{definition}
\newtheorem{definition}[theorem]{Definition}
\newcommand{\eqdef}{\stackrel{{\mathrm {\footnotesize def}}}{=}}

\newcommand{\bX}{\mathbf{X}}
\newcommand{\bY}{\mathbf{Y}}
\newcommand{\bx}{\mathbf{x}}
\newcommand{\by}{\mathbf{y}}

\newcommand{\p}{\mathbf{P}}
\newcommand{\q}{\mathbf{Q}}

\newcommand{\R}{\mathbb{R}}

\newcommand{\Z}{\mathbb{Z}}

\newcommand{\E}{\mathbf{E}}
\newcommand{\eps}{\epsilon}
\newcommand{\dtv}{d_{\mathrm{TV}}}

\renewcommand{\Pr}{\mathbf{Pr}}

\newcommand{\var}{\mathbf{Var}}

\newcommand{\bin}{\mathrm{Bin}}
\newcommand{\is}{\mathrm{IS}}

\newcommand{\D}{\mathcal{D}}

\newcommand{\wt}{\widetilde}
\newcommand{\wh}{\widehat}

\author{
Ilias Diakonikolas\thanks{Supported by NSF Medium Award CCF-2107079,
NSF Award CCF-1652862 (CAREER), a Sloan Research Fellowship, and
a DARPA Learning with Less Labels (LwLL) grant.}\\
University of Wisconsin Madison\\
{\tt ilias@cs.wisc.edu}\\
\and
Daniel M. Kane\thanks{Supported by NSF Medium Award CCF-2107547,
NSF Award CCF-1553288 (CAREER), a Sloan Research Fellowship, and a grant from CasperLabs.}\\
University of California, San Diego\\
{\tt dakane@cs.ucsd.edu}\\
\and
Yuxin Sun \thanks{Supported by NSF Award CCF-1652862 (CAREER).}\\
University of Wisconsin Madison\\
{\tt yxsun@cs.wisc.edu}\\
}

\title{Optimal SQ Lower Bounds for Robustly Learning\\
Discrete Product Distributions and Ising Models}

\begin{document}

\maketitle

\begin{abstract}
We establish optimal Statistical Query (SQ) lower bounds
for robustly learning certain families of discrete high-dimensional distributions. 
In particular, we show that no efficient SQ algorithm with access to an $\eps$-corrupted 
binary product distribution can learn its mean within $\ell_2$-error $o(\eps \sqrt{\log(1/\eps)})$. 
Similarly, we show that no efficient SQ algorithm with access to an $\eps$-corrupted 
ferromagnetic high-temperature Ising model can learn the model 
to total variation distance $o(\eps \log(1/\eps))$. 
Our SQ lower bounds match the error guarantees of known algorithms for these problems, 
providing evidence that current upper bounds for these tasks are best possible.
At the technical level, we develop a generic SQ lower bound for discrete
high-dimensional distributions starting from low-dimensional moment matching constructions
that we believe will find other applications. Additionally, we introduce new ideas
to analyze these moment-matching constructions for discrete univariate distributions.
\end{abstract}

\setcounter{page}{0}

\thispagestyle{empty}

\newpage

\section{Introduction}\label{ssec:intro}

\subsection{Background and Motivation} \label{ssec:background-results}

\paragraph{Robust Statistics and Information-Computation Tradeoffs}
We study high-dimensional learning in the presence of a constant
fraction of arbitrary outliers. Robust learning in high dimensions
has its roots in robust statistics, a branch of
statistics initiated in the 60s 
with the pioneering works of Tukey and Huber~\cite{Tukey60, Huber64}.
Early work 
developed minimax optimal estimators for various robust estimation tasks, albeit
with runtimes exponential in the dimension.
A recent line of work in computer science,
starting with~\cite{DKKLMS16, LaiRV16}, developed polynomial time
robust estimators for a range of high-dimensional statistical tasks.
Algorithmic high-dimensional robust statistics is by now a relatively mature field,
see, e.g.,~\cite{DK19-survey, DiakonikolasKKL21} for recent overviews.

The line of work in algorithmic robust statistics
established the existence of computationally efficient algorithms with
dimension-independent error guarantees for a range of high-dimensional
robust estimation tasks. In some instances,
these algorithms achieve the information-theoretically
optimal error (within constant factors). Alas, in several interesting settings,
there is a super-constant gap between the information-theoretic optimum
and what known efficient algorithms achieve.
This raises the following natural question:
{\em For a given high-dimensional robust estimation task,
is the information-theoretically optimal error achievable in polynomial time?}

In several high-dimensional statistical settings, there is strong evidence that
inherent resource tradeoffs exist. In robust statistics,
the study of such {\em information-computation tradeoffs}
was initiated in~\cite{DKS17-sq}, which established the first such lower bounds
in the  Statistical Query (SQ) model~\cite{Kearns:98}.
The methodology for proving such lower bounds introduced in~\cite{DKS17-sq}
applies to ``Gaussian-like'' distributions.
In particular, the general problem underlying that work --- known as
Non-Gaussian Component Analysis (NGCA)~\cite{JMLR:blanchard06a, TanV18, GoyalS19} ---
considers distributions that are distributed as a standard Gaussian
in all but one hidden direction. 
The methodology introduced in~\cite{DKS17-sq} has led to
SQ lower bounds for a range of statistical problems, 
including robust mean and covariance estimation~\cite{DKS17-sq},
robust sparse mean estimation~\cite{DKS17-sq},
adversarially robust learning~\cite{BPR18},
robust linear regression~\cite{DKS19}, list-decodable estimation~\cite{DKS18-list, DKPPS21},
learning simple neural networks~\cite{DiakonikolasKKZ20}, and robust supervised learning
in a variety of noise models~\cite{DKZ20, DiakonikolasKPZ21, DK20-Massart-hard, DKKTZ21-benign}.

Here we are interested in exploring information-computation tradeoffs for robustly learning
{\em discrete} high-dimensional distributions. The two concrete examples --- that were the main motivation
for this work --- are (1) the class of binary product distributions, and (2) the (more general) class of Ising models.
For both of these distribution classes, there are gaps between the information-theoretically optimal
error and the error that known polynomial-time algorithms can achieve.
Given the aforementioned prior work for Gaussian-like distributions~\cite{DKS17-sq},
it would be tempting to conjecture that these gaps are in fact inherent. In this work, we develop
the necessary methodology that allows us to prove such statements for discrete distributions,
and in particular for the aforementioned families.

Before we proceed, we give the necessary background on the SQ model and robust statistics.

\paragraph{Statistical Query (SQ) Model}
SQ algorithms are the class of algorithms
that are only allowed to query expectations of bounded functions
of the underlying distribution rather than directly access samples.
The SQ model was introduced by Kearns~\cite{Kearns:98} in the context of supervised learning
as a natural restriction of the PAC model~\cite{val84} and has been extensively studied
in learning theory. A recent line of work~\cite{FGR+13, FeldmanPV15, FeldmanGV17, Feldman17}
generalized the SQ framework for search problems over distributions.

The class of SQ algorithms is fairly broad:  a wide range of known algorithmic techniques
in machine learning are known to be implementable in the SQ model.
These include spectral techniques, moment and tensor methods,
local search, and many others
(see, e.g.,~\cite{Chu:2006, FGR+13, FeldmanGV17}).
A notable exception are learning algorithms using Gaussian elimination
(in particular for learning parities, see, e.g.,~\cite{BKW:03}).
Notably, \cite{BBHLS20} recently established a connection
between the SQ model and low-degree polynomial tests under certain assumptions.

\paragraph{Contamination Model}
We focus on the following contamination model,
where the adversary can corrupt the true distribution in total variation distance.

\begin{definition}[TV-contamination] \label{def:oblivious}
Given a parameter $0<\eps < 1/2$ and a distribution class $\mathcal{D}$, 
we say that a distribution $D'$ is an \emph{$\eps$-corrupted} version of a distribution 
$D\in\mathcal{D}$ if $\dtv(D,D')\leq \eps$.
\end{definition}

We will be interested in algorithms robust against this kind of contamination. 
In particular, we want algorithms that given sample access to a distribution $D'$ 
which is an $\eps$-corrupted version of some unknown distribution $D\in\mathcal{D}$, 
can approximate relevant parameters of the ``true'' distribution $D$. 
For such algorithms, one may want to consider the distribution $D'$ to be adversarially selected, 
perhaps in a way designed to fool the particular algorithm in question. 
We also note that several algorithms in robust statistics can be shown to succeed
in the presence of even stronger contamination models, 
such as the strong contamination model, 
where the adversary can inspect the clean samples drawn and adaptively choose 
which samples to corrupt and how. However, these stronger models are harder 
to formalize for SQ algorithms where our lower bounds will apply.

\subsection{Problems of Interest and Our Results} \label{ssec:results}

With this background, we are ready to summarize prior algorithmic work on the two problems
of interest and informally state our contributions.

\paragraph{Robust Mean Estimation for a Binary Product Distribution}
A binary product distribution is a distribution over $\{0, 1\}^M$ whose coordinates are independent.
We consider the algorithmic problem of computing an approximation 
to the mean vector $\mu_P$ of a binary product distribution $P$, in $\ell_2$-norm, 
given access to a set of samples from an $\eps$-corrupted version of $P$.
\cite{DKKLMS16} gave the first efficient algorithm for this problem that 
outputs an estimate $\wh{\mu}$ such that
with high probability $\|\wh{\mu} - \mu_P\| = O(\eps \sqrt{\log(1/\eps)})$. 
Information-theoretically, it is possible to approximate $\mu_P$ within $\ell_2$ error $\Theta(\eps)$.
Our first main result shows that this gap is inherent for SQ algorithms 
(see Theorem~\ref{thm:sq-bin-testing} for a detailed statement).

\begin{theorem}[SQ Lower Bound for Binary Products, Informal] \label{thm:bp-informal}
Any SQ algorithm that robustly learns the mean of a binary product distribution over $\{0, 1\}^M$, given access to 
an $\eps$-corruption, within $\ell_2$-error $o(\eps \sqrt{\log(1/\eps)})$ either requires at least $2^{M^{\Omega(1)}}$
many statistical queries or must make a query of accuracy inverse super-polynomial in $M$.
\end{theorem}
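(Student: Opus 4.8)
The plan is to adapt the Statistical Query lower bound paradigm of~\cite{DKS17-sq} from ``Gaussian-like'' distributions to discrete product distributions. Recall that it suffices to produce a reference distribution $D_0$ over $\{0,1\}^M$ and a family $\{D_v\}_{v\in V}$ of distributions over $\{0,1\}^M$ with: (i) each $D_v$ an $\eps$-corruption of a binary product $P_v$; (ii) the means $\mu_{P_v}$ pairwise $\Omega(\eps\sqrt{\log(1/\eps)})$-separated in $\ell_2$, each $\Omega(\eps\sqrt{\log(1/\eps)})$-far from $\mu_{D_0}$; and (iii) $|V|$ a large packing number, the pairwise correlations $\langle D_v/D_0 - 1,\, D_{v'}/D_0 - 1\rangle_{D_0}$ (for $v\neq v'$) small, and $\chi^2(D_v, D_0)$ bounded. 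A standard statistical-dimension argument then forces any SQ algorithm whose output determines $\mu_{P_v}$ to within $o(\eps\sqrt{\log(1/\eps)})$ to distinguish $D_v$ from $D_0$, hence to use $2^{M^{\Omega(1)}}$ queries or a query of inverse-super-polynomial accuracy; optimizing the parameters against one another yields the stated dichotomy.

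The technical backbone is a generic reduction of the family construction to a \emph{low-dimensional moment-matching gadget}, which here is essentially one-dimensional. Take $D_0 = U_M$, the uniform distribution over $\{0,1\}^M$, and let $v$ range over a packing $V$ of near-orthogonal ``flat'' directions --- e.g.\ $v = M^{-1/2}\sigma$ with $\sigma\in\{\pm1\}^M$, chosen with $|\langle v,v'\rangle|$ small for $v\neq v'$, of which exponentially many exist. For such a $v$, the law of $\langle v,x\rangle$ under $U_M$ is a recentered, rescaled $\bin(M,1/2)$ --- a discrete distribution that behaves like $\calN(0,1/4)$ on the relevant scales --- and the binary product $P_v$ with mean $\tfrac12\mathbf 1 + \delta v$, for $\delta = \Theta(\eps\sqrt{\log(1/\eps)})$, is \emph{exactly} a reweighting of $U_M$ by a function of $\langle v,x\rangle$, since $\log(P_v/U_M)$ is an affine function of $\langle v,x\rangle$. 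The gadget I want is a reweighting function $h$ such that the reweighted law of $\langle v,x\rangle$ matches the first $m$ moments ($m = \Theta(\log(1/\eps))$) of its $U_M$-law and is $\eps$-close to its $P_v$-law; then $D_v := U_M\cdot h(\langle v,x\rangle)$ is an $\eps$-corruption of $P_v$ (the $M$-dimensional total variation distance collapses to the one-dimensional one), $\|\mu_{P_v} - \mu_{D_0}\|_2 = \delta$, and the $\mu_{P_v}$ are pairwise separated because the $v$ are. The correlation is the usual Hermite computation: $\langle D_v/U_M - 1, D_{v'}/U_M - 1\rangle_{U_M} = \E_{U_M}[(h(\langle v,x\rangle)-1)(h(\langle v',x\rangle)-1)]$, and since $(\langle v,x\rangle,\langle v',x\rangle)$ is approximately bivariate Gaussian with correlation $\langle v,v'\rangle$ and $h-1$ has vanishing low-degree coefficients (it matches $m$ moments), this is at most $|\langle v,v'\rangle|^{m+1}\cdot\|h-1\|_2^2$, up to lower-order terms coming from the non-Gaussianity.

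It remains --- and this is the step I expect to be hardest --- to construct the gadget, i.e.\ to find $h$. Concretely: construct a distribution on $\{0,1,\dots,M\}$ (suitably rescaled) that matches the first $m$ moments of $\bin(M,1/2)$ yet is an $O(\eps)$-corruption of $\bin(M,\tfrac12 + \delta/\sqrt M)$. This is the discrete analogue of the Gaussian moment-matching construction behind robust-mean lower bounds, and it works for the same reason: the binomial has Gaussian-like tails, the mean shift to be undone is $\delta\sqrt M = \Theta(\eps\sqrt M\sqrt{\log(1/\eps)})$ --- only $\Theta(\sqrt{\log(1/\eps)})$ standard deviations --- and at that distance from the mean the binomial still places $\eps^{O(1)}$ mass; so one deletes an $\Theta(\eps)$ fraction of mass from the far tail and redistributes it to cancel both the mean shift and the induced perturbations of the next $m-1$ moments, keeping the result a nonnegative measure. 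Doing this rigorously requires genuinely discrete tools in place of Gaussian density estimates --- two-sided, quantitatively sharp binomial tail and ratio bounds, together with a feasibility argument (via LP/Chebyshev-system duality or an explicit correction supported on $O(m)$ points of the far tail) showing the moment-correcting redistribution never drives a probability negative --- and this is the ``new idea for analyzing moment-matching constructions for discrete univariate distributions'' promised in the introduction. Once the constants are balanced so that the deleted tail mass is $\Theta(\eps)$ and $m = \Theta(\log(1/\eps))$, feeding the gadget through the generic reduction and optimizing the packing size against the query tolerance proves the theorem.
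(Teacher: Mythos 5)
Your high-level scaffolding matches the paper's (1D moment-matching gadget, embedding into $\{0,1\}^M$, near-orthogonal family, testing-to-learning reduction), but both of your core technical steps are either different from the paper's or not actually established, and one of them has a real hole as stated. The bigger issue is your correlation lemma. You embed via full-support directions $v = M^{-1/2}\sigma$, $\sigma\in\{\pm1\}^M$, and claim $\langle D_v/U_M-1,\,D_{v'}/U_M-1\rangle_{U_M}\lesssim |\langle v,v'\rangle|^{m+1}\|h-1\|_2^2$ ``up to lower-order terms coming from the non-Gaussianity.'' This does not follow from CLT-type reasoning: Berry--Esseen gives only $O(M^{-1/2})$ control on how far $(\langle v,x\rangle,\langle v',x\rangle)$ is from bivariate Gaussian, and that error term \emph{dominates} the targeted $|\langle v,v'\rangle|^{m+1}\approx M^{-(m+1)(1/2-\alpha)}$ for every $m\ge1$. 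You cannot dismiss the non-Gaussian error as lower order; it is higher order. The paper's answer to this is precisely why it uses the \emph{hidden-junta} embedding $\p^A_S$ over a subset $S$ of size $m\ll M$ rather than full directions: with a junta, $\widehat{\p^A_S}(T)$ vanishes for $T\not\subseteq S$, so the inner product collapses \emph{exactly} to a sum over $T\subseteq S\cap S'$, and Lemma~\ref{lem:cor-disc} gives the clean bound $(|S\cap S'|/m)^{k+1}\chi^2(A,\bin(m,1/2))+k\nu^2$ by comparing $\binom{|S\cap S'|}{t}/\binom{m}{t}^2$ to $1/\binom{m}{t}$. One \emph{could} in principle push your embedding through by expanding $h(\langle v,x\rangle)-1$ in characters (each coefficient is $\sigma^T c_{|T|}$) and reducing the correlation to $\sum_{j>m}c_j^2\mathcal K_j(\Delta;M)$ where $\Delta$ is the Hamming distance between $\sigma,\sigma'$, then bounding $|\mathcal K_j(\Delta;M)|/\binom{M}{j}$ near $\Delta\approx M/2$; but you have not done this, and the bivariate-Gaussian heuristic you supply in its place is not a substitute for it.

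The second gap is the 1D gadget. You propose deleting $\Theta(\eps)$ mass from the far tail of $\bin(\cdot,1/2)$ and redistributing it, with feasibility established by LP/Chebyshev-system duality or an explicit point-mass correction in the far tail. The paper (Proposition~\ref{prop:bin-moment-matching}, Appendix~\ref{ssec:bin-moment-matching}) does something different: it \emph{starts} from the shifted binomial $\bin(m,1/2+\delta/\sqrt m)$ and \emph{adds} a degree-$k$ polynomial correction (discretized via $q(x)=\int_x^{x+1}p$) on the \emph{central bulk} $[(1/2-C)m,(1/2+C)m]$ with $C=\Theta(\sqrt{\log(1/\delta)/m})$, solving a linear system so the resulting measure matches the first $k$ Kravchuk moments of $\bin(m,1/2)$. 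The real work --- and what the paper advertises as its new idea for discrete moment matching --- is Theorem~\ref{thm:moment-matching} plus Lemma~\ref{lem:bin-norm-bound}: expressing $p$ in scaled Legendre polynomials, controlling the discretization error $\gamma_i$ via the mean value theorem and Fact~\ref{fact:legendre-poly}~(viii), and controlling the ``continuous'' term $\beta_i$ by Taylor-expanding $F_{m,x}(\delta/\sqrt m)-F_{m,x}(0)$ and exploiting sub-Gaussian moment bounds. Your tail-deletion sketch may or may not be realizable, but as written it is a plan for a plan, and it is not the construction the paper proves works. In short: the structure of your reduction is right, but your correlation lemma has a quantitative gap that is fatal as stated, and your gadget is a different, unproved construction from the paper's Legendre-polynomial one.
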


\noindent Theorem~\ref{thm:bp-informal} shows that no SQ algorithm can robustly approximate the mean 
of a binary product distribution to error $o(\eps \sqrt{\log(1/\eps)})$ with a sub-exponential in $M^{\Omega(1)}$ queries, 
unless using queries of very small tolerance -- that would require super-polynomially many samples in $M$
to simulate. In that sense, Theorem~\ref{thm:bp-informal} is an information-computation tradeoff 
for robust mean estimation of a binary product distribution within the class of SQ algorithms.

\paragraph{Robustly Learning a Ferromagnetic High-Temperature Ising Model}
Given a symmetric matrix $(\theta_{ij})_{i,j\in[M]} \in \R_+^{M \times M}$ 
with zero diagonal, a ferromagnetic Ising model is a distribution over $\{\pm1\}^M$ with mass function
$P_\theta(x)=\frac{1}{Z(\theta)}\exp \big((1/2)\sum_{i,j\in[M]}{\theta_{ij}x_ix_j}\big)$,
where $Z(\theta)$ is a normalizing constant.
We say that an Ising model lies in the high-temperature regime
if there is a universal constant $0<\eta<1$ such that $\max_{i\in[M]}{\sum_{j\ne i}|\theta_{ij}|}\le1-\eta$.
Here we would like an algorithm that given samples from an $\eps$-corrupted version 
of an unknown ferromagnetic high-temperature Ising model $P$, 
approximates $P$ in total variational distance.
Prior work \cite{DiakonikolasKSS21} gave the first efficient algorithm for this problem that outputs an estimate
$\wh{P}$ such that with high probability $\dtv(\wh{P}, P) = O(\eps \log(1/\eps))$.
On the other hand, the information-theoretically optimal error in total variation distance is $\Theta(\eps)$.
Our second main result shows that this gap is inherent for SQ algorithms 
(see Theorem~\ref{thm:sq-ising-testing} for a detailed statement).

\begin{theorem}[SQ Lower Bound for Ising Models, Informal] \label{thm:ising-informal}
Any SQ algorithm that robustly learns a ferromagnetic high-temperature Ising Model over $\{\pm1\}^M$, 
given access to an $\eps$-corruption, within total variation distance $o(\eps \log(1/\eps))$ 
either requires at least $2^{M^{\Omega(1)}}$ many statistical queries or must make 
a query of accuracy inverse super-polynomial in $M$.
\end{theorem}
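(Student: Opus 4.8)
The plan is to derive this theorem as a corollary of the generic SQ lower bound for discrete high-dimensional distributions described above, by exhibiting an appropriate low-dimensional moment-matching ``gadget'' inside the family of ferromagnetic high-temperature Ising models. At a high level, the generic machinery takes as input (i) a reference distribution $P_0$ on a small ``block'' of coordinates, (ii) a distribution $A$ on the same block that agrees with $P_0$ on all moments of degree at most $k$ but is an $\eps$-perturbation in total variation distance of some distribution $B$ whose relevant parameters differ substantially from those of $P_0$, and (iii) a verification that the high-dimensional objects obtained by planting (hidden) copies of $A$ along a direction/subset, with $P_0$ elsewhere, remain inside the target distribution class. It outputs a large family $\{P_v\}$ of $\eps$-corrupted instances with pairwise bounded $\chi^2$-correlations relative to the reference, which yields the claimed SQ lower bound via the standard SQ-dimension argument.

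So first I would set up the reference Ising model: take $P_0$ to be a mildly ferromagnetic high-temperature Ising model on $[M]$ (for instance the independent uniform distribution, or a small ferromagnetic perturbation of it), chosen so that its restriction to any block of $m=\Theta(\log(1/\eps))$ coordinates has a clean, explicit moment structure. The core step is then the univariate (single-block) moment-matching construction: build a distribution $A$ on $\{\pm1\}^m$, obtained from $P_0$ restricted to the block by redistributing an $\eps$-fraction of its mass, such that (a) $A$ agrees with the restricted reference on every Fourier coefficient $\E[\prod_{i\in T}x_i]$ of degree $1\le |T|\le k$ with $k=\Theta(\log(1/\eps))$, and (b) $A$ is within total variation distance $\eps$ of the restriction of a genuine ferromagnetic high-temperature Ising model $B=B_v$ whose induced total variation distance from $P_0$ (after planting) is $\Omega(\eps\log(1/\eps))$. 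As in the analogous covariance-type constructions, the ``$\log$'' rather than ``$\sqrt{\log}$'' arises because the corruption must be concealed inside \emph{second-order} correlations --- the natural sufficient statistics of a (fieldless) Ising model --- and the largest correlation shift that $k$-moment matching can hide scales like $\eps$ times the $\approx k$-th moment ratio, which is $\Theta(\log(1/\eps))$ when $k=\Theta(\log(1/\eps))$. This is exactly where the new ideas for discrete univariate distributions enter: unlike in the Gaussian case, mass cannot be pushed to arbitrarily large values, so one must work on the bounded cube $\{\pm1\}^m$ and argue that the extremal moment-matching perturbation still achieves the $\eps\log(1/\eps)$ separation.

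Next I would invoke the generic reduction to lift this single-block gadget: planting hidden copies of $A$ along nearly-orthogonal directions $v$ produces distributions $P_v$, each an $\eps$-corruption of an honest ferromagnetic high-temperature Ising model $Q_v$, with the $Q_v$'s pairwise $\Omega(\eps\log(1/\eps))$-separated in total variation, and with pairwise $\chi^2$-correlations $\langle P_u-P_0, P_v-P_0\rangle_{P_0}$ decaying like $|\langle u,v\rangle|^{k+1}$ (up to lower-order factors) thanks to the degree-$\le k$ moment matching. A standard packing argument supplies a super-polynomially large family of directions with small pairwise correlations, and the SQ-dimension lower bound then gives that any SQ algorithm estimating the model to total variation distance $o(\eps\log(1/\eps))$ needs either $2^{M^{\Omega(1)}}$ queries or a query of inverse-super-polynomial accuracy. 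Finally I would discharge the Ising-specific side conditions: that every $Q_v$ (and $P_0$) is ferromagnetic, i.e.\ all $\theta_{ij}\ge 0$, and high-temperature, i.e.\ $\max_i\sum_{j\ne i}|\theta_{ij}|\le 1-\eta$; this constrains which planted correlations are admissible and is an extra requirement absent from the binary-product argument.

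The main obstacle is the univariate moment-matching construction for the Ising family. One must simultaneously (1) match roughly $\log(1/\eps)$ moments using a perturbation of total variation mass only $\eps$ on the bounded domain $\{\pm1\}^m$, (2) ensure the perturbation corresponds, after planting, to a legitimate \emph{ferromagnetic} and \emph{high-temperature} Ising model rather than an arbitrary distribution, and (3) extract from it a total variation separation of the full models of order $\eps\log(1/\eps)$ --- tracking this $\log$ factor precisely, and showing it is the best the construction can achieve, is the delicate new part of the analysis. Everything downstream (the lifting and the conversion to an SQ lower bound) is then a direct application of the generic framework.
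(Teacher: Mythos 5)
Your high-level framework is the right one and matches the paper's: reduce to the generic discrete SQ-dimension machinery by planting a moment-matching gadget on a hidden block $S$ (uniform off $S$), pack super-polynomially many nearly-disjoint blocks, and invoke the SQ-dimension lemma; and you correctly identify the Ising-specific side conditions (ferromagnetic, high-temperature) and correctly intuit that the $\log(1/\eps)$ rather than $\sqrt{\log(1/\eps)}$ factor arises from the corruption being hidden in the quadratic (second-order) sufficient statistics. However, there are two concrete gaps.

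First, you leave open how to actually satisfy the moment-matching conditions. As stated, you ask for a distribution $A$ on $\{\pm 1\}^m$ agreeing with the reference on \emph{every} Fourier coefficient $\E[\prod_{i\in T} x_i]$ with $1\le|T|\le k$ --- this is $\sum_{j\le k}\binom{m}{j}$ constraints, and you give no mechanism for producing such an $A$ that is simultaneously a small perturbation of (the block restriction of) a genuine Ising model. The paper's key structural move, which you never mention, is to take $A$ to be a \emph{symmetric} (permutation-invariant) distribution on the block. Then $A$ is determined by a one-dimensional distribution on $\{0,1,\dots,m\}$ (the count of ones), and all Fourier conditions of a given degree $j$ collapse to a single Kravchuk-polynomial moment condition $\E[\mathcal{K}_j]=0$; the one-dimensional moment-matching is then carried out by a degree-$k$ polynomial perturbation of $\mathrm{IS}(m,\delta/m)$ on a central interval, with the coefficients controlled via Legendre polynomials (Theorem~\ref{thm:moment-matching} and Proposition~\ref{prop:ising-moment-matching}). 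Without this reduction the construction is not obviously feasible, so this is a genuine missing idea rather than a suppressed detail.

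Second, your parameter choice $m=\Theta(\log(1/\eps))$ does not produce the claimed lower bound. The number of nearly-orthogonal hidden subsets obtainable is $2^{\Theta(\sqrt{m})}$, so with $m=\Theta(\log(1/\eps))$ you get only $\mathrm{poly}(1/\eps)$ test distributions --- polynomial, not super-polynomial, in $M$. The paper instead lets $m$ (and hence $M=3m^{5/4}$) grow, subject to the \emph{lower} bound $m\gtrsim(\log(1/\eps))^3$ (not $\log(1/\eps)$), which is what the discretization error analysis requires; the query lower bound is then $2^{\Omega(M^{2/5})}M^{-(k+1)/5}$, which is $2^{M^{\Omega(1)}}$ for large $M$. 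So $m$ must scale with $M$, not with $\eps$, and the exponent in the $\eps$-dependent floor is $3$, not $1$.
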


\noindent Similarly, Theorem~\ref{thm:ising-informal} is an information-computation tradeoff 
for robust learning of an Ising model within the class of SQ algorithms.
In summary, for both of these problems, we show that known algorithms are essentially optimal
within the class of Statistical Query (SQ) algorithms.

\medskip

In addition to the aforementioned concrete applications 
(Theorems~\ref{thm:bp-informal} and~\ref{thm:ising-informal}), 
we develop a novel generic SQ lower bound construction
for discrete structured high-dimensional distributions (Proposition~\ref{prop:gen-sq-prop}) 
that we believe will find other applications.

\subsection{Technical Overview}\label{ssec:techniques}
Here we provide an outline of our approach and techniques. 
To prove our SQ lower bound in the discrete setting,
we need to develop a novel generic discrete SQ lower bound machinery
for distributions over $\{0,1\}^M$. At a high level, our construction 
resembles the lower bound construction of \cite{DFTWW15} (which applies in the context of supervised learning), 
adapting several ideas of \cite{DKS17-sq} from a Gaussian version of this problem.

In general, establishing an SQ lower bound for learning distributions in some class 
essentially boils down to proving lower bounds for the corresponding SQ dimension~\cite{FGR+13}. 
In our case, this amounts to finding large families of $\eps$-corrupted 
binary product distributions or $\eps$-corrupted Ising models 
that have pairwise small chi-squared inner product 
with respect to some given base distribution. 
We will select as a base distribution 
the uniform distribution over the hypercube.

To construct these distributions, we will adapt and generalize the techniques of \cite{DFTWW15}. 
In particular, we aim to find a single distribution $D_0$ of the appropriate type over $\{0,1\}^m$, 
for some $m$ substantially smaller than $M$, so that $D_0$'s low-degree Fourier coefficients vanish. 
One can then use $D_0$ to obtain many different distributions over $\{0,1\}^M$ 
by embedding it over some subset (chosen in one of many different ways) 
of the coordinates, and using the uniform distribution over the remaining coordinates. 
One can show (see Lemma \ref{lem:cor-disc}) that this allows one to produce 
many nearly orthogonal distributions.

This leaves us with the task of producing an appropriate distribution $D_0$. 
To achieve this, we take inspiration from the Gaussian regime \cite{DKS17-sq}. 
In particular, we simplify matters by considering only \emph{symmetric} distributions $D_0$. 
This means that $D_0$ is determined by a {\em one-dimensional} distribution $A$ --- 
specifically, the distribution over the sum of the coordinates of $D_0$. 
This distribution $A$ must be close in total variation distance to an appropriate one-dimensional 
version of either a binary product distribution or an Ising model, 
and must match several of its low-degree moments with the Binomial distribution.

In order to construct these one-dimensional distributions, 
we again borrow ideas from \cite{DKS17-sq}. 
We want to obtain a distribution $A$ close to some other distribution $B$ that matches its low-degree 
moments with the binomial. We will achieve this by starting with the distribution $B$ 
and modifying its probability mass function (pmf) over some appropriately chosen interval $I$. 
In particular, if we modify it by a degree-$k$ polynomial $p$ over $I$, 
there will be a unique choice of this polynomial that gives us some specified first $k$ moments. 
To establish correctness, we need to verify that the resulting polynomial $p$ is not too large 
(both to ensure that the resulting pmf is non-negative 
and to ensure that $A$ and $B$ are close in total variation distance). 
This can be shown via an explicit analysis 
involving Legendre polynomials (as is done by \cite{DKS17-sq}, in the continuous case) 
along with additional technical work required to show that the change 
to the discrete setting does not significantly affect things.

\section{Preliminaries} \label{sec:prelims}

\paragraph{Notation} For $n \in \Z_+$, we denote $[n] \eqdef \{1,\ldots,n\}$.
For two distributions $p,q$ over a probability space $\Omega$,
let $\dtv(p,q)=\sup_{S\subseteq\Omega}|p(S)-q(S)|$
denote the total variation distance between $p$ and $q$.
We use $\Pr[\mathcal{E}]$ and $\mathbb{I}[\mathcal{E}]$ for
the probability and the indicator of event $\mathcal{E}$.
For a real random variable $X$, we use $\E[X],\var[X]$ to denote the expectation
and variance of $X$, respectively.
For $n\in\Z_+$ and $0\le p\le1$, we use $\bin(n,p)$
to denote the Binomial distribution with parameters $n$ and $p$.

\paragraph{Properties of Legendre Polynomials} We record some properties of Legendre polynomials that we will need throughout this paper.
\begin{fact}[\cite{Szego:39}]\label{fact:legendre-poly}
The Legendre polynomials, $P_i(x)$, for $i\in\mathbb{Z}_+$, satisfy the following properties:
(i) $P_i(x)$ is a degree $i$-polynomial with $P_0(x)=1$ and $P_1(x)=x$.
(ii) $\int_{-1}^{1}P_i(x)P_j(x)dx=\frac{2\delta_{ij}}{2i+1}$ for all $i,j\ge0$.
(iii) $|P_i(x)|\le1$ for all $|x|\le1$.
(iv) $P_i(x)=(-1)^iP_i(-x)$.
(v) $P_i(x)=2^{-i}\sum_{j=0}^{\lfloor i/2\rfloor}(-1)^j\binom{i}{j}\binom{2i-2j}{i}x^{i-2j}$.
(vi) $|P_i(x)|\le(4|x|)^i$ for all $|x|\ge1$.
(vii) $\int_{-1}^{1}|P_i(x)|dx\le O(1/\sqrt{i})$.
(viii) $|P'_i(x)|\le O(i^2)$ for all $|x|\le1$.
\end{fact}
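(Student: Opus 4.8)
The plan is to derive the eight properties from Rodrigues' formula
\[
P_i(x) \;=\; \frac{1}{2^i\,i!}\,\frac{d^i}{dx^i}\big((x^2-1)^i\big),
\]
together with the two standard alternative characterizations --- the generating function $\sum_{i\ge0}P_i(x)t^i=(1-2xt+t^2)^{-1/2}$ and Laplace's integral $P_i(x)=\frac1\pi\int_0^\pi(x+\sqrt{x^2-1}\cos\phi)^i\,d\phi$ --- all of which agree, as one checks by verifying each satisfies the recurrence $(i+1)P_{i+1}(x)=(2i+1)xP_i(x)-iP_{i-1}(x)$ with $P_0=1,P_1=x$. (These are all classical; see~\cite{Szego:39}.) The purely algebraic claims fall straight out of Rodrigues' formula: for~(i), $(x^2-1)^i$ has degree $2i$ with leading coefficient $1$, so its $i$-th derivative has degree $i$, and direct evaluation gives $P_0=1$, $P_1=x$; for~(iv), $(x^2-1)^i$ is even and each of the $i$ differentiations reverses parity; for~(v), expand $(x^2-1)^i=\sum_{j}(-1)^j\binom{i}{j}x^{2i-2j}$ and differentiate term by term $i$ times, discarding the terms in which the exponent $2i-2j$ drops below $i$.

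For the orthogonality relation~(ii), assume $i\ge j$, substitute Rodrigues' formula for $P_i$, and integrate by parts $i$ times; all boundary terms vanish since $(x^2-1)^i$ has a zero of order $i$ at $\pm1$, leaving $\frac{(-1)^i}{2^i i!}\int_{-1}^1(x^2-1)^i\,\frac{d^i}{dx^i}P_j(x)\,dx$. This is $0$ when $j<i$ since $\deg P_j<i$; when $j=i$ the $i$-th derivative of $P_i$ is the constant $(2i)!/(2^i i!)$, so the integral becomes $\tfrac{(2i)!}{4^i(i!)^2}\int_{-1}^1(1-x^2)^i\,dx$, and a Beta-function evaluation of this last integral produces the normalization $\tfrac{2}{2i+1}$. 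Property~(vii) is then immediate via Cauchy--Schwarz against the constant $1$: $\int_{-1}^1|P_i(x)|\,dx\le\sqrt{2}\,\big(\int_{-1}^1P_i(x)^2\,dx\big)^{1/2}=2/\sqrt{2i+1}=O(1/\sqrt i)$. Property~(vi) is read off from~(v): for $|x|\ge1$ we have $|x|^{i-2j}\le|x|^i$, $\binom{2i-2j}{i}\le 2^{2i-2j}\le 2^{2i}$, and $\sum_j\binom{i}{j}\le 2^i$, whence $|P_i(x)|\le 2^{-i}\cdot2^{2i}\cdot2^i|x|^i=(4|x|)^i$.

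This leaves~(iii) and~(viii). For~(iii) --- the one place where an ingredient beyond Rodrigues' formula and elementary estimates is genuinely needed, and which I would regard as the crux --- the cleanest argument uses the generating function: for $x=\cos\phi$ write $1-2xt+t^2=(1-t\zeta)(1-t\bar\zeta)$ with $\zeta=e^{\mathrm{i}\phi}$, multiply the two series $(1-t\zeta)^{-1/2}=\sum_k a_kt^k\zeta^k$ and $(1-t\bar\zeta)^{-1/2}=\sum_k a_kt^k\bar\zeta^k$ with nonnegative coefficients $a_k=\binom{2k}{k}4^{-k}$, and compare coefficients of $t^i$ to obtain $P_i(\cos\phi)=\sum_{j+k=i}a_ja_k\zeta^{j-k}$; hence $|P_i(\cos\phi)|\le\sum_{j+k=i}a_ja_k=P_i(1)=1$, the last equality because $(1-2t+t^2)^{-1/2}=(1-t)^{-1}=\sum_i t^i$. (Alternatively, in Laplace's integral, for $|x|\le1$ the factor $\sqrt{x^2-1}$ is purely imaginary of modulus $\sqrt{1-x^2}$, so $|x+\sqrt{x^2-1}\cos\phi|=\sqrt{x^2+(1-x^2)\cos^2\phi}\le1$, giving $|P_i(x)|\le\frac1\pi\int_0^\pi1\,d\phi=1$.) Finally~(viii) follows from~(iii) by the Markov brothers' inequality: any real degree-$i$ polynomial with sup-norm at most $1$ on $[-1,1]$ has derivative of sup-norm at most $i^2$ there, and $|P_i|\le1$ on $[-1,1]$ by~(iii). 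Overall no step is genuinely difficult; the only real care is needed in the Beta-integral bookkeeping behind the constant $\tfrac{2}{2i+1}$ in~(ii) and in recognizing which of the three representations makes~(iii) painless.
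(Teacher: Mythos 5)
The paper does not prove this Fact; it cites it directly to Szeg\H{o}'s classical text on orthogonal polynomials, so there is no in-paper proof to compare against. Your derivation is correct and self-contained. Properties (i), (iv), (v) follow cleanly from Rodrigues' formula as you present them (and the coefficient bookkeeping in (v), identifying $\frac{1}{i!}\frac{(2i-2j)!}{(i-2j)!}$ with $\binom{2i-2j}{i}$, checks out); (ii) is the standard integration-by-parts argument with the Beta-function normalization; (vii) is an efficient Cauchy--Schwarz against $1$, giving $2/\sqrt{2i+1}=O(1/\sqrt{i})$ -- this does not recover the sharp asymptotic $\sim\sqrt{2/(\pi i)}$, but the order is all the paper uses; (vi) is read correctly off (v) via $\binom{2i-2j}{i}\le 2^{2i}$, $\sum_j\binom{i}{j}\le 2^i$, and $|x|^{i-2j}\le|x|^i$ for $|x|\ge 1$; (iii) is handled cleanly by either the generating-function factorization at $x=\cos\phi$ (using nonnegativity of $a_k=\binom{2k}{k}4^{-k}$ and $P_i(1)=1$) or Laplace's integral, both valid; and (viii) is exactly the Markov brothers' inequality applied with (iii). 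The one small remark worth making explicit: in the Laplace-integral proof of (iii) for $|x|<1$ the square root $\sqrt{x^2-1}$ must be interpreted as a fixed choice of the (purely imaginary) branch, but this is standard and your modulus computation $|x+\sqrt{x^2-1}\cos\phi|^2=x^2+(1-x^2)\cos^2\phi\le 1$ is correct regardless of which branch is chosen. Overall, a clean and appropriate choice of representations for each item.
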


\paragraph{Ising Models}
We recall basic facts about Ising models, which will be used throughout this paper.
\begin{definition}[Ising Model] \label{def:ising}
Given a symmetric matrix $(\theta_{ij})_{i,j\in[M]} \in \R^{M \times M}$ with zero diagonal,
the Ising model distribution $P_\theta$ is defined as:
$P_\theta(\bx)=\frac{1}{Z(\theta)}\exp \big((1/2)\sum_{i,j\in[M]}{\theta_{ij}x_ix_j}\big),\forall \bx\in\{\pm1\}^M$,
where the normalizing factor $Z(\theta)$ is called the partition function.
We call the matrix $(\theta_{ij})_{i,j\in[d]} \in \R^{M \times M}$ the interaction matrix. In addition, we say that $P_\theta$ is ferromagnetic if $\theta_{ij}\ge0,\forall i,j\in[M]$.
\end{definition}
The following Dobrushin's condition for Ising models is a classical assumption
needed to rule out certain pathological behaviors.
This condition is standard in various areas, including statistical physics, computational biology, machine learning,
and theoretical CS~\cite{kulske2003concentration, gotze2019higher, dagan2020estimating, adamczak2019note, gheissari2018concentration, marton2015logarithmic}.
\begin{definition}[Dobrushin's Condition]\label{def:high-temp}
Given an Ising model $P_{\theta}$ with interaction matrix $(\theta_{ij})_{i,j\in[M]}$,
we say that it satisfies Dobrushin's condition, or lies in the high temperature regime,
if there is a constant $0<\eta<1$ such that $\max_{i\in[M]}{\sum_{j\ne i}|\theta_{ij}|}\le1-\eta$.
\end{definition}

\paragraph{Statistical Query Algorithms}
We will use the framework of Statistical Query (SQ) algorithms for problems over distributions
introduced in~\cite{FGR+13}.
Before we get into the formal statement of our generic discrete SQ lower bound, 
we formulate it as a decision problem as follows:

\begin{definition}[Decision/Testing Problem over Distributions]\label{def:decision}
Let $D$ be a distribution and $\D$ be a family of distributions over $\R^M$. We denote by $\mathcal{B}(\mathcal{D},D)$ the decision (or hypothesis testing) problem in which the input distribution $D'$ is promised to satisfy either (a) $D'=D$ or (b) $D'\in\mathcal{D}$, and the goal of the algorithm is to distinguish between these two cases.
\end{definition}

We define SQ algorithms as algorithms that do not have direct access to samples from the distribution,
but instead have access to an SQ oracle. We consider the following standard oracle.
\begin{definition}[$\mathrm{STAT}$ Oracle]\label{def:stat}
Let $D$ be a distribution on $\R^M$. A \emph{Statistical Query (SQ)} is a bounded function $f:\R^M\to[-1,1]$.
For $\tau>0$, the $\mathrm{STAT}(\tau)$ oracle responds to the query $f$ with a value $v$ such that $|v-\E_{X\sim D}[f(X)]|\le\tau$.
We call $\tau$ the \emph{tolerance} of the statistical query.
A \emph{Statistical Query (SQ) algorithm} is an algorithm whose objective is to learn some information about an unknown distribution $D$ by making adaptive calls to the corresponding $\mathrm{STAT}(\tau)$ oracle.
\end{definition}

To define the SQ dimension, we need
the following definition.
\begin{definition}[Pairwise Correlation] \label{def:pc}
The pairwise correlation of two distributions with probability mass functions
$D_1, D_2 : \{0,1\}^M \to \R_+$ with respect to a distribution with mass $D:\{0,1\}^M \to \R_+$,
where the support of $D$ contains the supports of $D_1$ and $D_2$,
is defined as $$\chi_{D}(D_1, D_2) + 1 \eqdef \sum_{x\in\{0,1\}^M} D_1(x) D_2(x)/D(x) \;.$$
We say that a set of $s$ distributions $\mathcal{D} = \{D_1, \ldots , D_s \}$ over $\{0,1\}^M$
is $(\gamma, \beta)$-correlated relative to a distribution $D$ if
$|\chi_D(D_i, D_j)| \leq \gamma$ for all $i \neq j$, and $|\chi_D(D_i, D_j)| \leq \beta$ for $i=j$.
\end{definition}

We are now ready to define the notion of SQ dimension.

\begin{definition}[SQ Dimension] \label{def:sq-dim}
For $\gamma ,\beta> 0$, a decision problem $\mathcal{B}(\mathcal{D},D)$, 
where $D$ is fixed and $\mathcal{D}$ is a family of distributions over $\{0,1\}^M$,
let $s$ be the maximum integer such that there exists
$\mathcal{D}_D \subseteq \D$ such that $\D_D$ is $(\gamma,\beta)$-correlated
relative to $D$ and $|\D_D|\ge s$.
We define the {\em Statistical Query dimension} with pairwise correlations $(\gamma, \beta)$
of $\mathcal{B}$ to be $s$ and denote it by $\mathrm{SD}(\mathcal{B},\gamma,\beta)$.
\end{definition}

\noindent The connection between SQ dimension and lower bounds is captured
by the following lemma.

\begin{lemma}[\cite{FGR+13}] \label{lem:sq-from-pairwise}
Let $\mathcal{B}(\D,D)$ be a decision problem, where $D$ is the reference distribution and $\D$ is a class of distributions over $\R^M$. For $\gamma, \beta >0$, let $s= \mathrm{SD}(\mathcal{B}, \gamma, \beta)$. Any SQ algorithm that solves $\mathcal{B}$ with probability at least $2/3$ requires at least $s \cdot \gamma /\beta$ queries to the
$\mathrm{STAT}(\sqrt{2\gamma})$ oracles.
\end{lemma}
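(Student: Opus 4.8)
The plan is to prove the lemma by the standard ``indistinguishability'' argument for SQ lower bounds, carried out in the Hilbert space $L^2(D)$. Fix a maximum $(\gamma,\beta)$-correlated family $\mathcal{D}_D=\{D_1,\dots,D_s\}\subseteq\mathcal{D}$, and for each $i$ set $L_i(x)\eqdef D_i(x)/D(x)-1$, viewed as an element of $L^2(D)$. Then $\E_{X\sim D}[L_i(X)]=0$ and $\E_{X\sim D}[L_i(X)L_j(X)]=\chi_D(D_i,D_j)$, so the correlation hypothesis says exactly that $|\E_D[L_iL_j]|\le\gamma$ for $i\neq j$ and $\E_D[L_i^2]\le\beta$: the $L_i$ form a near-orthogonal system. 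For a query $f:\R^M\to[-1,1]$, let $g\eqdef f-\E_D[f]$; then $\E_D[g^2]=\var_D[f]\le1$, and the key identity is $\E_{X\sim D_i}[f(X)]-\E_{X\sim D}[f(X)]=\E_D[gL_i]$, which recasts ``how well does the query $f$ distinguish $D_i$ from $D$'' as an inner product against this near-orthogonal system.

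The technical core is the counting bound: for every query $f$, the index set $I_f\eqdef\{\,i:|\E_{D_i}[f]-\E_D[f]|>\sqrt{2\gamma}\,\}$ satisfies $|I_f|\le\beta/\gamma$. Indeed, after replacing some of the $L_i$ by $-L_i$ (which leaves all the correlation bounds intact) we may assume $\E_D[gL_i]>\sqrt{2\gamma}$ for every $i\in I_f$, whence $\E_D\!\big[g\sum_{i\in I_f}L_i\big]>\sqrt{2\gamma}\,|I_f|$. On the other hand, Cauchy--Schwarz together with near-orthogonality gives $\E_D\!\big[g\sum_{i\in I_f}L_i\big]^2\le\E_D[g^2]\cdot\E_D\!\big[\big(\sum_{i\in I_f}L_i\big)^2\big]\le 1\cdot\big(|I_f|\beta+|I_f|^2\gamma\big)$. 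Comparing the two estimates yields $2\gamma|I_f|^2\le|I_f|\beta+|I_f|^2\gamma$, i.e. $|I_f|\le\beta/\gamma$; this is exactly where the tolerance $\sqrt{2\gamma}$ and the quantity $\beta/\gamma$ enter.

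The rest is an adversary argument against the oracle. Let the oracle answer every query $f$ with the honest reference value $v_f\eqdef\E_D[f]$. This is always a legal $\mathrm{STAT}(\sqrt{2\gamma})$ response if the unknown distribution is $D$, and by the counting bound it is also legal if the unknown distribution is $D_i$ unless $i\in I_f$. Since this oracle strategy is a deterministic function of the query, fixing the algorithm's internal randomness fixes the entire transcript, hence fixes the query sequence $f_1,\dots,f_q$; along this path the algorithm ``rules out'' only the distributions in $\bigcup_{t\le q}I_{f_t}$, a set of size at most $q\beta/\gamma$. If $q<s\gamma/\beta$ this is fewer than $s$, so all but at most $q\beta/\gamma$ of the $D_i$ produce exactly the transcript the algorithm sees when the input is $D$; putting the prior that selects $D$ with probability $1/2$ and a uniformly random $D_i$ with probability $1/2$ and averaging over the algorithm's coins, the distinguishing advantage between cases (a) and (b) is $O(q\beta/(\gamma s))$, so success probability $2/3$ forces $q=\Omega(s\gamma/\beta)$ (the precise constant $s\gamma/\beta$ being the bookkeeping of \cite{FGR+13}).

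I expect the only delicate point to be the interface between adaptivity and the counting bound: a priori query $f_t$ depends on earlier answers, so ``at most $q\beta/\gamma$ distributions ruled out'' must hold along the realized adaptive path, not merely for a fixed list of queries. This is handled for free by making the oracle's response rule deterministic, which collapses the adaptivity --- once the algorithm's coins are fixed the transcript is determined --- so the genuinely quantitative ingredient remains the orthogonality estimate $|I_f|\le\beta/\gamma$, with everything else a union bound over the $q$ queries.
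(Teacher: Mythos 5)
The paper does not prove this lemma---it is imported verbatim from \cite{FGR+13} as a black box, so there is no in-paper proof to compare against. Your proposal reconstructs the standard argument from \cite{FGR+13}, and it is essentially correct. The key identity $\E_{D_i}[f]-\E_D[f]=\E_D[gL_i]$ with $g=f-\E_D[f]$, the centering giving $\E_D[g^2]=\var_D[f]\le1$, the sign-flipping trick so that all the relevant inner products are positive, and the Cauchy--Schwarz plus near-orthogonality estimate
\[
2\gamma|I_f|^2 < \E_D\Bigl[g\sum_{i\in I_f}L_i\Bigr]^2 \le \E_D[g^2]\cdot\E_D\Bigl[\bigl(\textstyle\sum_{i\in I_f}L_i\bigr)^2\Bigr] \le |I_f|\beta+|I_f|^2\gamma
\]
yielding $|I_f|<\beta/\gamma$ --- all of this is exactly the mechanism behind the cited result, and you correctly identify where the tolerance $\sqrt{2\gamma}$ and the ratio $\beta/\gamma$ come from. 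The adversary argument with the deterministic oracle returning $\E_D[f]$, which collapses adaptivity once the algorithm's coins are fixed, is also the right way to handle adaptive queries. The one place you are (admittedly) hand-waving is the last step: the clean ``$2/3$ success forces $q\ge s\gamma/\beta$'' requires a small amount of bookkeeping --- e.g.\ fixing the algorithm's coins $r$, letting $B_r=\bigcup_t I_{f_t(r)}$ with $|B_r|\le q\beta/\gamma$, averaging to find an $i^\ast$ with $\Pr_r[i^\ast\in B_r]\le q\beta/(\gamma s)$, and then observing that the algorithm errs on $D_{i^\ast}$ whenever $i^\ast\notin B_r$ and it would have answered ``$D$'' on the $D$-transcript, forcing $\Pr_r[i^\ast\in B_r]\ge1/3$ --- which gives $q\ge s\gamma/(3\beta)$ rather than the unadorned $s\gamma/\beta$ of the statement as written. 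You flagged this as ``the bookkeeping of \cite{FGR+13},'' which is fair; different write-ups of the lemma carry slightly different constants, and the paper here is not relying on the constant. So: correct proof, same underlying argument as the cited source, with the constant left slightly loose and acknowledged as such.
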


We note that the hypothesis testing problem of Definition~\ref{def:decision}
may in general be information theoretically hard. In particular, if some distribution
$D'\in\mathcal{D}$ is very close to the reference distribution $D$,
it will be hard to distinguish between $D'$ and $D$.
On the other hand, if $D'$ is far from the reference distribution $D$ in total variation distance
for any $D'\in\D$, then one can straightforwardly reduce the hypothesis testing problem
to the problem of learning an unknown $D'\in\D$ to small accuracy.
For completeness, we defer the formal statement and proof to Appendix~\ref{ssec:learning-to-testing}.

\section{Generic Discrete SQ Lower Bound Construction} \label{ssec:generic-discrete}
We start with some basic definitions.
\begin{definition}[Characters] \label{def:char}
For a subset $T\subseteq[M]$ and $\bx \in \{0, 1\}^M$, we denote $\chi_T(\bx) = (-1)^{\sum_{i \in T}x_i}$.
For a distribution $\p$ over $\{0,1\}^M$, let $\wh{\p}(T) = \E_{\bX\sim\p}[\chi_T(\bX)]$.
\end{definition}
We will denote by $U_M$ the uniform distribution over $\{0,1\}^M$.
By Plancherel's identity, we have the following fact about the chi-squared inner product
in the discrete setting.
\begin{fact} \label{fact:chi}
For distributions $\p, \q$ over $\{0,1\}^M$, we have that
$1+\chi_{U_M}(\p, \q) = \sum_{T\subseteq [M]} \wh{\p}(T) \wh{\q}(T).$
\end{fact}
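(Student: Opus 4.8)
The plan is to unfold both definitions and invoke the orthonormality of the characters $\{\chi_T\}_{T\subseteq[M]}$ with respect to the uniform measure $U_M$; this is just Plancherel's identity on the Boolean cube. First I would use that $U_M(\bx)=2^{-M}$ for every $\bx\in\{0,1\}^M$, so that Definition~\ref{def:pc} gives
\[
1+\chi_{U_M}(\p,\q)=\sum_{\bx\in\{0,1\}^M}\frac{\p(\bx)\q(\bx)}{U_M(\bx)}=2^M\sum_{\bx\in\{0,1\}^M}\p(\bx)\,\q(\bx)\,.
\]
It therefore suffices to show $2^M\sum_{\bx}\p(\bx)\q(\bx)=\sum_{T\subseteq[M]}\wh{\p}(T)\wh{\q}(T)$.

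Next I would record the two standard facts about characters on $\{0,1\}^M$. First, orthogonality: $\sum_{\bx\in\{0,1\}^M}\chi_T(\bx)\chi_S(\bx)=2^M\,\Ind[T=S]$, which follows by factoring the sum over the coordinates (each coordinate in the symmetric difference $T\triangle S$ contributes a factor $\sum_{x_i\in\{0,1\}}(-1)^{x_i}=0$). Second, the Fourier inversion formula $\p(\bx)=2^{-M}\sum_{T\subseteq[M]}\wh{\p}(T)\chi_T(\bx)$: since $\chi_T(\bx)=\pm1$, Definition~\ref{def:char} gives $\wh{\p}(T)=\sum_{\bx}\p(\bx)\chi_T(\bx)$, and substituting this into the right-hand side and applying orthogonality recovers $\p(\bx)$.

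Then I would plug the inversion formula for both $\p$ and $\q$ into $2^M\sum_{\bx}\p(\bx)\q(\bx)$, interchange the finite sums, and collapse the double sum over $T,S$ to the diagonal via orthogonality:
\[
2^M\sum_{\bx}\p(\bx)\q(\bx)=2^{-M}\sum_{T,S\subseteq[M]}\wh{\p}(T)\wh{\q}(S)\sum_{\bx}\chi_T(\bx)\chi_S(\bx)=\sum_{T\subseteq[M]}\wh{\p}(T)\wh{\q}(T)\,,
\]
which is exactly the claimed identity. There is no genuine obstacle here; the only things to watch are keeping the powers of $2^M$ and $2^{-M}$ straight and observing that all sums are finite, so the interchange of summation is unconditionally valid.
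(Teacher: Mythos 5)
Your proof is correct and is exactly the route the paper has in mind: the paper simply says the fact follows ``By Plancherel's identity,'' and your argument is the standard proof of Plancherel on the Boolean cube, unfolding $U_M(\bx)=2^{-M}$, expanding $\p$ and $\q$ in the character basis, and collapsing the double sum by orthogonality. No gaps; the bookkeeping of the $2^{\pm M}$ factors is handled correctly.
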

We will require the orthogonal polynomials under the Binomial distribution.
\begin{definition}[Kravchuk Polynomial~\cite{Szego:39}] \label{def:Krav}
For $k, m, x \in \Z_+$ with $0\le k,x\le m$, the Kravchuk polynomial $\mathcal{K}_k(x;m)$ is the univariate degree-$k$ polynomial
in $x$ defined by $\mathcal{K}_k(x;m) := \sum_{T\subseteq[m], |T|=k} \chi_T(\by)=\sum_{j=0}^k(-1)^j\binom{x}{j}\binom{m-x}{k-j}$,
where $\by$ has $x$ 1's and $m-x$ 0's.
\end{definition}
\begin{fact}[Orthogonality~\cite{Szego:39}]
Let $j,k,m\in\Z_+$. Then, $\E_{X\sim\bin(m,1/2)}[\mathcal{K}_j(X;m)\mathcal{K}_k(X;m)]=\mathbb{I}[j=k] \binom{m}{k}.$
In particular, if $k\ge1$, then $\E_{X\sim\bin(m,1/2)}[\mathcal{K}_k(X;m)]=0$.
\end{fact}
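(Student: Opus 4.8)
Although the statement is attributed to \cite{Szego:39}, it admits a short self-contained derivation, and that is the route I would present. The plan is to realize $\bin(m,1/2)$ as the law of the Hamming weight of a uniformly random point of the cube $\{0,1\}^m$, which reduces the claimed orthogonality to the elementary orthonormality of the Fourier characters $\chi_T$ from Definition~\ref{def:char} under the uniform distribution $U_m$.

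First I would note that if $\bY\sim U_m$ and $X=\sum_{i\in[m]}Y_i$, then $X\sim\bin(m,1/2)$. Moreover, for a fixed weight $x$ with $0\le x\le m$, any two points of $\{0,1\}^m$ of Hamming weight $x$ differ by a coordinate permutation, and the family $\{T\subseteq[m]:|T|=k\}$ is invariant under coordinate permutations; hence $\sum_{|T|=k}\chi_T(\by)$ depends on $\by$ only through $x$, so Definition~\ref{def:Krav} is consistent and $\mathcal{K}_k(X;m)=\sum_{T\subseteq[m],\,|T|=k}\chi_T(\bY)$ as random variables.

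Next I would expand the product of the two Kravchuk polynomials and move the expectation inside the finite sums:
\[
\E_{X\sim\bin(m,1/2)}\!\big[\mathcal{K}_j(X;m)\mathcal{K}_k(X;m)\big] \;=\; \sum_{|S|=j}\ \sum_{|T|=k}\ \E_{\bY\sim U_m}\!\big[\chi_S(\bY)\chi_T(\bY)\big] \;.
\]
Since $Y_i\in\{0,1\}$ gives $\chi_S(\bY)\chi_T(\bY)=\chi_{S\triangle T}(\bY)$, and $\E_{\bY\sim U_m}[\chi_R(\bY)]=\prod_{i\in R}\E[(-1)^{Y_i}]=\mathbb{I}[R=\emptyset]$, each inner expectation equals $\mathbb{I}[S=T]$, so the double sum collapses to $\mathbb{I}[j=k]\binom{m}{k}$, as claimed. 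The ``in particular'' part then follows by setting $j=0$: since $\mathcal{K}_0(x;m)=\chi_\emptyset(\by)=1$, the identity yields $\E_{X\sim\bin(m,1/2)}[\mathcal{K}_k(X;m)]=\mathbb{I}[k=0]\binom{m}{0}$, which vanishes for every $k\ge1$.

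There is no real obstacle in this argument; the only point that warrants an explicit sentence is the well-definedness of $\mathcal{K}_k(\cdot;m)$ as a function of the weight $x$ alone, which is immediate from the permutation symmetry noted above. An alternative route would substitute the explicit binomial-convolution formula for $\mathcal{K}_k$ from Definition~\ref{def:Krav} and evaluate $\sum_{x=0}^m\binom{m}{x}\mathcal{K}_j(x;m)\mathcal{K}_k(x;m)$ directly via Vandermonde-type identities, but the Fourier-analytic argument above avoids that bookkeeping entirely and is the one I would carry out.
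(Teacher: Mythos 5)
Your argument is correct. The paper itself gives no proof of this Fact, simply citing~\cite{Szego:39}; your Fourier-analytic derivation is a clean self-contained alternative, and it meshes well with the paper's own framing, since Definition~\ref{def:Krav} already presents $\mathcal{K}_k(x;m)$ as the character sum $\sum_{|T|=k}\chi_T(\by)$ and the surrounding machinery (Fact~\ref{fact:chi}, the Correlation Lemma) relies on exactly the same orthonormality of $\{\chi_T\}$ under $U_m$. Each step checks out: the identification of $\bin(m,1/2)$ with the Hamming weight of $\bY\sim U_m$, the permutation-invariance that makes $\mathcal{K}_k$ well-defined as a function of the weight, the character multiplication rule $\chi_S\chi_T=\chi_{S\triangle T}$, and the fact that $\E_{\bY\sim U_m}[\chi_R(\bY)]=\mathbb{I}[R=\emptyset]$, which collapses the double sum to $\mathbb{I}[j=k]\binom{m}{k}$. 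The ``in particular'' clause via $j=0$ and $\mathcal{K}_0\equiv 1$ is also fine.
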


Our basic technique for producing near-orthogonal distributions
over the hypercube takes inspiration from~\cite{DFTWW15}.
They show that if one can construct a distribution $D$ over a small number of coordinates
whose degree up-to-$k$ Fourier coefficients agree with the uniform distribution,
then by taking embeddings of $D$ into the hypercube as a junta can provide many orthogonal distributions.
This leaves us with finding our moment-matching distribution $D$. Our basic idea will be to make $D$
a symmetric distribution, as this will simplify things substantially due to the added symmetry.
Essentially, $D$ will be defined by some distribution $A$ on $\sum x_i$.
This distribution $A$ will need to nearly match the first $k$ moments with the Binomial distribution $\bin(m,1/2)$.

We now formally define the high-dimensional distribution family that is the basis
of our discrete SQ lower bound construction.
\begin{definition} [High-Dimensional Hidden Junta Distribution] \label{def:p-hidden-junta}
Let $m, M \in \Z_+$ with $m<M$.
For a distribution $A$ on $[m]\cup\{0\}$ with probability mass function (pmf) $A(x)$
and a subset $S\subseteq [M]$ with $|S|=m$, consider the probability distribution over $\{0,1\}^M$,
denoted by $\p^A_S$, such that for $\bX \sim \p^A_S$ the distribution $(X_i)_{i \not\in {S}}$ is the uniform distribution on its support and the distribution
$(X_i)_{i \in S}$
is symmetric with $\sum_{i\in S} X_i$ distributed according to $A$.
Specifically, $\p^A_S$ is given by the pmf
$$\p^A_S(\bx) = 2^{-M+m} A\left(\sum_{i\in S}x_i \right)\binom{m}{\sum_{i\in S}x_i}^{-1}.$$
\end{definition}

We now define the hypothesis testing problem
which will be used throughout this paper:
\begin{definition}[Hidden Junta Testing Problem]\label{def:testing}
Let $m,M\in\Z_+$ with $M>m$ and $A$ be a one-dimensional distribution over $[m]\cup\{0\}$. In the $(A,M)$-Hidden Junta Testing Problem,
one is given access to a distribution $D$ so that either $H_0$: $D=U_M$,
$H_1$: $D$ is given by $\p_S^A$ for some subset $S\subseteq[M]$ with $|S|=m$,
where $\p_S^A$ denotes the hidden junta distribution corresponding to $A$.
One is then asked to distinguish between $H_0$ and $H_1$.
\end{definition}
Note that this is just the hypothesis testing problem $\mathcal{B}(\D,D)$ with $D=U_M$ and $\D=\{\p_S^A\}$.
The following condition describes the approximate moment-matching property
of the desired distribution $A$ with the Binomial distribution.

\begin{cond}[Approximate Moment-Matching]\label{cond:moments-disc}
Let $\nu>0$ and $k,m \in \Z_+$ with $k \leq m$.
The distribution $A$ on $[m]\cup\{0\}$ satisfies
$|\E_{X\sim A}[\mathcal{K}_t(X;m)]|\leq \nu$, for all $1\leq t \leq k$.
\end{cond}
In particular, if $A$ exactly matches the first $k$ moments with $\bin(m,1/2)$,
then we will have that $\E_{X\sim A}[\mathcal{K}_t(X;m)]=\E_{X\sim\bin(m,1/2)}[\mathcal{K}_t(X;m)]=0$,
for all $1\le t\le k$.

In order to prove SQ lower bounds for the above testing problem, 
one needs to find many sets $S$ for which the corresponding $\p_S^A$ are nearly orthogonal. 
For this, we show that it suffices to find many subsets $S$ 
whose intersections are pairwise much smaller than $m$.
In particular, we prove that if $|S\cap S'|=o(m)$,
then the corresponding inner product will be sufficiently small.
This makes our technique somewhat reminiscent of~\cite{DKS17-sq},
which proves lower bounds in the Gaussian setting, where their hard distributions
are equal to some moment-matching distribution $A$ in a hidden direction $v$
and are standard Gaussian in the orthogonal directions. \cite{DKS17-sq} shows that if two such distributions
have hidden-directions $u$ and $v$, then the chi-squared inner product of these distributions
is on the order of $|u^Tv|^d$, where $d$ is the number of matching moments.
A significant difference with the Gaussian case here is in the way
we embed the one-dimensional distribution $A$ as a higher dimensional one.
Our main structural lemma for the discrete setting is the following:

\begin{lemma}[Correlation Lemma] \label{lem:cor-disc}
Let $k, m, M\in \Z_+$ with $k \leq m \leq M$.
If the distribution $A$ on $[m]\cup\{0\}$ satisfies Condition \ref{cond:moments-disc},
then for all $S,S' \subseteq [M]$ with $|S| = |S'| = m$, we have that
$|\chi_{U_M}(\p^A_S, \p^A_{S'})| \le (|S\cap S'|/m)^{k+1} \chi^2(A, \mathrm{Bin}(m,1/2)) + k\nu^2 \;.$
\end{lemma}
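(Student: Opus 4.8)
The plan is to expand $\chi_{U_M}(\p^A_S,\p^A_{S'})$ via the Fourier/character formula (Fact~\ref{fact:chi}) and then exploit the symmetry of the construction so that the only surviving terms are indexed by the cardinality of a character set restricted to $S\cap S'$. First I would compute the Fourier coefficients $\wh{\p^A_S}(T)$ for $T\subseteq[M]$. Since the coordinates outside $S$ are independent uniform, $\wh{\p^A_S}(T)=0$ unless $T\subseteq S$; and since $(X_i)_{i\in S}$ is exchangeable with $\sum_{i\in S}X_i\sim A$, for $T\subseteq S$ with $|T|=j$ one gets $\wh{\p^A_S}(T)=\E_{X\sim A}\big[\binom{m}{j}^{-1}\mathcal{K}_j(X;m)\big]$, because averaging $\chi_T(\bx)$ over a symmetric configuration with $x$ ones in $S$ yields $\mathcal{K}_j(x;m)/\binom{m}{j}$ by the defining identity of the Kravchuk polynomial. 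Write $a_j := \binom{m}{j}^{-1}\E_{X\sim A}[\mathcal{K}_j(X;m)]$; note $a_0=1$, and Condition~\ref{cond:moments-disc} says $|a_t|\le \nu\binom{m}{t}^{-1}$ for $1\le t\le k$.

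Next I would plug this into Fact~\ref{fact:chi}: $1+\chi_{U_M}(\p^A_S,\p^A_{S'})=\sum_{T\subseteq S\cap S'}\wh{\p^A_S}(T)\wh{\p^A_{S'}}(T)=\sum_{j=0}^{|S\cap S'|}\binom{|S\cap S'|}{j}\,a_j^2$, since only $T\subseteq S$ and $T\subseteq S'$ contribute, i.e.\ $T\subseteq S\cap S'$, and for fixed $|T|=j$ both Fourier coefficients equal $a_j$. Subtracting the $j=0$ term, $\chi_{U_M}(\p^A_S,\p^A_{S'})=\sum_{j=1}^{|S\cap S'|}\binom{|S\cap S'|}{j}a_j^2$. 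Now split this sum at $j=k$. For $1\le j\le k$, use $|a_j|\le\nu\binom{m}{j}^{-1}$ together with $\binom{|S\cap S'|}{j}\le\binom{m}{j}$ to bound each such term by $\nu^2\binom{m}{j}^{-1}\le\nu^2$, giving a total of at most $k\nu^2$. For $j\ge k+1$, I would compare against $\chi^2(A,\bin(m,1/2))$: by the same Fourier expansion applied with $S=S'$ (so $|S\cap S'|=m$) and using orthogonality of the Kravchuk polynomials under $\bin(m,1/2)$, one has $\chi^2(A,\bin(m,1/2))=\sum_{j=1}^m\binom{m}{j}a_j^2$. Hence $\sum_{j\ge k+1}\binom{|S\cap S'|}{j}a_j^2\le \max_{j\ge k+1}\frac{\binom{|S\cap S'|}{j}}{\binom{m}{j}}\cdot\sum_{j\ge k+1}\binom{m}{j}a_j^2\le (|S\cap S'|/m)^{k+1}\,\chi^2(A,\bin(m,1/2))$, where the last inequality uses the monotone bound $\binom{n}{j}/\binom{m}{j}=\prod_{i=0}^{j-1}\frac{n-i}{m-i}\le (n/m)^j\le (n/m)^{k+1}$ for $j\ge k+1$ and $n\le m$. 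Combining the two pieces and taking absolute values (the whole quantity is nonnegative, so the absolute value only matters if we want a clean statement for the negative-looking bound) yields exactly the claimed inequality.

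The routine parts are the Fourier computation of $\wh{\p^A_S}$ and the binomial-ratio estimate; the one genuinely delicate point is getting the ratio bound $\binom{|S\cap S'|}{j}/\binom{m}{j}\le(|S\cap S'|/m)^{k+1}$ to hold \emph{uniformly} for all $j$ in the tail — this needs $j\ge k+1$ so that the product of $j$ ratios, each $\le |S\cap S'|/m<1$, is at most the $(k+1)$-st power — and making sure the $j\le k$ terms are controlled purely by $\nu$ without any hidden dependence on $m$ (which is why pairing $|a_j|\le\nu\binom{m}{j}^{-1}$ with $\binom{|S\cap S'|}{j}\le\binom{m}{j}$ is the right move rather than expanding $\binom{|S\cap S'|}{j}$ crudely). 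I expect the main obstacle to be bookkeeping the normalization factors $\binom{m}{j}^{-1}$ consistently between the definition of $a_j$, Condition~\ref{cond:moments-disc}, and the chi-squared identity, since a single misplaced binomial coefficient breaks the bound.
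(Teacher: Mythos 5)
Your proof is correct and follows essentially the same route as the paper: compute the Fourier coefficients $\wh{\p^A_S}(T)$ via the Kravchuk identity and symmetry, express both $1+\chi_{U_M}(\p^A_S,\p^A_{S'})$ and $1+\chi^2(A,\bin(m,1/2))$ as binomial-weighted sums of the (squared) Kravchuk moments, split at $j=k$, bound the low terms by $k\nu^2$ via the moment-matching condition, and bound the tail by pulling out $\max_{j>k}\binom{|S\cap S'|}{j}/\binom{m}{j}\le(|S\cap S'|/m)^{k+1}$. The only cosmetic difference is a normalization convention — you absorb $\binom{m}{j}^{-1}$ into your definition of $a_j$ whereas the paper keeps $a_j=\E_{t\sim A}[\mathcal{K}_j(t;m)]$ — which you track correctly throughout.
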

\begin{proof}
By definition, we have that
\begin{align*}
1+\chi^2(\p_S^A,U_M)
&=2^M \sum_{\bx\in\{0,1\}^M} \left(2^{-M+m} A\left(\sum_{i\in S}x_i \right)
\binom{m}{\sum_{i\in S}x_i}^{-1}\right)^2\\
&=2^m \sum_{j=0}^{m}A(j)^2/\binom{m}{j}=1+\chi^2(A,\bin(m,1/2)) \;,
\end{align*}
where we let $j$ denote $\sum_{i\in S}x_i$ in the second equality.

Now we proceed via discrete Fourier analysis.
Note that by Definition~\ref{def:char}, $\wh{\p^A_S}(T)=\E_{\bX\sim\p_S^A}[\chi_T(\bX)]$,
which is: (i) 0 if $T\not \subseteq S$,
(ii) $a_j/ \binom{m}{j}$ if $T\subseteq S$ and $|T|=j$,
where $a_j = \E_{t\sim A}[\mathcal{K}_j(t;m)]$.
This is because symmetry implies
that for each $T\subseteq S$ with $|T|=j$
we have that $\wh{\p^A_S}(T)$ is the same. Furthermore,
$\E_{t\sim A}[\mathcal{K}_j(t;m)] =
\E_{\bX\sim\p^A_S}\left[ \sum_{T\subseteq S, |T|=j} \chi_T(\bX) \right]$.
From this, by Fact~\ref{fact:chi}, we have that
\begin{align*}
1+\chi^2(A, \mathrm{Bin}(m,1/2)) &=
1+\chi^2(\p^A_S, U_M)= 1+\chi_{U_M}(\p_S^A,\p_S^A)
=\sum_{T\subseteq S}\wh{\p^A_S}(T)^2\\
&=\sum_{t=0}^m \sum_{T\subseteq S, |T|=t} \left(a_t/\binom{m}{t}\right)^2
= 1+\sum_{t=1}^m a_t^2 /\binom{m}{t} \;,
\end{align*}
where the last equality follows from the fact $\mathcal{K}_0(t;m)=1$.
In addition, by Fact~\ref{fact:chi}, we have that
\begin{align*}
1+\chi_{U_M}(\p^A_S, \p^A_{S'})
&=\sum_{T\subseteq S\cap S'}\wh{\p_S^A}(T)\wh{\p_{S'}^A}(T)
= \sum_{t=0}^m \left|\{T: |T|=t, T\subseteq S\cap S'  \}\right|a_t^2 / \binom{m}{t}^2 \\
&= 1 + \sum_{t=1}^{m} \binom{|S\cap S'|}{t} a_t^2 / \binom{m}{t}^2  \;,
\end{align*}
where the last equality follows from the fact $\mathcal{K}_0(t;m)=1$.
By Condition~\ref{cond:moments-disc}, we have that
\begin{align*}
\sum_{t=1}^k \binom{|S\cap S'|}{t}a_t^2 / \binom{m}{t}^2 \le \sum_{t=1}^k a_t^2\le k\nu^2 \;.
\end{align*}
The sum over terms with $t\geq k$ is at most
\begin{align*}
&\quad\sum_{t=k+1}^{m} \binom{|S\cap S'|}{t}a_t^2 / \binom{m}{t}^2
\le\left (\sum_{t=1}^{m} a_t^2 / \binom{m}{t}\right) \max_{t > k} \binom{|S\cap S'|}{t}/\binom{m}{t}\\
&=\left(\sum_{t=1}^{m} a_t^2 / \binom{m}{t}\right) \max_{t > k}\left(\frac{|S\cap S'|(|S\cap S'|-1)\cdots(|S\cap S'|-t+1)}{m(m-1)\cdots(m-t+1)}\right)\\
&\le\left(\sum_{t=1}^{m} a_t^2 / \binom{m}{t}\right)\left(\frac{|S\cap S'|}{m}\right)\left(\frac{|S\cap S'|-1}{m-1}\right)\cdots\left(\frac{|S\cap S'|-k}{m-k}\right)\\&\le \chi^2(A, \mathrm{Bin}(m,1/2))\left(|S\cap S'|/m\right)^{k+1} \;.
\end{align*}
This completes the proof.
\end{proof}

We will additionally require the following simple fact (see Appendix~\ref{ssec:hyper-geo} for the proof).

\begin{claim}\label{clm:near-orth-vec-disc}
Let $m,M\in\mathbb{Z}_+$ with $m<M$.
For any $0<c<1/2$ and $M>2m^{1+c}$,
there exists a collection $\mathcal{C}$ of $2^{m^{1-2c}/4}$
subsets $S\subseteq[M]$ with $|S|=m$ such that
any pair $S,S'\in\mathcal{C}$, with $S\ne S'$, satisfies $|S\cap S'|<m^{1-c}$.
\end{claim}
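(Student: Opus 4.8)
The plan is to use the probabilistic method (equivalently, a greedy selection argument). Set $\ell \eqdef \lceil m^{1-c}\rceil$ and draw $N \eqdef 2\cdot 2^{m^{1-2c}/4}$ subsets $S_1,\dots,S_N\subseteq[M]$ of size $m$ independently and uniformly at random. For a fixed pair $i\ne j$, conditioning on $S_j$, the overlap $|S_i\cap S_j|$ is a hypergeometric random variable (draw $m$ elements without replacement from $[M]$, of which $m$ are ``marked''), with mean $\mu = m^2/M$. The hypothesis $M>2m^{1+c}$ gives $\mu < m^{1-c}/2 \le \ell/2$, so the threshold $\ell$ lies a constant factor above the mean.

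The key estimate is the upper-tail bound $\Pr[\,|S_i\cap S_j|\ge \ell\,]\le e^{-\Omega(m^{1-c})}$. The naive union bound over $\ell$-subsets of $S_j$ only yields $\binom{m}{\ell}(m/M)^\ell \le (em^2/(\ell M))^\ell = (e/2)^{m^{1-c}}$, which is useless since $e/2>1$; the point of the factor $2$ in $M>2m^{1+c}$ is realized only through the \emph{sharp} multiplicative Chernoff bound, which the hypergeometric distribution obeys just as the Binomial does. Writing $1+\delta \eqdef \ell/\mu \ge 2$, this bound gives $\Pr[\,|S_i\cap S_j|\ge\ell\,]\le \big(e^{\delta}/(1+\delta)^{1+\delta}\big)^{\mu} = \exp\!\big(\ell\,(\tfrac{\delta}{1+\delta}-\ln(1+\delta))\big)$, and since $t\mapsto \tfrac{t}{1+t}-\ln(1+t)$ is decreasing for $t\ge 1$, its value at $\delta\ge 1$ is at most $\tfrac12-\ln 2 < -\tfrac16$. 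Hence $\Pr[\,|S_i\cap S_j|\ge\ell\,]\le e^{-m^{1-c}/6}$, uniformly over all $M>2m^{1+c}$ (larger $M$ only makes the bound smaller).

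To conclude, the expected number of ``bad'' pairs $(i,j)$ with $|S_i\cap S_j|\ge\ell$ is at most $\binom{N}{2}e^{-m^{1-c}/6}\le N^2 e^{-m^{1-c}/6}$. Since $c<1/2$ forces $m^{1-c}\ge m^{1-2c}$, for $m$ larger than an absolute constant (the only regime relevant to the asymptotic SQ lower bound) this is at most $N/2 = 2^{m^{1-2c}/4}$. Fixing a realization with fewer than $2^{m^{1-2c}/4}$ bad pairs and deleting one set from each bad pair leaves a sub-collection $\mathcal{C}$ of size at least $N - N/2 = 2^{m^{1-2c}/4}$ in which every pair of distinct sets has intersection strictly below $\ell \le m^{1-c}$, as required. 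Equivalently one can construct $\mathcal{C}$ greedily: after choosing $t$ sets, the fraction of $m$-subsets of $[M]$ overlapping some chosen set in $\ge \ell$ coordinates is at most $t\,e^{-m^{1-c}/6}$, so the process runs for at least $e^{m^{1-c}/6}\ge 2^{m^{1-2c}/4}$ steps.

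The only real obstacle is quantitative: ensuring the tail exponent $\Omega(m^{1-c})$ beats $m^{1-2c}/4$. This is precisely where the constant $2$ in $M>2m^{1+c}$ enters (pushing the mean overlap below $\ell/2$, so that $1+\delta\ge 2$ and the Chernoff exponent is genuinely negative) and where $c<1/2$ is used (to guarantee $m^{1-c}\ge m^{1-2c}$). Everything else is routine: the hypergeometric tail bound and an alteration/union bound.
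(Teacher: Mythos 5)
Your argument is correct, and the overall template --- draw many uniformly random $m$-subsets and control pairwise intersections by hypergeometric concentration --- matches the paper's. Two technical choices differ, and one of your side remarks is inaccurate.

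The paper applies the additive Hoeffding bound for the hypergeometric distribution (Fact~\ref{fact:hyper-hoeffding}), $\Pr[X > np + t] \le \exp(-2t^2/n)$, with $n = m$, $np = m^2/M < m^{1-c}/2$, and $t = m^{1-c}/2$, giving $\Pr[|S\cap S'|\ge m^{1-c}] \le \exp(-m^{1-2c}/2)$. Drawing $N = 2^{m^{1-2c}/4}$ random subsets, a direct union bound over the fewer than $N^2 = 2^{m^{1-2c}/2}$ pairs gives failure probability at most $(2/e)^{m^{1-2c}/2} < 1$ for every $m\ge 1$: no alteration step, and no large-$m$ caveat. You used the multiplicative Chernoff bound instead, which yields the formally stronger tail $\exp(-m^{1-c}/6)$, followed by an alteration/deletion step. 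This works, but the constants in the alteration step ($\ln 2/4 \approx 0.173$ against $1/6 \approx 0.167$) are essentially tied, which is precisely why you had to fall back on ``$m$ larger than a constant'' --- and that constant in fact depends on $c$, since what saves you is $m^c$ eventually dominating $6\ln 2/4$. The paper's additive bound with a direct union bound is slack enough to avoid this entirely.

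Finally, the remark that ``the factor $2$ in $M>2m^{1+c}$ is realized \emph{only} through the sharp multiplicative Chernoff bound'' is not correct: the additive Hoeffding bound exploits that factor just as well (it is what pushes $np$ below $m^{1-c}/2$, leaving room $t = m^{1-c}/2$), and that is exactly the route the paper takes. What you correctly identified as inadequate is the zeroth-moment union bound over $\ell$-subsets, which does lose a factor of $e$; but ``first-moment fails'' does not force the multiplicative form of Chernoff over the additive one.
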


Combining the above, we obtain our generic discrete SQ hardness result:

\begin{proposition}[Generic Discrete SQ Hardness]\label{prop:gen-sq-prop}
Let $m,M\in\Z_+$ with $M>2m^{5/4}$.
Let $A$ be a distribution on $[m]\cup\{0\}$ satisfying
Condition~\ref{cond:moments-disc}. Let $\tau\ge m^{-(k+1)/4}\chi^2(A,\mathrm{Bin}(m,1/2))+k\nu^2$.
Any SQ algorithm that solves the testing problem of Definition~\ref{def:testing} with probability at least $2/3$
either makes queries of accuracy better than $\sqrt{2\tau}$
or makes at least $\frac{2^{\Omega(\sqrt{m})}\tau}{\chi^2(A,\mathrm{Bin}(m,1/2))}$ statistical queries.
\end{proposition}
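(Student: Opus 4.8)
The plan is to combine the Correlation Lemma (Lemma~\ref{lem:cor-disc}) with the packing bound of Claim~\ref{clm:near-orth-vec-disc} to produce a large family of nearly orthogonal hidden-junta distributions, and then invoke the generic SQ-dimension machinery (Lemma~\ref{lem:sq-from-pairwise}). Concretely, take $c = 1/4$ in Claim~\ref{clm:near-orth-vec-disc}; since $M > 2m^{5/4} = 2m^{1+c}$, we obtain a collection $\mathcal{C}$ of $s := 2^{m^{1/2}/4} = 2^{\Omega(\sqrt m)}$ subsets $S \subseteq [M]$ of size $m$, each pair having intersection $|S \cap S'| < m^{1-c} = m^{3/4}$. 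Let $\D_{U_M} = \{\p^A_S : S \in \mathcal{C}\}$, which is the candidate hard family for the $(A,M)$-Hidden Junta Testing Problem $\mathcal{B}(\D, U_M)$.

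Next I would bound the pairwise correlations of this family relative to $U_M$. For $S \ne S'$ in $\mathcal{C}$, Lemma~\ref{lem:cor-disc} gives
\[
|\chi_{U_M}(\p^A_S, \p^A_{S'})| \le (|S\cap S'|/m)^{k+1}\,\chi^2(A,\bin(m,1/2)) + k\nu^2 \le m^{-(k+1)/4}\,\chi^2(A,\bin(m,1/2)) + k\nu^2 =: \gamma,
\]
using $|S\cap S'|/m < m^{3/4}/m = m^{-1/4}$. For the diagonal term, the first display in the proof of Lemma~\ref{lem:cor-disc} shows $\chi_{U_M}(\p^A_S,\p^A_S) = \chi^2(A,\bin(m,1/2)) =: \beta$. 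Hence $\D_{U_M}$ is $(\gamma,\beta)$-correlated relative to $U_M$ with $|\D_{U_M}| = s = 2^{\Omega(\sqrt m)}$, so $\mathrm{SD}(\mathcal{B},\gamma,\beta) \ge s$.

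Finally I would apply Lemma~\ref{lem:sq-from-pairwise}: any SQ algorithm solving $\mathcal{B}$ with probability $\ge 2/3$ needs at least $s\cdot\gamma/\beta$ queries to $\mathrm{STAT}(\sqrt{2\gamma})$. By hypothesis $\tau \ge m^{-(k+1)/4}\chi^2(A,\bin(m,1/2)) + k\nu^2 = \gamma$, so $\sqrt{2\gamma} \le \sqrt{2\tau}$; thus an algorithm making queries of accuracy \emph{no better} than $\sqrt{2\tau}$ can be simulated with the $\mathrm{STAT}(\sqrt{2\gamma})$ oracle, and the lower bound applies. The query count is at least $s\gamma/\beta$. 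Here one should be slightly careful: the clean statement wants the bound $2^{\Omega(\sqrt m)}\tau/\chi^2(A,\bin(m,1/2))$, whereas Lemma~\ref{lem:sq-from-pairwise} literally gives $s\gamma/\beta$ with $\gamma$ possibly smaller than $\tau$. The resolution is the standard monotonicity observation that the lower bound of Lemma~\ref{lem:sq-from-pairwise} also holds against the weaker oracle $\mathrm{STAT}(\sqrt{2\tau})$ with the parameter $\tau$ in place of $\gamma$ (equivalently, one may artificially inflate $\gamma$ up to $\tau$, which only shrinks the family constraint but keeps $s$ fixed and cannot increase $\beta$), yielding $s\tau/\beta = 2^{\Omega(\sqrt m)}\tau/\chi^2(A,\bin(m,1/2))$ queries. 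I expect this last bookkeeping step — matching the ``$\tau$ versus $\gamma$'' normalization so that the final expression is stated in terms of the user-supplied tolerance $\tau$ rather than the intrinsic correlation $\gamma$ — to be the only subtle point; the rest is a direct chaining of the three cited results.
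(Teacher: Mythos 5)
Your proof is correct and follows the same route as the paper: pack nearly orthogonal subsets via Claim~\ref{clm:near-orth-vec-disc} with $c=1/4$, bound pairwise correlations via Lemma~\ref{lem:cor-disc}, and apply Lemma~\ref{lem:sq-from-pairwise}. The only presentational difference is that the paper sidesteps your final ``$\tau$ versus $\gamma$'' bookkeeping by simply setting $\gamma := \tau$ from the outset (the family is $(\gamma',\beta)$-correlated with $\gamma'\le\tau$, hence also $(\tau,\beta)$-correlated by monotonicity of Definition~\ref{def:pc}), which is exactly the ``inflate $\gamma$ up to $\tau$'' observation you make.
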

\begin{proof}
Let $\mathcal{C}$ be a collection of $s=2^{\Omega(\sqrt{m})}$ subsets $S\subseteq[M]$
with $|S|=m$ whose pairwise intersections are all less than $m^{3/4}$.
By Claim~\ref{clm:near-orth-vec-disc} (taking the local parameter $c=1/4$),
such a set is guaranteed to exist. By Lemma~\ref{lem:cor-disc},
we have that for $S,S'\in\mathcal{C}$ with $S\ne S'$, it holds that
\begin{align*}
|\chi_{U_M}(\mathbf{P}_S^A,\mathbf{P}_{S'}^A)|\le m^{-(k+1)/4}\chi^2(A,\mathrm{Bin}(m,1/2))+k\nu^2\le\tau \;.
\end{align*}
If $S=S'$, then
$\chi_{U_M}(\mathbf{P}_S^A,\mathbf{P}_{S}^A)=\chi^2(\mathbf{P}_{S}^A,U_M)=\chi^2(A,\mathrm{Bin}(m,1/2))$.
Let $\gamma=\tau$ and $\beta=\chi^2(A,\mathrm{Bin}(m,1/2))$.
We have that the statistical query dimension of this testing problem
with correlations $\left(\gamma,\beta\right)$ is at least $s$.
Then applying Lemma~\ref{lem:sq-from-pairwise}
with $(\gamma,\beta)$ completes the proof.
\end{proof}

\section{SQ Lower Bound for Robustly Learning a Binary Product Distribution}\label{ssec:sq-bin}
In this section, we use the framework of Section~\ref{ssec:generic-discrete} to 
prove our super-polynomial SQ lower bound for robustly learning a binary product distribution.

Our SQ-hard instances for this case will be obtained using the product distributions
defined below.

\begin{definition}[SQ-Hard Instance for Binary Products]\label{def:bin-dist}
Let $0\le\delta\le1/2$ and $M,m\in\Z_+$ with $M>m$. 
For any subset $S\subseteq[M]$ with $|S|=m$, define $U_M^{S,\delta}$ to be 
the product distribution over $\{0,1\}^M$, where each coordinate has mean $1/2+\delta$ if it belongs to set $S$, 
and has mean $1/2$ otherwise. We let $\mu_{M}^{S,\delta}$ denote the mean vector of $U_{M}^{S,\delta}$.
\end{definition}

The following lemma states that the distributions in the above 
family are far from the uniform distribution $U_M$ in total variation distance. 
We defer the proof to Appendix~\ref{ssec:TV-binary-lower-bound}.

\begin{lemma}\label{lem:TV-binary-lower-bound}
Let $m,M\in\Z_+$ with $M>m$. Let $S\subseteq[M]$ with $|S|\le m$. 
Then for any sufficiently small $\delta>0$, $\dtv\Big(U_M,U_M^{S,\frac{\delta}{\sqrt{m}}}\Big)\ge\Omega(\delta)$, 
where $U_M$ is the uniform distribution over $\{0,1\}^M$.
\end{lemma}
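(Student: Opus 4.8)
The plan is to lower bound $\dtv(U_M, U_M^{S,\delta/\sqrt m})$ by exhibiting a single test statistic whose distribution shifts noticeably between the two product distributions. The natural choice is the Hamming weight on the coordinates in $S$, i.e.\ $W = \sum_{i\in S} X_i$. Under $U_M$ we have $W\sim\bin(|S|,1/2)$, while under $U_M^{S,\delta/\sqrt m}$ we have $W\sim\bin(|S|,1/2+\delta/\sqrt m)$; since total variation distance only decreases under the pushforward by $\bx\mapsto\sum_{i\in S}x_i$, it suffices to prove that $\dtv\big(\bin(n,1/2),\bin(n,1/2+\delta/\sqrt m)\big)\ge\Omega(\delta)$ for $n=|S|$. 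Note $n\le m$, so the mean gap is $n\delta/\sqrt m\le\delta\sqrt n\cdot(\sqrt n/\sqrt m)\le\delta\sqrt n$, which is at most a constant multiple of the standard deviation $\sqrt{n}/2$ of either binomial; this is exactly the regime where the two binomials have constant overlap mismatch.

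First I would reduce to the one-dimensional binomial statement via the data-processing inequality for $\dtv$. Then, to lower bound $\dtv(\bin(n,1/2),\bin(n,p))$ with $p=1/2+\delta/\sqrt m$, I would use the standard fact that $\dtv$ equals $\Pr_{q_1}[E]-\Pr_{q_2}[E]$ maximized over events $E$, and take $E=\{W\ge n/2\}$ (or $\{W > \mathrm{median}\}$). Under $\bin(n,1/2)$ this probability is $\approx 1/2$; under $\bin(n,1/2+\delta/\sqrt m)$ it is $\approx 1/2 + c\cdot(\text{mean shift})/(\text{std dev})$ by a Berry--Esseen / local CLT estimate, and the mean shift over standard deviation is $\Theta(n\delta/\sqrt m / \sqrt n) = \Theta(\delta\sqrt{n/m}) $. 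This would only give $\Omega(\delta\sqrt{n/m})$, which degrades when $|S|\ll m$. To get the clean $\Omega(\delta)$ bound claimed, I would instead keep the mean shift comparison tight: choose the threshold event at the mean of the shifted binomial and observe that the relevant difference of CDFs is governed by the ratio (displacement)/(std dev) only when displacement $\lesssim$ std dev; when $|S|$ is small the displacement $|S|\delta/\sqrt m$ can actually exceed a constant fraction of $\sqrt{|S|}$ only if $|S|\gtrsim m$, but for small $|S|$ the two Bernoulli parameters differ by $\delta/\sqrt m$ and a single coordinate already contributes $\dtv(\mathrm{Ber}(1/2),\mathrm{Ber}(1/2+\delta/\sqrt m))=\delta/(2\sqrt m)$ — so this route alone is not enough either.

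The cleanest fix, and the one I would actually carry out, is to compare $U_M$ not to $U_M^{S,\delta/\sqrt m}$ directly but to note that the lemma as stated has $|S|\le m$ but we are free to use $m$ in the normalization; the intended use is $|S|=m$, in which case the mean shift is $\sqrt m\cdot\delta/\sqrt m=\delta\cdot\sqrt m\cdot(1/\sqrt m)\cdot\sqrt m$... more carefully, with $|S|=m$ the shift in $W$ is $m\cdot\delta/\sqrt m=\delta\sqrt m$, exactly a $\Theta(\delta)$ multiple of the standard deviation $\Theta(\sqrt m)$, and a Berry--Esseen bound on $\Pr[W\ge m/2]$ under the two measures yields a gap of $\Theta(\delta)$ minus $O(1/\sqrt m)$ error, which is $\Omega(\delta)$ for $\delta$ a sufficiently small constant times $1$ and $m$ large — or, to avoid even the Berry--Esseen error term, one uses the Paley--Zygmund-type second-moment argument: let $f(\bx)=\mathbb{I}[\sum_{i\in S}x_i \ge m/2 + \sqrt m/4]$; bound $\E_{U_M}[f]$ above by a constant $<1/2$ via an anti-concentration/tail estimate for $\bin(m,1/2)$, and bound $\E_{U_M^{S,\delta/\sqrt m}}[f]$ below by another constant using Chebyshev, with the two constants separated by $\Omega(\delta)$ once the threshold $\sqrt m/4$ is chosen in the window where the shifted binomial has most of its mass above it and the unshifted one has at most $1/2-\Omega(\delta)$; then $\dtv\ge \E_{U_M^{S,\delta/\sqrt m}}[f]-\E_{U_M}[f]=\Omega(\delta)$. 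The main obstacle is handling the case $|S|<m$ uniformly: here I would simply observe that monotonicity in $\delta$ and a direct single-(or few-)coordinate argument gives $\dtv\ge\dtv(\mathrm{Ber}(1/2),\mathrm{Ber}(1/2+\delta/\sqrt m))\ge\delta/(2\sqrt m)$, and then argue (as is implicitly intended in the paper, since the construction only ever instantiates $|S|=\Theta(m)$) that the regime of interest is $|S|=\Theta(m)$, so the sharp $\Omega(\delta)$ bound applies. I expect the delicate point to be pinning down the exact threshold and the two separated constants in the second-moment comparison so that their difference is genuinely $\Omega(\delta)$ and not merely $\Omega(\delta^2)$ or $\Omega(\delta/\sqrt m)$.
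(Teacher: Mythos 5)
Your first step---pushing forward through $\bx\mapsto\sum_{i\in S}x_i$ and invoking data processing to reduce to $\dtv(\bin(n,1/2),\bin(n,1/2+\delta/\sqrt m))$ with $n=|S|$---is exactly the paper's first step, and your observation that this gives only $\Omega(\delta\sqrt{n/m})$ in general, hence that the clean $\Omega(\delta)$ bound really requires $|S|=\Theta(m)$, is a real issue with the lemma as stated that the paper's own proof silently elides (its final chain reads $\Omega(\delta\sqrt{n/m})-O(\delta^2 n/m)=\Omega(\delta)-O(\delta^2)$, which replaces $\sqrt{n/m}$ by $1$; this is justified only because every application takes $|S|=m$). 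So your diagnosis is sound.

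Where you diverge from the paper, and where there is a genuine gap, is in how the one-dimensional binomial estimate is proved. The paper does not use a threshold event or any normal approximation. It writes the $L_1$ distance pointwise, Taylor-expands $F_{n,x}(\delta/\sqrt m)-F_{n,x}(0)$ to second order, lower-bounds the first-order contribution $\frac{\delta}{\sqrt m}\sum_x|F'_{n,x}(0)|=\frac{4\delta}{\sqrt m}\,\E_{X\sim\bin(n,1/2)}|X-n/2|$ via the elementary moment inequality $\E|X|\ge\E[X^2]^{3/2}/\E[X^4]^{1/2}$ (Fact~\ref{fact:E-lower}), getting $\Theta(\delta\sqrt{n/m})$ with \emph{no} additive error term, and then bounds the second-order remainder by $O(\delta^2 n/m)$. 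This is tight and clean: for $n=\Theta(m)$ and $\delta$ small it gives $\Omega(\delta)$ unconditionally. Your Berry--Esseen route produces the main term $\Theta(\delta)$ but carries an additive $O(1/\sqrt n)$ error, so it only yields $\Omega(\delta)$ under the extra hypothesis $n\gtrsim 1/\delta^2$, which the lemma does not grant (though it does hold in the paper's applications). Your proposed repair---bounding both tail probabilities ``by constants separated by $\Omega(\delta)$'' using Chebyshev or a Paley--Zygmund argument---does not actually work as sketched: any bound that controls each of the two probabilities to additive accuracy $O(1)$ can only certify an $O(1)$ gap, never a gap that scales like $\delta$; to detect an $O(\delta)$ separation between two probabilities that are both near $1/2$, you need to estimate each to accuracy $o(\delta)$, which is exactly what Berry--Esseen (or the paper's exact first-derivative computation) supplies and what second-moment bounds do not. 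If you want to avoid the CLT error, the move is precisely the paper's: express $\dtv$ as $\tfrac12\sum_x|F_{n,x}(\delta/\sqrt m)-F_{n,x}(0)|$, isolate the first derivative, and lower-bound $\E|X-n/2|$ by the $L^2$--$L^4$ comparison rather than by anti-concentration.
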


The main result of this section is the following theorem:

\begin{theorem}[SQ Lower Bound for Robustly Testing a Binary Product Distribution]\label{thm:sq-bin-testing}
Fix $0<c<1/2$ and $k$ to be a sufficiently large integer. Let $m,M\in\Z_+$ with $M=3m^{5/4}$.
Let $0<\eps<1/2$ and $\delta$ be a sufficiently small constant multiple of $\epsilon\sqrt{\log(1/\epsilon)}/k^2$.
Let $\tau=\Theta(M^{-(k+1)/5}\delta)$.
Assume that $m>\max\Big(C'(\log(1/\epsilon))^{3},\frac{k^2}{\log(1/\delta)}\Big)$ for some sufficiently large constant $C'>0$.
Then any $\mathrm{SQ}$ algorithm which is given access to a distribution $\p$ over $\{0, 1\}^M$
so that either $H_0$: $\p=U_M$, 
or $H_1$: $\dtv\Big(\p,U_M^{S,\frac{\delta}{\sqrt{m}}}\Big)\le\epsilon$
for some unknown subset $S\subseteq[M]$ with $|S|=m$,
and correctly distinguishes between these two cases with probability at least $2/3$,
must either make queries of accuracy better than $\sqrt{2\tau}$
or must make at least
$2^{\Omega(M^{2/5})}M^{-(k+1)/5}$
statistical queries.
\end{theorem}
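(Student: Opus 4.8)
The plan is to invoke the generic discrete SQ hardness result (Proposition~\ref{prop:gen-sq-prop}) with a carefully chosen one-dimensional distribution $A$, so I first need to construct $A$. The natural candidate is the distribution of $\sum_{i\in S} X_i$ under the binary product distribution $U_M^{S,\delta/\sqrt m}$, i.e.\ $\bin(m, 1/2 + \delta/\sqrt m)$, but this does not match moments with $\bin(m,1/2)$. So instead I would follow the moment-matching strategy outlined in the technical overview: start from $B = \bin(m, 1/2 + \delta/\sqrt m)$ (or rather a continuous/discrete analogue) and modify its pmf on an appropriately chosen interval $I$ by a degree-$k$ polynomial $p$ so that the resulting distribution $A$ exactly (or approximately, with parameter $\nu$) matches the first $k$ moments of $\bin(m,1/2)$ — equivalently, satisfies Condition~\ref{cond:moments-disc}. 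The key quantitative claims I need about this $A$ are: (i) $A$ satisfies Condition~\ref{cond:moments-disc} with a small $\nu$ (ideally $\nu = 0$ or $\nu$ negligible); (ii) $\dtv(A, \bin(m,1/2 + \delta/\sqrt m)) \le \eps$, which via the symmetric embedding gives $\dtv(\p_S^A, U_M^{S,\delta/\sqrt m}) \le \eps$ so that $H_1$ of the theorem is implied by $H_1$ of Definition~\ref{def:testing}; and (iii) an upper bound $\chi^2(A, \bin(m,1/2)) = O(1)$ or at worst polynomially bounded, since this quantity controls both the tolerance threshold $\tau$ and the query lower bound. These are presumably established in a later section of the paper (the moment-matching lemmas for discrete univariate distributions promised in the overview), so I would cite them as black boxes; concretely I expect $\delta \asymp \eps\sqrt{\log(1/\eps)}/k^2$ to be exactly the regime where a degree-$k$ polynomial correction of size controlled by Legendre-polynomial bounds keeps the pmf nonnegative and keeps $A$ within $\eps$ of $B$ in total variation.

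Granting those properties of $A$, the proof of the theorem is essentially a bookkeeping reduction. First I would note that $M = 3m^{5/4} > 2m^{5/4}$, so the hypothesis $M > 2m^{5/4}$ of Proposition~\ref{prop:gen-sq-prop} is satisfied (with the implicit $c = 1/4$). Next I would verify that the chosen $\tau = \Theta(M^{-(k+1)/5}\delta)$ dominates $m^{-(k+1)/4}\chi^2(A,\bin(m,1/2)) + k\nu^2$: since $M = 3m^{5/4}$ we have $M^{-(k+1)/5} \asymp m^{-(k+1)/4}$ up to constants, and $\chi^2(A,\bin(m,1/2)) = O(1)$ while $\delta$ is a small constant multiple of $\eps\sqrt{\log(1/\eps)}/k^2$, so $m^{-(k+1)/4}\chi^2 \le m^{-(k+1)/4}$; for this to be $\le \tau \asymp m^{-(k+1)/4}\delta$ I need the $k\nu^2$ term to be controlled and I need to be slightly careful, which is why $\nu$ must be taken very small (polynomially small in relevant parameters, or zero if $A$ matches moments exactly) — the hypothesis $m > k^2/\log(1/\delta)$ is what lets one take $k$ matching moments while keeping $m \ge k$ and the Legendre bounds usable. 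Then Proposition~\ref{prop:gen-sq-prop} gives that any SQ algorithm for the $(A,M)$-Hidden Junta Testing Problem either uses queries of accuracy better than $\sqrt{2\tau}$ or makes at least $2^{\Omega(\sqrt m)}\tau / \chi^2(A,\bin(m,1/2))$ queries; substituting $\sqrt m \asymp M^{2/5}$ (since $m \asymp M^{4/5}$), $\chi^2 = O(1)$, and $\tau \asymp M^{-(k+1)/5}\delta$ with $\delta = \Theta(\eps\sqrt{\log(1/\eps)}/k^2)$ treated as a constant for fixed $\eps,k$, this lower bound becomes $2^{\Omega(M^{2/5})}M^{-(k+1)/5}$ as claimed (absorbing the $\delta$ and $k^2$ factors into the constant).

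Finally I would connect the Hidden Junta Testing Problem to the problem stated in the theorem. In $H_0$ both problems use $U_M$. In $H_1$, the theorem allows any $\p$ with $\dtv(\p, U_M^{S,\delta/\sqrt m}) \le \eps$; I would observe that $\p_S^A$ is one such distribution, since $\dtv(\p_S^A, U_M^{S,\delta/\sqrt m}) = \dtv(A, \bin(m,1/2+\delta/\sqrt m)) \le \eps$ by the symmetric-embedding structure of Definition~\ref{def:p-hidden-junta} (the coordinates outside $S$ are uniform in both, and inside $S$ both are symmetric so total variation reduces to that of the coordinate-sums). Hence any algorithm solving the theorem's testing problem in particular solves the $(A,M)$-Hidden Junta Testing Problem, and the Proposition's lower bound transfers. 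I expect the main obstacle to be neither the reduction nor the arithmetic but rather the construction and analysis of $A$: ensuring that the degree-$k$ polynomial correction simultaneously achieves exact $k$-moment matching, keeps the modified pmf nonnegative on $[m]\cup\{0\}$, keeps $\dtv(A,B) \le \eps$, and keeps $\chi^2(A,\bin(m,1/2))$ bounded — this is the part that requires the delicate Legendre-polynomial estimates of Fact~\ref{fact:legendre-poly} together with the extra work (flagged in the overview) needed to pass from the continuous Gaussian-style argument of~\cite{DKS17-sq} to the genuinely discrete Binomial setting, in particular handling the discreteness of the support and the variation of $\binom{m}{j}$ across the interval $I$.
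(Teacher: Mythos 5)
Your high-level reduction is the same as the paper's: construct a one-dimensional moment-matching distribution $A$ (with exact matching, so $\nu=0$), embed it via Definition~\ref{def:p-hidden-junta} to get $\p_S^A$ with $\dtv(\p_S^A, U_M^{S,\delta/\sqrt m}) = \dtv(A,\bin(m,1/2+\delta/\sqrt m)) \le \eps$, and then invoke Proposition~\ref{prop:gen-sq-prop} together with Claim~\ref{clm:near-orth-vec-disc}. That is precisely what the paper does, citing Proposition~\ref{prop:bin-moment-matching} as the black box for $A$.

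However, there is a genuine quantitative gap that you flag but do not resolve, and it is not a minor bookkeeping issue. You posit $\chi^2(A,\bin(m,1/2)) = O(1)$. With that bound and $\nu=0$, the hypothesis $\tau \ge m^{-(k+1)/4}\chi^2(A,\bin(m,1/2)) + k\nu^2$ of Proposition~\ref{prop:gen-sq-prop} forces $\tau \gtrsim m^{-(k+1)/4}$, whereas the theorem fixes $\tau = \Theta(M^{-(k+1)/5}\delta) = \Theta(m^{-(k+1)/4}\delta)$, which is smaller by a factor of $\delta$. Since $\delta$ can be arbitrarily small (it scales with $\eps$), the precondition fails and Proposition~\ref{prop:gen-sq-prop} cannot be applied at all with your $\tau$. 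Choosing a larger $\tau$ to satisfy the precondition would prove a strictly weaker statement than the one claimed, because the query-accuracy threshold $\sqrt{2\tau}$ would also grow. The paper's Proposition~\ref{prop:bin-moment-matching}(iii) gives the much stronger bound $\chi^2(A,\bin(m,1/2)) = O(\delta)$ (not $O(1)$), and both pieces of the arithmetic hinge on it: $m^{-(k+1)/4}\chi^2 = O(M^{-(k+1)/5}\delta) \le \tau$ makes the precondition hold, and $\tau/\chi^2 = \Omega(M^{-(k+1)/5})$ yields the stated query count. The heuristic behind the $O(\delta)$ bound is that $A$ is within $O(\delta)$ of $\bin(m,1/2+\delta/\sqrt m)$, and $\chi^2\bigl(\bin(m,1/2+\delta/\sqrt m),\bin(m,1/2)\bigr)=(1+4\delta^2/m)^m-1 = O(\delta^2)$; the degree-$k$ polynomial correction contributes an additional $O(\delta)$. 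Without this $O(\delta)$ bound (the content of Lemma~\ref{lem:bin-chi} in the appendix), the theorem as stated is not established by your argument.

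Two smaller points. First, you have the right reason for needing exact moment matching ($\nu=0$), and the paper indeed achieves it. Second, the theorem's hypothesis $m > C'(\log(1/\eps))^3$ should in your argument be reconciled with the proposition's $m \ge C_0(\log(1/\delta))^3$; since $\delta = \Theta(\eps\sqrt{\log(1/\eps)}/k^2)$ gives $\log(1/\delta) = \Theta(\log(1/\eps))$ for fixed $k$, this is fine, but it is worth stating.
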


Theorem~\ref{thm:sq-bin-testing} will follow by applying our generic discrete SQ lower bound
construction in Section~\ref{ssec:generic-discrete} along with the following proposition.

\begin{proposition}\label{prop:bin-moment-matching}
Fix $\delta>0$ to be sufficiently small and $k$ to be a sufficiently large integer. 
Let integer $m\ge\max\big(C_0(\log(1/\delta))^3,\frac{k^2}{\log(1/\delta)}\big)$ 
for some universal constant $C_0>0$ sufficiently large. 
Then there exists a distribution $A$ over $[m]\cup\{0\}$ satisfying the following conditions: 
\begin{itemize}
\item[(i)] $\E_{X\sim A}[X^i]=\E_{X\sim\mathrm{Bin}(m,1/2)}[X^i]$ for all $1\le i\le k$, 
\item[(ii)] $\dtv(A,\mathrm{Bin}(m,1/2+\delta/\sqrt{m}))\le O\Big(\frac{\delta k^2}{\sqrt{\log(1/\delta)}}\Big)$, and 
\item[(iii)] $\chi^2(A,\mathrm{Bin}(m,1/2))=O(\delta)$.
\end{itemize}
\end{proposition}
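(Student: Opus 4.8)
The plan is to construct $A$ by the "modify the pmf by a low-degree polynomial" strategy sketched in the technical overview, taking the Binomial $B := \mathrm{Bin}(m,1/2+\delta/\sqrt{m})$ as the starting point and correcting its first $k$ moments to agree with $\mathrm{Bin}(m,1/2)$. Concretely, rescale the coordinate $x\in[m]\cup\{0\}$ to $y = (2x-m)/\sqrt{m} \in [-\sqrt m,\sqrt m]$, so that under $\mathrm{Bin}(m,1/2)$ the variable $y$ is approximately a standard Gaussian, and under $B$ it is approximately $\calN(2\delta,1)$. Fix an interval $I$ of the $y$-scale, say $I = [-c_1\sqrt{\log(1/\delta)},\, c_1\sqrt{\log(1/\delta)}]$ for a suitable constant $c_1$; this is chosen so that $I$ contains all but a $\poly(\delta)$ fraction of the mass of both $\mathrm{Bin}(m,1/2)$ and $B$, and so that $I$ has length $\Theta(\sqrt{\log(1/\delta)})$. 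Define $A(x) := B(x)\big(1 + p(y(x))\big)$ for $x$ with $y(x)\in I$, and $A(x) := B(x)$ otherwise, where $p$ is a polynomial of degree $k$ supported (as a correction) on $I$. The $k+1$ coefficients of $p$ are the unique solution of the linear system forcing $\E_{A}[X^i] = \E_{\mathrm{Bin}(m,1/2)}[X^i]$ for $1\le i\le k$ together with $\sum_x A(x) = 1$; existence and uniqueness follow from nondegeneracy of the associated moment (Gram) matrix, exactly as in the continuous construction of \cite{DKS17-sq}.

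The core estimates are then: (1) a pointwise bound $|p(y)| = O(\delta\, k^2/\sqrt{\log(1/\delta)})$ uniformly for $y\in I$, and (2) the consequences (i)--(iii). For (1) I would expand $p$ in the Legendre basis rescaled to the interval $I$: write $p(y) = \sum_{i=0}^k c_i\, \bar P_i(y)$ where $\bar P_i$ is $P_i$ affinely reparametrized to $I$. Using Fact~\ref{fact:legendre-poly}(ii) the coefficients $c_i$ are controlled by inner products of the target moment discrepancies against $\bar P_i$ with respect to $B$; the key input is that $\mathrm{Bin}(m,1/2)$ and $B = \mathrm{Bin}(m,1/2+\delta/\sqrt m)$ differ, after the $y$-rescaling, like $\calN(0,1)$ versus $\calN(2\delta,1)$ up to $O(1/\sqrt m)$ corrections, so each low-degree moment differs by $O(\delta)$ (the first moment by $\Theta(\delta)$, higher ones by $O(\delta)$ as well after centering). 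Since the Legendre polynomials on $[-1,1]$ are bounded by $1$ (Fact~\ref{fact:legendre-poly}(iii)), one gets $\|p\|_{\infty,I} \le \sum_i |c_i| = O(\delta)$ times a factor accounting for the normalization of the orthogonality relation over $I$ and the $k$ terms, which is where the $k^2/\sqrt{\log(1/\delta)}$ shows up: the interval $I$ has length $\Theta(\sqrt{\log(1/\delta)})$, each normalization contributes a $\sqrt{2i+1}/\sqrt{|I|}$ factor, and summing $k$ of them yields the stated bound. This immediately gives (ii), since $\dtv(A,B) \le \tfrac12\sum_{x: y(x)\in I} B(x)|p(y(x))| \le \tfrac12\|p\|_{\infty,I} = O(\delta k^2/\sqrt{\log(1/\delta)})$; it also gives that $A$ is a genuine pmf ($A(x)\ge 0$) once $\delta$ is small enough that $\|p\|_{\infty,I} < 1$, which holds under the hypothesis $m > k^2/\log(1/\delta)$ (so $k^2/\sqrt{\log(1/\delta)}$ is bounded). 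Property (i) holds by construction (the moment-matching is the defining linear system; one must also check the discrete sum $\sum_x A(x)=1$ is the right normalization, which is the $i=0$ equation).

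For (iii), $\chi^2(A,\mathrm{Bin}(m,1/2)) = \sum_x A(x)^2/U(x) - 1$ where $U := \mathrm{Bin}(m,1/2)$. I would split the sum over $\{y\in I\}$ and its complement. On the complement $A = B$, and $\sum_{x}B(x)^2/U(x)$ over the tail is $O(\delta)$ because the tail mass is $\poly(\delta)$ and the density ratio $B/U$ is at most $\exp(O(\sqrt{\log(1/\delta)}))= \delta^{-o(1)}$ on the relevant range (here the choice of $c_1$ and the hypothesis $m > C'(\log(1/\delta))^3$ enter, to ensure the Gaussian approximation to $B/U$ is valid and the ratio does not blow up). On $I$, write $A = B(1+p)$, so $A^2/U = (B^2/U)(1+p)^2 = (B^2/U)(1 + 2p + p^2)$; the leading term $\sum B^2/U$ over all $x$ equals $1 + \chi^2(B,U) = 1 + O(\delta^2)$ (since $B,U$ are two Binomials whose means differ by $\delta/\sqrt m$ per coordinate, a direct computation gives $\chi^2 = (1+4\delta^2/(m(1-4\delta^2/m)))^m - 1 = O(\delta^2)$), and the correction terms contribute $O(\|p\|_{\infty,I})\cdot\sum B^2/U + O(\|p\|_{\infty,I}^2)\cdot\sum B^2/U = O(\delta)$. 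Summing, $\chi^2(A,U) = O(\delta)$, giving (iii). The main obstacle I anticipate is the discrete-vs-continuous gap: all the clean estimates are for Gaussians, and one must carefully transfer them to Binomials, controlling (a) the error in the local CLT approximation $B(x) \approx$ Gaussian density uniformly on $I$ — this needs $m$ large relative to $|I|^2 \sim \log(1/\delta)$, hence the $(\log(1/\delta))^3$ slack in the hypothesis — and (b) the fact that "degree-$k$ polynomial in $x$" and the discrete moments $\E[X^i]$ vs. $\E[\mathcal K_t(X;m)]$ are related by an invertible triangular change of basis, so matching $\E[X^i]$ for $i\le k$ is equivalent to matching $\E[\mathcal K_t]$ for $t\le k$, which is what Condition~\ref{cond:moments-disc} (with $\nu = 0$) needs downstream. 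Getting the error bookkeeping in (a) to not destroy the $O(\delta)$ in (iii) — where we are claiming a bound polynomially better than the naive $\chi^2 = O(\delta)$ would be if $p$ were order-$1$ — is the delicate part, and is exactly why the pointwise bound on $p$ must be $o(1)$ rather than merely $O(1)$.
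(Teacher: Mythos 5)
Your overall strategy matches the paper's: correct the pmf of $B := \mathrm{Bin}(m,1/2+\delta/\sqrt m)$ by a degree-$k$ polynomial on a window of rescaled width $\Theta(\sqrt{\log(1/\delta)})$ around $m/2$, use Legendre polynomials to bound the size of the correction, and then verify non-negativity, TV, and $\chi^2$. But there is a genuine gap in your central step.

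You use a \emph{multiplicative} correction $A(x)=B(x)\bigl(1+p(y(x))\bigr)$ and then claim you can read off the Legendre coefficients $c_i$ of $p$ from the moment discrepancies ``using Fact~\ref{fact:legendre-poly}(ii) \ldots with respect to $B$.'' Fact~\ref{fact:legendre-poly}(ii) is orthogonality of Legendre polynomials under the \emph{uniform} measure on the interval. The linear system you actually need to solve is $\sum_x B(x)\,p(y(x))\,\bar P_i(y(x)) = \tilde b_i$, whose Gram matrix is $\E_{Y\sim B}[\bar P_i(Y)\bar P_j(Y)]$ --- and this is \emph{not} diagonal, because $B$ is far from uniform across $I$ (its density varies by a factor $\delta^{-\Theta(1)}$ between the center and the edge of $I$). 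So you cannot extract each $c_i$ by a single inner product; you would need orthogonal polynomials for the weight $B$ (Kravchuk or shifted Hermite), whose $L^\infty$ behavior on $I$ is not the clean $|P_i|\le 1$ of Legendre, and the whole $\|p\|_\infty$ bookkeeping would change. The paper instead uses an \emph{additive} correction $A(x)=B(x)+q(x)$ with $q(x)=\int_x^{x+1}p(t)\,dt$, so the linear system is $\sum_{x\in\Z\cap I}q(x)\,x^i=b_i$ with the \emph{unweighted} inner product $\langle r_1,r_2\rangle=\sum_{x\in\Z\cap I}r_1(x)r_2(x)$. That is close to $\int_I r_1 r_2\,dt$ (the discrepancy $\gamma_i$ is controlled via $|P_i'|=O(i^2)$ and the mean-value theorem), so Legendre orthogonality genuinely diagonalizes the system; this is the content of Theorem~\ref{thm:moment-matching}. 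The $\int_x^{x+1}p$ device is precisely what bridges the discrete system to the continuous Legendre calculation; your sketch names this obstacle but does not resolve it, and the multiplicative parametrization makes it harder, not easier.

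A secondary, related gap: you assert ``each low-degree moment differs by $O(\delta)$'' and sum $k$ terms with a $\sqrt{2i+1}/\sqrt{|I|}$ factor. The quantity that actually controls $a_i$ is $\beta_i=\bigl|\sum_x(U(x)-B(x))P_i\bigl(\frac{x-m/2}{Cm}\bigr)\bigr|$, which the paper bounds as $O(\frac{\delta}{C}\sqrt{i/m})$ via a Taylor expansion of $F_{m,x}(\cdot)$ in $\delta$ and sub-Gaussian moment bounds for the Binomial (Lemmas~\ref{lem:bin-first-order} and~\ref{lem:bin-second-order}); the $\sqrt i$ growth, combined with the $\frac{2i+1}{2Cm}$ normalization and the sum over $i\le k$, is what produces the $k^2$ and $k^{5/2}$ exponents. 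Simply saying the moment discrepancies are $O(\delta)$ does not by itself yield these estimates --- the raw discrepancies $\E_U[X^j]-\E_B[X^j]$ grow with $j$ and $m$, and it is only after projecting onto the scaled Legendre polynomials and doing the Taylor/concentration analysis that one gets a bound that decays appropriately. Your $\chi^2$ argument, modulo the two issues above, follows the same decomposition as the paper's Lemma~\ref{lem:bin-chi}.
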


In Section~\ref{ssec:moment-matching}, we give a technical overview of the proof. The detailed 
proof of Proposition~\ref{prop:bin-moment-matching} 
is deferred to Appendix~\ref{ssec:bin-moment-matching}.

\begin{proof}[Proof of Theorem~\ref{thm:sq-bin-testing}]
We can assume without loss of generality that $\epsilon>0$ is smaller than a sufficiently small universal constant.
Let
$\delta$ be a sufficiently small constant multiple of $\epsilon\sqrt{\log(1/\epsilon)}/k^2$.
From Proposition~\ref{prop:bin-moment-matching}, 
there is a distribution $A$ over $[m]\cup\{0\}$ such that (i) $A$ 
and $\mathrm{Bin}(m,1/2)$ agree on the first $k$ moments,
(ii) $\dtv(A,\mathrm{Bin}(m,1/2+\delta/\sqrt{m}))\le O(\delta k^2/\sqrt{\log(1/\delta)})$, and
(iii) $\chi^2(A,\mathrm{Bin}(m,1/2))=O(\delta)$.
In this way, for any subset $S\subseteq[M]$ with $|S|=m$, it holds that
\begin{align*}
&\quad\dtv\Big(\p_S^A,U_M^{S,\frac{\delta}{\sqrt{m}}}\Big)=
(1/2)\sum_{\bx\in\{0,1\}^M}\left|\p_S^A(\bx)-U_M^{S,\frac{\delta}{\sqrt{m}}}(\bx)\right|\\
&=(1/2)\sum_{\bx\in\{0,1\}^M}\Bigg|2^{-M+m} A\left(\sum_{i\in S}x_i\right)\binom{m}{\sum_{i\in S}x_i}^{-1}\\
&\quad-2^{-M+m}\left(1/2+\delta/\sqrt{m}\right)^{\sum_{i\in S}x_i}\left(1/2-\delta/\sqrt{m}\right)^{m-\sum_{i\in S}x_i}\Bigg|\\
&=(1/2)\sum_{j=0}^{m}\left|A(j)-\binom{m}{j}\left(1/2+\delta/\sqrt{m}\right)^{j}\left(1/2-\delta/\sqrt{m}\right)^{m-j}\right|\\
&=\dtv(A,\bin(m,1/2+\delta/\sqrt{m}))=O(\delta k^2/\sqrt{\log(1/\delta)})\leq \epsilon \;.
\end{align*}
By Claim~\ref{clm:near-orth-vec-disc},
there exists a collection $\mathcal{C}$ of $2^{\Omega(m)}$
subsets $S\subseteq[M]$ with $|S|=m$ such
that for any pair $S,S'\in\mathcal{C}$, with $S\ne S'$, satisfies $|S\cap S'|<m^{3/4}$.
Applying Proposition~\ref{prop:gen-sq-prop}, 
we determine that any SQ algorithm which, given access to a distribution $\p$
so that either $\p=U_M$, or $\p$ is given by $\p_S^A$
for some unknown subset $S\subseteq[M]$ with $|S|=m$,
correctly distinguish between these two cases with probability at least $2/3$
must either make queries of accuracy better than $\sqrt{2\tau}$
or must make at least
$\frac{2^{\Omega(\sqrt{m})}\tau}{\chi^2(A,\mathrm{Bin}(m,1/2))}\ge2^{\Omega(M^{2/5})}M^{-(k+1)/5}$
statistical queries, since $m^{-(k+1)/4}\chi^2(A,\mathrm{Bin}(m,1/2))\le O(M^{-(k+1)/5}\delta)\le\tau$.
This completes the proof of Theorem~\ref{thm:sq-bin-testing}.
\end{proof}

\section{SQ Lower Bound for Robustly Learning a Ferromagnetic High-Temperature Ising Model}\label{ssec:sq-ising}
In this section, we
prove our super-polynomial SQ lower bound for robustly learning a ferromagnetic high-temperature Ising model. 
We start by transforming the support of Ising models to $\{0,1\}^M$:

\begin{definition}\label{def:sq-ising}
Given a real symmetric matrix $(\theta_{ij})_{i,j\in[M]}$ with zero diagonal,
the Ising model distribution $P_\theta$ is defined as follows: For any $\bx\in\{0,1\}^M$,
$P_\theta(\bx)=\frac{1}{Z(\theta)}\exp \big((1/2)\sum_{i,j\in[M]}{(-1)^{x_i+x_j}\theta_{ij}}\big)$,
where the normalizing factor $Z(\theta)$ is called the partition function.
We call the matrix $(\theta_{ij})_{i,j\in[M]} \in \R^{M \times M}$ the interaction matrix.
\end{definition}

Our SQ-hard instances for this case will be obtained using the Ising models
defined below.

\begin{definition}[SQ-Hard Instances for Ising Models]\label{def:ising}
Let $m,M\in\Z_+$ with $M>m$.
Let $0\le\delta\le\frac{1}{2m}$. For every subset $S\subseteq[M]$ with $|S|=m$,
define $Q_{M}^{S,\delta}$ to be the Ising model
with parameter $\theta$, where for every pair $i\ne j\in[M]$
we have that $\theta_{ij}=\delta,\forall i,j\in S$ and $\theta_{ij}=0$ otherwise.
Note that by our choice of parameter $\delta$,
the Ising models $Q_{M}^{S,\delta}$ are both
high-temperature and ferromagnetic.
\end{definition}

The following lemma states that the distributions in the above family
are far from the uniform distribution $U_M$ in total variation distance.
We defer the proof to Appendix~\ref{ssec:TV-ising-lower-bound}.

\begin{lemma}\label{lem:TV-ising-lower-bound}
Let $m,M\in\Z_+$ with $M>m$. Let $S\subseteq[M]$ with $|S|\le m$.
Then, for any sufficiently small $\delta>0$, we have that
$\dtv\Big(U_M, Q_M^{S,\frac{\delta}{m}}\Big)\ge\Omega(\delta)$.
\end{lemma}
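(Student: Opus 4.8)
The plan is to lower bound $\dtv(U_M, Q_M^{S,\delta/m})$ by exhibiting a single test event whose probabilities under the two distributions differ by $\Omega(\delta)$. The natural statistic is $Y = \sum_{i<j,\, i,j\in S}(-1)^{x_i+x_j} = \binom{1}{2}\big((\sum_{i\in S}(-1)^{x_i})^2 - m\big)$, equivalently the quadratic form appearing in the exponent of the Ising density restricted to the coordinates of $S$. Since only the coordinates in $S$ carry any interaction and $|S|\le m$, we may as well work with the $|S|$-dimensional marginal on $S$; the remaining coordinates are uniform and independent under both distributions and contribute nothing to the total variation distance. So it suffices to show that on $\{0,1\}^{|S|}$ (WLOG $|S|=m$, since fewer coordinates only makes the interaction term smaller and the same computation goes through), the uniform distribution and the Ising model with all pairwise parameters equal to $\delta/m$ are $\Omega(\delta)$-far.

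First I would write the Ising density explicitly in terms of $Y$: for $\bx\in\{0,1\}^m$, $Q(\bx) \propto \exp\big((\delta/m)\,\sum_{i<j}(-1)^{x_i+x_j}\big) = \exp\big((\delta/m)\,Y(\bx)\big)$. Under $U_m$, the variable $W := \sum_{i\in S}(-1)^{x_i}$ is a centered sum of $m$ i.i.d. Rademacher signs, so $\E[W^2]=m$, hence $\E_{U_m}[Y] = \tfrac12(\E[W^2]-m) = 0$, and $Y$ has standard deviation $\Theta(m)$ (since $\var(W^2)=\Theta(m^2)$). Because $\delta/m$ is small and $|Y|\le \binom{m}{2}$, the tilt factor $\exp((\delta/m)Y)$ stays within a constant factor of $1$ with high probability; expanding $e^{(\delta/m)y} = 1 + (\delta/m)y + O((\delta/m)^2 y^2)$ and using $\E_{U_m}[Y]=0$, $\E_{U_m}[Y^2]=\Theta(m^2)$, the partition-function ratio $Z(\delta/m)/2^m = \E_{U_m}[e^{(\delta/m)Y}] = 1 + O(\delta^2)$. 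Then the signed difference $Q(\bx) - U_m(\bx) = U_m(\bx)\big(e^{(\delta/m)Y(\bx)}/(1+O(\delta^2)) - 1\big)$, and
\[
\dtv(U_m, Q) = \tfrac12\,\E_{U_m}\Big[\big|\,e^{(\delta/m)Y}/(1+O(\delta^2)) - 1\,\big|\Big] \ge \tfrac12\,(\delta/m)\,\E_{U_m}[\,|Y|\,] - O(\delta^2).
\]
Since $\E_{U_m}[|Y|] = \Theta(m)$ — this follows from $\E[Y]=0$, $\E[Y^2]=\Theta(m^2)$, and a Paley–Zygmund-type anti-concentration bound for the degree-$2$ polynomial $Y$ in independent Rademacher variables, or directly from the fact that $Y = \tfrac12(W^2-m)$ where $W/\sqrt m$ converges to a standard Gaussian and $W$ has bounded fourth moment — we get $\dtv(U_m,Q) \ge \Omega(\delta) - O(\delta^2) = \Omega(\delta)$ for $\delta$ sufficiently small. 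Restoring the uniform coordinates outside $S$ does not change this, giving $\dtv(U_M, Q_M^{S,\delta/m}) \ge \Omega(\delta)$.

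The main obstacle is the anti-concentration step: one needs a clean lower bound $\E_{U_m}[\,|\sum_{i<j}(-1)^{x_i+x_j}|\,] = \Omega(m)$, uniform in $m$. I would handle this by passing to $W=\sum_i (-1)^{x_i}$, noting $Y=\tfrac12(W^2-m)$, and bounding $\E|W^2-m| \ge c\,m$ via the second-and-fourth-moment method: $\E(W^2-m)=0$, $\E(W^2-m)^2 = \var(W^2) = \Theta(m^2)$ (a direct calculation: $\E W^4 = 3m^2-2m$), and $\E(W^2-m)^4 = O(m^4)$, so Paley–Zygmund gives $\Pr[|W^2-m|\ge c m] \ge c'$ and hence $\E|W^2-m|=\Omega(m)$. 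A minor secondary point is carefully controlling the $O(\delta^2)$ error terms (the partition-function correction and the second-order term in the exponential expansion) so that they are genuinely dominated by the $\Omega(\delta)$ main term once $\delta$ is below a suitable absolute constant; this is routine given $|(\delta/m)Y|\le \delta$ deterministically.
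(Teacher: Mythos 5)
Your approach is essentially the paper's, translated into different notation. The paper's quantity $h(m,X)=2X^2-2mX+m(m-1)/2$, evaluated at $X=\sum_{i\in S}x_i$, is exactly your $Y=\tfrac12(W^2-m)$ under the substitution $W=m-2X=\sum_{i\in S}(-1)^{x_i}$. Both proofs pass to the one-dimensional statistic (the paper by data processing to $\is(n,\cdot)$ versus $\bin(n,1/2)$; you by restricting to the $S$-marginal), both Taylor-expand to first order in $\delta$ so that the main term is $\tfrac12(\delta/m)\,\E|Y|$, and both lower bound $\E|Y|=\Theta(m)$ by the second-and-fourth-moment inequality $\E|Z|\ge\E[Z^2]^{3/2}/\E[Z^4]^{1/2}$ --- this is the paper's Fact~\ref{fact:E-lower}, and it is precisely the Paley--Zygmund variant you invoke, using $\E[Y^2]=\Theta(m^2)$ and $\E[Y^4]=O(m^4)$.

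One concrete step of your error control is wrong, and it is not as ``routine'' as you say. You assert $|(\delta/m)Y|\le\delta$ deterministically, but $|Y|\le\binom{m}{2}$, so the actual deterministic bound is $|(\delta/m)Y|\le\delta(m-1)/2$, which grows with $m$. Consequently, you cannot control the Taylor remainder or the partition-function ratio $Z=\E_{U_m}[e^{(\delta/m)Y}]$ by a pointwise bound on the exponent. The fix is to use that $Y$ (equivalently $W^2$) is sub-exponential with $\|Y\|_{\psi_1}=O(m)$, so that $\E[e^{(\delta/m)Y}]=1+O(\delta^2)$ for $\delta$ below an absolute constant and the second-order term in the expansion has expectation $O(\delta^2)$; this is exactly what the paper does via Claim~\ref{clm:ising-Gaussian-exp} and Fact~\ref{fact:exp-decay} when bounding $\sum_x|G''_{n,x}(\delta_x)|\le O(n^2)$. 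With that correction, and noting that (like the paper) your ``WLOG $|S|=m$'' step really does need $|S|=\Theta(m)$ for the final $\Omega(\delta)$ bound, the argument goes through.
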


The main result of this section is the following theorem:

\begin{theorem}[SQ Lower Bound for Robustly Testing Ising Models]\label{thm:sq-ising-testing}
Fix $0<c<1$ and $k$ to be a sufficiently large integer. 
Let $m,M\in\Z_+$ with $M=3m^{5/4}$. Let $0<\eps<1/2$
and $\delta$ be a sufficiently small multiple of $\epsilon\log(1/\epsilon)/k^3$. 
Let $\tau=\Theta(M^{-(k+1)/5}\delta)$. 
Assume that $m>\max\Big(C'(\log(1/\epsilon))^{3},\frac{k^2}{\log(1/\delta)}\Big)$ 
for some sufficiently large constant $C'>0$.
Then any $\mathrm{SQ}$ algorithm which is given access to a distribution $\p$ over $\{0,1\}^M$
so that either $H_0$: $\p=U_M$, or $H_1$: $\dtv\Big(\p,Q_M^{S,\frac{\delta}{m}}\Big)\le\epsilon$
for some unknown subset $S\subseteq[M]$ with $|S|=m$,
and correctly distinguishes between these two cases with probability at least $2/3$
must either make queries of accuracy better than $\sqrt{2\tau}$ or
must make at least $2^{\Omega(M^{2/5})}M^{-(k+1)/5}$ statistical queries.
\end{theorem}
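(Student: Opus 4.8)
The plan is to follow the proof of Theorem~\ref{thm:sq-bin-testing} essentially verbatim, substituting for Proposition~\ref{prop:bin-moment-matching} an Ising-specific moment-matching statement. The one structural observation needed is that each hard instance $Q_M^{S,\delta/m}$ is itself a hidden-junta distribution $\p_S^{B}$ in the sense of Definition~\ref{def:p-hidden-junta}: since $\theta_{ij}=0$ whenever $i\notin S$ or $j\notin S$, the coordinates outside $S$ are i.i.d.\ fair bits independent of everything else; and the weight $\exp\!\big(\tfrac{\delta}{2m}\sum_{i\ne j\in S}(-1)^{x_i+x_j}\big)$ depends on $(x_i)_{i\in S}$ only through $\big(\sum_{i\in S}(-1)^{x_i}\big)^2=(m-2w)^2$, where $w=\sum_{i\in S}x_i$. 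Hence $Q_M^{S,\delta/m}=\p_S^{B}$ with $B(w)\propto\binom{m}{w}\exp\!\big(\tfrac{\delta}{2m}(m-2w)^2\big)$ --- a mild \emph{quadratic} tilt of $\mathrm{Bin}(m,1/2)$, mild because $\delta$ is small and $|m-2w|=O\!\big(\sqrt{m\log(1/\delta)}\big)$ except on an event of $B$-probability $\mathrm{poly}(\delta)$.

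I would then prove the Ising analogue of Proposition~\ref{prop:bin-moment-matching}: for $\delta$ a sufficiently small multiple of $\eps\log(1/\eps)/k^3$ and $m$ obeying the stated bounds, there is a distribution $A$ on $[m]\cup\{0\}$ with (i) $\E_{X\sim A}[X^i]=\E_{X\sim\mathrm{Bin}(m,1/2)}[X^i]$ for all $1\le i\le k$, (ii) $\dtv(A,B)\le O\!\big(\delta k^3/\log(1/\delta)\big)\le\eps$, and (iii) $\chi^2(A,\mathrm{Bin}(m,1/2))=O(\delta)$. Granting this, the rest is the same bookkeeping as in the binary proof. Because $\p_S^A$ and $Q_M^{S,\delta/m}=\p_S^{B}$ are both symmetric on $S$ and uniform off $S$, the total variation distance collapses to one dimension, $\dtv(\p_S^A,Q_M^{S,\delta/m})=\dtv(A,B)\le\eps$, so every $\p_S^A$ falls under $H_1$. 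Claim~\ref{clm:near-orth-vec-disc} with $c=1/4$ (legitimate since $M=3m^{5/4}>2m^{1+1/4}$) yields $2^{\Omega(m)}$ sets $S$ with pairwise intersections below $m^{3/4}$, and Proposition~\ref{prop:gen-sq-prop} (with $\nu=0$, $k$ matched moments, $\chi^2(A,\mathrm{Bin}(m,1/2))=O(\delta)$, and $\tau=\Theta(M^{-(k+1)/5}\delta)=\Theta(m^{-(k+1)/4}\delta)\ge m^{-(k+1)/4}\chi^2(A,\mathrm{Bin}(m,1/2))$) then gives the claimed dichotomy: either queries of accuracy better than $\sqrt{2\tau}$, or at least $2^{\Omega(\sqrt m)}\tau/\chi^2(A,\mathrm{Bin}(m,1/2))=2^{\Omega(M^{2/5})}M^{-(k+1)/5}$ statistical queries.

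The crux is the Ising moment-matching proposition, i.e.\ constructing $A$. Following the recipe in Section~\ref{ssec:techniques}, I would set $A$ equal to $B$ outside a central interval $I$ of half-width $\Theta\!\big(\sqrt{m\log(1/\delta)}\big)$ (so that $B$ and $\mathrm{Bin}(m,1/2)$ already agree on their first $k$ moments up to additive error $\ll\delta$, the tails being negligible) and equal to $B+p$ on $I$, where $p$ is the unique degree-$k$ polynomial correcting the first $k$ moment discrepancies between $B$ and $\mathrm{Bin}(m,1/2)$; solving for $p$ reduces, after rescaling $I$ to $[-1,1]$ and discretizing, to a Legendre expansion whose coefficients are read off from those discrepancies and bounded via Fact~\ref{fact:legendre-poly}. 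The quantitative reason the guarantee degrades from $\eps\sqrt{\log(1/\eps)}$ and $k^2$ (binary) to $\eps\log(1/\eps)$ and $k^3$ (Ising) is that the discrepancy is now produced by the \emph{quadratic} exponent $\tfrac{\delta}{2m}(m-2w)^2$ rather than a linear one, which enlarges $p$ and hence $\|p\|_\infty$ and $\int_I|p|$ --- the quantities controlling, respectively, nonnegativity of the perturbed pmf, $\dtv(A,B)$, and $\chi^2(A,\mathrm{Bin}(m,1/2))$. The remaining work is routine discrete-versus-continuous error control (comparing sums over $[m]\cup\{0\}$ with the corresponding integrals) and verifying $A\ge0$, both of which go through given the smallness of $\delta$ and the hypothesis $m\gtrsim(\log(1/\delta))^3$.
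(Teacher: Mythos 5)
Your proposal matches the paper's proof of Theorem~\ref{thm:sq-ising-testing}: you identify $Q_M^{S,\delta/m}$ as a hidden-junta distribution $\p_S^B$ with $B=\is(m,\delta/m)$ (the paper's Definition~\ref{def:ising-sym}), invoke the Ising analogue of the binary moment-matching proposition (the paper's Proposition~\ref{prop:ising-moment-matching}, with exactly the guarantees (i)--(iii) you state), reduce $\dtv(\p_S^A,Q_M^{S,\delta/m})$ to $\dtv(A,B)$ by symmetry, and then combine Claim~\ref{clm:near-orth-vec-disc} with Proposition~\ref{prop:gen-sq-prop}. Your sketch of the moment-matching construction --- modifying $B$ on a central window by a degree-$k$ polynomial read off via Legendre coefficients, with the $k^3/\log(1/\delta)$ degradation traced to the quadratic exponent --- is also the route taken in Appendix~\ref{ssec:ising-moment-matching}.
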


We will additionally require the following definition.

\begin{definition}\label{def:ising-sym}
Fix $n$ to be a positive integer.
Let $\is(n,\delta)$ be the distribution over $[n]\cup\{0\}$ with 
$\is(n,\delta)(x)=\binom{n}{x}\exp\left(h(n,x)\delta\right)/Z_n(\delta)$ 
for some parameter $-1/n<\delta<1/n$, where $h(n,x)=2x^2-2nx+\frac{n(n-1)}{2}$ 
and $Z_n(\delta)=\sum_{x=0}^{n}\binom{n}{x}\exp(h(n,x)\delta)$.
\end{definition}

By Definition~\ref{def:sq-ising}, $Z_n(\delta)$ is the partition function of the Ising model over $\{0,1\}^n$, 
where every entry outside of the diagonal of the interaction matrix is $\delta$. 
Intuitively, $\is(n,\delta)(x)$ denotes the contribution of the configurations containing $x$ 1's in the Ising model.

Theorem~\ref{thm:sq-ising-testing} will follow by applying our generic discrete SQ lower bound
construction of Section~\ref{ssec:generic-discrete} along with the following proposition. 

\begin{proposition}\label{prop:ising-moment-matching}
Fix $\delta>0$ to be sufficiently small and $k$ to be an arbitrary positive integer. 
Let integer $m\ge \max\big(C_0(\log(1/\delta))^3,\frac{k^2}{\log(1/\delta)}\big)$ 
for some universal constant $C_0>0$ sufficiently large. 
Then there exists a distribution $A$ over $[m]\cup\{0\}$ satisfying the following conditions: 
\begin{itemize}
\item[(i)] $\E_{X\sim A}[X^i]=\E_{X\sim\mathrm{Bin}(m,1/2)}[X^i]$ for all $1\le i\le k$, 
\item[(ii)] $\dtv(A,\is(m,\delta/m))\le O\big(\frac{\delta k^3}{\log(1/\delta)}\big)$, and 
\item[(iii)] $\chi^2(A,\mathrm{Bin}(m,1/2))=O(\delta)$.
\end{itemize}
\end{proposition}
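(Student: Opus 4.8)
The plan is to mimic the moment-matching construction that yields Proposition~\ref{prop:bin-moment-matching}, substituting the target one-dimensional distribution $\is(m,\delta/m)$ for $\mathrm{Bin}(m,1/2+\delta/\sqrt m)$ and tracking how the different tail behavior of the Ising marginal affects the estimates. First I would record the relevant structural facts about $\is(m,\delta/m)$: writing $p_0(x) = \binom{m}{x}2^{-m}$ for the $\mathrm{Bin}(m,1/2)$ pmf, the ratio $\is(m,\delta/m)(x)/p_0(x)$ equals $\exp(h(m,x)\delta/m)/(2^{-m}Z_m(\delta/m))$, and since $h(m,x) = 2(x-m/2)^2 - m/2$ this ratio is a Gaussian-like reweighting of the binomial centered at $m/2$ with a variance perturbed by a $\Theta(\delta)$ factor. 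I would use this to show (a) $\chi^2(\is(m,\delta/m),\mathrm{Bin}(m,1/2)) = O(\delta)$ directly, which is essentially condition (iii) for $B := \is(m,\delta/m)$ in place of $A$, and (b) that $B$ concentrates on an interval $I$ of length $O(\sqrt{m\log(1/\delta)})$ around $m/2$, with the pmf on $I$ comparable to the binomial pmf up to constants.

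Next, as in the binary-product case, I would define $A$ by perturbing $B$ on the central interval $I$ by a polynomial correction: set $A(x) = B(x)\bigl(1 + q(x)\bigr)$ for $x \in I$ (with $A = B$ outside $I$), where $q$ is a degree-$k$ polynomial chosen so that $\E_{X\sim A}[\mathcal{K}_t(X;m)] = 0$ for all $1\le t\le k$, equivalently so that the first $k$ moments of $A$ match those of $\mathrm{Bin}(m,1/2)$. Since $B$ already nearly matches these moments (the relevant Kravchuk expectations of $B$ are small — of size $O(\delta)$ in an appropriately normalized sense, because $B$ is an $O(\delta)$-perturbation of the binomial), the required correction polynomial $q$ is small. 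I would expand $q$ in the Legendre basis under the (rescaled) uniform measure on $I$ and use the orthogonality and pointwise bounds in Fact~\ref{fact:legendre-poly} — items (ii), (iii), (vii) — together with the near-binomial behavior of $B$ on $I$ to bound $\|q\|_\infty = O(k^{?}\delta/\sqrt{\log(1/\delta)})$ and $\sum_{x\in I}B(x)|q(x)| = O(k^3\delta/\log(1/\delta))$. The first bound gives $A \ge 0$ and $\chi^2(A,\mathrm{Bin}(m,1/2)) = O(\delta)$ (since perturbing by a bounded factor changes the $\chi^2$ only by constants), and the second gives $\dtv(A,B) = O(\delta k^3/\log(1/\delta))$, which is exactly (ii).

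The technical heart — and the step I expect to be the main obstacle — is the precise bookkeeping that turns "small Kravchuk expectations of $B$" into "small correction polynomial $q$" with the stated dependence on $k$ and $\delta$. Two complications compound here relative to the Gaussian setting of \cite{DKS17-sq}: first, one works with the discrete Kravchuk polynomials $\mathcal{K}_t(\cdot;m)$ rather than Hermite/Legendre polynomials, so I would need a comparison lemma (presumably established in the appendix feeding Proposition~\ref{prop:bin-moment-matching}, which I would reuse) showing that on the central interval $I$ the rescaled Kravchuk polynomials behave like Legendre polynomials up to lower-order corrections controlled by $m \gg (\log(1/\delta))^3$; second, the Ising marginal's pmf ratio to the binomial, while bounded on $I$, varies, so the change-of-measure from "uniform on $I$" to "$B$ on $I$" must be handled — here I would exploit that the ratio is monotone away from the center and bounded by $e^{O(\delta \log(1/\delta))} = 1 + o(1)$ on $I$. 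Once the polynomial $q$ is controlled, verifying (i) is immediate by construction, (iii) follows from the bounded-perturbation argument above combined with fact (a), and (ii) follows from the $L^1$ bound on $B|q|$ plus the negligible mass $B$ places outside $I$ (which is $\delta^{\omega(1)}$ by the concentration in step one). I would organize the detailed calculations exactly in parallel with the proof of Proposition~\ref{prop:bin-moment-matching}, isolating as a lemma the fact that any one-dimensional distribution that is an $O(\delta)$-$\chi^2$-perturbation of $\mathrm{Bin}(m,1/2)$ and concentrated on $I$ can be corrected to match $k$ moments at total-variation cost $O(k^3\delta/\log(1/\delta))$, so that both propositions become instances of one statement.
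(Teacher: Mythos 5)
Your high-level strategy (perturb the Ising marginal $B=\is(m,\delta/m)$ on a central interval $I$ of width $\Theta(\sqrt{m\log(1/\delta)})$ by a degree-$k$ polynomial to kill the first $k$ moment discrepancies, then control that polynomial via Legendre expansion) is the same one the paper uses, but two concrete points need attention. First, you write $A=B(1+q)$, a multiplicative correction, whereas the paper sets $A(x)=B(x)+\int_x^{x+1}p(t)\,dt$ for a polynomial $p$. The additive integral form is not cosmetic: it turns the moment conditions into the linear system $\sum_{x\in I}q(x)x^i=b_i$ with the \emph{unweighted} inner product $\sum_{x\in I}r_1(x)r_2(x)$ — which is why the rescaled Legendre polynomials are the natural orthogonal basis — and the integral $\int_x^{x+1}$ is exactly the device (via the mean-value theorem and the derivative bound $|P_i'|\le O(i^2)$) that bridges discrete sums and continuous integrals; there is no ``Kravchuk behaves like Legendre'' comparison lemma to reuse. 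With $A=B(1+q)$ your moment equations are $B$-weighted, so the Legendre orthogonality no longer applies and a nontrivial change-of-measure error must be absorbed; this can probably be made to work, but it is not a parallel to the paper's argument.

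The more serious gap is your claim that ``the detailed calculations \emph{exactly} parallel'' the binary-product case, culminating in the unified lemma that any $O(\delta)$-$\chi^2$-perturbation of $\mathrm{Bin}(m,1/2)$ concentrated on $I$ can be corrected at TV cost $O(k^3\delta/\log(1/\delta))$. This cannot be right, since Proposition~\ref{prop:bin-moment-matching} needs the stronger bound $O(k^2\delta/\sqrt{\log(1/\delta)})$, which for small $k$ and small $\delta$ is not implied by $O(k^3\delta/\log(1/\delta))$. The two cases genuinely differ at the point where the paper Taylor-expands $H_{m,x}(\delta)-H_{m,x}(0)$ to bound $\beta_i=|\sum_x(H_{m,x}(0)-H_{m,x}(\delta))P_i((x-m/2)/(Cm))|$. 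In the binary product case $\partial_\delta F_{m,x}(0)=4F_{m,x}(0)(x-m/2)$ is \emph{linear} in $x-m/2$, and the dominant contribution comes from odd $i$, yielding $\beta_i'\le O(\delta\sqrt{i/m}/C)$. In the Ising case the marginal is symmetric about $m/2$, so $\beta_i=0$ for odd $i$, and for even $i$ the first-order term $G_{m,x}'(0)=G_{m,x}(0)h(m,x)$ with $h(m,x)=2(x-m/2)^2-m/2$ is \emph{quadratic} in $x-m/2$, yielding the smaller exponent pattern $\beta_i'\le O(\delta i^{3/2}/(C^2m))$. Summing against $\sqrt{i}$ (from $\int_{-1}^1|P_i|=O(1/\sqrt i)$) gives $\int_I|p|=O(\delta k^2/(C\sqrt m))=O(\delta k^2/\sqrt{\log(1/\delta)})$ for products but $O(\delta k^3/(C^2 m))=O(\delta k^3/\log(1/\delta))$ for Ising. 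Without tracking the degree and parity of the first-order perturbation, the parallel breaks and the $k^3/\log(1/\delta)$ in (ii) does not emerge. The paper does isolate a shared result (Theorem~\ref{thm:moment-matching}), but it only gives the existence of $p$ and a bound on its Legendre coefficients in terms of the case-dependent quantities $\beta_i$; the exponent lives in the case-specific estimate of $\beta_i$, which is exactly where your proposal is missing an idea.
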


In Section~\ref{ssec:moment-matching}, we give a technical overview of the proof. The detailed 
proof of Proposition~\ref{prop:ising-moment-matching} 
is deferred to Appendix~\ref{ssec:ising-moment-matching}.

\begin{proof}[Proof of Theorem~\ref{thm:sq-ising-testing}]
We can assume without loss of generality that $\epsilon>0$
is smaller than a sufficiently small universal constant. 
Let $\delta$ be a sufficiently small constant multiple of $\epsilon\log(1/\epsilon)/k^3$.
From Proposition~\ref{prop:ising-moment-matching}, there is a distribution $A$ over $[m]\cup\{0\}$ 
such that (i) $\E_{X\sim A}[X^i]=\E_{X\sim\mathrm{Bin}(m,1/2)}[X^i]$ for all $0\le i\le k$, 
(ii) $\dtv(A,\is(m,\delta/m))\le O\left(\frac{\delta k^3}{\log(1/\delta)}\right)\le O(\epsilon)$, 
and (iii) $\chi^2(A,\mathrm{Bin}(m,1/2))=O(\delta)$.
Note that for any subset $S\subseteq[M]$ with $|S|=m$, it holds that
\begin{align*}
&\quad\dtv\Big(\p_S^A,Q_M^{S,\frac{\delta}{m}}\Big)=
(1/2)\sum_{\bx\in\{0,1\}^M}\left|\p_S^A(\bx)-Q_M^{S,\frac{\delta}{m}}(\bx)\right|\\
&=(1/2)\sum_{\bx\in\{0,1\}^M}\left|2^{-M+m} A \left(\sum_{i\in S}x_i\right)\binom{m}{\sum_{i\in S}x_i}^{-1}- 2^{-M+m}\left(\frac{\exp(h(m,\sum_{i\in S}x_i)\delta/m)}{Z_n(\delta/m)}\right)\right|\\
&=(1/2)\sum_{j=0}^{m}\left|A(j)-\frac{\binom{m}{j} \exp(h(m,j)\delta/m)}{Z_n(\delta/m)}\right| 
=\dtv(A,\is(m,\delta/m))=O(\delta k^3/\log(1/\delta)) \le \epsilon \;.
\end{align*}
By Claim~\ref{clm:near-orth-vec-disc}, there exists a collection $\mathcal{C}$
of $2^{\Omega(m)}$ subsets $S\subseteq[M]$ with $|S|=m$
such that for any pair $S,S'\in\mathcal{C}$, with $S\ne S'$, satisfies $|S\cap S'|<m^{3/4}$.
Applying Proposition~\ref{prop:gen-sq-prop}, we determine
that any SQ algorithm which, given access to a distribution $\p$
so that either $\p=U_M$, or $\p$ is given by $\p_S^A$
for some unknown subset $S\subseteq[M]$ with $|S|=m$,
correctly distinguishes between these two cases with probability at least $2/3$
must either make queries of accuracy better than $\sqrt{2\tau}$
or must make at least
$\frac{2^{\Omega(\sqrt{m})}\tau}{\chi^2(A,\mathrm{Bin}(m,1/2))}\ge2^{\Omega(M^{2/5})}M^{-(k+1)/5}$
statistical queries, since $m^{-(k+1)/4}\chi^2(A,\mathrm{Bin}(m,1/2))\le O(M^{-(k+1)/5}\delta)\le\tau$.
This completes the proof of Theorem~\ref{thm:sq-ising-testing}.
\end{proof}

\section{Proof Sketch of Proposition~\ref{prop:bin-moment-matching} 
and Proposition~\ref{prop:ising-moment-matching}}\label{ssec:moment-matching}

The construction of the distribution $A$ in both cases
is similar in spirit to the technique in~\cite{DKS17-sq} for constructing
a distribution that matches moments with $\mathcal{N}(0,1)$
but is close in total variation distance to $\mathcal{N}(\delta,1)$, for an appropriate $\delta>0$.
Specifically, we start from some appropriate one-dimensional version 
of either a binary product distribution or Ising model, $H(x)$, 
over $[m]\cup\{0\}$, and then modify it in order to match the first $k$ moments 
with $\bin(m,1/2)$. We achieve this by modifying
the probability mass function of $A$ by adding a polynomial $q$
over some appropriately chosen interval $I=[(1/2-C)m,(1/2+C)m]$, for some carefully selected 
$C=\Theta(\sqrt{(\log(1/\delta)/m)}$.
In particular, for any integer point $x\in I$, 
we let $q(x)=\int_{x}^{x+1}p(t)dt$, for some real polynomial $p$ of degree-$k$ and then modify the probability mass function by adding $q(x)$ to $H(x)$.
The moment-matching condition 
amounts to a system of linear equations on the coefficients of $p$.
We show that this system has a unique solution.
Then the rest of our analysis focuses on showing that this modification
leaves the probability mass function of $A$ still non-negative
and sufficiently close to $H$ in total variation distance.

In particular, we express the polynomial $p$ as a linear combination 
of appropriately scaled Legendre polynomials, 
i.e., $p(t)=\sum_{i=0}^{k}a_iP_i\left(\frac{t-m/2}{Cm}\right)$,
where $P_i$ denotes the $i$-th Legendre polynomial and $a_i\in\R$ is a coefficient. 
Then we show, by analogy to the proof in~\cite{DKS17-sq}, 
that the $L_1$ and $L_\infty$ norms of $p$ within the interval $I$ are sufficiently small.
In particular, the~\cite{DKS17-sq} result on the hardness of robustly 
learning unknown-mean or covariance Gaussians 
essentially solves
the limiting version of this problem (that is achieved as $m\to\infty$).
As their analysis shows that this limiting case works,
we need to show that when $m$ is sufficiently large,
we are sufficiently close to that limiting case that our construction will also succeed.
To achieve this, we require some new proof ideas 
in order to show that with sufficiently large but finite $m$, 
our analysis will be close enough to that of the limiting case, 
so that the results of~\cite{DKS17-sq} can still be applied.

In more detail, by our construction of the polynomial $p$ and the moment-matching condition, 
we are able to bound from above the coefficients $a_i$ as follows: 
\begin{align*}
|a_i|=\left(\frac{2i+1}{2Cm}\right)\left|\int_{t\in I}p(t)P_i\left(\frac{t-m/2}{Cm}\right)dt\right|\le\frac{(2i+1)(\gamma_i+\beta_i)}{2Cm}, 
&\qquad\forall 1\le i\le k \;,
\end{align*}
where we have that $\beta_i=\left|\sum_{x=0}^{m}(H(x)-\bin(m,1/2)(x))P_i\left(\frac{x-m/2}{Cm}\right)\right|$ and
\begin{align*}
\gamma_i=\left|\sum_{x\in\Z\cap I}P_i\left(\frac{x-m/2}{Cm}\right)\int_{x}^{x+1}p(t)dt-\int_{t\in I}p(t)P_i\left(\frac{t-m/2}{Cm}\right)dt\right| \;.
\end{align*}
Intuitively speaking, the quantity $\beta_i$ represents the answer to the continuous version of the problem and the quantity $\gamma_i$ inherently captures 
the error between the discrete and limiting continuous versions of our problems.
Since the absolute value of the derivative of the $i$-th Legendre polynomial 
is at most $O(i^2)$ in the interval $[-1,1]$, by the last property of Fact~\ref{fact:legendre-poly},
we are able to apply the mean-value theorem to obtain an upper bound for 
$\gamma_i$ in terms of the $L_1$-norm of $p$ within the interval $I$.
For the quantity $\beta_i$, we borrow ideas from~\cite{DKS17-sq} 
to view $H(x)-\bin(m,1/2)(x)$ as a function of some appropriately 
chosen parameter, and apply Taylor's theorem to expand this difference up to second order terms. 
Then we can show that both the first order and second order terms are sufficiently small. 

In summary, we prove the following technical result.

\begin{theorem}\label{thm:moment-matching}
Fix $\delta>0$, $0<C<1/2$ and $k\in\Z_+$. 
Let integer $m>k/(2C)$ such that both $(1/2-C)m$ and $(1/2+C)m$ are integers. 
Consider the interval $I_{C,m}=[(1/2-C)m,(1/2+C)m-1]$. 
Let $\{H_{m,x}(t)\}_{x\in[m]\cup\{0\}}$ be a family of real functions. 
Then there is a unique real polynomial $p$ of degree at most $k$ such that
\begin{align}\label{eq:moment-matching}
\sum_{x\in\Z\cap I_{C,m}}x^i\int_{x}^{x+1}p(t)dt=\sum_{x=0}^{m}(H_{m,x}(0)-H_{m,x}(\delta))x^i:=b_i, 
&\qquad\forall 0\le i\le k \;.
\end{align}
In addition, we can write $p(t)=\sum_{i=1}^{k}a_iP_i\left(\frac{t-m/2}{Cm}\right)$, where
\begin{align}\label{eq:bound-a}
|a_i|\le\left(\frac{2i+1}{2Cm}\right)\left(\beta_i+O\left(\frac{i^2}{Cm}\right)\int_{(1/2-C)m}^{(1/2+C)m}|p(t)|dt\right) \;,
\end{align}
for all $1\le i\le k$, 
where $\beta_i=\left|\sum_{x=0}^{m}(H_{m,x}(0)-H_{m,x}(\delta))P_i\left(\frac{x-m/2}{Cm}\right)\right|$.
\end{theorem}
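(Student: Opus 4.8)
\textbf{Proof proposal for Theorem~\ref{thm:moment-matching}.}

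The plan is to first establish existence and uniqueness of the polynomial $p$ by reducing the moment-matching conditions~\eqref{eq:moment-matching} to a nonsingular linear system, and then to derive the coefficient bound~\eqref{eq:bound-a} by expanding $p$ in the (rescaled) Legendre basis and isolating the discretization error. For the first part, note that $p$ has $k+1$ unknown coefficients and that~\eqref{eq:moment-matching} imposes $k+1$ linear constraints, one for each $0 \le i \le k$. The left-hand side of~\eqref{eq:moment-matching} is a linear functional of $p$, so the map $p \mapsto \big(\sum_{x\in\Z\cap I_{C,m}} x^i \int_x^{x+1} p(t)\,dt\big)_{i=0}^k$ is a linear map from the $(k+1)$-dimensional space of degree-$\le k$ polynomials to $\R^{k+1}$. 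To see it is injective (hence bijective), suppose $p \ne 0$ lies in the kernel. The quantities $c_x := \int_x^{x+1} p(t)\,dt$ for $x$ ranging over the $|I_{C,m}\cap\Z| = 2Cm$ integer points then satisfy $\sum_x x^i c_x = 0$ for all $0 \le i \le k$; since $m > k/(2C)$ we have more than $k+1$ such points, and the Vandermonde structure forces all $c_x = 0$ only if we additionally know $c_x$ vanishes for enough points — so instead I would argue directly: the map $x \mapsto \int_x^{x+1} p(t)\,dt$ is itself (the restriction to integers of) a polynomial in $x$ of degree $\le k$ whenever $p$ has degree $\le k$ (this is a standard finite-difference fact, since $\int_x^{x+1} t^j\,dt = \frac{(x+1)^{j+1}-x^{j+1}}{j+1}$ is a degree-$j$ polynomial in $x$). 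Call this polynomial $\tilde p(x)$ of degree $\le k$. Then $\sum_{x\in\Z\cap I_{C,m}} x^i \tilde p(x) = 0$ for $0\le i \le k$ means $\tilde p$ is orthogonal (under the counting measure on $2Cm > k+1$ points) to all polynomials of degree $\le k$, including itself, so $\tilde p \equiv 0$ on those points, hence $\tilde p \equiv 0$ as a polynomial; and $p \mapsto \tilde p$ is invertible on degree-$\le k$ polynomials (it is ``integration against the box filter'', unipotent upper-triangular in the monomial basis), so $p = 0$, a contradiction. This proves existence and uniqueness.

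For the coefficient bound, I would write $p(t) = \sum_{i=0}^k a_i P_i\!\left(\frac{t-m/2}{Cm}\right)$ (the change of basis from monomials is invertible, and one checks $a_0$ can be absorbed/shown to vanish since the $i=0$ moment equation fixes total mass; this is why the statement writes the sum from $i=1$). By orthogonality of the Legendre polynomials (Fact~\ref{fact:legendre-poly}(ii)), substituting $u = \frac{t-m/2}{Cm}$ gives
\[
a_i = \left(\frac{2i+1}{2Cm}\right)\int_{I_{C,m}} p(t)\, P_i\!\left(\frac{t-m/2}{Cm}\right)dt \;.
\]
Now the key move is to compare this continuous integral with the discrete sum that actually appears in the moment conditions. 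Using~\eqref{eq:moment-matching} — more precisely, the fact that the discrete moments $\sum_x x^i \int_x^{x+1} p$ are prescribed to equal $b_i = \sum_x (H_{m,x}(0)-H_{m,x}(\delta))x^i$, and that $P_i\big(\frac{x-m/2}{Cm}\big)$ is a degree-$i$ polynomial in $x$ — I get
\[
\sum_{x\in\Z\cap I_{C,m}} P_i\!\left(\tfrac{x-m/2}{Cm}\right)\int_x^{x+1} p(t)\,dt \;=\; \sum_{x=0}^m (H_{m,x}(0)-H_{m,x}(\delta))\,P_i\!\left(\tfrac{x-m/2}{Cm}\right) \;,
\]
whose absolute value is exactly $\beta_i$. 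Therefore
\[
|a_i| \;\le\; \left(\frac{2i+1}{2Cm}\right)\left(\beta_i + \left|\sum_{x\in\Z\cap I_{C,m}} P_i\!\left(\tfrac{x-m/2}{Cm}\right)\int_x^{x+1}p(t)\,dt \;-\; \int_{I_{C,m}} p(t)\,P_i\!\left(\tfrac{t-m/2}{Cm}\right)dt\right|\right) \;.
\]
The remaining term is the Riemann-sum error $\gamma_i$. To bound it, on each unit interval $[x,x+1]$ I replace $P_i\big(\frac{x-m/2}{Cm}\big)$ by $P_i\big(\frac{t-m/2}{Cm}\big)$ at the cost of $\big|P_i\big(\frac{x-m/2}{Cm}\big) - P_i\big(\frac{t-m/2}{Cm}\big)\big| \le \frac{1}{Cm}\sup_{|u|\le 1}|P_i'(u)| = O(i^2/(Cm))$ by the mean value theorem and Fact~\ref{fact:legendre-poly}(viii) (the argument $\frac{t-m/2}{Cm}$ stays in $[-1,1]$ for $t\in I_{C,m}$). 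Summing $\int_x^{x+1}|p(t)|\,dt$ over all boxes gives $\int_{(1/2-C)m}^{(1/2+C)m}|p(t)|\,dt$, so $\gamma_i \le O\big(\frac{i^2}{Cm}\big)\int_{(1/2-C)m}^{(1/2+C)m}|p(t)|\,dt$, which is exactly~\eqref{eq:bound-a}.

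The main obstacle I anticipate is the bookkeeping in the uniqueness argument — specifically making the ``$\int_x^{x+1} p(t)\,dt$ is a polynomial in $x$'' observation fully rigorous and checking that $2Cm$ strictly exceeds $k+1$ so that the orthogonality-forces-zero step goes through (this is where the hypothesis $m > k/(2C)$ is used, and one must be careful that $I_{C,m}$ contains exactly $2Cm$ integers given that $(1/2\pm C)m \in \Z$). A secondary, milder subtlety is justifying that the monomial-to-Legendre change of basis is legitimate and that the $a_0$ term may be dropped: this follows because the $i=0$ equation in~\eqref{eq:moment-matching} pins down $\int p$ in terms of $\sum_x(H_{m,x}(0)-H_{m,x}(\delta))$, but in the intended application $H_{m,\cdot}(0)$ and $H_{m,\cdot}(\delta)$ are probability mass functions with equal total mass, making $b_0 = 0$ and hence $a_0 = 0$; alternatively one simply keeps $a_0$ and notes the bound~\eqref{eq:bound-a} is only claimed for $i\ge 1$. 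Everything else is routine: one invertible triangular change of variables, Legendre orthogonality, the mean value theorem, and Fact~\ref{fact:legendre-poly}(viii).
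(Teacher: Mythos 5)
Your proposal is correct and follows essentially the same route as the paper: you identify $q(x) = \int_x^{x+1}p(t)\,dt$ as a degree-$\le k$ polynomial in $x$ via the unipotent upper-triangular change of basis, establish uniqueness by the non-degeneracy of the counting inner product on the $2Cm \ge k+1$ integer points of $I_{C,m}$, and then expand $p$ in rescaled Legendre polynomials, splitting $|a_i|$ into the moment-matching term $\beta_i$ plus a Riemann-sum discretization error controlled by the mean value theorem and Fact~\ref{fact:legendre-poly}(viii). Your remark about $a_0$ (that $b_0=0$ in the intended application, or else one keeps $a_0$ and only bounds $i\ge1$) also correctly identifies a small imprecision that the paper itself glosses over.
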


\begin{proof}
We first show that there is a unique real polynomial $p$ 
of degree at most $k$ satisfying~\eqref{eq:moment-matching}.
Let $p(t)=\sum_{i=0}^{k}p_it^i$ and $q(x)=\int_{x}^{x+1}p(t)dt$.
We note that each value of $i$ implies a single linear condition on $q$. 
This suggests that as long as the support domain $I_{C,m}$ is sufficiently large,
we can simply solve a system of linear equations to find it.
In more detail, we start by establishing the relationship between $p$ and $q$.
By definition, we have that
\begin{align*}
q(x)
&=\sum_{i=0}^{k}\frac{p_i((x+1)^{i+1}-x^{i+1})}{i+1} 
=\sum_{i=0}^{k}\left(\frac{p_i}{i+1}\right)\sum_{j=0}^{i}\binom{i+1}{j}x^j\\
&=\sum_{j=0}^{k}x^j\sum_{i=j}^{k}\left(\frac{p_i}{i+1}\right)\binom{i+1}{j}=\sum_{j=0}^{k}q_jx^j \;,
\end{align*}
where
\begin{align}\label{eq:p-q}
q_j=\sum_{i=j}^{k}\left(\frac{p_i}{i+1}\right)\binom{i+1}{j} \;.
\end{align}
This gives us a linear equation to solve for $p_i$ in terms of $q_j$ that is upper triangular and thus has a unique solution.
For any two polynomials $r_1(x),r_2(x)$ of degree at most $k$, 
we consider the inner product $\langle r_1,r_2\rangle\in\R$ given by the following:
\begin{align*}
\langle r_1,r_2\rangle:=\sum_{x\in\Z\cap I_{C,m}}r_1(x)r_2(x) \;.
\end{align*}
To show that this inner product is non-degenerate as long as $m$ is sufficiently large,
we need to show that for any polynomial $r(x)$ of degree at most $k$, 
it holds that $\sum_{x\in\Z\cap I_{C,m}}r^2(x)=0\Rightarrow r=0$. 
By our assumptions of $C,k,m$, $\sum_{x\in\Z\cap I_{C,m}}r^2(x)=0$ 
will imply that the polynomial $r(x)$ of degree at most $k$ 
has at least $2Cm>k$ different roots, which implies $r=0$.
Therefore, we can write the LHS of~\eqref{eq:moment-matching} as
\begin{align}\label{eq:coefficient}
\langle x^i,q(x)\rangle=b_i,\qquad0\le i\le k \;.
\end{align}
Since $1,x,\cdots,x^k$ are linearly independent polynomials of degree at most $k$, 
there exists a unique polynomial $q$ of degree at most $k$ 
satisfying the system of equations~\eqref{eq:coefficient}.

To show inequality~\eqref{eq:bound-a}, we first express $p$ 
as a linear combination of scaled Legendre polynomials 
whose coefficients are explicitly given by integrals. 
In particular, since $p$ has degree at most $k$ and the set of polynomials 
$\left\{P_i\left(\frac{t-m/2}{Cm}\right)\right\}_{0\le i\le k}$ 
contains a polynomial of each degree from $0$ to $k$, 
there exist $a_i\in\R$ such that $p(t)=\sum_{i=0}^{k}a_iP_i\left(\frac{t-m/2}{Cm}\right)$. 
It follows from Fact~\ref{fact:legendre-poly} (ii) that
\begin{align*}
\int_{(1/2-C)m}^{(1/2+C)m}p(t)P_i\left(\frac{t-m/2}{Cm}\right)dt 
&=\sum_{j=0}^{k}a_j\int_{(1/2-C)m}^{(1/2+C)m}P_i\left(\frac{t-m/2}{Cm}\right)P_j\left(\frac{t-m/2}{Cm}\right)dt\\
&=Cm\sum_{j=0}^{k}a_j\int_{-1}^{1}P_i(t)P_j(t)dt=\frac{2Cma_i}{2i+1} \;,
\end{align*}
which implies that $a_i=\left(\frac{2i+1}{2Cm}\right)\int_{(1/2-C)m}^{(1/2+C)m}p(t)P_i\left(\frac{t-m/2}{Cm}\right)dt$, 
for all $0\le i\le k$. In addition, by Equation~\eqref{eq:moment-matching}, 
we have that $a_0=\frac{1}{2Cm}\int_{(1/2-C)m}^{(1/2+C)m}p(t)dt=0$.
We now bound $|a_i|$ as follows. Let
$$\gamma_i=
\Bigg|\sum_{x\in\Z\cap I_{C,m}}P_i\left(\frac{x-m/2}{Cm}\right)\int_{x}^{x+1}p(t)dt-\int_{(1/2-C)m}^{(1/2+C)m}p(t)P_i\left(\frac{t-m/2}{Cm}\right)dt\Bigg| \;.$$ 
By Fact~\ref{fact:legendre-poly} (viii) and the mean-value theorem, we have that
\begin{align*}
\gamma_i
&=\Bigg|\sum_{x\in\Z\cap I_{C,m}}\int_{x}^{x+1}\left(P_i\left(\frac{x-m/2}{Cm}\right)-P_i\left(\frac{t-m/2}{Cm}\right)\right)p(t)dt\Bigg|\\
&=\left(\frac{1}{Cm}\right)\Bigg|\sum_{x\in\Z\cap I_{C,m}}\int_{x}^{x+1}(x-t)P_i'\left(\frac{\xi_t-m/2}{Cm}\right)p(t)dt\Bigg|\\
&\le O\left(\frac{i^2}{Cm}\right)\sum_{x\in\Z\cap I_{C,m}}\int_{x}^{x+1}|p(t)|dt 
=O\left(\frac{i^2}{Cm}\right)\int_{(1/2-C)m}^{(1/2+C)m}|p(t)|dt \;,
\end{align*}
where $\xi_t$ is some real number between $x$ and $t$ for each $t\in[x,x+1)$. 
Therefore, by Equation~\eqref{eq:moment-matching}, we have that
\begin{align*}
|a_i|
&=\left|\left(\frac{2i+1}{2Cm}\right)\int_{(1/2-C)m}^{(1/2+C)m}p(t)P_i\left(\frac{t-m/2}{Cm}\right)dt\right|\\
&\le\left(\frac{2i+1}{2Cm}\right)\Bigg(\gamma_i+\Bigg|\sum_{x\in\Z\cap I_{C,m}}P_i\left(\frac{x-m/2}{Cm}\right)\int_{x}^{x+1}p(t)dt\Bigg|\Bigg)\\
&=\left(\frac{2i+1}{2Cm}\right)(\gamma_i+\beta_i) 
\le \left(\frac{2i+1}{2Cm}\right)\left(\beta_i+O\left(\frac{i^2}{Cm}\right)\int_{(1/2-C)m}^{(1/2+C)m}|p(t)|dt\right) \;,
\end{align*}
where the last equality follows from Equation~\eqref{eq:moment-matching} and
$$\sum_{x\in\Z\cap I_{C,m}}P_i\left(\frac{x-m/2}{Cm}\right)\int_{x}^{x+1}p(t)dt=\sum_{x=0}^{m}(H_{m,x}(0)-H_{m,x}(\delta))P_i\left(\frac{x-m/2}{Cm}\right),$$
since $P_i\left(\frac{x-m/2}{Cm}\right)$ is a polynomial in $x$ of degree $i$.
This completes the proof.
\end{proof}


\bibliographystyle{alpha}

\bibliography{allrefs}

\newcommand{\etalchar}[1]{$^{#1}$}
\begin{thebibliography}{DKK{\etalchar{+}}21b}

\bibitem[AKPS19]{adamczak2019note}
R.~Adamczak, M.~Kotowski, B.~Polaczyk, and M.~Strzelecki.
\newblock A note on concentration for polynomials in the ising model.
\newblock {\em Electronic Journal of Probability}, 24, 2019.

\bibitem[BBH{\etalchar{+}}20]{BBHLS20}
M.~Brennan, G.~Bresler, S.~B. Hopkins, J.~Li, and T.Schramm.
\newblock Statistical query algorithms and low-degree tests are almost
  equivalent.
\newblock {\em CoRR}, abs/2009.06107, 2020.

\bibitem[BKS{\etalchar{+}}06]{JMLR:blanchard06a}
G.~Blanchard, M.~Kawanabe, M.~Sugiyama, V.~Spokoiny, and K.-R. M{{\"u}}ller.
\newblock In search of non-gaussian components of a high-dimensional
  distribution.
\newblock {\em Journal of Machine Learning Research}, 7(9):247--282, 2006.

\bibitem[BKW03]{BKW:03}
A.~Blum, A.~Kalai, and H.~Wasserman.
\newblock Noise-tolerant learning, the parity problem, and the statistical
  query model.
\newblock {\em J. ACM}, 50(4):506--519, 2003.

\bibitem[BPR18]{BPR18}
S.~Bubeck, E.~Price, and I.~P. Razenshteyn.
\newblock Adversarial examples from computational constraints.
\newblock {\em CoRR}, abs/1805.10204, 2018.

\bibitem[Cha05]{chatterjee2005concentration}
S.~Chatterjee.
\newblock {\em Concentration inequalities with exchangeable pairs}.
\newblock PhD thesis, Stanford University, 2005.

\bibitem[CKL{\etalchar{+}}06]{Chu:2006}
C.-T. Chu, S.~K. Kim, Y.~A. Lin, Y.~Yu, G.~Bradski, A.~Y. Ng, and K.~Olukotun.
\newblock Map-reduce for machine learning on multicore.
\newblock In {\em Proceedings of the 19th International Conference on Neural
  Information Processing Systems}, NIPS'06, pages 281--288, Cambridge, MA, USA,
  2006. MIT Press.

\bibitem[CT91]{CoverThomas:91}
T.~Cover and J.~Thomas.
\newblock {\em {Elements of Information Theory}}.
\newblock Wiley, 1991.

\bibitem[DDDK20]{dagan2020estimating}
Y.~Dagan, C.~Daskalakis, N.~Dikkala, and A.~V. Kandiros.
\newblock Estimating ising models from one sample.
\newblock {\em arXiv preprint arXiv:2004.09370}, 2020.

\bibitem[DDK17]{daskalakis2017concentration}
C.~Daskalakis, N.~Dikkala, and G.~Kamath.
\newblock Concentration of multilinear functions of the ising model with
  applications to network data.
\newblock {\em Advances in Neural Information Processing Systems}, 30:12--23,
  2017.

\bibitem[DFT{\etalchar{+}}15]{DFTWW15}
D.~Dachman{-}Soled, V.~Feldman, L.{-}Y. Tan, A.~Wan, and K.~Wimmer.
\newblock Approximate resilience, monotonicity, and the complexity of agnostic
  learning.
\newblock In Piotr Indyk, editor, {\em Proceedings of the Twenty-Sixth Annual
  {ACM-SIAM} Symposium on Discrete Algorithms, {SODA} 2015}, pages 498--511.
  {SIAM}, 2015.

\bibitem[DK19]{DK19-survey}
I.~Diakonikolas and D.~M. Kane.
\newblock Recent advances in algorithmic high-dimensional robust statistics.
\newblock {\em CoRR}, abs/1911.05911, 2019.

\bibitem[DK20]{DK20-Massart-hard}
I.~Diakonikolas and D.~M. Kane.
\newblock Near-optimal statistical query hardness of learning halfspaces with
  massart noise.
\newblock {\em CoRR}, abs/2012.09720, 2020.

\bibitem[DKK{\etalchar{+}}16]{DKKLMS16}
I.~Diakonikolas, G.~Kamath, D.~M. Kane, J.~Li, A.~Moitra, and A.~Stewart.
\newblock Robust estimators in high dimensions without the computational
  intractability.
\newblock In {\em Proceedings of FOCS'16}, pages 655--664, 2016.

\bibitem[DKK{\etalchar{+}}21a]{DiakonikolasKKL21}
I.~Diakonikolas, G.~Kamath, D.~M. Kane, J.~Li, A.~Moitra, and A.~Stewart.
\newblock Robustness meets algorithms.
\newblock {\em Commun. {ACM}}, 64(5):107--115, 2021.

\bibitem[DKK{\etalchar{+}}21b]{DKKTZ21-benign}
I.~Diakonikolas, D.~M. Kane, V.~Kontonis, C.~Tzamos, and N.~Zarifis.
\newblock Learning general halfspaces with general massart noise under the
  gaussian distribution.
\newblock {\em CoRR}, abs/2108.08767, 2021.

\bibitem[DKKZ20]{DiakonikolasKKZ20}
I.~Diakonikolas, D.~M. Kane, V.~Kontonis, and N.~Zarifis.
\newblock Algorithms and {SQ} lower bounds for {PAC} learning one-hidden-layer
  relu networks.
\newblock In {\em Conference on Learning Theory, {COLT} 2020}, volume 125 of
  {\em Proceedings of Machine Learning Research}, pages 1514--1539. {PMLR},
  2020.

\bibitem[DKP{\etalchar{+}}21]{DKPPS21}
I.~Diakonikolas, D.~M. Kane, A.~Pensia, T.~Pittas, and A.~Stewart.
\newblock Statistical query lower bounds for list-decodable linear regression.
\newblock {\em CoRR}, abs/2106.09689, 2021.

\bibitem[DKPZ21]{DiakonikolasKPZ21}
I.~Diakonikolas, D.~M. Kane, T.~Pittas, and N.~Zarifis.
\newblock The optimality of polynomial regression for agnostic learning under
  gaussian marginals in the {SQ} model.
\newblock In {\em Conference on Learning Theory, {COLT} 2021}, volume 134 of
  {\em Proceedings of Machine Learning Research}, pages 1552--1584. {PMLR},
  2021.

\bibitem[DKS17]{DKS17-sq}
I.~Diakonikolas, D.~M. Kane, and A.~Stewart.
\newblock Statistical query lower bounds for robust estimation of
  high-dimensional gaussians and gaussian mixtures.
\newblock In {\em 58th {IEEE} Annual Symposium on Foundations of Computer
  Science, {FOCS} 2017}, pages 73--84, 2017.
\newblock Full version at http://arxiv.org/abs/1611.03473.

\bibitem[DKS18]{DKS18-list}
I.~Diakonikolas, D.~M. Kane, and A.~Stewart.
\newblock List-decodable robust mean estimation and learning mixtures of
  spherical gaussians.
\newblock In {\em Proceedings of the 50th Annual {ACM} {SIGACT} Symposium on
  Theory of Computing, {STOC} 2018}, pages 1047--1060, 2018.
\newblock Full version available at https://arxiv.org/abs/1711.07211.

\bibitem[DKS19]{DKS19}
I.~Diakonikolas, W.~Kong, and A.~Stewart.
\newblock Efficient algorithms and lower bounds for robust linear regression.
\newblock In {\em Proceedings of the Thirtieth Annual {ACM-SIAM} Symposium on
  Discrete Algorithms, {SODA} 2019}, pages 2745--2754, 2019.

\bibitem[DKSS21]{DiakonikolasKSS21}
I.~Diakonikolas, D.~M. Kane, A.~Stewart, and Y.~Sun.
\newblock Outlier-robust learning of {I}sing models under {D}obrushin's
  condition.
\newblock In {\em Conference on Learning Theory, {COLT} 2021}, volume 134 of
  {\em Proceedings of Machine Learning Research}, pages 1645--1682. {PMLR},
  2021.

\bibitem[DKZ20]{DKZ20}
I.~Diakonikolas, D.~M. Kane, and N.~Zarifis.
\newblock Near-optimal {SQ} lower bounds for agnostically learning halfspaces
  and relus under gaussian marginals.
\newblock {\em CoRR}, abs/2006.16200, 2020.
\newblock Conference version in NeurIPS'20.

\bibitem[Fel17]{Feldman17}
V.~Feldman.
\newblock A general characterization of the statistical query complexity.
\newblock In Satyen Kale and Ohad Shamir, editors, {\em Proceedings of the 30th
  Conference on Learning Theory, {COLT} 2017}, volume~65 of {\em Proceedings of
  Machine Learning Research}, pages 785--830. {PMLR}, 2017.

\bibitem[FGR{\etalchar{+}}13]{FGR+13}
V.~Feldman, E.~Grigorescu, L.~Reyzin, S.~Vempala, and Y.~Xiao.
\newblock Statistical algorithms and a lower bound for detecting planted
  cliques.
\newblock In {\em Proceedings of STOC'13}, pages 655--664, 2013.
\newblock Full version in Journal of the ACM, 2017.

\bibitem[FGV17]{FeldmanGV17}
V.~Feldman, C.~Guzman, and S.~S. Vempala.
\newblock Statistical query algorithms for mean vector estimation and
  stochastic convex optimization.
\newblock In Philip~N. Klein, editor, {\em Proceedings of the Twenty-Eighth
  Annual {ACM-SIAM} Symposium on Discrete Algorithms, {SODA} 2017}, pages
  1265--1277. {SIAM}, 2017.

\bibitem[FPV15]{FeldmanPV15}
V.~Feldman, W.~Perkins, and S.~Vempala.
\newblock On the complexity of random satisfiability problems with planted
  solutions.
\newblock In {\em Proceedings of the Forty-Seventh Annual {ACM} on Symposium on
  Theory of Computing, {STOC}, 2015}, pages 77--86, 2015.

\bibitem[GLP18]{gheissari2018concentration}
R.~Gheissari, E.~Lubetzky, and Y.~Peres.
\newblock Concentration inequalities for polynomials of contracting ising
  models.
\newblock {\em Electronic Communications in Probability}, 23, 2018.

\bibitem[GS19]{GoyalS19}
N.~Goyal and A.~Shetty.
\newblock Non-gaussian component analysis using entropy methods.
\newblock In {\em Proceedings of the 51st Annual {ACM} {SIGACT} Symposium on
  Theory of Computing, {STOC} 2019}, pages 840--851. {ACM}, 2019.

\bibitem[GSS19]{gotze2019higher}
F.~G{\"o}tze, H.~Sambale, and A.~Sinulis.
\newblock Higher order concentration for functions of weakly dependent random
  variables.
\newblock {\em Electronic Journal of Probability}, 24, 2019.

\bibitem[Hub64]{Huber64}
P.~J. Huber.
\newblock Robust estimation of a location parameter.
\newblock {\em Ann. Math. Statist.}, 35(1):73--101, 03 1964.

\bibitem[Kea98]{Kearns:98}
M.~J. Kearns.
\newblock Efficient noise-tolerant learning from statistical queries.
\newblock {\em Journal of the ACM}, 45(6):983--1006, 1998.

\bibitem[K{\"u}l03]{kulske2003concentration}
C.~K{\"u}lske.
\newblock Concentration inequalities for functions of gibbs fields with
  application to diffraction and random gibbs measures.
\newblock {\em Communications in mathematical physics}, 239(1-2):29--51, 2003.

\bibitem[LRV16]{LaiRV16}
K.~A. Lai, A.~B. Rao, and S.~Vempala.
\newblock Agnostic estimation of mean and covariance.
\newblock In {\em Proceedings of FOCS'16}, 2016.

\bibitem[Mar15]{marton2015logarithmic}
K.~Marton.
\newblock Logarithmic sobolev inequalities in discrete product spaces: a proof
  by a transportation cost distance.
\newblock {\em arXiv preprint arXiv:1507.02803}, 2015.

\bibitem[Sze89]{Szego:39}
G.~Szeg{\"o}.
\newblock {\em Orthogonal Polynomials}, volume XXIII of {\em American
  Mathematical Society Colloquium Publications}.
\newblock A.M.S, Providence, 1989.

\bibitem[Tuk60]{Tukey60}
J.~W. Tukey.
\newblock A survey of sampling from contaminated distributions.
\newblock {\em Contributions to probability and statistics}, 2:448--485, 1960.

\bibitem[TV18]{TanV18}
Y.~S. Tan and R.~Vershynin.
\newblock Polynomial time and sample complexity for non-gaussian component
  analysis: Spectral methods.
\newblock In {\em Conference On Learning Theory, {COLT} 2018}, volume~75 of
  {\em Proceedings of Machine Learning Research}, pages 498--534. {PMLR}, 2018.

\bibitem[Val84]{val84}
L.~G. Valiant.
\newblock A theory of the learnable.
\newblock In {\em Proc.\ 16th Annual ACM Symposium on Theory of Computing
  (STOC)}, pages 436--445. ACM Press, 1984.

\bibitem[Ver18]{vershynin2018high}
R.~Vershynin.
\newblock {\em High-dimensional probability: An introduction with applications
  in data science}, volume~47.
\newblock Cambridge university press, 2018.

\end{thebibliography}

\newpage

\appendix

\section*{Appendix}

\section{Omitted Technical Preliminaries}\label{app:prelims-1}

In this section, we record the required definitions and technical facts.

\subsection{Basics Facts }\label{ssec:binomial-expectation}

\begin{fact}\label{fact:bin-bound-1}
$\binom{2n}{n}\le\frac{2^{2n}}{\sqrt{2n}},\forall n\in\Z_{+}\setminus\{0\}$ and $\binom{2n+1}{n}\le\frac{2^{2n+1}}{\sqrt{2n+1}},\forall n\in\Z_+$.
\end{fact}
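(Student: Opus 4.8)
The plan is to derive both inequalities from one elementary telescoping estimate for the central binomial coefficient. Set $c_n \eqdef \binom{2n}{n}\,2^{-2n}$; expanding factorials gives $c_n = \prod_{k=1}^{n}\frac{2k-1}{2k}$. The key observation is that for every integer $k\ge 1$,
$$\left(\frac{2k-1}{2k}\right)^{\!2} \le \frac{2k-1}{2k+1},$$
which after cross-multiplication is just $(2k-1)(2k+1)\le (2k)^2$, i.e. $4k^2-1\le 4k^2$. Multiplying this over $k=1,\dots,n$, the right-hand side telescopes, giving $c_n^2 \le \prod_{k=1}^{n}\frac{2k-1}{2k+1} = \frac{1}{2n+1}$, hence $c_n \le \frac{1}{\sqrt{2n+1}} \le \frac{1}{\sqrt{2n}}$. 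Multiplying through by $2^{2n}$ proves the first inequality (in fact with the slightly stronger denominator $\sqrt{2n+1}$), for all $n\ge 1$.

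For the second inequality, I would use the identity $\binom{2n+1}{n} = \frac{2n+1}{n+1}\binom{2n}{n}$, which is immediate from the factorial formula. Since the bound $\binom{2n}{n} = 2^{2n} c_n \le \frac{2^{2n}}{\sqrt{2n+1}}$ is valid for every $n\ge 0$ (the case $n=0$ being the trivial $1\le 1$), we obtain
$$\binom{2n+1}{n} \le \frac{2n+1}{n+1}\cdot\frac{2^{2n}}{\sqrt{2n+1}} = \frac{\sqrt{2n+1}}{n+1}\,2^{2n}.$$
The proof is then completed by the elementary inequality $\frac{\sqrt{2n+1}}{n+1}\le \frac{2}{\sqrt{2n+1}}$, equivalent to $2n+1\le 2(n+1)$, which yields $\binom{2n+1}{n}\le \frac{2\cdot 2^{2n}}{\sqrt{2n+1}} = \frac{2^{2n+1}}{\sqrt{2n+1}}$, as claimed.

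There is no real obstacle here; the argument is entirely elementary. The only two points that need a moment of attention are (a) stating the key inequality $(2k-1)(2k+1)\le(2k)^2$ in the direction that makes the telescoped product an \emph{upper} bound on $c_n^2$, and (b) not invoking the $c_n\le 1/\sqrt{2n}$ form at $n=0$ in the second part (where $\sqrt{2n}=0$) — instead keeping the $\sqrt{2n+1}$ form throughout, or simply checking $n=0$ by hand.
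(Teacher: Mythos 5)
Your proof is correct, and every step checks out: the telescoping bound $c_n^2 \le \prod_{k=1}^n \frac{2k-1}{2k+1} = \frac{1}{2n+1}$ gives the (strictly stronger) estimate $\binom{2n}{n}\le 2^{2n}/\sqrt{2n+1}$, and the reduction $\binom{2n+1}{n}=\frac{2n+1}{n+1}\binom{2n}{n}$ together with $2n+1\le 2(n+1)$ cleanly yields the second inequality. The paper states this fact without proof (it is a standard Wallis-product-type estimate), so there is nothing to compare against; your self-contained elementary argument, including the care taken about the $n=0$ boundary case in the second part, is exactly the kind of proof one would want to supply.
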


\begin{fact}[\cite{CoverThomas:91}]\label{fact:bin-bound-2}
Let $n,k\in\Z$. Then, we have that
\begin{align*}
\sqrt{\frac{n}{8k(n-k)}}2^{nH(k/n)}\le\binom{n}{k}\le\sqrt{\frac{n}{\pi k(n-k)}}2^{nH(k/n)},
\end{align*}
where $H(p)=-p\log p-(1-p)\log(1-p)$ is the binary entropy function.
\end{fact}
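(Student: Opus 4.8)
The plan is to obtain both inequalities from the two-sided Stirling estimate
\[
\sqrt{2\pi m}\,(m/e)^m \;\le\; m! \;\le\; \sqrt{2\pi m}\,(m/e)^m e^{1/(12m)}, \qquad m\ge 1,
\]
applied to each of the three factors in $\binom{n}{k}=n!/(k!(n-k)!)$ (and assuming, as the stated formula requires, that $1\le k\le n-1$). Setting $p=k/n$, the exponentials $e^{-n}$, $e^{-k}$, $e^{-(n-k)}$ cancel, the powers of $n,k,n-k$ combine as
\[
\frac{n^n}{k^k(n-k)^{n-k}}=\frac{1}{p^{pn}(1-p)^{(1-p)n}}=2^{nH(p)},
\]
and the square-root prefactors combine as $(2\pi)^{-1/2}\sqrt{n/(k(n-k))}$. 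This yields the exact identity
\[
\binom{n}{k}=\frac{e^{\rho}}{\sqrt{2\pi}}\sqrt{\frac{n}{k(n-k)}}\;2^{nH(k/n)},\qquad \rho:=\lambda_n-\lambda_k-\lambda_{n-k},
\]
where $0<\lambda_m\le 1/(12m)$ is the Stirling remainder of $m!$. After this reduction it remains only to sandwich the scalar $e^{\rho}/\sqrt{2\pi}$ between $1/\sqrt 8$ and $1/\sqrt\pi$.

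For the upper bound I would discard the terms $\lambda_k,\lambda_{n-k}>0$ and use $\lambda_n\le 1/(12n)\le 1/12$, so $e^{\rho}\le e^{1/12}$; since $e^{1/12}<\sqrt 2$ (equivalently $1/12<\tfrac12\ln 2$), we get $e^{\rho}/\sqrt{2\pi}\le e^{1/12}/\sqrt{2\pi}<1/\sqrt{\pi}$ for every $1\le k\le n-1$. For the lower bound I would instead discard $\lambda_n>0$ and use $\lambda_k\le 1/(12k)$, $\lambda_{n-k}\le 1/(12(n-k))$, giving
\[
e^{\rho}\ge\exp\!\Big(-\tfrac1{12}\big(\tfrac1k+\tfrac1{n-k}\big)\Big)=\exp\!\Big(-\tfrac{n}{12k(n-k)}\Big).
\]
Since $k(n-k)\ge n-1$ whenever $1\le k\le n-1$ (the minimum being attained at $k\in\{1,n-1\}$), this is at least $\exp(-n/(12(n-1)))$, which exceeds $\sqrt\pi/2=\sqrt{2\pi}/\sqrt 8$ once $n\ge 4$ (a direct numerical check). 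The remaining cases $n\in\{2,3\}$ are verified by hand; in particular $n=2$, $k=1$ saturates the lower bound with equality.

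The argument is entirely mechanical, so there is essentially no obstacle: the only point that needs care is that the crude remainder bound $e^{\rho}\ge e^{-n/(12k(n-k))}$ is just barely too weak for the very smallest $n$ (indeed it must be, since equality holds at $n=2$), so those few base cases must be checked directly; refining to $\lambda_m>1/(12m+1)$ would shrink, but not eliminate, this finite check. Everything else follows from the displayed factorization together with the elementary inequalities $e^{1/12}<\sqrt 2$ and $e^{-n/(12(n-1))}>\sqrt\pi/2$ for $n\ge 4$.
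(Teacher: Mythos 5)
Your proof is correct; the paper gives no proof of this fact, citing it from Cover and Thomas, and the standard argument there is exactly the Stirling-with-explicit-remainder computation you carry out (the identity $\binom{n}{k}=\frac{e^{\rho}}{\sqrt{2\pi}}\sqrt{n/(k(n-k))}\,2^{nH(k/n)}$ followed by sandwiching $e^{\rho}/\sqrt{2\pi}$). The only delicate point is the lower bound for small $n$ — which must be delicate, since $n=2$, $k=1$ attains it with equality — and you identify and handle those base cases correctly, including the implicit restriction $1\le k\le n-1$ that the paper's statement omits.
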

We will use the following fact to bound from below the expectation of a real random variable.
\begin{fact}\label{fact:E-lower}
Let $X$ be a real random variable with $\E[X^4]>0$. Then, we have that $\E[|X|]\ge\frac{\E[X^2]^{3/2}}{\E[X^4]^{1/2}}$.
\end{fact}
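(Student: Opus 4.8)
\textbf{Proof plan for Fact~\ref{fact:E-lower}.}
The plan is to deduce this from H\"older's inequality (equivalently, from log-convexity of the absolute moments $p\mapsto \E[|X|^p]$ at the points $p=1,2,4$, using $2=\tfrac23\cdot 1+\tfrac13\cdot 4$). Concretely, I would write $X^2=|X|^{2/3}\cdot |X|^{4/3}$ and apply H\"older's inequality with conjugate exponents $p=3/2$ and $q=3$ (note $\tfrac1p+\tfrac1q=1$), which is valid since $(|X|^{2/3})^{3/2}=|X|$ and $(|X|^{4/3})^{3}=X^4$ are integrable by the standing hypotheses. This yields
\begin{align*}
\E[X^2]=\E\!\left[\,|X|^{2/3}\cdot |X|^{4/3}\,\right]\le \E[|X|]^{2/3}\,\E[X^4]^{1/3}\;.
\end{align*}
Raising both sides to the power $3/2$ gives $\E[X^2]^{3/2}\le \E[|X|]\,\E[X^4]^{1/2}$, and dividing by $\E[X^4]^{1/2}>0$ (which is exactly where the hypothesis $\E[X^4]>0$ is used, to make the right-hand side of the claimed inequality well defined) produces $\E[|X|]\ge \E[X^2]^{3/2}/\E[X^4]^{1/2}$, as desired.

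As an alternative route that avoids invoking H\"older with fractional exponents, I would instead apply the Cauchy--Schwarz inequality twice: first $\E[X^2]=\E[|X|^{1/2}\cdot |X|^{3/2}]\le \E[|X|]^{1/2}\,\E[|X|^3]^{1/2}$, and then $\E[|X|^3]=\E[|X|\cdot X^2]\le \E[X^2]^{1/2}\,\E[X^4]^{1/2}$. Substituting the second bound into the first gives $\E[X^2]\le \E[|X|]^{1/2}\,\E[X^2]^{1/4}\,\E[X^4]^{1/4}$, hence $\E[X^2]^{3/4}\le \E[|X|]^{1/2}\,\E[X^4]^{1/4}$, and squaring recovers $\E[X^2]^{3/2}\le \E[|X|]\,\E[X^4]^{1/2}$.

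There is essentially no obstacle here: the statement is a standard moment-interpolation inequality, and the only point requiring a word of care is the degenerate situation. If $\E[X^2]=0$ then $X=0$ almost surely and the inequality reads $0\ge 0$; otherwise $\E[X^2]>0$ forces $\E[X^4]>0$ (again by Cauchy--Schwarz), so the division step above is legitimate. I would therefore keep the write-up to the two or three lines of the H\"older computation above.
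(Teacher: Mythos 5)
Your proof is correct. The paper states Fact~\ref{fact:E-lower} without any proof (it is recorded as a bare ``Fact'' in Appendix~A.1), so there is nothing to compare against; your H\"older computation with exponents $3/2$ and $3$ (equivalently, the double Cauchy--Schwarz variant) is the standard argument, and your handling of the degenerate case $\E[X^2]=0$ and of the role of the hypothesis $\E[X^4]>0$ is exactly right.
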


\subsection{Sub-Gaussian and Sub-Exponential Distributions}\label{ssec:Gaussian-exp} 
Here we present basic facts about sub-Gaussian and sub-exponential distributions.
The reader is referred to~\cite{vershynin2018high}.
\begin{definition}[Sub-Gaussian Distribution]\label{def:Gaussian-decay}
A random variable $X$ over $\R$ is sub-Gaussian if $\|X\|_{\psi_2}:=\inf\{t>0:\E[\exp(X^2/t^2)]\le2\}$ is finite.
\end{definition}

\begin{definition}[Sub-Exponential Distribution]\label{def:exp-decay}
A random variable $X$ over $\R$ is sub-exponential
if $\|X\|_{\psi_1}:=\inf\{t>0:\E[\exp(|X|/t)]\le2\}$ is finite.
\end{definition}

\begin{fact}\label{fact:Gaussian-decay}
Let $X$ be a real random variable. Suppose there is a real number $K>0$ such that $\Pr[|X|>t]\le2\exp(-t^2/K^2)$.
Then $X$ is sub-Gaussian with $\|X\|_{\psi_2}\le cK$ for some universal constant $c>0$.
In addition, we have that $\E[|X|^p]\le\min\left(pK^p\left\lfloor\frac{p-1}{2}\right\rfloor!,(c'K\sqrt{p})^p\right)$, where $c'>0$ is a universal constant.
\end{fact}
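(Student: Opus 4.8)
The plan is to deduce both assertions directly from the single quantitative hypothesis---the Gaussian tail bound $\Pr[|X|>t]\le 2\exp(-t^2/K^2)$---via the ``layer-cake'' identity, reducing each to an elementary one-dimensional integral that is evaluated in closed form with the Gamma function. No structural tools are needed; the content is a repackaging of standard sub-Gaussian estimates (cf.~\cite{vershynin2018high}), and essentially all the work is constant-chasing.

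First I would establish the $\psi_2$-norm bound. For $t>K$, using that $\exp(X^2/t^2)\ge 1$ pointwise and substituting $s=e^u$,
\[
\E[\exp(X^2/t^2)]=\int_0^\infty \Pr[\exp(X^2/t^2)>s]\,ds=1+\int_0^\infty \Pr[|X|>t\sqrt{u}]\,e^u\,du .
\]
Plugging in $\Pr[|X|>t\sqrt u]\le 2e^{-ut^2/K^2}$ gives $\E[\exp(X^2/t^2)]\le 1+2\int_0^\infty e^{-(t^2/K^2-1)u}\,du=1+\frac{2}{t^2/K^2-1}$, which is at most $2$ as soon as $t^2\ge 3K^2$. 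Hence $\|X\|_{\psi_2}\le\sqrt3\,K$, i.e.\ the first claim holds with $c=\sqrt3$.

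Next, for the moment bounds, the layer-cake identity applied to $|X|^p$ gives, for every $p\ge 1$,
\[
\E[|X|^p]=\int_0^\infty p\,t^{p-1}\Pr[|X|>t]\,dt\le 2p\int_0^\infty t^{p-1}e^{-t^2/K^2}\,dt=pK^p\,\Gamma(p/2),
\]
the last step being the substitution $u=t^2/K^2$. For the first term in the minimum I would use that $\Gamma$ is log-convex with $\Gamma(1)=\Gamma(2)=1$ and increasing on $[2,\infty)$, so $\Gamma(p/2)\le\Gamma(\lceil p/2\rceil)=\lfloor\tfrac{p-1}{2}\rfloor!$ for every integer $p\ge 2$ (the ``even-moment'' form, which is the regime in which the fact is applied); this yields $\E[|X|^p]\le pK^p\lfloor\tfrac{p-1}{2}\rfloor!$. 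For the second term, the crude Stirling bound $\Gamma(p/2)\le (p/2)^{p/2}$ gives $\E[|X|^p]\le p\,(K\sqrt{p/2}\,)^p$, and since $p^{1/p}$ is bounded by a universal constant the leading factor $p$ can be absorbed into the base, yielding $\E[|X|^p]\le (c'K\sqrt p\,)^p$ for a universal $c'>0$; taking the smaller of the two bounds finishes the proof. As an alternative route to the first claim, one can instead expand $\E[\exp(X^2/t^2)]=\sum_{k\ge 0}\E[X^{2k}]/(k!\,t^{2k})$ and feed in the even-order bound $\E[X^{2k}]\le 2k\,K^{2k}(k-1)!$, which turns the series into a convergent geometric series exactly for $t>\sqrt3\,K$.

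The main thing to watch is bookkeeping rather than any conceptual obstacle: checking that the comparison $\Gamma(p/2)\le\lfloor\tfrac{p-1}{2}\rfloor!$ holds on the integer range $p\ge 2$ where it is needed, and verifying that one fixed pair of universal constants $(c,c')$ makes every displayed inequality valid simultaneously. Both are routine; if preferred, the statement can simply be quoted from~\cite{vershynin2018high}, the computation above being retained only for self-containedness.
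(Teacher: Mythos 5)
Your derivation is correct, and it is the standard one. The paper itself offers no proof of this statement: it is recorded as a Fact with a citation to \cite{vershynin2018high}, so there is nothing internal to compare against, and your layer-cake computation (tail bound $\Rightarrow$ $\E[\exp(X^2/t^2)]\le 2$ for $t\ge\sqrt{3}K$, and $\E[|X|^p]\le pK^p\Gamma(p/2)$ followed by the two estimates on $\Gamma(p/2)$) is exactly the textbook argument the citation points to. Your constant-chasing checks out: the substitution giving $pK^p\Gamma(p/2)$, the comparison $\Gamma(p/2)\le\lfloor\frac{p-1}{2}\rfloor!$ for integer $p\ge 2$, and the absorption of the prefactor $p$ into $(c'K\sqrt{p})^p$ via $p^{1/p}\le e^{1/e}$ are all valid. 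One genuine (if minor) observation you make implicitly is worth keeping explicit: the first bound in the minimum fails at $p=1$ (where $\Gamma(1/2)=\sqrt{\pi}>0!$, and indeed $\E[|X|]$ can exceed $K$ under the stated tail hypothesis), so the Fact as literally written should be read as asserting the bound for $p\ge 2$; this is harmless for the paper, which only invokes it with $p\ge 2$.
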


\begin{fact}\label{fact:exp-decay}
Let $X$ be a real random variable. Suppose there is a real number $K>0$ such that $\Pr[|X|>t]\le2\exp(-t/K)$.
Then $X$ is sub-exponential with $\|X\|_{\psi_1}\le cK$ for some universal constant $c>0$.
In addition, we have that $\E[|X|^p]\le2K^pp!\le2(Kp)^p$.
\end{fact}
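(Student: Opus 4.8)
The plan is to first establish the moment bound $\E[|X|^p]\le 2K^pp!$ directly from the tail hypothesis, and then deduce the sub-exponential norm bound by summing the Taylor series of $\exp(|X|/t)$. For the moment bound (with $p\in\Z_+$), I would use the layer-cake formula $\E[|X|^p]=\int_0^\infty\Pr[|X|^p>s]\,ds$ and substitute $s=t^p$ to obtain $\E[|X|^p]=p\int_0^\infty t^{p-1}\Pr[|X|>t]\,dt$. Plugging in the hypothesis $\Pr[|X|>t]\le 2e^{-t/K}$ and evaluating the Gamma integral $\int_0^\infty t^{p-1}e^{-t/K}\,dt=K^p\Gamma(p)=K^p(p-1)!$ yields $\E[|X|^p]\le 2pK^p(p-1)!=2K^pp!$. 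The second inequality $2K^pp!\le 2(Kp)^p$ is then immediate from $p!\le p^p$.

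For the sub-exponential norm, recall $\|X\|_{\psi_1}=\inf\{t>0:\E[\exp(|X|/t)]\le 2\}$, so it suffices to exhibit a value $t=cK$ at which $\E[\exp(|X|/t)]\le 2$. I would expand $\E[\exp(|X|/t)]=\sum_{p\ge 0}\E[|X|^p]/(t^pp!)$, bound each $p\ge 1$ term using the moment estimate above by $2(K/t)^p$, and sum the geometric series: for $t>K$ this gives $\E[\exp(|X|/t)]\le 1+2\frac{K/t}{1-K/t}$. Taking $t=3K$ makes the right-hand side equal to $2$, so $\|X\|_{\psi_1}\le 3K$, and we may take $c=3$.

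There is essentially no real obstacle in this argument; the only points needing a word of justification are the interchange of summation and expectation in the Taylor expansion of $\exp(|X|/t)$ (valid by monotone convergence, since all terms are nonnegative) and the convergence of the geometric series, which is precisely why one needs $t$ strictly larger than $K$.
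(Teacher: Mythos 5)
Your argument is correct and is the standard one: the paper states this as a fact cited from Vershynin's book without giving a proof, and your layer-cake computation $\E[|X|^p]=p\int_0^\infty t^{p-1}\Pr[|X|>t]\,dt\le 2K^p p!$ followed by the term-by-term Taylor bound $\E[e^{|X|/t}]\le 1+2\sum_{p\ge1}(K/t)^p\le 2$ at $t=3K$ is precisely the textbook derivation of the moment and $\psi_1$-norm bounds from the tail hypothesis. Nothing is missing.
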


\begin{fact}\label{fact:Gaussian-exp-1}
$\|\cdot\|_{\psi_2}$ is a norm on the space of sub-Gaussian random variables.
$\|\cdot\|_{\psi_1}$ is a norm on the space of sub-exponential random variables.
\end{fact}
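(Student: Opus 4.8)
The plan is to verify the three norm axioms for $\|\cdot\|_{\psi_\alpha}$ with $\alpha\in\{1,2\}$, handling both cases uniformly through the function $\Phi_\alpha(x):=\exp(x^\alpha)-1$, which on $[0,\infty)$ is nonnegative, strictly increasing, convex, and vanishes at $0$ (for $\alpha=1$ this is $e^x-1$; for $\alpha=2$ one checks $\Phi_2''(x)=(2+4x^2)e^{x^2}>0$). Since $\E[\exp(|X|^\alpha/t^\alpha)]\le 2$ iff $\E[\Phi_\alpha(|X|/t)]\le 1$, we have $\|X\|_{\psi_\alpha}=\inf\{t>0:\E[\Phi_\alpha(|X|/t)]\le 1\}$. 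The first thing I would record is that this infimum is attained whenever $\lambda:=\|X\|_{\psi_\alpha}\in(0,\infty)$, i.e. $\E[\Phi_\alpha(|X|/\lambda)]\le 1$: taking $t_n\downarrow\lambda$, the sequence $\Phi_\alpha(|X|/t_n)$ increases pointwise to $\Phi_\alpha(|X|/\lambda)$, so the monotone convergence theorem gives $\E[\Phi_\alpha(|X|/\lambda)]=\lim_n\E[\Phi_\alpha(|X|/t_n)]\le 1$.

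Nonnegativity is immediate since the defining set consists of positive reals. For definiteness: if $X=0$ almost surely then $\E[\Phi_\alpha(0)]=0\le 1$ for every $t$, so $\|X\|_{\psi_\alpha}=0$; conversely, if $\|X\|_{\psi_\alpha}=0$ then $\E[\Phi_\alpha(|X|/t)]\le 1$ for all $t>0$, and applying Fatou's lemma as $t\downarrow 0$ gives $\E[\liminf_{t\downarrow0}\Phi_\alpha(|X|/t)]\le 1$, while $\Phi_\alpha(|X|/t)\to\infty$ pointwise on $\{X\neq 0\}$, forcing $\Pr[X\neq 0]=0$. Homogeneity is a change of variables: for $c\neq 0$, substituting $t=|c|s$ in the defining infimum yields $\|cX\|_{\psi_\alpha}=\inf\{|c|s:\E[\Phi_\alpha(|X|/s)]\le 1\}=|c|\,\|X\|_{\psi_\alpha}$, and the case $c=0$ is covered by definiteness.

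The substantive step is the triangle inequality. Put $a=\|X\|_{\psi_\alpha}$ and $b=\|Y\|_{\psi_\alpha}$; since $X,Y$ lie in the sub-$\psi_\alpha$ class these are finite. If $a=0$ then $X=0$ a.s. and $\|X+Y\|_{\psi_\alpha}=b=a+b$, and symmetrically if $b=0$, so assume $a,b\in(0,\infty)$. Using that $\Phi_\alpha$ is increasing together with $|X+Y|\le|X|+|Y|$, then writing $\frac{|X|+|Y|}{a+b}=\frac{a}{a+b}\cdot\frac{|X|}{a}+\frac{b}{a+b}\cdot\frac{|Y|}{b}$ and applying Jensen's inequality to the convex function $\Phi_\alpha$ with weights $\frac{a}{a+b},\frac{b}{a+b}$, I obtain the pointwise bound
$$\Phi_\alpha\!\left(\frac{|X+Y|}{a+b}\right)\le\frac{a}{a+b}\,\Phi_\alpha\!\left(\frac{|X|}{a}\right)+\frac{b}{a+b}\,\Phi_\alpha\!\left(\frac{|Y|}{b}\right).$$
Taking expectations and invoking the attained-infimum bounds $\E[\Phi_\alpha(|X|/a)]\le 1$ and $\E[\Phi_\alpha(|Y|/b)]\le 1$ gives $\E[\Phi_\alpha(|X+Y|/(a+b))]\le 1$, i.e. $\|X+Y\|_{\psi_\alpha}\le a+b$. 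This in particular shows the sub-Gaussian (resp. sub-exponential) random variables are closed under addition, and closure under scalar multiplication follows from homogeneity, so this set is a vector space on which $\|\cdot\|_{\psi_\alpha}$ is a norm (with the usual identification of random variables that agree almost surely). The only delicate point I anticipate is the attained-infimum lemma and its use in passing from the pointwise inequality to $\|X+Y\|_{\psi_\alpha}\le a+b$; once that is set up the convexity argument is routine, though one must be careful not to skip the degenerate cases $a=0$ or $b=0$.
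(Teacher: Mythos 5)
The paper states this fact without proof, citing \cite{vershynin2018high} as a standard reference, so there is no in-paper argument to compare against. Your proof is correct and is precisely the standard Luxemburg/Orlicz-norm argument: rewriting the condition as $\E[\Phi_\alpha(|X|/t)]\le 1$ for the convex, increasing function $\Phi_\alpha(x)=e^{x^\alpha}-1$ with $\Phi_\alpha(0)=0$, establishing attainment of the infimum by monotone convergence, and deriving the triangle inequality from the pointwise convexity bound; the degenerate cases and the a.s.-equality identification are handled properly.
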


\begin{fact}\label{fact:center-Gaussian-exp}
Let $X$ be sub-Gaussian and $Y$ be sub-exponential with $\E[X]=\E[Y]=0$.
Then there exists universal constants $C_1,C_2>0$ such that $\E[\exp(\lambda X)]\le\exp(C_1^2\lambda^2\|X\|_{\psi_2}^2), \forall\lambda\in\R$ and $\E[\exp(\lambda Y)]\le\exp(C_2^2\lambda^2\|Y\|_{\psi_1}^2) ,|\lambda|\le\frac{1}{C_2\|Y\|_{\psi_1}}$.
\end{fact}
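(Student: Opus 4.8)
The plan is to establish the two moment generating function (MGF) bounds separately. In each case I would expand $\E[\exp(\lambda Z)]$ as a power series, use the hypothesis $\E[Z]=0$ to kill the linear term, and control the higher-order terms via the absolute-moment bounds already recorded in Fact~\ref{fact:Gaussian-decay} and Fact~\ref{fact:exp-decay}, which in turn follow from the subgaussian/subexponential tail bounds implied by the definitions of $\|\cdot\|_{\psi_2}$ and $\|\cdot\|_{\psi_1}$.

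For the sub-Gaussian inequality, write $K=\|X\|_{\psi_2}$, so $\E[\exp(X^2/K^2)]\le2$ by definition; Markov's inequality applied to $\exp(X^2/K^2)$ then gives the tail bound $\Pr[|X|>t]\le2\exp(-t^2/K^2)$, and hence, by Fact~\ref{fact:Gaussian-decay}, $\E[|X|^p]\le(c'K\sqrt{p})^p$ for all $p\ge1$ and an absolute constant $c'$. I would split into two ranges of $\lambda$. When $|\lambda|\le c_0/K$ for a small absolute constant $c_0$, expand $\E[e^{\lambda X}]=1+\sum_{p\ge2}\lambda^p\E[X^p]/p!$ (the $p=1$ term vanishes since $\E[X]=0$), bound $|\E[X^p]|\le(c'K\sqrt{p})^p$ and $p!\ge(p/e)^p$, so the $p$-th summand is at most $(c'e|\lambda|K/\sqrt{p})^p\le(c'e|\lambda|K)^p$; choosing $c_0$ with $c'ec_0\le1/2$ makes this a convergent geometric series dominated by its $p=2$ term, giving $\E[e^{\lambda X}]\le1+O(\lambda^2K^2)\le e^{O(\lambda^2K^2)}$. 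When $|\lambda|>c_0/K$, apply the elementary inequality $\lambda x\le\lambda^2K^2+x^2/(4K^2)$ pointwise to get $\E[e^{\lambda X}]\le e^{\lambda^2K^2}\,\E[e^{X^2/(4K^2)}]\le2e^{\lambda^2K^2}$ (the last step by Jensen's inequality and the definition of $K$); since $\lambda^2K^2>c_0^2$ is bounded away from zero in this range, the factor $2$ is absorbed into the exponent by enlarging the constant. Taking $C_1$ to be the larger of the two constants so produced proves the first bound for all $\lambda\in\R$.

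For the sub-exponential inequality, write $K=\|Y\|_{\psi_1}$; as above the definition yields $\Pr[|Y|>t]\le2\exp(-t/K)$, and Fact~\ref{fact:exp-decay} gives $\E[|Y|^p]\le2(Kp)^p$. Here I would restrict to $|\lambda|\le1/(C_2K)$ from the outset. Expanding $\E[e^{\lambda Y}]=1+\sum_{p\ge2}\lambda^p\E[Y^p]/p!$ and again using $p!\ge(p/e)^p$, the $p$-th summand is at most $2(e|\lambda|K)^p$; taking $C_2\ge2e$ forces $e|\lambda|K\le1/2$, so $\sum_{p\ge2}2(e|\lambda|K)^p\le4e^2\lambda^2K^2$, whence $\E[e^{\lambda Y}]\le1+4e^2\lambda^2K^2\le\exp(4e^2\lambda^2K^2)$. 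Choosing $C_2=2e$ makes both the range restriction $|\lambda|\le1/(C_2K)$ and the exponent bound $C_2^2\lambda^2K^2$ consistent, which is exactly the claim.

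These are all routine geometric-series estimates, so there is no real obstacle; the only mildly delicate point is, in the sub-Gaussian case, patching the small-$\lambda$ Taylor estimate and the large-$\lambda$ Young's-inequality estimate into a single clean bound of the form $e^{C_1^2\lambda^2K^2}$ valid for every $\lambda$, which hinges on observing that in the large-$\lambda$ regime $\lambda^2K^2$ is bounded below by a positive constant so the stray multiplicative factor can be swallowed. Alternatively, both inequalities are standard equivalences in the theory of subgaussian and subexponential random variables and may simply be cited from~\cite{vershynin2018high}.
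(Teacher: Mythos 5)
Your proof is correct and is the standard textbook argument for the sub-Gaussian/sub-exponential MGF bounds. The paper itself does not prove Fact~\ref{fact:center-Gaussian-exp}; it is stated as a basic fact in Appendix~\ref{ssec:Gaussian-exp}, whose header points the reader to~\cite{vershynin2018high}, and the argument you give (power-series expansion, killing the linear term using $\E[X]=\E[Y]=0$, controlling the tail of the series via the absolute-moment bounds of Facts~\ref{fact:Gaussian-decay} and \ref{fact:exp-decay}, and in the sub-Gaussian case patching the small-$|\lambda|$ Taylor estimate to the large-$|\lambda|$ Young's-inequality estimate) is precisely the one found in that reference. One cosmetic point worth recording: in the large-$\lambda$ sub-Gaussian branch you need $2\,e^{\lambda^2K^2}\le e^{C_1^2\lambda^2K^2}$, which holds exactly because $\lambda^2K^2>c_0^2$ lets you take $C_1^2\ge 1+(\ln 2)/c_0^2$; you flag this, and it is the only place where matching the two regimes into a single constant requires a remark. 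Your observation at the end that one may simply cite~\cite{vershynin2018high} matches what the paper actually does.
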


\begin{fact}\label{fact:Gaussian-exp-2}
Let $X$ be sub-Gaussian and $Y$ be sub-exponential.
Then, there exists universal constants $c_1,c_2>0$ such that $\|X-\E[X]\|_{\psi_2}\le c_1\|X\|_{\psi_2}$ and $\|Y-\E[Y]\|_{\psi_1}\le c_2\|Y\|_{\psi_1}$.
\end{fact}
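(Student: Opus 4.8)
The plan is to reduce everything to the triangle inequality for the Orlicz norms $\|\cdot\|_{\psi_2}$ and $\|\cdot\|_{\psi_1}$ (Fact~\ref{fact:Gaussian-exp-1}), together with a bound on the Orlicz norm of the deterministic shift. First I would record the norm of a constant $a\in\R$, viewed as a degenerate random variable: $\E[\exp(a^2/t^2)]=\exp(a^2/t^2)\le 2$ holds exactly when $t\ge|a|/\sqrt{\ln 2}$, so by Definition~\ref{def:Gaussian-decay} we have $\|a\|_{\psi_2}=|a|/\sqrt{\ln 2}$, and identically Definition~\ref{def:exp-decay} gives $\|a\|_{\psi_1}=|a|/\ln 2$. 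Hence it remains only to control $|\E[X]|$ by $\|X\|_{\psi_2}$ and $|\E[Y]|$ by $\|Y\|_{\psi_1}$.

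For the sub-Gaussian case, set $K=\|X\|_{\psi_2}<\infty$. Markov's inequality applied to the nonnegative variable $\exp(X^2/K^2)$ yields the tail bound $\Pr[|X|>t]\le 2\exp(-t^2/K^2)$ for all $t\ge 0$; integrating $\E[|X|]=\int_0^\infty\Pr[|X|>t]\,dt$ then gives $\E[|X|]\le CK$ for a universal constant $C$ (this is also the $p=1$ instance of the moment bound in Fact~\ref{fact:Gaussian-decay}). Combined with Jensen's inequality $|\E[X]|\le\E[|X|]$, this gives $|\E[X]|\le C\|X\|_{\psi_2}$. The identical argument with $\exp(|Y|/K)$ and $K=\|Y\|_{\psi_1}$ (cf.\ Fact~\ref{fact:exp-decay}) gives $|\E[Y]|\le C'\|Y\|_{\psi_1}$.

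Putting these together via Fact~\ref{fact:Gaussian-exp-1},
\[
\|X-\E[X]\|_{\psi_2}\le\|X\|_{\psi_2}+\|\E[X]\|_{\psi_2}=\|X\|_{\psi_2}+\frac{|\E[X]|}{\sqrt{\ln 2}}\le\Big(1+\frac{C}{\sqrt{\ln 2}}\Big)\|X\|_{\psi_2},
\]
so $c_1=1+C/\sqrt{\ln 2}$ works, and symmetrically $c_2=1+C'/\ln 2$ works for $Y$. There is no real obstacle here: the whole argument is routine. The one place that needs a little care is the passage from finiteness of the Orlicz norm to the first-moment bound $\E[|X|]\le C\|X\|_{\psi_2}$ --- the step where sub-Gaussianity (rather than a vacuous statement) is genuinely used --- together with bookkeeping of the absolute constants, in particular the factor $1/\sqrt{\ln 2}$ (resp.\ $1/\ln 2$) appearing in the norm of a constant.
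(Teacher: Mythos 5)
The paper states this as a black-box Fact cited from \cite{vershynin2018high} without including a proof, so there is no internal argument to compare against. Your proof---computing the $\psi_2$/$\psi_1$ norm of a deterministic constant exactly, bounding $|\E[X]|\le\E[|X|]\le C\|X\|_{\psi_2}$ via Markov's inequality and tail integration (and likewise for the sub-exponential case), and then invoking the triangle inequality from Fact~\ref{fact:Gaussian-exp-1}---is correct, complete, and is precisely the standard textbook argument for centering of Orlicz norms.
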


\subsection{Dobrushin's Uniqueness Condition} \label{app:dob}

Here we introduce the original definition of Dobrushin's condition 
through the influence between points in general graphical model.

\begin{definition}[Influence in Graphical Models]\label{Dobrushin's-condition}
Let $D$ be a distribution over some set of points $V$. 
Let $S_j$ denote the set of state pairs $(X,Y)$ which differ only at point $j$. 
Then the influence of point $j\in V$ on point $i\in V$ is defined as
\begin{align*}
I(j,i)=\max_{(X,Y)\in S_j}\dtv(D_i(\cdot\mid X_{-i}),D_i(\cdot\mid Y_{-i})) \;,
\end{align*}
where $D_i(\cdot\mid X_{-i}),D_i(\cdot\mid Y_{-i})$ 
denote the marginal distribution of point $i$ conditioning on $X_{-i}$ and $Y_{-i}$ respectively.
\end{definition}

\begin{definition}[Dobrushin's Uniqueness Condition]
Let $D$ be a distribution over some set of points $V$.
Then $D$ is said to satisfy Dobrushin's uniqueness condition if $\max_{i\in V}\sum_{j\in V}{I(j,i)}<1$.
\end{definition}

For Ising models,~\cite{chatterjee2005concentration} proves that 
$\max_{i\in V}\sum_{j\ne i}{|\theta_{ij}|}<1$ implies the Dobrushin's uniqueness condition.

\subsection{Concentration of Ising Models}

Several recent works have studied the concentration and anti-concentration 
of functions of Ising models~\cite{gheissari2018concentration,gotze2019higher, daskalakis2017concentration, adamczak2019note}.
Here we record some results which will be used throughout this article.

The following two facts state that for an Ising model satisfying Dobrushin's condition, for some constant $\eta>0$, 
the linear form and the quadratic form of Ising models are sub-Gaussian and sub-exponential respectively.
\begin{fact}[\cite{gotze2019higher}]\label{fact:ising-sub-Gaussian}
Let $P_{\theta}$ be an Ising model satisfying Dobrushin's condition, i.e., $\max_{i\in[d]}{\sum_{j\ne i}|\theta_{ij}|}\le1-\eta$, for some constant $0<\eta<1$.  
Then there is a constant $c(\eta)>0$ such that for any $b\in \R^d$ and any $t>0$, we have that
$\Pr_{X\sim P_\theta} \left[ \left| b^TX - \E_{X\sim P_\theta}\left[b^TX\right] \right| > t \right] 
\le 2 \exp\Big(-\frac{t^2}{c(\eta)\|b\|_2^2}\Big)$. This implies that $\left\|b^TX - \E\left[b^TX\right]\right\|_{\psi_2}\le c'(\eta)\|b\|_2$ for some constant $c'(\eta)>0$.
\end{fact}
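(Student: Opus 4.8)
The plan is to obtain the Gaussian tail bound first, and then read off the $\psi_2$ estimate from Fact~\ref{fact:Gaussian-decay}. Set $f(x)=b^Tx$ for $x\in\{\pm1\}^d$ and record that $f$ has bounded \emph{coordinate variations}: if $x,x'$ differ only in coordinate $j$, then $|f(x)-f(x')|\le 2|b_j|$. The whole point is to show that, because $P_\theta$ satisfies Dobrushin's condition, such a function concentrates like a Gaussian with variance proxy $O_\eta(\|b\|_2^2)$.

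First I would introduce the Doob martingale $D_i=\E_{X\sim P_\theta}[f(X)\mid X_1,\dots,X_i]-\E_{X\sim P_\theta}[f(X)\mid X_1,\dots,X_{i-1}]$, so that $f(X)-\E_{X\sim P_\theta}[f(X)]=\sum_{i=1}^d D_i$ is a martingale-difference sum. To bound $\|D_i\|_\infty$, I would couple two copies of $X$ that agree on $X_1,\dots,X_{i-1}$ and disagree (worst case) on $X_i$, and extend the coupling one site at a time using a maximal coupling of each conditional law. Under Dobrushin's condition the single-site conditionals of the Ising model satisfy $\dtv\big(P_\theta(X_\ell\in\cdot\mid X_{-\ell}=y),P_\theta(X_\ell\in\cdot\mid X_{-\ell}=y')\big)\le|\theta_{\ell j}|$ whenever $y,y'$ differ only in coordinate $j$ (the interdependence-matrix bound recalled in Appendix~\ref{app:dob} via~\cite{chatterjee2005concentration}). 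Since $\theta$ is symmetric and $\max_i\sum_{j\ne i}|\theta_{ij}|\le 1-\eta$, the interdependence matrix $A=(|\theta_{ij}|)_{i,j}$ is a symmetric nonnegative matrix with all row sums at most $1-\eta$, so $\|A\|_{2\to2}\le 1-\eta$ and $M:=(I-A)^{-1}=\sum_{k\ge0}A^k$ satisfies $\|M\|_{2\to2}\le 1/\eta$. Propagating disagreement probabilities through the coupling gives $\|D_i\|_\infty\le 2\sum_j |b_j|\,M_{ij}$, whence $\sum_i\|D_i\|_\infty^2\le 4\sum_i\big(\sum_j|b_j|M_{ij}\big)^2\le 4\|M\|_{2\to2}^2\|b\|_2^2\le (4/\eta^2)\|b\|_2^2$. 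Azuma--Hoeffding for bounded martingale differences then yields $\Pr_{X\sim P_\theta}\big[|b^TX-\E_{X\sim P_\theta}[b^TX]|>t\big]\le 2\exp\big(-\eta^2 t^2/(8\|b\|_2^2)\big)$, which is the claimed bound with $c(\eta)=8/\eta^2$. (An alternative I could use in place of the martingale step is Marton's transportation-cost approach: Dobrushin's condition with margin $\eta$ implies a $T_2$ inequality for $P_\theta$ in the $\ell_2$-weighted Hamming metric with constant $O(1/\eta^2)$, and Bobkov--G\"otze duality converts this into exactly this Gaussian concentration for the $2\|b\|_2$-Lipschitz map $x\mapsto b^Tx$.)

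Finally, applying Fact~\ref{fact:Gaussian-decay} with $K=\sqrt{8/\eta^2}\,\|b\|_2$ gives $\|b^TX-\E_{X\sim P_\theta}[b^TX]\|_{\psi_2}\le c'(\eta)\|b\|_2$ for a constant $c'(\eta)>0$, which is the stated consequence.

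The main obstacle is the coupling/propagation step: showing that the disagreement generated by conditioning on a single early coordinate is damped and redistributed precisely according to the Neumann series of the interdependence matrix, so that $\sum_i\|D_i\|_\infty^2$ is controlled by $\|(I-A)^{-1}\|_{2\to2}^2\|b\|_2^2$. This is exactly where Dobrushin's condition (the $1-\eta$ margin, not merely finiteness of the $\theta_{ij}$) is indispensable, since otherwise $\sum_k A^k$ diverges; it also relies on the standard but slightly delicate computation bounding the total-variation influence of one spin on another in the Ising conditional by $|\theta_{ij}|$.
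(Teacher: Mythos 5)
Your argument is correct, but it is worth noting that the paper does not prove this statement at all: it is imported as a black box from \cite{gotze2019higher}, where the sub-Gaussian concentration for linear (and, more generally, polynomial) functions of Dobrushin spin systems is derived from functional-inequality machinery (modified logarithmic Sobolev inequalities / exchangeable pairs). Your route is the more elementary classical one: Doob martingale plus sequential coupling, with the single-site influence bound $I(j,\ell)\le|\theta_{\ell j}|$ (exactly the \cite{chatterjee2005concentration} estimate recalled in Appendix~\ref{app:dob}) feeding into the Neumann series $(I-A)^{-1}=\sum_{k\ge0}A^k$ of the interdependence matrix, and then Azuma--Hoeffding. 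The key quantitative steps all check out: the oscillation of $x\mapsto b^Tx$ in coordinate $j$ is $2|b_j|$; since $A=(|\theta_{ij}|)$ is symmetric with row sums at most $1-\eta$, one has $\|A\|_{2\to2}\le1-\eta$ and $\|(I-A)^{-1}\|_{2\to2}\le1/\eta$, so $\sum_i\|D_i\|_\infty^2\le(4/\eta^2)\|b\|_2^2$ and Azuma gives $c(\eta)=8/\eta^2$; the passage to the $\psi_2$ bound via Fact~\ref{fact:Gaussian-decay} is immediate. The one step you correctly flag as delicate --- that the disagreement created by flipping $X_i$ propagates through the sequential coupling according to $\sum_k A^k$, so that $\|D_i\|_\infty\le2\sum_j|b_j|M_{ij}$ --- is a genuine lemma (K\"ulske-style) rather than a one-line observation, since the Doob martingale conditions on prefixes rather than on full complements of single sites; in a full write-up you would either prove it via disagreement percolation / a coupled Glauber chain or cite it. What your approach buys is a self-contained, elementary proof with explicit constants for the linear case; what the cited approach buys is that the same framework also yields the sub-exponential bound for quadratic forms (Fact~\ref{fact:ising-sub-exponential}), which a first-order martingale argument does not give.
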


\begin{fact}[\cite{gotze2019higher}]\label{fact:ising-sub-exponential}
Let $P_{\theta}$ be an Ising model satisfying Dobrushin's condition, i.e., $\max_{i\in[d]}{\sum_{j\ne i}|\theta_{ij}|}\le1-\eta$, for some constant $0<\eta<1$.  
Then there is a constant $c(\eta)>0$ such that for any symmetric matrix $A\in\R^{d\times d}$ with zero diagonal and any $t>0$, we have that
$\Pr_{X\sim P_\theta} \left[ \left|X^TAX - \E_{X\sim P_\theta}\left[X^TAX\right] \right| > t \right] 
\le 2 \exp\left(-\frac{t}{c(\eta)\|A\|_F}\right)$. This implies that \\$\left\|X^TAX - \E_{X\sim P_\theta}\left[X^TAX\right]\right\|_{\psi_1}\le c'(\eta)\|A\|_F$ for some constant $c'(\eta)>0$.
\end{fact}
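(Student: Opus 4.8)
The plan is to reduce the claim to a Bernstein-type bound on the log-moment generating function of $Q(X):=X^\top A X$, then run an entropy (Herbst) argument. \emph{Setup and reduction.} Write $\mu_Q=\E_{P_\theta}[Q(X)]$; since both sides are homogeneous in $A$, rescale so that $\|A\|_F=1$, so the target becomes $\Pr_{X\sim P_\theta}[|Q(X)-\mu_Q|>t]\le 2\exp(-t/c(\eta))$. By Fact~\ref{fact:exp-decay} this tail bound yields the stated $\psi_1$ estimate, so it suffices to bound $\varphi(\lambda):=\log\E_{P_\theta}[e^{\lambda(Q-\mu_Q)}]$ and apply a Chernoff bound; concretely I will aim for $\varphi(\lambda)\le c_0(\eta)\lambda^2/(1-c_1(\eta)|\lambda|)$ on $|\lambda|<1/c_1(\eta)$.

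\emph{The functional inequality.} Dobrushin's condition (Definition~\ref{def:high-temp}, equivalently Definition~\ref{Dobrushin's-condition}) implies exponential decay of correlations and $O(M\log M)$ mixing of the single-site Glauber dynamics for $P_\theta$; a standard consequence (see, e.g., \cite{marton2015logarithmic, gotze2019higher}) is an approximate tensorization of entropy, i.e., a modified log-Sobolev inequality with constant $\rho=\rho(\eta)$: for every $g:\{\pm1\}^M\to\R$, $\mathrm{Ent}_{P_\theta}(e^g)\le\rho\sum_{i=1}^M\E_{P_\theta}\big[e^g\,(\partial_i g)^2\,\mathbb{I}[\partial_i g\ge0]\big]$, where $\partial_i g(x)=g(x)-g(x^{\oplus i})$ and $x^{\oplus i}$ flips coordinate $i$. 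For the quadratic $Q$, since $A$ has zero diagonal, $\partial_i Q(x)=4x_i(Ax)_i$, hence $\sum_i(\partial_i Q(x))^2=16\,\|Ax\|_2^2$. Moreover $P_\theta$ has no external field, so $\E_{P_\theta}[X]=0$, and $\cov_{P_\theta}(X)\preceq C(\eta)I$ — the operator-norm bound being exactly $\sup_{\|b\|_2=1}\var(b^\top X)\le C(\eta)$, which follows from Fact~\ref{fact:ising-sub-Gaussian}; therefore $\E_{P_\theta}[\|AX\|_2^2]=\tr(A^2\cov(X))\le C(\eta)\|A\|_F^2$.

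\emph{Herbst argument with one bootstrap.} Taking $g=\lambda Q$, the modified log-Sobolev inequality together with the identity $\lambda\varphi'(\lambda)-\varphi(\lambda)=\mathrm{Ent}_{P_\theta}(e^{\lambda Q})/\E[e^{\lambda Q}]$ gives $\lambda\varphi'(\lambda)-\varphi(\lambda)\le 16\rho\lambda^2\,T(\lambda)$ with $T(\lambda):=\E[e^{\lambda Q}\,W(X)]/\E[e^{\lambda Q}]$ and $W(x):=\|Ax\|_2^2$. The crux is to bound $T(\lambda)$. Here one uses that $W$ is again a non-negative quadratic form with $\tr(A^2)=\|A\|_F^2$ and $\E W\le C(\eta)\|A\|_F^2$: writing $W=\E W+(W-\E W)$ and controlling $\E[e^{\lambda Q}(W-\E W)]$ by Cauchy--Schwarz together with the concentration of $W$ about $\E W$ at scale $\|A\|_F^2$ (one further application of the same machinery — the recursion terminates because $W\ge0$ and its fluctuations are dominated by those of $Q$), one obtains, for $|\lambda|$ below an $\eta$-dependent threshold, $T(\lambda)\le C_2(\eta)\|A\|_F^2+C_3(\eta)\|A\|_F\,(\varphi'(\lambda))_+$. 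Substituting and using $\|A\|_F=1$ yields the differential inequality $\lambda\varphi'(\lambda)-\varphi(\lambda)\le c_0(\eta)\lambda^2+c_1(\eta)|\lambda|\varphi'(\lambda)$ with $\varphi(0)=\varphi'(0)=0$, whose integration gives $\varphi(\lambda)\le c_0(\eta)\lambda^2/(1-c_1(\eta)|\lambda|)$ on $|\lambda|<1/c_1(\eta)$; a Chernoff optimization then gives $\Pr[|Q-\mu_Q|>t]\le2\exp\big(-\tfrac12\min(t^2/(4c_0(\eta)),\,t/(2c_1(\eta)))\big)$. Restoring the scaling by $\|A\|_F$ and absorbing the sub-Gaussian regime $t\lesssim\|A\|_F$ into the weaker sub-exponential bound yields $\Pr[|Q-\mu_Q|>t]\le2\exp(-t/(c(\eta)\|A\|_F))$, and Fact~\ref{fact:exp-decay} delivers $\|Q-\mu_Q\|_{\psi_1}\le c'(\eta)\|A\|_F$.

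\emph{Main obstacle and an alternative.} The only non-routine point is bounding the tilted second moment $T(\lambda)$, i.e., closing the self-bounding recursion — precisely where one must go beyond the linear concentration of Fact~\ref{fact:ising-sub-Gaussian} and exploit the ``higher-order'' structure of \cite{gotze2019higher}. An essentially equivalent route is Chatterjee's exchangeable-pairs method: generate $(X,X')$ by one Glauber update, solve the Poisson equation $-\mathcal{L}h=Q-\mu_Q$ (well-posed since Dobrushin's condition gives the Glauber generator $\mathcal{L}$ a spectral gap $\gtrsim1/M$), and bound $v(x)=\tfrac12\E[(Q(X)-Q(X'))(h(X)-h(X'))\mid X=x]$ by a non-negative quadratic form $\lesssim_\eta\|A\|_F^2+\|Ax\|_2^2$ via a gradient-contraction estimate for the Glauber semigroup; Chatterjee's sub-exponential concentration lemma then applies, but the same bootstrap over $\|Ax\|_2^2$ reappears. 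Everything else — the discrete-gradient computation, the covariance bound from Fact~\ref{fact:ising-sub-Gaussian}, the Herbst/Chernoff bookkeeping, and the final passage to the $\psi_1$-norm via Fact~\ref{fact:exp-decay} — is routine.
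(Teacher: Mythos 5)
This statement is a cited \emph{Fact} -- the paper offers no proof of it, just a pointer to~\cite{gotze2019higher} -- so there is no ``paper's own proof'' to compare against. Your overall framework (approximate tensorization of entropy / modified log-Sobolev inequality under Dobrushin's condition, followed by a Herbst argument) is indeed the engine in Götze--Sambale--Sinulis, and your preliminary computations are correct: the identity $\lambda\varphi'(\lambda)-\varphi(\lambda)=\mathrm{Ent}(e^{\lambda Q})/\E[e^{\lambda Q}]$, the discrete gradient $\partial_i Q(x)=4x_i(Ax)_i$, hence $\sum_i(\partial_i Q)^2=16\|Ax\|_2^2$, and the resulting inequality $\lambda\varphi'(\lambda)-\varphi(\lambda)\le 16\rho\lambda^2\,T(\lambda)$ all check out.

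The gap is exactly where you flag it: you never actually bound the tilted second moment $T(\lambda)=\E[e^{\lambda Q}\|AX\|_2^2]/\E[e^{\lambda Q}]$, and the sketch you offer would not close. The obstacle is a genuine scaling problem. Writing $W=\|AX\|_2^2$, observe that $W=X^\top A^2 X$ is itself a quadratic form: after rescaling so $\|A\|_F=1$, the off-diagonal part of $A^2$ has Frobenius norm at most $1$, and its diagonal contributes the deterministic constant $\|A\|_F^2=1$. So ``one further application of the same machinery'' lands you on a problem with \emph{exactly} the same structure and the same scale as the one you started with. Cauchy--Schwarz at the tilted measure produces $\sqrt{\E[e^{2\lambda Q}]}/\E[e^{\lambda Q}]$, which is also not controllable without exactly the bound you are trying to prove. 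Declaring that the recursion terminates because ``$W\ge0$ and its fluctuations are dominated by those of $Q$'' is not an argument -- it is the thing that needs to be established.

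What actually terminates the hierarchy in~\cite{gotze2019higher}, and is missing from your sketch, is that the \emph{second-order} difference of a quadratic form is deterministic up to sign: for $i\ne j$, $\partial_i\partial_j(X^\top A X)=8A_{ij}x_ix_j$, so $\sum_{i\ne j}(\partial_i\partial_j Q)^2=64\|A\|_F^2$ is a constant. Their method builds a two-level hierarchy of difference operators so that after differentiating twice one hits this constant, and \emph{that} is what bounds $T(\lambda)$ in terms of $\|A\|_F^2$ and a Dobrushin-dependent factor without a circular recursion. The same observation is what makes the exchangeable-pairs alternative close as well. So the shape of your argument is right, but the load-bearing step -- the higher-order structure that gives the theorem its name -- is absent.
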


\subsection{Basic Facts about the Hypergeometric Distribution}\label{ssec:hyper-geo}
Let $k,n,N\in\Z_+$.
Consider an urn consisting of $N$ balls in total among which $k$ are red, and $N-k$ are blue.
Let $X$ denote the number of red balls obtained by sampling $n$ balls from the urn \emph{without} replacements.
In this way, we say that $X\sim\mathrm{Hypergeom}(K,N,n)$.
We will also use the following standard fact:
\begin{fact}\label{fact:hyper-hoeffding}
Let $X\sim\mathrm{Hypergeom}(K,N,n)$ and $p=K/N$. Then for any $t>0$, we have that
\begin{align*}
\Pr\left[X>np+t\right]\le\exp\left(-2t^2/n\right).
\end{align*}
\end{fact}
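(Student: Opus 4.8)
The plan is to use the exponential-moment (Chernoff) method, reducing the hypergeometric moment generating function to the binomial one and then invoking the standard Hoeffding estimate for bounded random variables. Write $X=\sum_{i=1}^{n}Z_i$, where $Z_i\in\{0,1\}$ is the indicator that the $i$-th ball drawn without replacement is red. Since $p=K/N$, each $Z_i$ is marginally $\mathrm{Bernoulli}(p)$, although the $Z_i$ are not independent. For any $s>0$, Markov's inequality applied to $e^{sX}$ gives $\Pr[X>np+t]\le e^{-s(np+t)}\,\E[e^{sX}]=e^{-st}\,\E[e^{s(X-np)}]$, so it suffices to upper bound $\E[e^{s(X-np)}]$.

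The key step is the comparison $\E[e^{s(X-np)}]\le\prod_{i=1}^{n}\E[e^{s(Z_i-p)}]$ for $s>0$, i.e. the moment generating function behaves no worse than in the independent (sampling-with-replacement) case. I would derive this from the fact that the indicators $Z_1,\dots,Z_n$ produced by sampling without replacement are negatively associated, together with the standard property of negatively associated random variables that $\E\big[\prod_i f_i(Z_i)\big]\le\prod_i\E[f_i(Z_i)]$ whenever each $f_i$ is nonnegative and nondecreasing; taking $f_i(z)=e^{sz}$ yields the claim. Alternatively, one can cite Hoeffding's classical theorem that $\E[\phi(X)]\le\E[\phi(Y)]$ for every convex $\phi$, where $Y\sim\bin(n,p)$ counts red balls under sampling with replacement, and specialize to $\phi(x)=e^{sx}$. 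Combining this with Hoeffding's lemma --- for any random variable $Z$ with mean $p$ supported on $[0,1]$ one has $\E[e^{s(Z-p)}]\le e^{s^2/8}$ --- gives $\E[e^{s(X-np)}]\le e^{ns^2/8}$.

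Putting the pieces together, $\Pr[X>np+t]\le e^{-st+ns^2/8}$ for every $s>0$, and optimizing by setting $s=4t/n$ yields $\Pr[X>np+t]\le e^{-2t^2/n}$, which is exactly the claimed bound. Degenerate cases such as $n>N$ or $np+t>n$ make the left-hand side zero and are handled trivially. The main obstacle --- really the only non-routine ingredient --- is the with-/without-replacement comparison, i.e. establishing negative association of the sampling-without-replacement indicators (a standard structural fact about the permutation distribution), or equivalently Hoeffding's convex-domination theorem; everything downstream is the textbook optimization of the Chernoff exponent.
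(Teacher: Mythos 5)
Your proof is correct: the Chernoff/Hoeffding argument via negative association of the without-replacement indicators (or, equivalently, Hoeffding's convex-domination theorem comparing to $\bin(n,p)$), followed by Hoeffding's lemma and the optimization $s=4t/n$, yields exactly $\exp(-2t^2/n)$. The paper states this as a standard fact without proof, and your argument is precisely the canonical one it implicitly relies on.
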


\subsubsection{Proof of Claim~\ref{clm:near-orth-vec-disc}}
Let $S$ and $S'$ be independent uniformly random subsets from $[M]$ with $|S|=|S'|=m$.
Note that $|S\cap S'|\sim\mathrm{Hypergeom}(m,M,m)$, by Fact~\ref{fact:hyper-hoeffding}, we know that
\begin{align*}
\Pr[|S\cap S'|\ge m^{1-c}]\le\Pr\left[|S\cap S'|\ge m\left(\frac{m}{M}+\frac{m^{-c}}{2}\right)\right]\le\exp\left(-\frac{m^{1-2c}}{2}\right).
\end{align*}
Therefore, by the union bound,
\begin{align*}
\Pr[\exists |S|=|S'|=m:|S\cap S'|\ge m^{1-c}]\le2^{\frac{m^{1-2c}}{2}}\cdot\exp\left(-\frac{m^{1-2c}}{2}\right)<1.
\end{align*}

\subsection{Reduction of Testing to Learning}\label{ssec:learning-to-testing}

We have the following simple claim:

\begin{claim}\label{clm:learning-to-testing}
Suppose there exists an SQ algorithm to learn an unknown distribution in $\D$ to total variation distance $\eps$ using at most $N$ statistical queries of tolerance $\tau$.
Suppose furthermore that for each $D'\in\D$ we have that $\dtv(D,D')>2(\tau+\eps)$.
Then there exists an SQ algorithm that solves the testing problem $\mathcal{B}(\D,D)$ using at most $n+1$ queries of tolerance $\tau$.
\end{claim}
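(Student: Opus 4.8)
The plan is to give a direct reduction: use the hypothesized learning algorithm as a black box, and then perform one extra statistical query to decide between $H_0$ ($D' = D$) and $H_1$ ($D' \in \mathcal{D}$). First I would run the learning algorithm on the input distribution $D'$; this uses at most $N$ queries of tolerance $\tau$ and, if $D' \in \mathcal{D}$, returns (with the required success probability) a hypothesis $\widehat{D}$ with $\dtv(\widehat{D}, D') \le \eps$. The key point is that regardless of which hypothesis holds, the learning algorithm is still well-defined — in case $H_0$ it simply returns \emph{some} distribution $\widehat{D}$, about which we have no a priori guarantee, but that will not matter.

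Next I would use $\widehat{D}$ to construct a distinguishing test function. Since $\dtv(\widehat{D}, D) \ge \dtv(D, D') - \dtv(\widehat{D}, D')$ in the $H_1$ case and $\dtv(\widehat{D},D') \le \eps$ there, while $\dtv(D,D') > 2(\tau+\eps)$ by assumption, we get $\dtv(\widehat{D}, D) > 2(\tau+\eps) - \eps = 2\tau + \eps$ in case $H_1$; in case $H_0$ we have $D' = D$. So the plan is to take $S \subseteq \{0,1\}^M$ achieving (nearly) the total variation distance between $\widehat{D}$ and $D$ — i.e. $|\widehat{D}(S) - D(S)|$ close to $\dtv(\widehat{D},D)$ — and let $f = \mathbb{I}[x \in S]$, a valid $[-1,1]$-valued (indeed $\{0,1\}$-valued) query function. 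We know $D(S)$ exactly since $D$ is the fixed reference distribution. Query $f$ with tolerance $\tau$ to get a value $v$ with $|v - \E_{X \sim D'}[f(X)]| = |v - D'(S)| \le \tau$. Then accept $H_0$ if $|v - D(S)| \le \tau$ and accept $H_1$ otherwise.

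To verify correctness: in case $H_0$, $D' = D$, so $|v - D(S)| = |v - D'(S)| \le \tau$ and we correctly output $H_0$. In case $H_1$, $|D'(S) - D(S)| \ge |\widehat{D}(S) - D(S)| - \dtv(\widehat{D}, D') \ge |\widehat{D}(S) - D(S)| - \eps$; choosing $S$ so that $|\widehat{D}(S) - D(S)| \ge \dtv(\widehat{D},D) - \tau > 2\tau + \eps - \tau = \tau + \eps$ (the exact support being finite, the supremum defining $\dtv$ is attained, so we may even take $|\widehat{D}(S)-D(S)| = \dtv(\widehat{D},D)$ and drop the $\tau$ slack), we get $|D'(S) - D(S)| > \tau$, hence $|v - D(S)| \ge |D'(S) - D(S)| - |v - D'(S)| > \tau - \tau = 0$; tightening the margins slightly (the assumption $\dtv(D,D') > 2(\tau+\eps)$ gives room) yields $|v - D(S)| > \tau$, so we correctly output $H_1$. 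The total query count is $N + 1$, all of tolerance $\tau$, and the success probability is inherited from the learning algorithm.

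I do not expect any real obstacle here; the only mild subtlety is bookkeeping the constants in the total variation inequalities so that the separation $\dtv(D,D') > 2(\tau+\eps)$ genuinely forces the observed value to fall outside the $\tau$-ball around $D(S)$ in case $H_1$ while staying inside it in case $H_0$ — this is exactly why the hypothesis has the factor $2$ and the $+\eps$ term. A second minor point is that $S$ depends on the (random) output $\widehat{D}$, but this is fine: the extra query is made adaptively after the learning phase, which the SQ model allows. One should also note $f$ takes values in $\{0,1\} \subseteq [-1,1]$, so it is a legal statistical query.
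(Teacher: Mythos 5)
Your proposal is correct and follows essentially the same route as the paper's proof: run the learner as a black box, take a set $S$ achieving (exactly, since the support is finite) the total variation distance between the hypothesis $\widehat{D}$ and the reference $D$, query $\mathbb{I}[S]$ once more, and threshold. The only cosmetic difference is that you compare the response $v$ to $D(S)$ with margin $\tau$, whereas the paper compares $v$ to $\widehat{D}(S)$ with margin $\tau+\eps$; both rules are made to work by the same triangle-inequality bookkeeping, and your intermediate chain (which first used the weaker $|\widehat{D}(S)-D(S)|\ge \dtv(\widehat D,D)-\tau$ before noting the supremum is attained exactly) does tighten to the required $|v-D(S)|>\tau$ in the $H_1$ case once the slack is dropped, exactly as you observe.
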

\begin{proof}
We begin by running the learning algorithm under the assumption that the unknown distribution in question is $D_0\in\mathcal{D}$ to get a hypothesis distribution $D'$.
We let $S$ be a subset so that $\dtv(D,D')=|D(S)-D'(S)|$, and use an additional statistical query to get an estimate $v$ of the expectation of $\mathbb{I}[S]$, the indicator function of $S$.
If the original distribution was $D$, we have that $|v-D(S)|\le\tau$. If the original distribution was $D_0$, we have that $|v-D'(S)|\le|v-D_0(S)|+|D_0(S)-D'(S)|\le\tau+\epsilon$.
However, we have that
\begin{align*}
|D(S)-D'(S)|=\dtv(D,D')\ge\dtv(D,D_0)-\dtv(D_0,D')>2(\tau+\epsilon)-\epsilon=2\tau+\epsilon \;.
\end{align*}
Therefore, our distribution is in $\D$ if and only if the expectation of $\mathbb{I}[S]$ is within $\tau+\epsilon$ of $D'(S)$.
Thus, determining which of these cases holds will solve our decision problem.
\end{proof}

\section{Omitted Statements and Proofs from Section~\ref{ssec:sq-bin}}\label{app:sq-bin}
\begin{definition}\label{def:balance}
Fix $0<c<1/2$ to be a constant. We say that a binary product distribution is $c$-balanced if every coordinate of the mean vector is in $[c,1-c]$.
\end{definition}
For $c$-balanced binary product distributions, we have the following lemma.
\begin{lemma}\label{lem:tv-balanced-bin}
Let $P$ and $Q$ be $c$-balanced binary product distributions with mean vectors $\mu_p$ and $\mu_q$. Then, $\dtv(P,Q)\le O(\|\mu_p-\mu_q\|_2/\sqrt{c})$.
\end{lemma}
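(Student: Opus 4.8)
The goal is to bound $\dtv(P,Q)$ for two $c$-balanced binary product distributions in terms of $\|\mu_p - \mu_q\|_2$. The natural approach is to go through a KL-divergence (or chi-squared) bound and then apply Pinsker's inequality. Specifically, I would first invoke Pinsker: $\dtv(P,Q)^2 \le (1/2)\, \mathrm{KL}(P\|Q)$. Since $P$ and $Q$ are product distributions over the same index set $[M]$, the KL divergence tensorizes: $\mathrm{KL}(P\|Q) = \sum_{i=1}^M \mathrm{KL}(P_i \| Q_i)$, where $P_i, Q_i$ are the $i$-th coordinate marginals (Bernoulli with means $\mu_{p,i}, \mu_{q,i}$ respectively).

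The heart of the argument is then a one-dimensional estimate: for two Bernoulli distributions with means $a, b \in [c, 1-c]$, we have $\mathrm{KL}(\mathrm{Ber}(a)\|\mathrm{Ber}(b)) = a\log(a/b) + (1-a)\log((1-a)/(1-b)) \le (a-b)^2/(b(1-b)) \le (a-b)^2/(c(1-c)) = O((a-b)^2/c)$, using $c \le 1/2$ so $c(1-c) \ge c/2$. The first inequality is the standard bound $\mathrm{KL}(\mathrm{Ber}(a)\|\mathrm{Ber}(b)) \le \chi^2(\mathrm{Ber}(a)\|\mathrm{Ber}(b))$ and a direct computation that $\chi^2(\mathrm{Ber}(a)\|\mathrm{Ber}(b)) = (a-b)^2/(b(1-b))$; alternatively one can Taylor-expand the function $b \mapsto a\log(a/b)+(1-a)\log((1-a)/(1-b))$ around $b=a$, noting its first derivative vanishes there and its second derivative is $a/b^2 + (1-a)/(1-b)^2$, which on $[c,1-c]$ is $O(1/c)$ — this gives the bound via Taylor's theorem with the Lagrange remainder.

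Summing over coordinates, $\mathrm{KL}(P\|Q) \le O(1/c) \sum_i (\mu_{p,i} - \mu_{q,i})^2 = O(\|\mu_p-\mu_q\|_2^2/c)$, and then Pinsker gives $\dtv(P,Q) \le \sqrt{(1/2)\mathrm{KL}(P\|Q)} = O(\|\mu_p - \mu_q\|_2/\sqrt{c})$, as desired.

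The main obstacle — really the only place one must be a little careful — is the one-dimensional KL estimate and ensuring the $c$-balancedness is used correctly: the bound $\mathrm{KL}(\mathrm{Ber}(a)\|\mathrm{Ber}(b)) = O((a-b)^2/c)$ genuinely requires $b$ (the second argument) to be bounded away from $0$ and $1$, which is exactly what $c$-balancedness of $Q$ provides; one should also double-check that the chi-squared identity $\chi^2(\mathrm{Ber}(a)\|\mathrm{Ber}(b)) = (a-b)^2/(b(1-b))$ is written with the right denominator. Everything else (tensorization of KL for products, Pinsker) is standard and routine.
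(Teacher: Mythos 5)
Your proof is correct. The paper actually states Lemma~\ref{lem:tv-balanced-bin} without proof, treating it as a standard fact, so there is no paper argument to compare against; your route via Pinsker's inequality, tensorization of KL over the product coordinates, and the one-dimensional bound $\mathrm{KL}(\mathrm{Ber}(a)\|\mathrm{Ber}(b)) \le \chi^2(\mathrm{Ber}(a)\|\mathrm{Ber}(b)) = (a-b)^2/(b(1-b)) = O((a-b)^2/c)$ is the canonical way to establish it. All the steps check out: the chi-squared identity has the correct denominator, the lower bound $b(1-b) \ge c(1-c) \ge c/2$ uses $c \le 1/2$ correctly, and you are right that only the $c$-balancedness of the \emph{second} argument is actually needed for the KL bound to be finite and controlled. (An equally standard alternative would go through the squared Hellinger distance and its sub-additivity over products, but there is no reason to prefer it here.)
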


We provide the result for hardness of robust learning of an unknown binary product distribution here.
In order to make the distributions in our family far from the reference distribution $U_M$ in total variation distance, we need higher dimension $m,M$ compared with the hardness result for robust hypothesis testing.
\begin{theorem}[SQ Lower Bound for Robust Learning of a Binary Product Distribution]\label{thm:sq-bin-tv}
Fix $0<c<1$ and $k$ to be a sufficiently large integer. Let $m,M\in\Z_+$ with $M=3m^{5/4}$. Let $0<\eps<1/2$ and $\delta$ be a sufficiently small multiple of $\epsilon\sqrt{\log(1/\epsilon)}/k^2$.
Let $\tau=\Theta(M^{-(k+1)/5}\delta)$.
Assume that $m>\max\Big(C'/\epsilon,\frac{k^2}{\log(1/\delta)}\Big)$ for some sufficiently large constant $C'>0$.
Then any $\mathrm{SQ}$ algorithm which is given access to a distribution $\p$ over $\{0,1\}^M$ which satisfies $\dtv\Big(\p,U_M^{S,\frac{\delta}{\sqrt{m}}}\Big)\le\epsilon$
for some unknown subset $S\subseteq[M]$ with $|S|=m$, outputs a hypothesis $\q$ with $\dtv(\q,\p)\le O(\delta)$ with probability at least $2/3$ must either make queries of accuracy better than $\sqrt{2\tau}$ or
must make at least $2^{\Omega(M^{2/5})}M^{-(k+1)/5}$
statistical queries.
\end{theorem}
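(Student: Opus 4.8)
The approach is to feed the assumed robust learner the hard instances of the hidden‑junta testing problem and then apply the generic testing‑to‑learning reduction of Claim~\ref{clm:learning-to-testing}. Fix $A$ to be the one‑dimensional distribution on $[m]\cup\{0\}$ produced by Proposition~\ref{prop:bin-moment-matching}; its hypotheses hold because $M=3m^{5/4}>2m^{5/4}$ and, once $\epsilon$ is below a universal constant, $m>\max(C'/\epsilon,k^2/\log(1/\delta))$ dominates $\max(C_0(\log(1/\delta))^3,k^2/\log(1/\delta))$ (using $\log(1/\delta)=\Theta(\log(1/\epsilon))$). Put $\mathcal{D}=\{\p^A_S:S\subseteq[M],|S|=m\}$ and take $U_M$ as the reference distribution.

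Two properties of $\mathcal{D}$ need to be checked. (i) Each $\p^A_S$ is a legal input for the learner: arguing as in the proof of Theorem~\ref{thm:sq-bin-testing}, symmetry gives $\dtv(\p^A_S,U_M^{S,\delta/\sqrt{m}})=\dtv(A,\bin(m,1/2+\delta/\sqrt{m}))=O(\delta k^2/\sqrt{\log(1/\delta)})\le\epsilon$ by Proposition~\ref{prop:bin-moment-matching}(ii). Hence, on input $\p^A_S$, the learner outputs with probability $\ge2/3$ a hypothesis within TV‑distance $O(\delta)$ of $\p^A_S$; i.e., it learns every member of $\mathcal{D}$ to accuracy $O(\delta)$. (ii) Each $\p^A_S$ is far from $U_M$: by Lemma~\ref{lem:TV-binary-lower-bound} and the triangle inequality, $\dtv(U_M,\p^A_S)\ge\dtv(U_M,U_M^{S,\delta/\sqrt{m}})-\dtv(U_M^{S,\delta/\sqrt{m}},\p^A_S)\ge\Omega(\delta)-O(\delta k^2/\sqrt{\log(1/\delta)})=\Omega(\delta)$, the last equality because $\delta$ is small enough (relative to the fixed $k$) that $k^2/\sqrt{\log(1/\delta)}$ is a small constant.

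Now fix the learner and suppose it uses only queries of tolerance at least $\sqrt{2\tau}$, since otherwise the first alternative in the theorem already holds. As $\tau=\Theta(M^{-(k+1)/5}\delta)$ and $M$ is large, $\sqrt{2\tau}\ll\delta$, so $\dtv(U_M,\p^A_S)=\Omega(\delta)>2(\sqrt{2\tau}+O(\delta))$ provided the $O(\delta)$ learning accuracy has a small enough constant; thus the separation hypothesis of Claim~\ref{clm:learning-to-testing} is met, and that claim turns the learner into an SQ algorithm solving the testing problem $\mathcal{B}(\mathcal{D},U_M)$ using one extra query of tolerance $\sqrt{2\tau}$. On the other hand, exactly as in the proof of Theorem~\ref{thm:sq-bin-testing} — Claim~\ref{clm:near-orth-vec-disc} supplies $2^{\Omega(m)}$ subsets with pairwise intersections below $m^{3/4}$, Lemma~\ref{lem:cor-disc} controls their correlations, and $\chi^2(A,\bin(m,1/2))=O(\delta)$ from Proposition~\ref{prop:bin-moment-matching}(iii) makes $m^{-(k+1)/4}\chi^2(A,\bin(m,1/2))\le\tau$ — Proposition~\ref{prop:gen-sq-prop} (with the present $\tau$) shows any SQ algorithm solving $\mathcal{B}(\mathcal{D},U_M)$ with probability $\ge2/3$ either uses a query of accuracy better than $\sqrt{2\tau}$ or makes at least $2^{\Omega(\sqrt{m})}\tau/\chi^2(A,\bin(m,1/2))\ge2^{\Omega(M^{2/5})}M^{-(k+1)/5}$ queries. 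Since our tester (and hence the learner) stays at tolerance $\sqrt{2\tau}$, it falls into the second case, and dropping the one extra query leaves the same asymptotic count. This proves the theorem.

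The step I expect to be the main obstacle is property (ii): making all the constants in the total‑variation separation fit together. One needs the $\Omega(\delta)$ lower bound of Lemma~\ref{lem:TV-binary-lower-bound} to simultaneously dominate the moment‑matching error $O(\delta k^2/\sqrt{\log(1/\delta)})$ of Proposition~\ref{prop:bin-moment-matching}(ii), twice the learner's slack $O(\delta)$, and $2\sqrt{2\tau}$. This is precisely what forces the quantitative hypotheses here — the $1/k^2$ factor in $\delta\asymp\epsilon\sqrt{\log(1/\epsilon)}/k^2$ and the strengthened dimension requirement $m>C'/\epsilon$ (versus $m>C'(\log(1/\epsilon))^3$ in the testing theorem), which buys the extra room for the hard instances to stay $\Omega(\delta)$‑far from $U_M$ after absorbing the learner's error. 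Everything else is a black‑box application of results already established.
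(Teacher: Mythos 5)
Your proposal is correct and follows essentially the same route as the paper: establish that each $\p_S^A$ is a legal input (close to $U_M^{S,\delta/\sqrt{m}}$, via Proposition~\ref{prop:bin-moment-matching}(ii)) and far from $U_M$ (via Lemma~\ref{lem:TV-binary-lower-bound} and the triangle inequality), then invoke Claim~\ref{clm:learning-to-testing} together with the testing lower bound of Theorem~\ref{thm:sq-bin-testing} (equivalently, Proposition~\ref{prop:gen-sq-prop} with the machinery of Claim~\ref{clm:near-orth-vec-disc} and Lemma~\ref{lem:cor-disc}). You spell out the constant-tracking in the TV separation a bit more explicitly than the paper does — in particular, writing the correction term as $O(\delta k^2/\sqrt{\log(1/\delta)})$ rather than the paper's looser $O(\epsilon)$ — but the argument is the same.
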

\begin{proof}
We need to show that for any subset $S\subseteq[M]$ with $|S|=m$, 
$\p_S^A$ is far from $U_M$ in total variation distance. 
In particular, by Lemma~\ref{lem:TV-binary-lower-bound}, we have that
\begin{align*}
\dtv(U_M,\p_S^A)\ge\dtv\Big(U_M,U_M^{S,\frac{\delta}{\sqrt{m}}}\Big)-\dtv\Big(\p_S^A,U_M^{S,\frac{\delta}{\sqrt{m}}}\Big)\ge\Omega(\delta)-O(\epsilon)=\Omega(\delta).
\end{align*}
In addition, by our choice of $m$, we have that $\sqrt{2\tau}\le O(\delta)$. 
Therefore, we have that $\dtv\left(U_M,\p_S^A\right)\ge2\sqrt{2\tau}+\Omega(\delta)$. 
Applying Claim~\ref{clm:learning-to-testing} and Theorem~\ref{thm:sq-bin-testing} 
yields Theorem~\ref{thm:sq-bin-tv}.
\end{proof}

\begin{theorem}[SQ Lower Bound for Robust Mean Estimation of a Binary Product Distribution]\label{thm:sq-bin-mean}
Fix $0<c<1$ and $k$ to be a sufficiently large integer. Let $m,M\in\Z_+$ with $M=3m^{5/4}$.
Let $0<\eps<1/2$ and $\delta$ be a sufficiently small multiple of $\epsilon\sqrt{\log(1/\epsilon)}/k^2$. Let $\tau=\Theta(M^{-(k+1)/5}\delta)$.
Assume that $m>\max\Big(C'/\epsilon,\frac{k^2}{\log(1/\delta)}\Big)$ for some sufficiently large constant $C'>0$.
Then any $\mathrm{SQ}$ algorithm which is given access to a distribution $\p$ over $\{0,1\}^M$ which satisfies $\dtv\Big(\p,U_M^{S,\frac{\delta}{\sqrt{m}}}\Big)\le\epsilon$
for some unknown subset $S\subseteq[M]$ with $|S|=m$, outputs an estimate $\wh{\mu}$ with $\Big\|\wh{\mu}-\mu_M^{S,\frac{\delta}{\sqrt{m}}}\Big\|_2\le O(\delta)$ with probability at least $2/3$ must either make queries of accuracy better than $\sqrt{2\tau}$ or
must make at least $2^{\Omega(M^{2/5})}M^{-(k+1)/5}$
statistical queries.
\end{theorem}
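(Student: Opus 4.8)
The plan is to obtain this statement as an immediate consequence of the total–variation learning lower bound of Theorem~\ref{thm:sq-bin-tv}, using Lemma~\ref{lem:tv-balanced-bin} to turn an $\ell_2$-accurate mean estimate into a TV-accurate hypothesis distribution. As in the proof of Theorem~\ref{thm:sq-bin-testing}, I would first note that we may assume $\eps$ is below a sufficiently small universal constant, so that $\delta$, being a small constant multiple of $\eps\sqrt{\log(1/\eps)}/k^2$, satisfies $\eps=o(\delta)$. I would fix the balance parameter $c=1/4$ of Definition~\ref{def:balance}; since $\delta$ is sufficiently small, every coordinate of $\mu_M^{S,\delta/\sqrt{m}}$ lies in $[c,1-c]$, so $U_M^{S,\delta/\sqrt{m}}$ is $c$-balanced.

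Given any SQ algorithm $\mathcal{A}$ with the stated mean-estimation guarantee, I would build an SQ algorithm $\mathcal{A}'$ for the learning task of Theorem~\ref{thm:sq-bin-tv}: run $\mathcal{A}$ on the same input $\p$ to get $\wh{\mu}$ with $\|\wh{\mu}-\mu_M^{S,\delta/\sqrt{m}}\|_2\le O(\delta)$; let $\wh{\mu}'$ be the coordinate-wise projection of $\wh{\mu}$ onto $[c,1-c]$, which (being a projection onto a convex box containing $\mu_M^{S,\delta/\sqrt{m}}$) does not increase the $\ell_2$-distance, so $\|\wh{\mu}'-\mu_M^{S,\delta/\sqrt{m}}\|_2\le O(\delta)$; and output the binary product distribution $\q$ with mean $\wh{\mu}'$. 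Since $\q$ and $U_M^{S,\delta/\sqrt{m}}$ are both $c$-balanced, Lemma~\ref{lem:tv-balanced-bin} gives $\dtv(\q,U_M^{S,\delta/\sqrt{m}})=O\big(\|\wh{\mu}'-\mu_M^{S,\delta/\sqrt{m}}\|_2/\sqrt{c}\big)=O(\delta)$, and combining with $\dtv(\p,U_M^{S,\delta/\sqrt{m}})\le\eps=o(\delta)$ via the triangle inequality yields $\dtv(\q,\p)\le O(\delta)$. The post-processing is deterministic, so $\mathcal{A}'$ succeeds with probability at least $2/3$ whenever $\mathcal{A}$ does, using exactly the queries and tolerances of $\mathcal{A}$.

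Since the hypotheses of Theorems~\ref{thm:sq-bin-tv} and \ref{thm:sq-bin-mean} coincide ($M=3m^{5/4}$, $\tau=\Theta(M^{-(k+1)/5}\delta)$, and the same lower bound on $m$), Theorem~\ref{thm:sq-bin-tv} applied to $\mathcal{A}'$ shows that $\mathcal{A}$ must either make a query of accuracy better than $\sqrt{2\tau}$ or make at least $2^{\Omega(M^{2/5})}M^{-(k+1)/5}$ statistical queries, which is the claim.

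I do not anticipate a real obstacle: all the work is already done in Lemma~\ref{lem:tv-balanced-bin} and Theorem~\ref{thm:sq-bin-tv}. The only point requiring a little care is constant tracking — the $O(\delta)$ error of $\mathcal{A}$ is inflated by the absolute factor $1/\sqrt{c}$ in Lemma~\ref{lem:tv-balanced-bin}, and this inflated quantity must still be a sufficiently small multiple of $\delta$ so that the separation step underlying Theorem~\ref{thm:sq-bin-tv} goes through, namely that $\dtv(U_M,\p_S^A)\ge\Omega(\delta)$ (Lemma~\ref{lem:TV-binary-lower-bound}) exceeds twice the sum of the query tolerance $\sqrt{2\tau}$ and the learning accuracy demanded by Claim~\ref{clm:learning-to-testing}. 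Because $c$ is an absolute constant, this is routine bookkeeping, absorbed into the meaning of ``sufficiently small'' in the statement.
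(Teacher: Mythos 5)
Your proof is correct and follows essentially the same route as the paper: convert the $\ell_2$-accurate mean estimate into a TV-accurate hypothesis distribution via Lemma~\ref{lem:tv-balanced-bin}, then invoke Theorem~\ref{thm:sq-bin-tv}. The only difference is that you insert an explicit coordinate-wise projection onto $[c,1-c]^M$ to guarantee the output product distribution is $c$-balanced before applying Lemma~\ref{lem:tv-balanced-bin}; the paper simply asserts that $\q$ is balanced ``by construction,'' and your projection step makes this rigorous without changing the argument.
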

\begin{proof}
Assume there is an algorithm that outputs an estimate $\wh{\mu}$ such that $\Big\|\wh{\mu}-\mu_M^{S,\frac{\delta}{\sqrt{m}}}\Big\|_2\le O(\delta)$ for some unknown subset $S\subseteq[M]$ with $|S|=m$.
Let $\q$ be the corresponding binary product distribution with mean vector $\wh{\mu}$.
Note that by our construction, both $\q$ and $U_M^{S,\frac{\delta}{\sqrt{m}}}$ are $c'$-balanced binary product distributions for some universal constant $c'>0$.
Therefore, by Lemma~\ref{lem:tv-balanced-bin}, we have that $\dtv(\q,\p_S^A)\le\dtv\Big(\q,U_M^{S,\frac{\delta}{\sqrt{m}}}\Big)+\dtv\Big(U_M^{S,\frac{\delta}{\sqrt{m}}},\p_S^A\Big)\le O\Big(\Big\|\wh{\mu}-\mu_M^{S,\frac{\delta}{\sqrt{m}}}\Big\|_2\Big)+O(\epsilon)\le O(\delta)$.
Applying Theorem~\ref{thm:sq-bin-tv} yields the result.
\end{proof}

\subsection{Proof of Proposition~\ref{prop:bin-moment-matching}}\label{ssec:bin-moment-matching}
In this section, we prove Proposition~\ref{prop:bin-moment-matching}.
We first introduce the following notation which will be used throughout this section.
For some fixed positive integer $n$ and $x\in[n]\cup\{0\}$, we consider the function $F_{n,x}(\delta)=\binom{n}{x}(1/2+\delta)^x(1/2-\delta)^{n-x},-1/2<\delta<1/2$. The first and second derivatives of $F_{n,x}(\delta)$ are given by the following fact:
\begin{fact}\label{fact:F-derivative}
For any positive integer $n$ and $x\in[n]\cup\{0\}$, we have that
\begin{align*}
F'_{n,x}(\delta)&=\binom{n}{x}(1/2+\delta)^{x-1}(1/2-\delta)^{n-x-1}(x-(1/2+\delta)n)=\frac{F_{n,x}(\delta)(x-(1/2+\delta)n)}{1/4-\delta^2},\\
F''_{n,x}(\delta)&=\binom{n}{x}(1/2+\delta)^{x-2}(1/2-\delta)^{n-x-2}(x^2-((2\delta+1)n-2\delta)x+(1/2+\delta)^2(n^2-n))\notag\\&=\frac{F_{n,x}(\delta)\left((x-(\delta+1/2)n)^2+2\delta(x-(\delta+1/2)n)+n(\delta^2-1/4)\right)}{(1/4-\delta^2)^2}.
\end{align*}
\end{fact}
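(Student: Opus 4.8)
The plan is to differentiate $F_{n,x}(\delta) = \binom{n}{x}(1/2+\delta)^x(1/2-\delta)^{n-x}$ directly via the logarithmic/product derivative rule and then simplify the resulting rational expression. Write $F_{n,x}(\delta) = \binom{n}{x}g(\delta)h(\delta)$ with $g(\delta) = (1/2+\delta)^x$ and $h(\delta) = (1/2-\delta)^{n-x}$. Since $g'/g = x/(1/2+\delta)$ and $h'/h = -(n-x)/(1/2-\delta)$, the product rule gives $F'_{n,x}(\delta) = F_{n,x}(\delta)\bigl(\frac{x}{1/2+\delta} - \frac{n-x}{1/2-\delta}\bigr)$. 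Placing the two terms over the common denominator $(1/2+\delta)(1/2-\delta) = 1/4-\delta^2$ and expanding the numerator $x(1/2-\delta) - (n-x)(1/2+\delta)$, the $\pm x\delta$ and $\pm x/2$ contributions cancel, leaving $x - (1/2+\delta)n$. This yields $F'_{n,x}(\delta) = F_{n,x}(\delta)(x-(1/2+\delta)n)/(1/4-\delta^2)$; the alternative product form $\binom{n}{x}(1/2+\delta)^{x-1}(1/2-\delta)^{n-x-1}(x-(1/2+\delta)n)$ follows by absorbing one factor $1/2+\delta$ and one factor $1/2-\delta$ into $F_{n,x}(\delta)/(1/4-\delta^2)$.

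For the second derivative I would set $\phi(\delta) = (x-(1/2+\delta)n)/(1/4-\delta^2)$, so that $F'_{n,x} = F_{n,x}\cdot\phi$ and hence $F''_{n,x} = F'_{n,x}\phi + F_{n,x}\phi' = F_{n,x}(\phi^2 + \phi')$. Computing $\phi'$ by the quotient rule (numerator derivative $-n$, denominator derivative $-2\delta$) gives $\phi'(\delta) = \frac{-n(1/4-\delta^2) + 2\delta(x-(1/2+\delta)n)}{(1/4-\delta^2)^2}$, while $\phi^2 = (x-(1/2+\delta)n)^2/(1/4-\delta^2)^2$. Adding these over the common denominator $(1/4-\delta^2)^2$ and using $-n(1/4-\delta^2) = n(\delta^2-1/4)$ produces exactly the bracketed expression $(x-(\delta+1/2)n)^2 + 2\delta(x-(\delta+1/2)n) + n(\delta^2-1/4)$ claimed in the second displayed identity.

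Finally, to recover the first form of $F''_{n,x}$ I would expand that bracket: $(x-(1/2+\delta)n)^2 = x^2 - 2(1/2+\delta)nx + (1/2+\delta)^2n^2$ and $2\delta(x-(1/2+\delta)n) = 2\delta x - (\delta+2\delta^2)n$, then collect terms. The coefficient of $x$ becomes $-2(1/2+\delta)n + 2\delta = -((2\delta+1)n - 2\delta)$, and the remaining $\delta$-dependent constant terms combine as $(1/2+\delta)^2n^2 - n(\delta^2+\delta+1/4) = (1/2+\delta)^2(n^2-n)$, since $\delta^2+\delta+1/4 = (1/2+\delta)^2$. Absorbing two factors each of $(1/2+\delta)$ and $(1/2-\delta)$ into $F_{n,x}/(1/4-\delta^2)^2$ then gives $\binom{n}{x}(1/2+\delta)^{x-2}(1/2-\delta)^{n-x-2}\bigl(x^2 - ((2\delta+1)n-2\delta)x + (1/2+\delta)^2(n^2-n)\bigr)$, as stated. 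There is no real obstacle here; the only thing to watch is the sign bookkeeping when clearing denominators and the identity $\delta^2+\delta+1/4 = (1/2+\delta)^2$ used in the last simplification.
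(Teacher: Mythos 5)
Your computation is correct, and the paper in fact states this as a bare Fact with no proof, since it is exactly the routine calculus verification you carry out; both derivative formulas and both of their simplified forms check out (including the identity $\delta^2+\delta+1/4=(1/2+\delta)^2$ in the constant term). Nothing further is needed.
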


We now pick $C=\Theta(\sqrt{(\log(1/\delta)/m)})$, where the hidden constant is sufficiently small.
Consider the interval $I_{C,m}=[(1/2-C)m,(1/2+C)m-1]$. Without loss of generality, we assume that the two endpoints of $I_{C,m}$ are integers.
We define the one-dimensional distribution $A$ to be:
\begin{itemize}
\item For $x\notin I_{C,m}$, we define $A(x)=\bin(m,1/2+\delta/\sqrt{m})(x)$.
\item For $x\in I_{C,m}$, we define $A(x)=\bin(m,1/2+\delta/\sqrt{m})(x)+\int_{x}^{x+1}p(t)dt$, where $p$ is a polynomial of degree at most $k$ satisfying
\begin{align}\label{eq:bin-moment-matching}
\sum_{x\in\Z\cap I_{C,m}}x^i\int_{x}^{x+1}p(t)dt=\sum_{x=0}^{m}(\bin(m,1/2)(x)-\bin(m,1/2+\delta/\sqrt{m})(x))x^i,
\end{align}
for $0\le i\le k$.
\end{itemize}
Applying Theorem~\ref{thm:moment-matching} with the family of functions $\{F_{m,x}(\delta)\}_{x\in[m]\cup\{0\}}$, we know that there is a unique polynomial $p$ of degree at most $k$ satisfying the above properties.
Then we need to show that with sufficiently large $m$ (depending on $\delta$), both the $L_1$ and $L_\infty$ norms of $p$ on $[(1/2-C)m,(1/2+C)m]$ are sufficiently small in order to make $A(x)$ non-negative and close to $\bin(m,1/2+\delta/\sqrt{m})$.
The main technical result of this section is the following lemma, which provides upper bounds on the $L_1$ and $L_\infty$ norms of $p$ on the interval $[(1/2-C)m,(1/2+C)m]$.

\begin{lemma}\label{lem:bin-norm-bound}
Let $k,m\in\Z_+$. Suppose $1\le k^2\le C_0C^2m$ for some universal constant $C_0>0$ sufficiently small and $m\ge C_1(\log(1/\delta))^3$ for some universal constant $C_1>0$ sufficiently large.
Then $\int_{(1/2-C)m}^{(1/2+C)m}|p(t)|dt\le O\left(\frac{\delta k^2}{C\sqrt{m}}\right)$ and $|p(t^*)|\le O\left(\frac{\delta k^{5/2}}{C^2m^{3/2}}\right)$, where $t^*=\arg\max_{t:|t-m/2|\le Cm}|p(t)|$.
\end{lemma}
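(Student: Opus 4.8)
The plan is to reduce the lemma, via Theorem~\ref{thm:moment-matching}, to a self-referential estimate on the Legendre coefficients of $p$, then to bound the ``continuous-limit'' quantities $\beta_i$ that appear there, and finally to solve the resulting recursion. First I would apply Theorem~\ref{thm:moment-matching} with the family $H_{m,x}(s):=F_{m,x}(s)=\binom{m}{x}(1/2+s)^x(1/2-s)^{m-x}$ and with its free parameter set to $\delta/\sqrt m$; since this is precisely the system~\eqref{eq:bin-moment-matching} defining $p$, it gives $p(t)=\sum_{i=1}^{k}a_iP_i\!\bigl(\tfrac{t-m/2}{Cm}\bigr)$ with $a_0=0$ and, writing $L:=\int_{(1/2-C)m}^{(1/2+C)m}|p(t)|\,dt$,
\[
|a_i|\le\frac{2i+1}{2Cm}\Bigl(\beta_i+O\!\bigl(\tfrac{i^2}{Cm}\bigr)L\Bigr),\qquad \beta_i=\Bigl|\textstyle\sum_{x=0}^{m}\bigl(F_{m,x}(0)-F_{m,x}(\delta/\sqrt m)\bigr)P_i\!\bigl(\tfrac{x-m/2}{Cm}\bigr)\Bigr|.
\]
After the substitution $t=m/2+Cm\,u$ and Fact~\ref{fact:legendre-poly}(vii), $L\le Cm\sum_{i=1}^{k}|a_i|\int_{-1}^{1}|P_i(u)|\,du\le Cm\sum_{i=1}^{k}|a_i|\cdot O(1/\sqrt i)$, so the whole lemma reduces to bounding $\beta_i$ and then closing this recursion.

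The heart of the argument is the estimate $\beta_i=O\!\bigl(\delta\sqrt i/(C\sqrt m)\bigr)$ for $1\le i\le k$. Using $\sum_x F_{m,x}(0)=\sum_x F_{m,x}(\delta/\sqrt m)=1$, I would first replace $P_i(\cdot)$ by $P_i(\cdot)-P_i(0)$ in the sum at no cost, and then split into the bulk $|x-m/2|\le Cm$ and the tail. On the tail, $|P_i(\tfrac{x-m/2}{Cm})|\le(2/C)^i$ by Fact~\ref{fact:legendre-poly}(vi), while Hoeffding's inequality bounds the masses of $\bin(m,1/2)$ and $\bin(m,1/2+\delta/\sqrt m)$ outside the bulk by $\exp(-\Omega(C^2m))=\delta^{\Omega(1)}$; since the hidden constant in $C=\Theta(\sqrt{\log(1/\delta)/m})$ is small and $m\ge C_1(\log(1/\delta))^3$, the product $\delta^{\Omega(1)}(2/C)^k$ is negligible. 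On the bulk, a Taylor expansion of $s\mapsto F_{m,x}(s)$ at $0$ — with the second-order remainder controlled by the explicit $F''$ formula of Fact~\ref{fact:F-derivative}, contributing only $O(\delta^2\,\poly(i)/(C\sqrt m))$ — together with $F'_{m,x}(0)=4F_{m,x}(0)(x-m/2)$ reduces $\beta_i$, up to lower-order terms, to $\tfrac{4\delta}{\sqrt m}\bigl|\E_{X\sim\bin(m,1/2)}[(X-m/2)(P_i(\tfrac{X-m/2}{Cm})-P_i(0))]\bigr|$. By the symmetry of $\bin(m,1/2)$ about $m/2$ and the parity $(-1)^i$ of $P_i$, this expectation vanishes for even $i$; for odd $i$, expanding $P_i$ into monomials via Fact~\ref{fact:legendre-poly}(v) turns it into $\sum_{j\ \mathrm{odd},\ j\le i}c_j\,\E[(X-m/2)^{j+1}]/(Cm)^j$, whose leading $j=1$ term is $O(\sqrt i/C)$ (using $|P_i'(0)|=O(\sqrt i)$ and $\E[(X-m/2)^2]=m/4$) and whose higher-$j$ terms are damped by successive factors $O(i^2/(C^2m))=O(C_0)$ — here the hypothesis $k^2\le C_0C^2m$ and the sub-Gaussian moment bound $\E[(X-m/2)^{2r}]\le(O(r))^r(m/4)^r$ enter — so the sum is $O(\sqrt i/C)$ and $\beta_i=O(\delta\sqrt i/(C\sqrt m))$.

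Substituting this into the recursion and summing against the $O(1/\sqrt i)$ weights yields
\[
L\le\sum_{i=1}^{k}O\!\Bigl(\tfrac{\delta i}{C\sqrt m}\Bigr)+\sum_{i=1}^{k}O\!\Bigl(\tfrac{i^{5/2}}{Cm}\Bigr)L=O\!\Bigl(\tfrac{\delta k^2}{C\sqrt m}\Bigr)+O\!\Bigl(\tfrac{k^{7/2}}{Cm}\Bigr)L,
\]
and since $k^2\le C_0C^2m$ and $m\ge C_1(\log(1/\delta))^3$ (with $C_0$ small, $C_1$ large) the coefficient of $L$ on the right is at most $1/2$, so $L=O(\delta k^2/(C\sqrt m))$, the first claimed bound. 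For the $L_\infty$ bound, fix $t^*$ with $|t^*-m/2|\le Cm$; by Fact~\ref{fact:legendre-poly}(iii), $|p(t^*)|\le\sum_{i=1}^{k}|a_i|$, and feeding the bound on $L$ back into $|a_i|\le\frac{2i+1}{2Cm}\bigl(\beta_i+O(i^2/(Cm))L\bigr)=O(\delta i^{3/2}/(C^2m^{3/2}))$ and summing gives $|p(t^*)|=O(\delta k^{5/2}/(C^2m^{3/2}))$.

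The main obstacle is the $\beta_i$ estimate. The naive bound $\beta_i\le2\,\dtv\bigl(\bin(m,1/2),\bin(m,1/2+\delta/\sqrt m)\bigr)+(\text{tail})=O(\delta)$ is too lossy to close the recursion with the stated powers of $k$; extracting the extra $\sqrt i/(C\sqrt m)$ saving forces one to simultaneously exploit the exact symmetry of $\bin(m,1/2)$ to annihilate the parity-wrong part of $P_i$, carry the Taylor expansion of $F_{m,x}$ in its parameter to second order with the remainder bounded via $F''$, use sub-Gaussian tail control to discard everything outside $I_{C,m}$, and track the $i$-dependence tightly enough — through the explicit Legendre coefficients and $|P_i'(0)|=O(\sqrt i)$ — to land on exactly the claimed exponents. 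This is also the step where the hypotheses $k^2\le C_0C^2m$ and $m\ge C_1(\log(1/\delta))^3$ are genuinely used.
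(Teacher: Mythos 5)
Your overall skeleton is the same as the paper's: invoke Theorem~\ref{thm:moment-matching} with $H_{m,x}(s)=F_{m,x}(s)$, target the estimate $\beta_i=O(\delta\sqrt{i}/(C\sqrt m))$ by Taylor-expanding $F_{m,x}$ in its parameter and using $F'_{m,x}(0)=4F_{m,x}(0)(x-m/2)$ together with the explicit monomial coefficients of $P_i$ and sub-Gaussian moments of $\mathrm{Bin}(m,1/2)$, and then close the self-referential inequality using Fact~\ref{fact:legendre-poly}(vii) (for the $L_1$ bound) and Fact~\ref{fact:legendre-poly}(iii) (for the $L_\infty$ bound). Your observation that the first-order term vanishes identically for even $i$ by the $x\leftrightarrow m-x$ symmetry is correct and is a small refinement over the paper's Lemma~\ref{lem:bin-first-order}, which instead bounds the even-$i$ case by $O(\delta i^{3/2}/(C^2m))$; both suffice since $i\le k\le\sqrt{C_0}\,C\sqrt m$.

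There are, however, two specific soft spots. First, the bulk/tail split you propose, with Hoeffding on the masses and the pointwise bound $|P_i|\le(2/C)^i$ from Fact~\ref{fact:legendre-poly}(vi), does not actually give a negligible tail: Hoeffding yields tail mass $\approx\delta^{2c_0^2}$ where $c_0$ is the \emph{small} constant in $C=c_0\sqrt{\log(1/\delta)/m}$, so $\delta^{2c_0^2}$ is close to $1$; meanwhile $(2/C)^k=(2/c_0)^k(m/\log(1/\delta))^{k/2}$ grows without bound in $m$ within the allowed range $m\ge C_1(\log(1/\delta))^3$. So the product $\delta^{\Omega(1)}(2/C)^k$ is not small as claimed. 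This split is in fact unnecessary for $\beta'_i$: the monomial expansion you describe computes the \emph{full} expectation $\E_{\mathrm{Bin}(m,1/2)}[(X-m/2)P_i(\cdot)]$ directly, and the sub-Gaussian moment bounds absorb the tail automatically, which is exactly how the paper handles it. For the second-order term the paper does use a bulk/tail split, but with Cauchy--Schwarz against sub-Gaussian moments rather than Hoeffding plus a pointwise bound. Second, your asserted bound $\beta''_i=O(\delta^2\,\mathrm{poly}(i)/(C\sqrt m))$ for the Taylor remainder is not justified (and is stronger than needed); the paper proves only $\beta''_i=O(\delta^2)$, which together with $C\sqrt m=\Theta(\sqrt{\log(1/\delta)})$ and $\delta$ small is already dominated by the first-order contribution $O(\delta\sqrt{i}/(C\sqrt m))$. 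If you replace the flawed tail argument with the paper's Cauchy--Schwarz/sub-Gaussian handling of $\beta''_i$ and settle for $\beta''_i=O(\delta^2)$, the rest of your argument goes through and matches the paper.
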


Before we prove Lemma~\ref{lem:bin-norm-bound}, we first use it to prove our main Proposition~\ref{prop:bin-moment-matching}.
The following claim gives the upper bound of the ratio between the mass of $\bin(m,1/2)$ and $\bin(m,1/2+\delta/\sqrt{m})$.
\begin{claim}\label{clm:bin-ratio}
Let $m\in\Z_+$ and $x\in[m]\cup\{0\}$. For any $\delta>0$, we have that
\begin{align*}
\frac{F_{m,x}(\delta/\sqrt{m})}{F_{m,x}(0)}\le\exp\left(\frac{4\delta x}{\sqrt{m}}-2\delta\sqrt{m}\right).
\end{align*}
\end{claim}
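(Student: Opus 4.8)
The plan is to reduce the claim to the elementary inequality $\ln(1+u)\le u$ after writing the ratio $F_{m,x}(\delta/\sqrt m)/F_{m,x}(0)$ in closed form. First I would note that $F_{m,x}(\cdot)$ carries the common factor $\binom{m}{x}$, and that $F_{m,x}(0)=\binom{m}{x}2^{-m}$, so the binomial coefficient cancels and
\[
\frac{F_{m,x}(\delta/\sqrt m)}{F_{m,x}(0)}
= 2^{m}\left(\tfrac12+\tfrac{\delta}{\sqrt m}\right)^{x}\left(\tfrac12-\tfrac{\delta}{\sqrt m}\right)^{m-x}
= \left(1+\tfrac{2\delta}{\sqrt m}\right)^{x}\left(1-\tfrac{2\delta}{\sqrt m}\right)^{m-x}.
\]

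Next I would take logarithms. Since $\delta$ is sufficiently small, $0<2\delta/\sqrt m<1$, so both factors on the right are strictly positive and the logarithm is well defined. Applying $\ln(1+u)\le u$ (valid for all $u>-1$) to each factor — using $x\ge 0$ to multiply the nonnegative term $\ln(1+2\delta/\sqrt m)$ and $m-x\ge 0$ to multiply the nonpositive term $\ln(1-2\delta/\sqrt m)$ — gives
\[
x\ln\!\Bigl(1+\tfrac{2\delta}{\sqrt m}\Bigr)+(m-x)\ln\!\Bigl(1-\tfrac{2\delta}{\sqrt m}\Bigr)
\;\le\; \frac{2\delta x}{\sqrt m}-\frac{2\delta(m-x)}{\sqrt m}
\;=\; \frac{4\delta x}{\sqrt m}-2\delta\sqrt m .
\]
Exponentiating both sides yields exactly the stated bound.

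There is essentially no real obstacle here: the computation is routine, and the only point requiring a line of care is that we stay in the regime $2\delta/\sqrt m<1$ (guaranteed by $\delta$ small), so that $1-2\delta/\sqrt m>0$ and the logarithm of a nonpositive quantity never arises; one then just tracks signs correctly when summing the two one-sided estimates.
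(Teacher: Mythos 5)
Your proof is correct and matches the paper's argument: the paper simply applies $1+u\le e^u$ directly to each factor $\bigl(1\pm 2\delta/\sqrt m\bigr)$ rather than passing through logarithms, but the two are identical in substance. Your remark about needing $2\delta/\sqrt m<1$ is automatically satisfied since $F_{m,x}(\cdot)$ is only defined on $(-1/2,1/2)$.
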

\begin{proof}
By the fact $1+x\le e^x,\forall x\in\R$, we have that
\begin{align*}
\frac{F_{m,x}(\delta/\sqrt{m})}{F_{m,x}(0)}=\left(1+\frac{2\delta}{\sqrt{m}}\right)^x\left(1-\frac{2\delta}{\sqrt{m}}\right)^{m-x}\le\exp\left(\frac{4\delta x}{\sqrt{m}}-2\delta\sqrt{m}\right).
\end{align*}
\end{proof}

We now bound from above the desired $\chi^2$-divergence:
\begin{lemma}\label{lem:bin-chi}
We have that
\begin{align*}\chi^2(A,\mathrm{Bin}(m,1/2))\le O\left(\delta^2+\frac{\delta k^2\exp(4\delta C\sqrt{m})}{C\sqrt{m}}+\left(\frac{\delta k^2}{C\sqrt{m}}\right)\cdot\max_{x\in\Z\cap I_{C,m}}\frac{\int_{x}^{x+1}|p(t)|dt}{\bin(m,1/2)(x)}\right).
\end{align*}
\end{lemma}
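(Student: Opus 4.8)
The plan is to reduce everything to the three-term expansion coming from the two-part definition of $A$, and then to invoke the $L_1$-bound on $p$ (Lemma~\ref{lem:bin-norm-bound}) plus the elementary ratio bound (Claim~\ref{clm:bin-ratio}). First I would use the standard identity $\chi^2(A,\bin(m,1/2)) = \sum_{x=0}^m (A(x)-\bin(m,1/2)(x))^2/\bin(m,1/2)(x)$, which is legitimate since the $i=0$ case of~\eqref{eq:bin-moment-matching} gives $\sum_{x\in\Z\cap I_{C,m}}\int_x^{x+1}p(t)\,dt = 0$, so $A$ is a genuine probability distribution. Writing $G(x)=\bin(m,1/2+\delta/\sqrt m)(x)$, $B(x)=\bin(m,1/2)(x)$, and $q(x)=\int_x^{x+1}p(t)\,dt$ on $\Z\cap I_{C,m}$ (and $q\equiv 0$ elsewhere), we have $A-B=(G-B)+q$, so that
\[
\chi^2(A,B) = \chi^2(G,B) + 2\sum_{x\in\Z\cap I_{C,m}} q(x)\left(\frac{G(x)}{B(x)}-1\right) + \sum_{x\in\Z\cap I_{C,m}} \frac{q(x)^2}{B(x)} \;,
\]
and it suffices to bound the three terms separately.

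For the first term, $\chi^2(G,B)$ is the chi-squared divergence between two binomials, which works out to $\left(2(1/2+\delta/\sqrt m)^2 + 2(1/2-\delta/\sqrt m)^2\right)^m - 1 = (1+4\delta^2/m)^m - 1 \le e^{4\delta^2}-1 = O(\delta^2)$ for $\delta$ small. For the cross term, Claim~\ref{clm:bin-ratio} gives $G(x)/B(x) \le \exp(4\delta x/\sqrt m - 2\delta\sqrt m) \le \exp(4\delta C\sqrt m)$ for every $x\le (1/2+C)m$, and since $G(x)/B(x)\ge 0$ this yields $|G(x)/B(x)-1|\le\exp(4\delta C\sqrt m)$ on $\Z\cap I_{C,m}$; combining with $|q(x)|\le\int_x^{x+1}|p(t)|\,dt$ and $\sum_{x\in\Z\cap I_{C,m}}\int_x^{x+1}|p(t)|\,dt = \int_{(1/2-C)m}^{(1/2+C)m}|p(t)|\,dt = O(\delta k^2/(C\sqrt m))$ from Lemma~\ref{lem:bin-norm-bound}, the cross term is $O(\delta k^2\exp(4\delta C\sqrt m)/(C\sqrt m))$. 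For the quadratic term, factor $q(x)^2/B(x) = |q(x)|\cdot|q(x)|/B(x)$, pull $\max_{x}|q(x)|/B(x) \le \max_{x\in\Z\cap I_{C,m}} \int_x^{x+1}|p(t)|\,dt/\bin(m,1/2)(x)$ out of the sum, and bound the remaining $\sum_{x\in\Z\cap I_{C,m}}|q(x)| \le \int_{(1/2-C)m}^{(1/2+C)m}|p(t)|\,dt = O(\delta k^2/(C\sqrt m))$ again by Lemma~\ref{lem:bin-norm-bound}. Adding the three estimates gives precisely the claimed bound.

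Since the only substantive input, the $L_1$ control of $p$, is already assumed (Lemma~\ref{lem:bin-norm-bound}), the remaining work is elementary bookkeeping; the one point requiring care is that the cross term is \emph{signed}, so one cannot center $G(x)/B(x)$ around $1$ and must instead use the one-sided exponential bound — which is exactly why the factor $\exp(4\delta C\sqrt m)$ appears on that term and on no other. (Recall $C=\Theta(\sqrt{\log(1/\delta)/m})$, so $\exp(4\delta C\sqrt m)=\exp(\Theta(\delta\sqrt{\log(1/\delta)}))=1+o(1)$, making this factor harmless downstream.) The hardest part of the whole argument is not here but in Lemma~\ref{lem:bin-norm-bound} itself; given it, this lemma is essentially a routine decomposition.
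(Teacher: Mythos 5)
Your proof is correct and takes essentially the same route as the paper: both expand $\chi^2(A,B)$ into the same three terms (the chi-squared between the two binomials, a cross term, and a quadratic term in $q$) and bound them using Claim~\ref{clm:bin-ratio} and the $L_1$ bound from Lemma~\ref{lem:bin-norm-bound}. The only cosmetic difference is that the paper expands $\sum_x A(x)^2/B(x) - 1$ directly while you expand $\sum_x (A(x)-B(x))^2/B(x)$; these agree since $\sum_x q(x)=0$, which you correctly observe via the $i=0$ case of the moment-matching equation.
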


\begin{proof}
Recalling $F_{m,x}(\delta)=\binom{m}{x}(1/2+\delta)^x(1/2-\delta)^{m-x}=\bin(m,1/2+\delta)(x),-1/2<\delta<1/2$, we have the following:
\begin{align*}
&\quad1+\chi^2(A,\mathrm{Bin}(m,1/2))=\sum_{x=0}^{m}\frac{A(x)^2}{F_{m,x}(0)}=\sum_{x=0}^{m}\frac{\left(F_{m,x}(\delta/\sqrt{m})+\mathbb{I}[x\in I_{C,m}]\int_{x}^{x+1}p(t)dt\right)^2}{F_{m,x}(0)}\\&=\sum_{x=0}^{m}\frac{F^2_{m,x}(\delta/\sqrt{m})}{F_{m,x}(0)}+2\sum_{x\in\Z\cap I_{C,m}}\frac{F_{m,x}(\delta/\sqrt{m})\int_{x}^{x+1}p(t)dt}{F_{m,x}(0)}+\sum_{x\in\Z\cap I_{C,m}}\frac{\left(\int_{x}^{x+1}p(t)dt\right)^2}{F_{m,x}(0)}\;.
\end{align*}
For the first term, we have that
\begin{align*}
\sum_{x=0}^{m}\frac{F^2_{m,x}(\delta/\sqrt{m})}{F_{m,x}(0)}&=2^{-m}\sum_{x=0}^{m}\binom{m}{x}\left(1+\frac{2\delta}{\sqrt{m}}\right)^{2x}\left(1-\frac{2\delta}{\sqrt{m}}\right)^{2(m-x)}\\&=2^{-m}\left(\left(1+\frac{2\delta}{\sqrt{m}}\right)^2+\left(1-\frac{2\delta}{\sqrt{m}}\right)^2\right)^m=\left(1+\frac{4\delta^2}{m}\right)^m\le1+O(\delta^2)\;.
\end{align*}
For the second term, by Claim~\ref{clm:bin-ratio}, we have that
\begin{align*}
\sum_{x\in\Z\cap I_{C,m}}\frac{F_{m,x}(\delta/\sqrt{m})\int_{x}^{x+1}p(t)dt}{F_{m,x}(0)}&\le\sum_{x\in\Z\cap I_{C,m}}\exp\left(\frac{4\delta x}{\sqrt{m}}-2\delta\sqrt{m}\right)\left|\int_{x}^{x+1}p(t)dt\right|\\&\le\exp\left(4\delta C\sqrt{m}\right)\sum_{x\in\Z\cap I_{C,m}}\left|\int_{x}^{x+1}p(t)dt\right|\\&\le\exp(4\delta C\sqrt{m})\int_{(1/2-C)m}^{(1/2+C)m}|p(t)|dt\\&\le O\left(\frac{\delta k^2\exp(4\delta C\sqrt{m})}{C\sqrt{m}}\right),
\end{align*}
where the last inequality follows from Lemma~\ref{lem:bin-norm-bound}.
Finally for the third term, we have that
\begin{align*}
\sum_{x\in\Z\cap I_{C,m}}\frac{\left(\int_{x}^{x+1}p(t)dt\right)^2}{F_{m,x}(0)}&\le\int_{(1/2-C)m}^{(1/2+C)m}|p(x)|dx\cdot\max_{x\in\Z\cap I_{C,m}}\frac{\int_{x}^{x+1}|p(t)|dt}{F_{m,x}(0)}\\&\le O\left(\frac{\delta k^2}{C\sqrt{m}}\right)\cdot\max_{x\in\Z\cap I_{C,m}}\frac{\int_{x}^{x+1}|p(t)|dt}{F_{m,x}(0)},
\end{align*}
where the last inequality follows from Lemma~\ref{lem:bin-norm-bound}.
Combining the above results together completes the proof.
\end{proof}

We are now ready to prove Proposition~\ref{prop:bin-moment-matching}.
We need to pick $C$ appropriately and check the bounds on $k$ needed for $A(x)$ to satisfy the necessary properties.
\begin{proof}[Proof of Proposition~\ref{prop:bin-moment-matching}]
Let $C=\Theta(\sqrt{\log(1/\delta)/m})$ with the hidden constant sufficiently small.
If $k^2\ge C\sqrt{m}$, we pick $A=\mathrm{Bin}(m,1/2)$ and obtain $\dtv(A,\mathrm{Bin}(m,1/2+\delta/\sqrt{m}))\le O(\delta)\le O\left(\frac{k^2\delta}{\sqrt{\log(1/\delta)}}\right)$.
Thus, we assume that $k^2\le C\sqrt{m}$.
In this way, to apply Lemma~\ref{lem:bin-norm-bound}, we need $k^2\le C_0C^2m$ for some universal constant $C_0$ sufficiently small, which will be satisfied as long as $\delta\le\exp(-1/C_0^2)$.

We first show that $A(x)$ is indeed a distribution over $[m]\cup\{0\}$.
By definition, $A(x)$ is nonnegative outside the interval $I_{C,m}$.
For $x\in\Z\cap I_{C,m}$, we apply Lemma~\ref{lem:bin-norm-bound} to obtain
\begin{align*}
A(x)&=F_{m,x}(\delta/\sqrt{m})+\int_{x}^{x+1}p(t)dt\ge F_{m,x}(\delta/\sqrt{m})-|p(t^*)|\\&=2^{-m}\binom{m}{x}\left(1+\frac{2\delta}{\sqrt{m}}\right)^x\left(1-\frac{2\delta}{\sqrt{m}}\right)^{m-x}-|p(t^*)|\\&\ge2^{-m}\binom{m}{(1/2-C)m}\left(1+\frac{2\delta}{\sqrt{m}}\right)^{\left(\frac{1}{2}-C\right)m}\left(1-\frac{2\delta}{\sqrt{m}}\right)^{\left(\frac{1}{2}+C\right)m}-O\left(\frac{\delta k^{5/2}}{C^2m^{3/2}}\right)\\&=2^{-m}\binom{m}{(1/2-C)m}\left(1-\frac{4\delta^2}{m}\right)^{\left(\frac{1}{2}-C\right)m}\left(1-\frac{2\delta}{\sqrt{m}}\right)^{2Cm}-O\left(\frac{\delta k^{5/2}}{C^2m^{3/2}}\right).
\end{align*}
Let $H(x)=-x\log x-(1-x)\log(1-x)$ denote the binary entropy function.
Now applying Fact~\ref{fact:bin-bound-2} and the fact $e^{-2x}\le1-x,\forall x\in\left[0,\frac{\ln2}{2}\right]$ yields
\begin{align*}
A(x)&\ge\frac{2^{m(H(1/2+C)-1)}}{\sqrt{(2-8C^2)m}}\cdot\exp(-8\delta^2(1/2-C))\cdot\exp(-8\delta C\sqrt{m})-O\left(\frac{\delta k^{5/2}}{C^2m^{3/2}}\right)\\&
\ge\frac{2^{m(H(1/2+C)-1)}}{\sqrt{(2-8C^2)m}}\cdot\exp\left(-4\delta^2-8\delta C\sqrt{m}\right)-O\left(\frac{\delta k^{5/2}}{C^2m^{3/2}}\right)\\&\ge\frac{\exp\left(-O(C^2m)-8\delta C\sqrt{m}\right)}{\sqrt{m}}-O\left(\frac{\delta k^{5/2}}{C^2m^{3/2}}\right)\\&\ge\frac{\exp(-O((C\sqrt{m}+\delta)^2))}{\sqrt{m}}-O\left(\frac{\delta k^{5/2}}{C^2m^{3/2}}\right),
\end{align*}
where the third inequality follows from the Taylor expansion of $H(1/2+C)-H(1/2)$ up to second order terms.
Note that $k^2\le C\sqrt{m}$, where $C=\Theta(\sqrt{\log(1/\delta)/m})$ for some sufficiently small hidden constant in $\Theta$, we have that $\frac{\delta k^{5/2}}{C^2m}\le O(\delta(\log(1/\delta))^{-3/8})$ and $\exp(-O((C\sqrt{m}+\delta)^2))\ge\exp\big(-\big(\sqrt{\log(1/\delta)}/2+\delta\big)^2\big)\ge\delta$.
Therefore, we have that
$$A(x)\ge\frac{\exp(-O((C\sqrt{m}+\delta)^2))}{\sqrt{m}}-O\left(\frac{\delta k^{5/2}}{C^2m^{3/2}}\right)\ge0,\forall x\in\Z\cap I_{C,m}.$$

In addition, by equation~\eqref{eq:bin-moment-matching}, we know that
\begin{align*}
\sum_{x=0}^{m}A(x)&=\sum_{x=0}^{m}\left(F_{m,x}(\delta/\sqrt{m})+\mathbb{I}[x\in I_{C,m}]\int_{x}^{x+1}p(t)dt\right)\\&=\sum_{x=0}^{m}F_{m,x}(\delta/\sqrt{m})+\int_{(1/2-C)m}^{(1/2+C)m}p(t)dt=1,
\end{align*}
which implies that the distribution $A$ is well-defined.
Furthermore, by Equation~\eqref{eq:bin-moment-matching}, we can show that $A$ matches the first $k$ moments of $\bin(m,1/2)$ as follows:
\begin{align*}
\E_{X\sim A}[X^i]&=\sum_{x=0}^{m}{A(x)x^i}=\sum_{x=0}^{m}\left(F_{m,x}(\delta/\sqrt{m})+\mathbb{I}[x\in I_{C,m}]\int_{x}^{x+1}p(t)dt\right)x^i\\&=\sum_{x=0}^{m}F_{m,x}(\delta/\sqrt{m})x^i+\sum_{x\in\Z\cap I_{C,m}}x^i\int_{x}^{x+1}p(t)dt\\&=\sum_{x=0}^{m}F_{m,x}(0)x^i=\E_{X\sim\bin(m,1/2)}[X^i].
\end{align*}

From previous calculation, we have that $A(x)\ge F_{m,x}(\delta/\sqrt{m})-|p(t^*)|\ge0,\forall x\in\Z\cap I_{C,m}$, which implies that for every $x\in\Z\cap I_{C,m}$,
$$|p(t^*)|\le F_{m,x}(\delta/\sqrt{m})\le\exp\left(\frac{4\delta x}{\sqrt{m}}-2\delta\sqrt{m}\right)F_{m,x}(0)\le\exp\left(4\delta C\sqrt{m}\right)F_{m,x}(0),$$
where the second inequality follows from Claim~\ref{clm:bin-ratio}.
Therefore, by Lemma~\ref{lem:bin-chi}, we have that
\begin{align*}
\chi^2(A,\mathrm{Bin}(m,1/2))&\le O\left(\delta^2+\frac{\delta k^2\exp(4\delta C\sqrt{m})}{C\sqrt{m}}+\left(\frac{\delta k^2}{C\sqrt{m}}\right)\cdot\max_{x\in\Z\cap I_{C,m}}\frac{\int_{x}^{x+1}|p(t)|dt}{\bin(m,1/2)(x)}\right)\\&\le O\left(\delta^2+\delta\left(\exp(4\delta C\sqrt{m})+\frac{|p(t^*)|}{F_{m,x}(0)}\right)\right)\le O\left(\delta^2+2\delta\exp\left(4\delta C\sqrt{m}\right)\right)\\&\le O\left(\delta^2+\delta\left(1+O(\delta\sqrt{\log(1/\delta)})\right)\right)=O(\delta),
\end{align*}
where we apply the fact $e^x\le1+2x,\forall x\in\left[0,\ln2\right]$.

To bound the total variation distance $\dtv(A,\mathrm{Bin}(1/2+\delta/\sqrt{m}))$, we apply Lemma~\ref{lem:bin-norm-bound} to obtain
\begin{align*}
\dtv(A,\mathrm{Bin}(1/2+\delta/\sqrt{m}))&=\sum_{x\in I_{C,m}}\left|\int_{x}^{x+1}p(t)dt\right|\le\int_{(1/2-C)m}^{(1/2+C)m}|p(t)|dt\\&\le O\left(\frac{\delta k^2}{C\sqrt{m}}\right)=O\left(\frac{\delta k^2}{\sqrt{\log(1/\delta)}}\right).
\end{align*}
This completes the proof of Proposition~\ref{prop:bin-moment-matching}.
\end{proof}

\paragraph{Proof of Lemma~\ref{lem:bin-norm-bound}}
By Theorem~\ref{thm:moment-matching}, we have that
\begin{align*}
|a_i|\le\left(\frac{2i+1}{2Cm}\right)\left(\beta_i+O\left(\frac{i^2}{Cm}\right)\int_{(1/2-C)m}^{(1/2+C)m}|p(t)|dt\right),
\end{align*}
for all $1\le i\le k$, where $\beta_i=\left|\sum_{x=0}^{m}(F_{m,x}(0)-F_{m,x}(\delta/\sqrt{m}))P_i\left(\frac{x-m/2}{Cm}\right)\right|$.
To get an upper bound for the $L_1$ and $L_\infty$ norms of the polynomial $p$ over $I_{C,m}$, we only need to upper bound the quantity $\beta_i$.
\begin{lemma}\label{lem:bin-beta}
If $k^2\le C_0C^2m$ for some universal constant $C_0>0$ sufficiently small, then $\beta_i\le O\left(\frac{\delta}{C}\sqrt{\frac{i}{m}}\right),\forall1\le i\le k$.
\end{lemma}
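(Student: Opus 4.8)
The plan is to view $\beta_i$ as essentially a "continuous answer plus discretization error" and to bound both pieces using a Taylor expansion in a cleverly chosen parameter, exactly mirroring the strategy used in \cite{DKS17-sq} for the Gaussian case. Recall $F_{m,x}(\delta)=\binom{m}{x}(1/2+\delta)^x(1/2-\delta)^{m-x}$, so that $F_{m,x}(0)-F_{m,x}(\delta/\sqrt m)=-\int_0^{\delta/\sqrt m}F'_{m,x}(s)\,ds$. The idea is that $F_{m,x}(s)$, as a function of the perturbation parameter $s$, has derivatives at $s=0$ that, when summed against $P_i\big(\tfrac{x-m/2}{Cm}\big)$, behave like moments of the Binomial against a degree-$i$ Legendre polynomial; and those nearly vanish because $P_i$ is orthogonal to low-degree polynomials while the Binomial is (up to normalization and scaling) close to a uniform-ish measure on the interval $I_{C,m}$.

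\smallskip

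First I would Taylor-expand $F_{m,x}(\delta/\sqrt m)$ around $\delta = 0$ to second order:
\begin{align*}
F_{m,x}(\delta/\sqrt m) = F_{m,x}(0) + \frac{\delta}{\sqrt m}F'_{m,x}(0) + \frac{1}{2}\Big(\frac{\delta}{\sqrt m}\Big)^2 F''_{m,x}(\xi_x)
\end{align*}
for some $\xi_x \in [0,\delta/\sqrt m]$. Using Fact~\ref{fact:F-derivative}, $F'_{m,x}(0) = 4\,F_{m,x}(0)(x-m/2)$, which (after rescaling $x\mapsto \tfrac{x-m/2}{Cm}$) is $F_{m,x}(0)$ times a degree-$1$ polynomial in $\tfrac{x-m/2}{Cm}$. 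Hence the first-order contribution to $\beta_i$ is
$$\frac{\delta}{\sqrt m}\Big|\sum_{x=0}^m F_{m,x}(0)\,(\text{linear in }\tfrac{x-m/2}{Cm})\,P_i\big(\tfrac{x-m/2}{Cm}\big)\Big|.$$
For $i\ge 2$ this inner sum would be exactly zero if $F_{m,x}(0)$ were replaced by the uniform measure $\tfrac{1}{2Cm}$ on $I_{C,m}$ extended appropriately, by orthogonality (Fact~\ref{fact:legendre-poly}(ii)); the actual error comes from (a) the tails of the Binomial outside $I_{C,m}$, which are exponentially small in $C^2 m \asymp \log(1/\delta)$ (using Fact~\ref{fact:bin-bound-2} or a Hoeffding bound for $\bin(m,1/2)$), and (b) the fact that the Binomial pmf is not exactly flat on $I_{C,m}$, which I would control by writing $F_{m,x}(0) = 2^{-m}\binom m x$ and comparing $\log\binom m x$ to its quadratic Taylor approximation about $x=m/2$, so that $F_{m,x}(0) \approx (\text{const}/\sqrt m)(1 + O(C^2 m)\cdot(\text{small}))$ on the interval. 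For $i=1$ the linear-in-$x$ factor times $P_1$ is genuinely quadratic and does not vanish, but a direct computation gives $\sum_x F_{m,x}(0)\,\tfrac{(x-m/2)^2}{Cm} = \Theta(\tfrac{m}{Cm}) = \Theta(1/C)$, yielding $\beta_1 = O(\delta/(C\sqrt m))\cdot\Theta(1/C) \cdot\sqrt m \cdot\frac{1}{\sqrt m}$... more carefully, the $\tfrac{\delta}{\sqrt m}\cdot\Theta(1/C)$ bound is already $O(\tfrac{\delta}{C}\sqrt{1/m})$, consistent with the claimed $O(\tfrac{\delta}{C}\sqrt{i/m})$ at $i=1$.

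\smallskip

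For the general bound I would then use Fact~\ref{fact:legendre-poly}(vii), $\int_{-1}^1 |P_i(t)|\,dt = O(1/\sqrt i)$, together with the near-uniformity of $F_{m,x}(0)$ on $I_{C,m}$, to turn the sum $\sum_x F_{m,x}(0)\,(x-m/2)\,P_i(\cdot)$ into $\approx \frac{1}{2Cm}\cdot Cm\cdot Cm\cdot\int_{-1}^1 t\,P_i(t)\,dt$ up to the tail/non-flatness errors; here $\int_{-1}^1 t P_i(t)\,dt = 0$ for $i \ge 2$, so only the error terms survive, and the dominant one is $O(1/\sqrt i)$ from the $L_1$-bound on $P_i$ scaled by the size of the Binomial fluctuation over the interval. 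Collecting constants gives a bound of the form $\frac{\delta}{\sqrt m}\cdot O\big(\frac{1}{C}\cdot\frac{1}{\sqrt i}\big)$ plus exponentially small terms, which is $\le O\big(\frac{\delta}{C}\sqrt{i/m}\big)$ once one also checks the second-order remainder term: there $F''_{m,x}(\xi_x)$ is, by Fact~\ref{fact:F-derivative}, $F_{m,x}(\xi_x)$ times a degree-$2$ polynomial in $(x-(\xi_x+1/2)m)$, bounded in absolute value by $O(m^2)\cdot F_{m,x}(\xi_x)$ on the relevant range, so its contribution is $O\big(\frac{\delta^2}{m}\cdot m^2\big) = O(\delta^2 m)$ times the $L_1$ mass of $P_i$ against a shifted Binomial, which after rescaling is $O\big(\frac{\delta^2}{C}\sqrt{i/m}\cdot(\text{something} \le 1)\big)$ — dominated by the first-order term since $\delta$ is small and $C^2 m = \Theta(\log(1/\delta))$. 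The hypothesis $k^2 \le C_0 C^2 m$ is exactly what guarantees that, for every $i \le k$, the scaled Legendre polynomial $P_i\big(\tfrac{x-m/2}{Cm}\big)$ is evaluated only at arguments of magnitude $\le 1$ on the bulk of the Binomial mass and that the quadratic approximation of $\log\binom m x$ stays valid, so I expect the main obstacle to be the bookkeeping in step two: carefully showing that replacing the Binomial pmf by a flat measure on $I_{C,m}$ incurs error small enough (polynomially in $1/\sqrt m$, not just $o(1)$) that the claimed $\sqrt{i/m}$ rate — not merely $O(\delta/C)$ — comes out, which requires the sub-Gaussian tail bound for $\bin(m,1/2)$ and the second-order Taylor control of $H(x/m)$ to be combined quantitatively rather than qualitatively.
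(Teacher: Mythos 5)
Your high-level decomposition --- Taylor expand $F_{m,x}(\delta/\sqrt m)$ about $\delta=0$ to second order and bound the resulting first- and second-order contributions separately --- is the same as the paper's (this is Equation~\eqref{eq:F-taylor} splitting $\beta_i\le\beta'_i+\beta''_i$, with Lemmas~\ref{lem:bin-first-order} and~\ref{lem:bin-second-order}). But the way you propose to bound the first-order term is genuinely different from the paper's, and there is a real gap in it.

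You want to replace $F_{m,x}(0)=2^{-m}\binom m x$ on $I_{C,m}$ by the flat density $\frac{1}{2Cm}$ and invoke Legendre orthogonality to kill $\sum_x(\text{linear})\cdot P_i$. The problem is quantitative: with $C=\Theta(\sqrt{\log(1/\delta)/m})$, the Binomial pmf varies across $I_{C,m}$ by a factor $\exp(\Theta(C^2m))=\mathrm{poly}(1/\delta)$, and at its center it is $\Theta(1/\sqrt m)$, which is a $\Theta(\sqrt{\log(1/\delta)})$ factor \emph{larger} than the flat value $\frac{1}{2Cm}$. So on essentially all of the Binomial's mass $F_{m,x}(0)-\tfrac{1}{2Cm}$ is of the same order as $F_{m,x}(0)$ itself; the ``error'' from replacing the measure is not a correction, it \emph{is} the whole sum. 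This means the flat-measure reduction does not carry any weight, and you are effectively back to bounding the full sum $\sum_x F_{m,x}(0)(x-m/2)P_i(\cdot)$ directly. (Your own parenthetical concession --- that the non-flatness error needs to be treated ``quantitatively rather than qualitatively'' --- is exactly where the argument breaks; it is not a bookkeeping issue but a structural one.) The paper avoids this entirely: in Lemma~\ref{lem:bin-first-order} it uses no orthogonality and no flat approximation. It expands $P_i$ into monomials via Fact~\ref{fact:legendre-poly}(v), applies the triangle inequality to get a nonnegative sum, and controls each resulting Binomial moment $\E[(X-m/2)^{2j}]$ via sub-Gaussian tail bounds (Fact~\ref{fact:Gaussian-decay}), collapsing the sum over $j$ geometrically because $i/(C\sqrt m)$ is small under the hypothesis $k^2\le C_0C^2m$. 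No cancellation is needed.

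A secondary symptom of the gap: you claim a leading bound of order $\frac{\delta}{\sqrt m}\cdot\frac{1}{C\sqrt i}$, which is \emph{smaller} than the claimed $O\big(\tfrac{\delta}{C}\sqrt{i/m}\big)$ by a factor of $i$. The $O(1/\sqrt i)$ comes from the $L_1$-norm of $P_i$ (Fact~\ref{fact:legendre-poly}(vii)), but that bound is not what governs $\beta'_i$; it is used later in Lemma~\ref{lem:bin-norm-bound} when summing $|a_i|\int|P_i|$. The correct $i$-dependence in $\beta'_i$ is $\sqrt i$, arising from the top coefficient $(i+2j-1)^{2j-1}/(2j-1)!$ at $j=1$ in the monomial expansion of $P_i$, times the $O(1/\sqrt i)$ from Fact~\ref{fact:bin-bound-1} applied to the central binomial coefficient. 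If a genuine orthogonality argument gave $1/\sqrt i$, you would get final $L_1$ bounds in Lemma~\ref{lem:bin-norm-bound} polynomially better than the paper's $O(\delta k^2/(C\sqrt m))$, which does not happen. For the second-order term, the paper shows $\beta''_i=O(\delta^2)$ by splitting the sum over $I_{C,m}$ and its complement and using $\var[X]=\Theta(m)$ plus sub-Gaussian moment bounds; your sketch arrives at a compatible final estimate but the intermediate $O(\delta^2 m)$ figure and the ``$L_1$ mass against a shifted Binomial'' correction is not a reliable path to the clean $O(\delta^2)$ bound.
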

We assume $k^2\le C_0C^2m$ for some universal constant $C_0>0$ sufficiently small.
First, we apply Taylor's theorem to expand $F_{m,x}(\delta/\sqrt{m})-F_{m,x}(0)$ up to second order terms:
\begin{align}\label{eq:F-taylor}
\beta_i&=\left|\sum_{x=0}^{m}\left(F_{m,x}(0)-F_{m,x}(\delta/\sqrt{m})\right)P_i\left(\frac{x-m/2}{Cm}\right)\right|\notag\\&=\left|\sum_{x=0}^{m}\left(\frac{\delta F'_{m,x}(0)}{\sqrt{m}}+\frac{F''_{m,x}(\delta_x)\delta^2}{2m}\right)P_i\left(\frac{x-m/2}{Cm}\right)\right|\notag\\&\le\underbrace{\left|\sum_{x=0}^{m}\left(\frac{\delta F'_{m,x}(0)}{\sqrt{m}}\right)P_i\left(\frac{x-m/2}{Cm}\right)\right|}_{\beta'_i}+\underbrace{\left|\sum_{x=0}^{m}\left(\frac{F''_{m,x}(\delta_x)\delta^2}{2m}\right)P_i\left(\frac{x-m/2}{Cm}\right)\right|}_{\beta''_i}\;,
\end{align}
where for any $x\in[m]\cup\{0\}$, $\delta_x=\wt{\delta_x}/\sqrt{m}$ for some $\wt{\delta_x}\in[0,\delta]$.

Hence, in order to bound $\beta_i$, it suffices to bound the terms $\beta'_i:=\frac{\delta}{\sqrt{m}}\left|\sum_{x=0}^{m}F'_{m,x}(0)P_i\left(\frac{x-m/2}{Cm}\right)\right|$ and $\beta_i'':=\frac{\delta^2}{2m}\left|\sum_{x=0}^{m}F''_{m,x}(\delta_x)P_i\left(\frac{x-m/2}{Cm}\right)\right|$.
This is done in the following lemmas.

\begin{lemma}\label{lem:bin-first-order}
We have that $\beta'_i\le O\left(\max\left(\frac{\delta}{C}\sqrt{\frac{i}{m}},\frac{\delta i^{3/2}}{C^2m}\right)\right),\forall 1\le i\le k$.
\end{lemma}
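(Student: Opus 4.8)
The plan is to rewrite $\beta'_i$ as a single moment of $\bin(m,1/2)$, expand $P_i$ in monomials, and then bound the resulting series term by term. First I would use Fact~\ref{fact:F-derivative} at $\delta=0$ to get $F'_{m,x}(0)=4(x-m/2)F_{m,x}(0)=4(x-m/2)\bin(m,1/2)(x)$, so that
\[
\beta'_i=\frac{4\delta}{\sqrt m}\,\Bigl|\,\E_{X\sim\bin(m,1/2)}\Bigl[(X-m/2)\,P_i\Bigl(\tfrac{X-m/2}{Cm}\Bigr)\Bigr]\,\Bigr|.
\]
Since $\bin(m,1/2)$ is symmetric about $m/2$ and $u\mapsto u\,P_i(u)$ has parity $(-1)^{i+1}$ by Fact~\ref{fact:legendre-poly}(iv), the expectation vanishes when $i$ is even; the claimed bound is then trivial, so I may assume $i$ odd. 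For odd $i$, Fact~\ref{fact:legendre-poly}(v) shows only odd powers of $u$ occur, $P_i(u)=\sum_{r=1}^{(i+1)/2}p_{i,2r-1}u^{2r-1}$ with $p_{i,2r-1}=2^{-i}(-1)^{(i-2r+1)/2}\binom{i}{(i-2r+1)/2}\binom{i+2r-1}{2r-1}$, and taking the expectation term by term yields the exact identity
\[
\E_{X\sim\bin(m,1/2)}\Bigl[(X-m/2)P_i\Bigl(\tfrac{X-m/2}{Cm}\Bigr)\Bigr]=\sum_{r=1}^{(i+1)/2}\frac{p_{i,2r-1}}{(Cm)^{2r-1}}\,\mu_{2r},\qquad \mu_{2r}:=\E_{X\sim\bin(m,1/2)}\bigl[(X-m/2)^{2r}\bigr].
\]
The structural point is that there is no $r=0$ term, so the sum starts at order $\mu_2/(Cm)=1/(4C)$, which is what makes $\beta'_i$ small.

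Next I would bound the three ingredients. For the central moments: $X-m/2=\tfrac12\sum_i\epsilon_i$ is a Rademacher sum, so $\Pr[|X-m/2|>t]\le 2\exp(-2t^2/m)$, and Fact~\ref{fact:Gaussian-decay} with $K=\sqrt{m/2}$ gives $\mu_{2r}\le 2\,r!\,(m/2)^r$. For the binomial coefficients: $\binom{i}{(i-2r+1)/2}\le\binom{i}{(i-1)/2}\le 2^i/\sqrt i$ by monotonicity of binomial coefficients towards the center together with Fact~\ref{fact:bin-bound-1}, and $\binom{i+2r-1}{2r-1}\le(i+2r-1)^{2r-1}/(2r-1)!\le(2i)^{2r-1}/(2r-1)!$ since $2r-1\le i$. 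Substituting, the $r$-th summand of $\beta'_i$ is at most a universal constant times $\frac{\delta}{\sqrt{im}}\,t_r$, where $t_r:=\frac{r!}{(2r-1)!}\cdot\frac{2^r i^{2r-1}}{C^{2r-1}m^{r-1}}$; in particular $t_1=2i/C$.

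Finally I would sum the $t_r$. A direct computation gives $t_{r+1}/t_r=\frac{2(r+1)i^2}{(2r+1)(2r)C^2 m}\le\frac{i^2}{C^2m}\le\frac{k^2}{C^2m}\le C_0<\tfrac12$, using the running hypothesis $k^2\le C_0C^2m$ with $C_0$ small; hence $\{t_r\}$ decays geometrically with ratio $\le C_0$. The leading term $t_1$ contributes $O\bigl(\tfrac{\delta\sqrt i}{C\sqrt m}\bigr)=O\bigl(\tfrac\delta C\sqrt{i/m}\bigr)$ to $\beta'_i$, while the tail $\sum_{r\ge 2}t_r\le 2t_2=O\bigl(i^3/(C^3m)\bigr)$ contributes $O\bigl(\delta i^{5/2}/(C^3m^{3/2})\bigr)$, which — because $i\le k\le\sqrt{C_0}\,C\sqrt m$ — is at most $O\bigl(\delta i^{3/2}/(C^2m)\bigr)$. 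Adding the two pieces gives $\beta'_i=O\bigl(\max\bigl(\tfrac\delta C\sqrt{i/m},\ \tfrac{\delta i^{3/2}}{C^2m}\bigr)\bigr)$ for all $1\le i\le k$, as claimed (in fact the first term already dominates).

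The step I expect to be the main obstacle is this term-by-term bookkeeping. Both $\mu_{2r}$ and the Legendre coefficients $p_{i,2r-1}$ grow with $r$, and estimating the ratios $\mu_{2r+2}/\mu_{2r}$ and $|p_{i,2r+1}/p_{i,2r-1}|$ in isolation is wasteful (a crude upper/lower bound pair blows up in $r$); the right move is to bound the ratio of the \emph{full} summands $t_r$, where the $r!/(2r-1)!$ and $m^{-(r-1)}$ factors precisely absorb the growth and the inequality $i^2\lesssim C^2m$ — exactly the lemma's hypothesis — forces the series to be geometric. One also has to fix the absolute constant $C_0$ small enough that the common ratio stays bounded away from $1$, and to double-check the parity/symmetry reduction so that the even-$i$ case is genuinely trivial rather than merely small.
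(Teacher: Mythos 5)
Your proposal is essentially the same argument as the paper's: for odd $i$, both of you expand $P_i$ in monomials via Fact~\ref{fact:legendre-poly}(v), convert $\beta'_i$ into a weighted sum of even central moments of $\bin(m,1/2)$, bound the moments via the sub-Gaussian tail (Fact~\ref{fact:Gaussian-decay}) and the Legendre coefficients via Fact~\ref{fact:bin-bound-1}, and sum the resulting geometric series using $i^2/(C^2m)\ll 1$. The one place you genuinely improve on the paper is the even case: you observe (correctly) that $\binom{m}{x}(x-m/2)P_i\!\bigl(\tfrac{x-m/2}{Cm}\bigr)$ is odd about $x=m/2$ when $i$ is even, so $\beta'_i$ vanishes exactly; the paper instead re-expands $P_i$ for even $i$ and only drops the constant ($j=i/2$) term before bounding the remaining (in fact zero) sum by $O(\delta i^{3/2}/(C^2m))$ with absolute values, which is why the $\max$ appears in the lemma statement. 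Your version is cleaner and shows the second term of the $\max$ is vacuous; the numerics in your geometric-ratio bookkeeping ($t_{r+1}/t_r\le i^2/(C^2m)$, $t_1=2i/C$, etc.) check out and reproduce the paper's $O\bigl(\frac{\delta}{C}\sqrt{i/m}\bigr)$ leading behavior.
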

\begin{proof}
If $i$ is odd, we can rewrite Fact~\ref{fact:legendre-poly} (v) in ascending order of terms, by using change of variables to obtain
\begin{align*}
\left|P_i\left(\frac{x-m/2}{Cm}\right)\right|&\le2^{-i}\sum_{j=1}^{(i+1)/2}\binom{i}{(i-1)/2+j}\binom{i+2j-1}{2j-1}\left|\frac{x-m/2}{Cm}\right|^{2j-1}\\&\le O\left(\frac{1}{\sqrt{i}}\right)\sum_{j=1}^{(i+1)/2}\frac{(i+2j-1)^{2j-1}}{(2j-1)!}\left|\frac{x-m/2}{Cm}\right|^{2j-1},
\end{align*}
where the second inequality follows from Fact~\ref{fact:bin-bound-1}. Note that $\|X-m/2\|_{\psi_2}\le O(\sqrt{m})$ for $X\sim\bin(m,1/2)$, applying Fact~\ref{fact:Gaussian-decay} and Fact~\ref{fact:F-derivative} yields
\begin{align*}
\beta'_i&=\frac{\delta}{\sqrt{m}}\left|\sum_{x=0}^{m}F'_{m,x}(0)P_i\left(\frac{x-m/2}{Cm}\right)\right|=\frac{\delta2^{2-m}}{\sqrt{m}}\sum_{x=0}^{m}\left|\binom{m}{x}(x-m/2)P_i\left(\frac{x-m/2}{Cm}\right)\right|\\&\le\frac{\delta2^{2-m}}{\sqrt{m}}\sum_{j=1}^{(i+1)/2}O\left(\frac{1}{\sqrt{i}}\right)\left(\frac{(i+2j-1)^{2j-1}}{(2j-1)!(Cm)^{2j-1}}\right)\sum_{x=0}^{m}\binom{m}{x}(x-m/2)^{2j}\\&=\frac{4\delta}{\sqrt{m}}\sum_{j=1}^{(i+1)/2}O\left(\frac{1}{\sqrt{i}}\right)\left(\frac{(i+2j-1)^{2j-1}\E_{X\sim\bin(m,1/2)}[(X-m/2)^{2j}]}{(2j-1)!(Cm)^{2j-1}}\right)\\&\le O\left(\frac{\delta}{\sqrt{mi}}\right)\sum_{j=1}^{(i+1)/2}\frac{(i+2j-1)^{2j-1}(2j!)(O(m))^j}{(2j-1)!(Cm)^{2j-1}}\\&\le O\left(\frac{\delta}{\sqrt{i}}\right)\sum_{j=1}^{\infty}\left(O\left(\frac{i}{C\sqrt{m}}\right)\right)^{2j-1}\le O\left(\frac{\delta}{C}\sqrt{\frac{i}{m}}\right).
\end{align*}

If $i$ is even, applying Fact~\ref{fact:F-derivative} and Fact~\ref{fact:legendre-poly} (v) by using change of variables yields
\begin{align*}
\beta'_i&=\frac{\delta}{\sqrt{m}}\left|\sum_{x=0}^{m}F'_{m,x}(0)P_i\left(\frac{x-m/2}{Cm}\right)\right|=\frac{\delta2^{2-m}}{\sqrt{m}}\left|\sum_{x=0}^{m}\binom{m}{x}(x-m/2)P_i\left(\frac{x-m/2}{Cm}\right)\right|\\&=\frac{\delta2^{2-m}}{\sqrt{m}}\left|\sum_{x=0}^{m}\binom{m}{x}(x-m/2)2^{-i}\sum_{j=0}^{i/2}(-1)^j\binom{i}{j}\binom{2i-2j}{i}\left(\frac{x-m/2}{Cm}\right)^{i-2j}\right|\\&=\frac{\delta2^{2-m}}{\sqrt{m}}\left|\sum_{x=0}^{m}\binom{m}{x}(x-m/2)2^{-i}\sum_{j=0}^{i/2-1}(-1)^j\binom{i}{j}\binom{2i-2j}{i}\left(\frac{x-m/2}{Cm}\right)^{i-2j}\right|\\&\le\frac{\delta2^{2-m}}{\sqrt{m}}\sum_{x=0}^{m}\binom{m}{x}|x-m/2|2^{-i}\sum_{j=1}^{i/2}\binom{i}{i/2+j}\binom{i+2j}{2j}\left|\frac{x-m/2}{Cm}\right|^{2j}\\&\le\frac{\delta2^{2-m}}{\sqrt{m}}\sum_{j=1}^{i/2}\sum_{x=0}^{m}\binom{m}{x}|x-m/2|^{2j+1}O\left(\frac{1}{\sqrt{i}}\right)\left(\frac{(i+2j)^{2j}}{(2j)!(Cm)^{2j}}\right)\\&=\frac{4\delta}{\sqrt{m}}\sum_{j=1}^{i/2}O\left(\frac{1}{\sqrt{i}}\right)\left(\frac{(i+2j)^{2j}\E_{X\sim\bin(m,1/2)}[|X-m/2|^{2j+1}]}{(2j)!(Cm)^{2j}}\right)\\&\le O\left(\frac{\delta}{\sqrt{mi}}\right)\sum_{j=1}^{i/2}\frac{(i+2j)^{2j}(2j+1)(O(m))^{j+1/2}j!}{(2j)!(Cm)^{2j}}\\&\le O\left(\frac{\delta}{\sqrt{i}}\right)\sum_{j=1}^{\infty}\left(O\left(\frac{i}{C\sqrt{m}}\right)\right)^{2j}\le O\left(\frac{\delta i^{3/2}}{C^2m}\right),
\end{align*}
where the second inequality follows from Fact~\ref{fact:bin-bound-1} and the third inequality follows from Fact~\ref{fact:Gaussian-decay} and the fact that $\|X-m/2\|_{\psi_2}\le O(\sqrt{m})$ for $X\sim\bin(m,1/2)$.
\end{proof}
For the quantity $\beta''_i$, we have the following lemma.
\begin{lemma}\label{lem:bin-second-order}
We have that $\beta''_i=O(\delta^2),\forall 1\le i\le k$.
\end{lemma}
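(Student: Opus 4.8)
The plan is to follow the same high‑level strategy as the proof of Lemma~\ref{lem:bin-first-order}: reduce the sum defining $\beta''_i$ to an expectation under $\bin(m,1/2)$ of a low‑degree polynomial in $x-m/2$ times a scaled Legendre polynomial, and control it by splitting the range of $x$ at $|x-m/2|=Cm$. First I would invoke Fact~\ref{fact:F-derivative}: since $\delta_x\in[0,\delta/\sqrt m]$ we have $(1/4-\delta_x^2)^{-2}=O(1)$ and $|x-(\delta_x+1/2)m|\le|x-m/2|+\delta\sqrt m$, so the quadratic expression in $x$ appearing in $F''_{m,x}(\delta_x)$ is $O\big((x-m/2)^2+m\big)$, whence $|F''_{m,x}(\delta_x)|\le O(1)\,F_{m,x}(\delta_x)\big((x-m/2)^2+m\big)$. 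Next, writing $F_{m,x}(\delta_x)/F_{m,x}(0)=(1+2\delta_x)^x(1-2\delta_x)^{m-x}$ and using $\ln(1+u)\le u$, $\ln(1-u)\le-u$, one gets $F_{m,x}(\delta_x)\le e^{4\delta_x(x-m/2)}F_{m,x}(0)\le e^{4\delta|x-m/2|/\sqrt m}F_{m,x}(0)$. Combining, and writing $Z=X-m/2$ with $X\sim\bin(m,1/2)$,
\[
\beta''_i\;\le\;\frac{O(\delta^2)}{m}\;\Ex_{X\sim\bin(m,1/2)}\!\Big[e^{4\delta|Z|/\sqrt m}\,\big(Z^2+m\big)\,\big|P_i\big(\tfrac{Z}{Cm}\big)\big|\Big]\;.
\]

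On the event $|Z|\le Cm$ I would use $|P_i(y)|\le1$ for $|y|\le1$ (Fact~\ref{fact:legendre-poly}(iii)) together with $e^{4\delta|Z|/\sqrt m}\le e^{4\delta C\sqrt m}=1+o(1)$, which holds since $C=\Theta(\sqrt{\log(1/\delta)/m})$ gives $C\sqrt m=\Theta(\sqrt{\log(1/\delta)})$ and $\delta$ is small. This part of the expectation is then $O\big(\Ex[Z^2+m]\big)=O(m)$ (using $\Ex[Z^2]=m/4$), contributing $O(\delta^2)$ to $\beta''_i$. On the event $|Z|>Cm$ I would instead use $|P_i(y)|\le(4|y|)^i$ for $|y|\ge1$ (Fact~\ref{fact:legendre-poly}(vi)), so $|P_i(Z/(Cm))|\le(4|Z|/(Cm))^i$ and $Z^2+m\le2Z^2$ (as $|Z|>Cm>\sqrt m$ for $m$ large). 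The weight $e^{4\delta|Z|/\sqrt m}$ is removed by Cauchy--Schwarz: $\Ex[e^{8\delta Z/\sqrt m}]+\Ex[e^{-8\delta Z/\sqrt m}]=e^{O(\delta^2)}=O(1)$ because $\|Z\|_{\psi_2}=O(\sqrt m)$ and $\delta$ is small (Fact~\ref{fact:center-Gaussian-exp}), reducing matters to the tail moment $\Ex[|Z|^{2(i+2)}\Ind[|Z|>Cm]]^{1/2}$.

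By the Hoeffding-type bound $\Pr[|Z|>t]\le 2e^{-2t^2/m}$ and integration by parts, and the fact that $(Cm)^2/m=C^2m=\Theta(\log(1/\delta))$ dominates $i+2$ — here is where we use $i\le k$ with $k^2\le C_0C^2m$, so $k=O(\sqrt{\log(1/\delta)})$, together with $\delta$ small — the tail integral concentrates at its lower endpoint $t=Cm$, giving $\Ex[|Z|^{2(i+2)}\Ind[|Z|>Cm]]=O\big((Cm)^{2(i+2)}e^{-\Omega(C^2m)}\big)$. Plugging this back, the normalizing factor $(4/(Cm))^i$ exactly kills the $(Cm)^i$ emerging from the square root, so the tail contribution to $\beta''_i$ is
\[
\frac{O(\delta^2)}{m}\Big(\tfrac{4}{Cm}\Big)^{\!i}O\!\big((Cm)^{i+2}e^{-\Omega(C^2m)}\big)=O(\delta^2)\cdot O\!\big(4^i\,C^2m\,e^{-\Omega(C^2m)}\big)=O(\delta^2)\cdot O\!\big(4^k\log(1/\delta)\,\delta^{\Omega(1)}\big)=O(\delta^2)\;,
\]
where we used $C^2m=\Theta(\log(1/\delta))$, $e^{-\Omega(C^2m)}=\delta^{\Omega(1)}$, and $4^k=(1/\delta)^{o(1)}$. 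Adding the two contributions gives $\beta''_i=O(\delta^2)$.

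The main obstacle is the tail regime $|Z|>Cm$. There the scaled Legendre polynomial $P_i(Z/(Cm))$ is no longer bounded by $1$, and over the tail it can be as large as $(2/C)^i$, which — since $C=\Theta(\sqrt{\log(1/\delta)/m})$ — is polynomially large in $m$. One cannot bound the polynomial growth and the binomial tail probability separately (the product would blow up with $m$); instead one must keep the Gaussian decay $e^{-Z^2/\Theta(m)}$ paired with the growing weight $|Z|^i$ inside a single expectation, and exploit that the degree $i\le k$ is small relative to $C^2m=\Theta(\log(1/\delta))$, so that the resulting tail integral concentrates at $|Z|=Cm$, which is exactly the scale at which the normalization $1/(Cm)^i$ cancels the growth. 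The remaining work — verifying the endpoint-domination of the incomplete-Gamma integral and checking that the subpolynomial factor $4^k\log(1/\delta)$ is absorbed by $\delta^{\Omega(1)}$ for $\delta$ small — is routine.
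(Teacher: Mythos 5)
Your proposal is correct and follows the same overall skeleton as the paper's proof: use Fact~\ref{fact:F-derivative} to expose $F''_{m,x}(\delta_x)$, absorb the $(1/4-\delta_x^2)^{-2}$ factor, reduce the quadratic expression to $O((x-m/2)^2+m)$, and then split the sum at $|x-m/2|=Cm$, using $|P_i|\le1$ (Fact~\ref{fact:legendre-poly}(iii)) on the near range and $|P_i(y)|\le(4|y|)^i$ (Fact~\ref{fact:legendre-poly}(vi)) on the far range, with $k^2\le C_0C^2m$ ensuring the far contribution does not blow up.

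The difference is in how the far-tail expectation is controlled. The paper treats the far-range sum (extended over all $x$, which is harmless by nonnegativity) as an expectation under a tilted binomial and applies the sub-Gaussian absolute-moment bound of Fact~\ref{fact:Gaussian-decay} (together with Cauchy--Schwarz to separate the $(x-(\delta_x+1/2)m)^2$ factor), getting $\E[|4Z/(Cm)|^i]\le(O(\sqrt{i}/(C\sqrt m)))^i$, which is at most $1$ by the hypothesis on $k$. You instead first bound the tilted mass by $F_{m,x}(\delta_x)\le e^{4\delta|x-m/2|/\sqrt m}F_{m,x}(0)$, pull off the exponential weight by Cauchy--Schwarz, and then bound the truncated tail moment $\E[|Z|^{2(i+2)}\Ind[|Z|>Cm]]$ by Hoeffding plus an endpoint-dominated incomplete-gamma estimate, checking that $(Cm)^i$ cancels against $(4/(Cm))^i$ and that the leftover $4^k\cdot C^2m\cdot e^{-\Omega(C^2m)}$ is $o(1)$ since $k=O(\sqrt{\log(1/\delta)})$ and $C^2m=\Theta(\log(1/\delta))$. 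This is more bookkeeping than the paper's one-line moment bound, but your explicit tilting step $F_{m,x}(\delta_x)\le e^{4\delta|x-m/2|/\sqrt m}F_{m,x}(0)$ is a cleaner way to handle the $x$-dependence of $\delta_x$ than the paper's implicit treatment of $\sum_x F_{m,x}(\delta_x)(\cdot)$ as an expectation under a single $\bin(m,1/2+\delta_x)$. Both routes give the same $O(m)$ bound for the far sum and hence $\beta_i''=O(\delta^2)$.
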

\begin{proof}
By Fact~\ref{fact:F-derivative}, we have that
\begin{align*}
\beta''_i&=\left|\sum_{x=0}^{m}F''_{m,x}(\delta_x)P_i\left(\frac{x-m/2}{Cm}\right)\right|\\&=\frac{1}{(1/4-\delta_x^2)^2}\left|\sum_{x=0}^{m}F_{m,x}(\delta_x)\left((x-(\delta_x+1/2)m)^2+2\delta_x(x-(\delta_x+1/2)m)+m(\delta_x^2-1/4)\right)P_i\left(\frac{x-m/2}{Cm}\right)\right|\\&=O\left(\left|\sum_{x=0}^{m}F_{m,x}(\delta_x)\left((x-(\delta_x+1/2)m)^2+2\delta_x(x-(\delta_x+1/2)m)+m(\delta_x^2-1/4)\right)P_i\left(\frac{x-m/2}{Cm}\right)\right|\right).
\end{align*}
We separate the above sum into $x\in\Z\cap I_{C,m}$ and $x\in[m]\cup\{0\}\setminus I_{C,m}$.
We are able to use Fact~\ref{fact:legendre-poly} (iii) to bound the sum for $x\in\Z\cap I_{C,m}$, as follows:
\begin{align*}
&\quad\left|\sum_{x\in\Z\cap I_{C,m}}F_{m,x}(\delta_x)\left((x-(\delta_x+1/2)m)^2+2\delta_x(x-(\delta_x+1/2)m)+m(\delta_x^2-1/4)\right)P_i\left(\frac{x-m/2}{Cm}\right)\right|\\&\le\sum_{x=0}^{m}F_{m,x}(\delta_x)\left|(x-(\delta_x+1/2)m)^2-\var_{X\sim\bin(m,1/2+\delta_x)}[X]+2{\delta_x}(x-(\delta_x+1/2)m)\right|\\&\le2\var_{X\sim\bin(m,1/2+\delta_x)}[X]+2{\delta_x}\E_{X\sim\bin(m,1/2+\delta_x)}\left[\left|X-\E_{X\sim\bin(m,1/2+\delta_x)}[X]\right|\right]\\&\le O\left(\var_{X\sim\bin(m,1/2+\delta_x)}[X]+\delta_x\sqrt{\var_{X\sim\bin(m,1/2+\delta_x)}[X]}\right)\\&\le O\left(m+\delta\right)\le O(m).
\end{align*}

Now we bound the sum over $x\in\overline{I_{C,m}}$, where $\overline{I_{C,m}}=[m]\cup\{0\}\setminus I_{C,m}$.
Note that for $X\sim\bin(m,1/2+\delta_x)$, by Fact~\ref{fact:Gaussian-exp-1}, we have that
\begin{align*}
\left\|\frac{4(X-m/2)}{Cm}\right\|_{\psi_2}&=O\left(\left\|\frac{X-(1/2+\delta_x)m}{Cm}\right\|_{\psi_2}+\left\|\delta_x/C\right\|_{\psi_2}\right)=O\left(\left\|\frac{X-(1/2+\delta_x)m}{Cm}\right\|_{\psi_2}+\delta_x/C\right)\\&\le O\left(\frac{1}{C\sqrt{m}}+\frac{\delta}{C\sqrt{m}}\right)=O\left(\frac{1}{C\sqrt{m}}\right).
\end{align*}
Therefore, applying Fact~\ref{fact:legendre-poly} (vi) yields
\begin{align*}
&\quad\left|\sum_{x\in\overline{I_{C,m}}}F_{m,x}(\delta_x)(x-(\delta_x+1/2)m)^2+2\delta_x(x-(\delta_x+1/2)m)+m(\delta_x^2-1/4)P_i\left(\frac{x-m/2}{Cm}\right)\right|\\&\le\sum_{x=0}^{m}F_{m,x}(\delta_x)\cdot\left|(x-(\delta_x+1/2)m)^2+2\delta_x(x-(\delta_x+1/2)m)+m(\delta_x^2-1/4)\right|\cdot\left|\frac{4(x-m/2)}{Cm}\right|^i\\&\le O(m)\cdot\mathop{\E}_{X\sim\bin(m,1/2+\delta_x)}\left[\left|\frac{4(X-m/2)}{Cm}\right|^i\right]+\mathop{\E}_{X\sim\bin(m,1/2+\delta_x)}\left[(X-(\delta_x+1/2)m)^2\cdot\left|\frac{4(X-m/2)}{Cm}\right|^i\right]\\&\quad+2\delta_x\mathop{\E}_{X\sim\bin(m,1/2+\delta_x)}\left[|X-(\delta_x+1/2)m|\cdot\left|\frac{4(X-m/2)}{Cm}\right|^i\right]\\&\le O(m)\mathop{\E}_{X\sim\bin(m,1/2+\delta_x)}\left[\left|\frac{4(X-m/2)}{Cm}\right|^i\right]\\&\quad+\sqrt{\mathop{\E}_{X\sim\bin(m,1/2+\delta_x)}\left[(X-(\delta_x+1/2)m)^4\right]\cdot\mathop{\E}_{X\sim\bin(m,1/2+\delta_x)}\left[\left|\frac{4(X-m/2)}{Cm}\right|^{2i}\right]}\\&\quad+2\delta_x\sqrt{\mathop{\E}_{X\sim\bin(m,1/2+\delta_x)}\left[(X-(\delta_x+1/2)m)^2\right]\cdot\mathop{\E}_{X\sim\bin(m,1/2+\delta_x)}\left[\left|\frac{4(X-m/2)}{Cm}\right|^{2i}\right]}\\&\le O(m)\cdot\left(O\left(\frac{1}{C}\sqrt{\frac{i}{m}}\right)\right)^i+\sqrt{O(m^2)\cdot\left(O\left(\frac{i}{C^2m}\right)\right)^i}+2\delta_x\sqrt{O(m)\cdot\left(O\left(\frac{i}{C^2m}\right)\right)^i}\\&\le O(m),
\end{align*}
where the third inequality follows from Cauchy-Schwarz and the fourth inequality follows from Fact~\ref{fact:Gaussian-decay}.
Combine the above results together, we have that 
\begin{align*}
\beta_i''=\frac{\delta^2}{2m}\left|\sum_{x=0}^{m}F''_{m,x}(\delta_x)P_i\left(\frac{x-m/2}{Cm}\right)\right|\le\left(\frac{\delta^2}{2m}\right)\cdot O(m)=O(\delta^2).
\end{align*}
\end{proof}

\begin{proof}[Proof of Lemma~\ref{lem:bin-beta}]
Since $k^2\le C^2m$, by Lemma~\ref{lem:bin-first-order}, Lemma~\ref{lem:bin-second-order} and equation~\eqref{eq:F-taylor}, we have that $\beta_i\le\beta'_i+\beta''_i\le O\left(\frac{\delta}{C}\sqrt{\frac{i}{m}}+\delta^2\right)=O\left(\frac{\delta}{C}\sqrt{\frac{i}{m}}\right)$.
\end{proof}

We are now ready to prove Lemma~\ref{lem:bin-norm-bound}.
\begin{proof}[Proof of Lemma~\ref{lem:bin-norm-bound}]
By Theorem~\ref{thm:moment-matching}, we have that
\begin{align*}
|p(t^*)|&\le\sum_{i=1}^{k}|a_i|\le\sum_{i=1}^{k}\left(\frac{2i+1}{2Cm}\right)\beta_i+\sum_{i=1}^{k}\left(\frac{2i+1}{2Cm}\right)O\left(\frac{i^2}{Cm}\right)\int_{(1/2-C)m}^{(1/2+C)m}|p(t)|dt\\&\le\sum_{i=1}^{k}\left(\frac{2i+1}{2Cm}\right)\beta_i+|p(t^*)|\sum_{i=1}^{k}\left(\frac{2i+1}{2}\right)O\left(\frac{i^2}{Cm}\right)\\&\le\sum_{i=1}^{k}\left(\frac{2i+1}{2Cm}\right)\beta_i+O\left(\frac{k^4}{Cm}\right)|p(t^*)|,
\end{align*}
where the first inequality follows from Fact~\ref{fact:legendre-poly} (iii). Similarly, by Theorem~\ref{thm:moment-matching}, we have that
\begin{align*}
\int_{(1/2-C)m}^{(1/2+C)m}|p(t)|dt&\le\sum_{i=1}^{k}|a_i|\int_{(1/2-C)m}^{(1/2+C)m}\left|P_i\left(\frac{t-m/2}{Cm}\right)\right|dt\\&\le Cm\sum_{i=1}^{k}\left(\frac{2i+1}{2Cm}\right)\left(\beta_i+O\left(\frac{i^2}{Cm}\right)\int_{(1/2-C)m}^{(1/2+C)m}|p(t)|dt\right)\int_{-1}^{1}|P_i(y)|dy\\&\le\sum_{i=1}^{k}O(\sqrt{i})\beta_i+\sum_{i=1}^{k}O(\sqrt{i})O\left(\frac{i^2}{Cm}\right)\int_{(1/2-C)m}^{(1/2+C)m}|p(t)|dt\\&\le\sum_{i=1}^{k}O(\sqrt{i})\beta_i+O\left(\frac{k^{7/2}}{Cm}\right)\int_{(1/2-C)m}^{(1/2+C)m}|p(t)|dt,
\end{align*}
where the third inequality follows from Fact~\ref{fact:legendre-poly} (vii). By our assumption on $k,C,m,\delta$, we know that $\frac{k^{7/2}}{Cm}\le\frac{k^4}{Cm}\le1/2$. Therefore, by Lemma~\ref{lem:bin-beta}, we have that
\begin{align*}
&|p(t^*)|\le2\sum_{i=1}^{k}\left(\frac{2i+1}{2Cm}\right)\beta_i\le\sum_{i=1}^{k}\left(\frac{2i+1}{2Cm}\right)O\left(\frac{\delta}{C}\sqrt{\frac{i}{m}}\right)\le O\left(\frac{\delta k^{5/2}}{C^2m^{3/2}}\right),\\&
\int_{(1/2-C)m}^{(1/2+C)m}|p(t)|dt\le2\sum_{i=1}^{k}O(\sqrt{i})\beta_i\le2\sum_{i=1}^{k}O(\sqrt{i})O\left(\frac{\delta}{C}\sqrt{\frac{i}{m}}\right)\le O\left(\frac{\delta k^2}{C\sqrt{m}}\right).
\end{align*}
This completes the proof.
\end{proof}

\subsection{Proof of Lemma~\ref{lem:TV-binary-lower-bound}}\label{ssec:TV-binary-lower-bound}
Let $n=|S|\le m$. Define $f(\bx)=\sum_{i\in S}x_i,\forall \bx\in\{0,1\}^M$. For $\bX\sim U_M$ and $\bY\sim U_M^{S,\frac{\delta}{\sqrt{m}}}$, by the data processing inequality, we have that
\begin{align*}
\dtv\Big(U_M,U_M^{S,\frac{\delta}{\sqrt{m}}}\Big)\ge\dtv(f(\bX),f(\bY))=\dtv(\mathrm{Bin}(n,1/2),\mathrm{Bin}(n,1/2+\delta/\sqrt{m})).
\end{align*}
Recalling that $F_{n,x}(\delta)=\binom{n}{x}(1/2+\delta)^x(1/2-\delta)^{n-x}$, we can write
\begin{align*}
\dtv(\mathrm{Bin}(n,1/2),\mathrm{Bin}(n,1/2+\delta/\sqrt{m}))&=\frac{1}{2}\sum_{x=0}^{n}|F_{n,x}(\delta/\sqrt{m})-F_{n,x}(0)|=\frac{1}{2}\sum_{x=0}^{n}\left|\frac{\delta F'_{n,x}(0)}{\sqrt{m}}+\frac{F''_{n,x}(\delta_x)\delta^2}{2m}\right|,
\end{align*}
where for any $x\in[m]\cup\{0\}$, $\delta_x=\wt{\delta_x}/\sqrt{m}$ for some $\wt{\delta_x}\in(0,\delta)$. Applying Fact~\ref{fact:F-derivative} and Fact~\ref{fact:E-lower} yields
\begin{align*}
\sum_{x=0}^{n}|F'_{n,x}(0)|&=4\sum_{x=0}^{n}F_{n,x}(0)|x-n/2|=4\E_{X\sim\bin(n,1/2)}[|X-n/2|]\\&\ge\frac{4\E_{X\sim\bin(n,1/2)}[(X-n/2)^2]^{3/2}}{\E_{X\sim\bin(n,1/2)}[(X-n/2)^4]^{1/2}}\\&=\frac{4(n/4)^{3/2}}{\sqrt{(n/4)(1+(3n-6)/4)}}\\&=\Theta(\sqrt{n}).
\end{align*}
In addition, by Fact~\ref{fact:F-derivative}, we have that
\begin{align*}
\sum_{x=0}^{n}|F''_{n,x}(\delta_x)|&=\frac{1}{(1/4-\delta_x^2)^2}\sum_{x=0}^{n}|F_{n,x}(\delta_x)\left((x-(\delta_x+1/2)n)^2+2\delta_x(x-(\delta_x+1/2)n)+n(\delta_x^2-1/4)\right)|\\&\le\frac{1}{(1/4-\delta_x^2)^2}\left(2\var_{X\sim\bin\left(n,\frac{1}{2}+\delta_x\right)}[X]+2\delta_x\E_{X\sim\bin\left(n,\frac{1}{2}+\delta_x\right)}\left[\left|X-\E_{X\sim\bin\left(n,\frac{1}{2}+\delta_x\right)}[X]\right|\right]\right)\\&\le\frac{1}{(1/4-\delta_x^2)^2}\left(2\var_{X\sim\bin\left(n,\frac{1}{2}+\delta_x\right)}[X]+2\delta_x\sqrt{\var_{X\sim\bin\left(n,\frac{1}{2}+\delta_x\right)}[X]}\right)\\&\le O\left(n+\delta\sqrt{\frac{n}{m}}\right)\le O(n).
\end{align*}
Therefore, we have that
\begin{align*}
&\quad\dtv(\mathrm{Bin}(n,1/2),\mathrm{Bin}(n,1/2+\delta/\sqrt{m}))=\frac{1}{2}\sum_{x=0}^{n}\left|\frac{\delta F'_{n,x}(0)}{\sqrt{m}}+\frac{F_{n,x}(\delta_x)\delta^2}{2m}\right|\\&\ge\frac{\delta}{2\sqrt{m}}\sum_{x=0}^{n}|F'_{n,x}(0)|-\frac{\delta^2}{4m}\sum_{x=0}^{n}|F''_{n,x}(\delta_x)|\ge\Omega\left(\delta\sqrt{\frac{n}{m}}\right)-O\left(\frac{\delta^2n}{m}\right)=\Omega(\delta)-O(\delta^2)=\Omega(\delta).
\end{align*}

\section{Omitted Statements and Proofs from Section~\ref{ssec:sq-ising}}\label{app:sq-ising}
We provide the hardness result for robust learning of an unknown ferromagnetic high temperature Ising model here.
In order to make the distributions in our family far from the reference distribution $U_M$ in total variation distance,
we need higher dimension $m,M$ compared with the hardness result for robust hypothesis testing.

\begin{theorem}[SQ Lower Bound for Robust Learning of an Unknown Ising Model]\label{thm:sq-ising}
Fix $0<c<1$ and $k$ to be a sufficiently large integer. Let $m,M\in\Z_+$ with $M=3m^{5/4}$. Let $0<\eps<1/2$
and $\delta$ be a sufficiently small multiple of $\epsilon\log(1/\epsilon)/k^3$. Let $\tau=\Theta(M^{-(k+1)/5}\delta)$.
Assume that $m>\max\Big(C'/\epsilon,\frac{k^2}{\log(1/\delta)}\Big)$ for some sufficiently large constant $C'>0$.
Then any $\mathrm{SQ}$ algorithm which is given access to a distribution $\p$ over $\{0,1\}^M$ which satisfies $\dtv\Big(\p,Q_M^{S,\frac{\delta}{m}}\Big)\le\epsilon$
for some unknown subset $S\subseteq[M]$ with $|S|=m$, outputs a hypothesis $\q$ with $\dtv(\q,\p)\le O(\delta)$ with probability at least $2/3$ must either make queries of accuracy better than $\sqrt{2\tau}$ or
must make at least $2^{\Omega(M^{2/5})}M^{-(k+1)/5}$
statistical queries.
\end{theorem}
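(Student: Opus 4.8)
The plan is to deduce the robust‑learning lower bound from the robust‑testing lower bound of Theorem~\ref{thm:sq-ising-testing} by means of the generic testing‑to‑learning reduction of Claim~\ref{clm:learning-to-testing}, following the pattern by which Theorem~\ref{thm:sq-bin-tv} is obtained from Theorem~\ref{thm:sq-bin-testing}. Write $\mathcal{D}^{\ast}$ for the class of distributions $\p$ over $\{0,1\}^M$ with $\dtv(\p,Q_M^{S,\delta/m})\le\epsilon$ for some $S\subseteq[M]$ with $|S|=m$; by hypothesis the purported SQ algorithm learns every member of $\mathcal{D}^{\ast}$ to total variation error $O(\delta)$, and ruling out such an algorithm with few queries is exactly the goal. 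Since $\mathcal{B}(\mathcal{D}^{\ast},U_M)$ is precisely the testing problem of Theorem~\ref{thm:sq-ising-testing}, it suffices to show that a cheap learner for $\mathcal{D}^{\ast}$ would give a cheap tester for $\mathcal{B}(\mathcal{D}^{\ast},U_M)$.

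For this I would record two parameter facts and one separation estimate. First, $\epsilon=o(\delta)$: this is immediate from $\delta=\Theta(\epsilon\log(1/\epsilon)/k^3)$ with $k$ fixed, since then $\delta/\epsilon=\Theta(\log(1/\epsilon)/k^3)\to\infty$ as $\epsilon\to0$. Second, $\sqrt{2\tau}=o(\delta)$: writing $\tau=\Theta(M^{-(k+1)/5}\delta)=\Theta(m^{-(k+1)/4}\delta)$ for $M=3m^{5/4}$, and using $m>C'/\epsilon$ together with $1/\delta=O(k^3/\epsilon)$, the exponent $(k+1)/4$ (with $k$ a sufficiently large constant) makes $m^{-(k+1)/4}$ smaller than any fixed power of $\delta$, so $\tau$, and hence $\sqrt{2\tau}$, is a polynomially small multiple of $\delta$. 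Third, for every $D'\in\mathcal{D}^{\ast}$, say with $\dtv(D',Q_M^{S,\delta/m})\le\epsilon$, Lemma~\ref{lem:TV-ising-lower-bound} and the triangle inequality give
\[
\dtv(U_M,D')\ \ge\ \dtv(U_M,Q_M^{S,\delta/m})-\dtv(D',Q_M^{S,\delta/m})\ \ge\ \Omega(\delta)-\epsilon\ =\ \Omega(\delta)\;,
\]
so in fact $\dtv(U_M,D')>2\big(\sqrt{2\tau}+c_1\delta\big)$ for all $D'\in\mathcal{D}^{\ast}$, where $c_1$ is the constant hidden in the learning accuracy ``$O(\delta)$'' (taken small enough relative to the constant of Lemma~\ref{lem:TV-ising-lower-bound}).

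Now I would invoke Claim~\ref{clm:learning-to-testing} with reference distribution $U_M$, class $\mathcal{D}^{\ast}$, learning accuracy $c_1\delta$, and tolerance $\sqrt{2\tau}$: its separation hypothesis, $\dtv(U_M,D')>2(\sqrt{2\tau}+c_1\delta)$ for all $D'\in\mathcal{D}^{\ast}$, has just been checked, so a learner for $\mathcal{D}^{\ast}$ using $N$ queries of tolerance $\sqrt{2\tau}$ yields an SQ algorithm solving $\mathcal{B}(\mathcal{D}^{\ast},U_M)$ with probability at least $2/3$ using $N+1$ queries of tolerance $\sqrt{2\tau}$ (one may assume all queries are answered by a single $\mathrm{STAT}(\sqrt{2\tau})$ oracle, else the learner already makes a query of accuracy better than $\sqrt{2\tau}$). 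By Theorem~\ref{thm:sq-ising-testing}, any such algorithm must either use a query of tolerance below $\sqrt{2\tau}$ or make at least $2^{\Omega(M^{2/5})}M^{-(k+1)/5}$ queries; hence $N+1\ge2^{\Omega(M^{2/5})}M^{-(k+1)/5}$, which gives the claimed lower bound on $N$ and proves Theorem~\ref{thm:sq-ising}.

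All of the real content — the Legendre/Kravchuk moment‑matching construction of the hard instances (Proposition~\ref{prop:ising-moment-matching}) and the generic discrete SQ‑dimension bound (Proposition~\ref{prop:gen-sq-prop}) — already lives inside Theorem~\ref{thm:sq-ising-testing}, so no genuinely new difficulty arises. The only point needing care is the parameter bookkeeping of the middle paragraph: verifying $\sqrt{2\tau}=o(\delta)$ — this is exactly where the hypothesis $m>C'/\epsilon$, strictly stronger than the $m>C'(\log(1/\epsilon))^3$ that suffices for the testing statement, is used — and verifying that the slack $\dtv(U_M,D')-2(\sqrt{2\tau}+c_1\delta)$ stays strictly positive, which controls how small the learning‑accuracy constant may be.
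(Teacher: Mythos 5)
Your proposal is correct and follows the same route as the paper's own proof: apply Lemma~\ref{lem:TV-ising-lower-bound} to get $\dtv(U_M, D') \ge \Omega(\delta)$ for every alternative $D'$, check that $\sqrt{2\tau}$ is a sufficiently small multiple of $\delta$ under the stated parameter constraints, and then chain Claim~\ref{clm:learning-to-testing} with Theorem~\ref{thm:sq-ising-testing}. The only cosmetic difference is that you state the separation for all $D'\in\mathcal{D}^{\ast}$ whereas the paper writes it only for the hard instances $\p_S^A$; you also spell out the constant-chasing (the learning accuracy constant $c_1$ being small relative to the constant in Lemma~\ref{lem:TV-ising-lower-bound}, and where $m>C'/\epsilon$ enters) that the paper leaves implicit.
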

\begin{proof}
We need to show that for any subset $S\subseteq[M]$ 
with $|S|=m$, $\p_S^A$ is far from $U_M$ in total variation distance. 
In particular, by Lemma~\ref{lem:TV-ising-lower-bound}, we have that
\begin{align*}
\dtv(U_M,\p_S^A)\ge\dtv\Big(U_M,Q_M^{S,\frac{\delta}{m}}\Big)-\dtv\Big(\p_S^A,Q_M^{S,\frac{\delta}{m}}\Big)
\ge\Omega(\delta)-O(\epsilon)=\Omega(\delta) \;.
\end{align*}
In addition, by our choice of $m$, we have that $\sqrt{2\tau}\le O(\delta)$. 
Therefore, we have that $\dtv\left(U_M,\p_S^A\right)\ge2\sqrt{2\tau}+\Omega(\delta)$. 
Applying Claim~\ref{clm:learning-to-testing} and 
Theorem~\ref{thm:sq-ising-testing} yields Theorem~\ref{thm:sq-ising}.
\end{proof}

\subsection{Proof of Proposition~\ref{prop:ising-moment-matching}}\label{ssec:ising-moment-matching}
In this section, we prove Proposition~\ref{prop:ising-moment-matching}.
We first introduce the following notations which will be used throughout this section.
For some fixed positive integer $n$ and $x\in[n]\cup\{0\}$, we consider the function $G_{n,x}(\delta)=\is(n,\delta)(x),-1/n<\delta<1/n$.
By definition, we have that $\is(n,\delta)(x)=\is(n,\delta)(n-x)$ and $G_{n,x}(\delta)=G_{n,n-x}(\delta), x\in[n]\cup\{0\}, -1/n<\delta<1/n$. In particular, $\is(n,0)$ is exactly the binomial distribution $\bin(n,1/2)$.
\begin{claim}\label{clm:h-center}
Let $n\in\Z_+$ and $X\sim\is(n,0)$. Then, $\E_{X\sim\is(n,0)}[h(n,X)]=0$.
\end{claim}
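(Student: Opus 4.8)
The plan is to use the fact, noted right after Definition~\ref{def:ising-sym}, that $\is(n,0)$ is exactly the binomial distribution $\bin(n,1/2)$, so the claim reduces to computing $\E_{X\sim\bin(n,1/2)}[h(n,X)]$ for $h(n,x)=2x^2-2nx+\tfrac{n(n-1)}{2}$. The cleanest route is through the combinatorial meaning of $h$. First I would record that for any $\by\in\{0,1\}^n$ with exactly $x$ ones one has $\sum_{i\in[n]}(-1)^{y_i}=n-2x$, and hence
\[
\Big(\sum_{i\in[n]}(-1)^{y_i}\Big)^2-n=(n-2x)^2-n=4x^2-4nx+n^2-n=2\,h(n,x) \;,
\]
i.e. $h(n,x)=\tfrac12\big((n-2x)^2-n\big)$.

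Next I would use that drawing $X\sim\bin(n,1/2)$ is the same as drawing $\bY\sim U_n$, the uniform distribution on $\{0,1\}^n$, and setting $X=\sum_i Y_i$. Combining this with the identity above gives
\[
\E_{X\sim\is(n,0)}[h(n,X)]=\tfrac12\,\E_{\bY\sim U_n}\Big[\Big(\textstyle\sum_{i\in[n]}(-1)^{Y_i}\Big)^2-n\Big]=\tfrac12\Big(\textstyle\sum_{i,j\in[n]}\E_{\bY\sim U_n}\big[(-1)^{Y_i+Y_j}\big]-n\Big) \;.
\]
The final step is to evaluate the double sum: the $n$ diagonal terms each contribute $\E[(-1)^{2Y_i}]=1$, while each off-diagonal term vanishes by independence since $\E[(-1)^{Y_i}]=0$. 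Thus the double sum equals $n$ and the whole expression is $\tfrac12(n-n)=0$.

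There is no genuine obstacle here: the only points requiring care are invoking the identification $\is(n,0)=\bin(n,1/2)$ and correctly separating the diagonal from the off-diagonal terms. As a sanity check one can also argue purely by moments, using $\E_{\bin(n,1/2)}[X]=n/2$ and $\E_{\bin(n,1/2)}[X^2]=n/4+n^2/4$, so that $\E[2X^2-2nX+\tfrac{n(n-1)}{2}]=(n/2+n^2/2)-n^2+(n^2-n)/2=0$; I would present whichever of the two one-line computations reads more smoothly in context.
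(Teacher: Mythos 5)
Your proposal is correct. Your primary argument --- rewriting $h(n,x)=\tfrac12\big((n-2x)^2-n\big)$, passing to $\bY\sim U_n$, and expanding $\big(\sum_i(-1)^{Y_i}\big)^2$ into diagonal and off-diagonal character terms --- is a mildly different route from the paper's, which simply computes $\E[X]=n/2$ and $\E[X^2]=\E[X]^2+\var[X]=n^2/4+n/4$ for $X\sim\bin(n,1/2)$ and plugs in. Your character decomposition is a bit more structural and makes it transparent \emph{why} the constant $n(n-1)/2$ in $h$ is exactly the one that kills the diagonal contribution; the paper's moment computation is more direct and requires no re-interpretation. Note also that your closing ``sanity check'' via $\E[X]$ and $\E[X^2]$ is essentially identical to the paper's actual proof, so in effect you have reproduced it and offered a second, slightly more conceptual derivation alongside. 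Both are correct and equally short.
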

\begin{proof}
By definition, we have that
\begin{align*}
&\quad\E_{X\sim\is(n,0)}[h(n,X)]=\E_{X\sim\bin(n,1/2)}\left[2X^2-2nX+n(n-1)/2\right]\\&=2\left(\E_{X\sim\bin(n,1/2)}[X]^2+\var_{X\sim\bin(n,1/2)}[X]\right)-2n\E_{X\sim\bin(n,1/2)}[X]+n(n-1)/2\\&=2\left(n^2/4+n/4\right)-n^2+n(n-1)/2=0.
\end{align*}
\end{proof}

The first and second derivatives of $G_{n,x}(\delta)$ are given by the following claim:
\begin{claim}\label{clm:G-derivative}
Let $n\in\Z_+$ and $x\in[n]\cup\{0\}$. For any $-1/n<\delta<1/n$, we have that
\begin{align*}
G'_{n,x}(\delta)&=G_{n,x}(\delta)\left(h(n,x)-\E_{Y\sim\is(n,\delta)}[h(n,Y)]\right),\\
G''_{n,x}(\delta)&=G_{n,x}(\delta)\left((h(n,x)-\E_{Y\sim\is(n,\delta)}[h(n,Y)])^2-\var_{Y\sim\is(n,\delta)}[h(n,Y)]\right).
\end{align*}
\end{claim}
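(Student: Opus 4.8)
The key observation is that $\is(n,\delta)$ is a one-parameter exponential family with sufficient statistic $h(n,\cdot)$, so both derivatives follow from the standard ``log-derivative of the partition function'' calculus. Write $w_x(\delta) = \binom{n}{x}\exp(h(n,x)\delta)$, so that $G_{n,x}(\delta) = w_x(\delta)/Z_n(\delta)$ with $Z_n(\delta) = \sum_{x=0}^n w_x(\delta)$. Since $h(n,x)$ does not depend on $\delta$, we have $w_x'(\delta) = h(n,x)\,w_x(\delta)$ and hence $Z_n'(\delta) = \sum_x h(n,x)\,w_x(\delta) = Z_n(\delta)\,\E_{Y\sim\is(n,\delta)}[h(n,Y)]$. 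Everything in sight is a finite sum of exponentials, so differentiation under the sum is trivially justified and $G_{n,x}$ is $C^\infty$ in $\delta$ on $(-1/n,1/n)$.

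For the first derivative, apply the quotient rule:
\begin{align*}
G'_{n,x}(\delta) = \frac{w_x'(\delta)}{Z_n(\delta)} - \frac{w_x(\delta)\,Z_n'(\delta)}{Z_n(\delta)^2}
= G_{n,x}(\delta)\,h(n,x) - G_{n,x}(\delta)\,\E_{Y\sim\is(n,\delta)}[h(n,Y)],
\end{align*}
which is exactly the claimed identity. For the second derivative, set $\mu(\delta) := \E_{Y\sim\is(n,\delta)}[h(n,Y)]$ and differentiate $G'_{n,x}(\delta) = G_{n,x}(\delta)(h(n,x)-\mu(\delta))$ using the product rule, substituting the first-derivative formula back in:
\begin{align*}
G''_{n,x}(\delta) = G_{n,x}(\delta)\big(h(n,x)-\mu(\delta)\big)^2 - G_{n,x}(\delta)\,\mu'(\delta).
\end{align*}
It then remains to identify $\mu'(\delta)$ with the variance. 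Differentiating $\mu(\delta) = \sum_y h(n,y)\,G_{n,y}(\delta)$ termwise and plugging in the first-derivative formula gives $\mu'(\delta) = \sum_y h(n,y)\,G_{n,y}(\delta)\big(h(n,y)-\mu(\delta)\big) = \E_{Y\sim\is(n,\delta)}[h(n,Y)^2] - \mu(\delta)^2 = \var_{Y\sim\is(n,\delta)}[h(n,Y)]$, completing the proof.

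There is essentially no substantive obstacle here; the computation is the routine exponential-family identity that the first two cumulants of the sufficient statistic are the first two $\delta$-derivatives of $\log Z_n$. The only things to be slightly careful about are (i) justifying differentiation under the (finite) sum, which is immediate, and (ii) bookkeeping the recentering so that the second derivative comes out in the stated ``second moment minus variance of the shifted statistic'' form rather than a raw-moment form.
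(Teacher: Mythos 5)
Your proof is correct and follows essentially the same route as the paper: quotient rule for $G'_{n,x}$ (using $Z_n'(\delta) = Z_n(\delta)\,\E_{Y\sim\is(n,\delta)}[h(n,Y)]$), then product rule on $G_{n,x}(\delta)(h(n,x)-\mu(\delta))$ and identification of $\mu'(\delta)$ with the variance of $h(n,Y)$. Your framing via the exponential-family/cumulant identity and your explicit attention to the sign of the $\mu'(\delta)$ term are a slight tidying of the paper's computation (which actually drops a minus sign in one intermediate display but lands on the correct final expression), but there is no substantive difference.
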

\begin{proof}
By definition, we have that $Z_n(\delta)=\sum_{x=0}^{n}\binom{n}{x}\exp(h(n,x)\delta)$ and $Z'_n(\delta)=\sum_{x=0}^{n}\binom{n}{x}h(n,x)\exp(h(n,x)\delta)$. Therefore,
\begin{align*}
G'_{n,x}(\delta)&=\binom{n}{x}\left(\frac{h(n,x)\exp(h(n,x))}{Z_n(\delta)}-\frac{\exp(h(n,x)\delta)Z'_n(\delta)}{Z_n(\delta)^2}\right)\\&=\binom{n}{x}\left(\frac{h(n,x)\exp(h(n,x))}{Z_n(\delta)}-\left(\frac{\exp(h(n,x)\delta)}{Z_n(\delta)}\right)\left(\frac{\sum_{y=0}^{n}\binom{n}{y}h(n,y)\exp(h(n,y)\delta)}{Z_{n}(\delta)}\right)\right)\\&=G_{n,x}(\delta)\left(h(n,x)-\E_{Y\sim\is(n,\delta)}[h(n,Y)]\right).
\end{align*}
For the second derivative, applying the above result for the first derivative yields
\begin{align*}
G''_{n,x}(\delta)&=G'_{n,x}(\delta)\left(h(n,x)-\E_{Y\sim\is(n,\delta)}[h(n,Y)]\right)+G_{n,x}(\delta)\frac{d}{d\delta}(\E_{Y\sim\is(n,\delta)}[h(n,Y)])\\&=G_{n,x}(\delta)\Bigg(\left(h(n,x)-\E_{Y\sim\is(n,\delta)}[h(n,Y)]\right)^2+\sum_{y=0}^{n}h(n,y)G'_{n,y}(\delta)\Bigg)\\&=G_{n,x}(\delta)\Bigg(\left(h(n,x)-\E_{Y\sim\is(n,\delta)}[h(n,Y)]\right)^2+\sum_{y=0}^{n}h(n,y)G_{n,y}(\delta)\left(h(n,y)-\E_{Y\sim\is(n,\delta)}[h(n,Y)]\right)\Bigg)\\&=G_{n,x}(\delta)\left(\left(h(n,x)-\E_{Y\sim\is(n,\delta)}[h(n,Y)]\right)^2+\E_{Y\sim\is(n,\delta)}[h(n,Y)^2]-\E_{Y\sim\is(n,\delta)}[h(n,Y)]^2\right)\\&=G_{n,x}(\delta)\left((h(n,x)-\E_{Y\sim\is(n,\delta)}[h(n,Y)])^2-\var_{Y\sim\is(n,\delta)}[h(n,Y)]\right).
\end{align*}
\end{proof}

The following claim states that for any sufficiently small parameter $\delta\ge0$, $X-n/2$ and $h(n,X)$ have sharp sub-Gaussian and sub-exponential tail, respectively.
\begin{claim}\label{clm:ising-Gaussian-exp}
Let $n\in\Z_+$. There exists universal constants $C_1,C_2>0$ such that, for any $0\le\delta\le\frac{1}{2n}$, we have that $\|X-n/2\|_{\psi_2}\le C_1\sqrt{n}$ and $\|h(n,X)\|_{\psi_1}\le C_2n$, where $X\sim\is(n,\delta)$.
\end{claim}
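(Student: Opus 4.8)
The plan is to realize $\is(n,\delta)$ as the law of the coordinate-sum of an Ising model lying in the Dobrushin regime, and then to combine the sub-Gaussian concentration bound of Fact~\ref{fact:ising-sub-Gaussian} with the algebraic identity $h(n,x)=2(x-n/2)^2-n/2$, which comes from completing the square.

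First I would observe, exactly as in the remark following Definition~\ref{def:ising-sym}, that $\is(n,\delta)$ is the law of $X:=\sum_{i\in[n]}X_i$ when $\bX$ is drawn from the Ising model $Q_n^\delta$ over $\{0,1\}^n$ whose interaction matrix has $\theta_{ij}=\delta$ for all $i\ne j$ and zero diagonal: writing $x=\sum_i x_i$, one has $\tfrac{\delta}{2}\sum_{i\ne j}(-1)^{x_i+x_j}=\tfrac{\delta}{2}\big((n-2x)^2-n\big)=\delta\,h(n,x)$, and the partition function of $Q_n^\delta$ is precisely $Z_n(\delta)$. Since $0\le\delta\le\frac{1}{2n}$ we have $\max_i\sum_{j\ne i}|\theta_{ij}|=(n-1)\delta\le\frac12$, so $Q_n^\delta$ satisfies Dobrushin's condition with the universal constant $\eta=1/2$.

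For the first bound, I would apply Fact~\ref{fact:ising-sub-Gaussian} to $Q_n^\delta$ with the all-ones vector $b=(1,\dots,1)$, which gives $\|X-\E[X]\|_{\psi_2}\le c'(1/2)\sqrt n$. Since $\binom{n}{x}=\binom{n}{n-x}$ and $h(n,x)=h(n,n-x)$, the distribution $\is(n,\delta)$ is symmetric about $n/2$, so $\E[X]=n/2$ and therefore $\|X-n/2\|_{\psi_2}\le C_1\sqrt n$ with $C_1:=c'(1/2)$. For the second bound, completing the square yields $h(n,x)=2x^2-2nx+\tfrac{n(n-1)}{2}=2(x-n/2)^2-n/2$. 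Directly from the definition of the Orlicz norms, $\|(X-n/2)^2\|_{\psi_1}=\|X-n/2\|_{\psi_2}^2\le C_1^2 n$, and a deterministic constant $c$ has $\|c\|_{\psi_1}=|c|/\ln 2$; using that $\|\cdot\|_{\psi_1}$ is a norm (Fact~\ref{fact:Gaussian-exp-1}) I would conclude
\[
\|h(n,X)\|_{\psi_1}\le 2\,\|(X-n/2)^2\|_{\psi_1}+\|n/2\|_{\psi_1}\le\Big(2C_1^2+\tfrac{1}{2\ln 2}\Big)n=:C_2 n .
\]

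I expect the only delicate point to be resisting the temptation to bound $h(n,X)$ by expanding it as $\bX^\top A\bX$ (with $A_{ij}=2\,\Ind[i\ne j]$) plus the linear form $-2(n-1)\sum_i X_i$ and applying Fact~\ref{fact:ising-sub-exponential} term by term: that linear form has $\psi_2$-norm of order $n^{3/2}$, so this route loses a factor of $\sqrt n$, and it is precisely the cancellation between the linear and quadratic parts --- made explicit by completing the square --- that produces the stated $O(n)$ bound. The only other thing to verify is that Fact~\ref{fact:ising-sub-Gaussian} applies to the $\{0,1\}$-encoded model $Q_n^\delta$, which is immediate since passing to $\pm1$ coordinates is an affine change that turns a linear form into a linear form plus an additive constant, the latter removed by centering.
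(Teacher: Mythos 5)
Your proof is correct. The sub-Gaussian bound is obtained the same way the paper does it: realize $\is(n,\delta)$ as the coordinate-sum of the Ising model with uniform coupling $\delta$, check Dobrushin's condition (here $(n-1)\delta\le 1/2$), and apply Fact~\ref{fact:ising-sub-Gaussian} with $b$ the all-ones vector. Where you genuinely diverge is the $\psi_1$ bound for $h(n,X)$. The paper applies Fact~\ref{fact:ising-sub-exponential} with $A$ the off-diagonal all-ones matrix; in the $\pm1$ encoding $\sigma^TA\sigma=(\sum_i\sigma_i)^2-n=2h(n,X)$, so the quadratic concentration fact yields $\|h(n,X)-\E[h(n,X)]\|_{\psi_1}\le O(\|A\|_F)=O(n)$, and one must additionally observe (implicitly, in the paper) that $|\E[h(n,X)]|=|2\var(X)-n/2|=O(n)$ to pass to the uncentered norm. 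You instead complete the square, $h(n,x)=2(x-n/2)^2-n/2$, use the exact Orlicz identity $\|(X-n/2)^2\|_{\psi_1}=\|X-n/2\|_{\psi_2}^2$, and finish with the triangle inequality. This buys you two things: you never invoke the quadratic-form concentration bound Fact~\ref{fact:ising-sub-exponential} (only the linear one), and you get the \emph{uncentered} $\psi_1$ bound with no extra step. One small remark on your parenthetical caution: the decomposition of $h$ into a $\{0,1\}$-quadratic plus a linear part of $\psi_2$-norm $\Theta(n^{3/2})$ is indeed a dead end, but it is not a trap the paper's proof falls into --- in the $\pm1$ encoding $h(n,X)$ is, up to a constant, a pure quadratic form in $\sigma$ with no linear part, so the cancellation you describe has already happened before Fact~\ref{fact:ising-sub-exponential} is applied. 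Your completing-the-square identity makes the same cancellation visible at the level of the one-dimensional variable $X$, which is arguably cleaner.
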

\begin{proof}
We consider the Ising model $P_\theta$, where $\theta_{ij}=\delta,\forall i,j\in[n],i\ne j$. Since $0\le\delta\le\frac{1}{2n}$, we have that $\sum_{j\in[n],j\ne i}|\theta_{ij}|\le1/2,\forall i\in[n]$. Let $X\sim\is(n,\delta)$. By definition, we know that $X$ denotes the number of $1$'s in the random vector of $P_\theta$. Therefore, applying Fact~\ref{fact:ising-sub-Gaussian} by taking $b$ to be the all-ones vector, we have that $\|X-n/2\|_{\psi_2}\le C_1\sqrt{n}$ for some universal constant $C_1>0$. Similarly, applying Fact~\ref{fact:ising-sub-exponential} by taking $A$ to be the all-ones matrix, we have that $\|h(n,X)\|_{\psi_1}\le C_2n$ for some universal constant $C_2>0$.
\end{proof}

We pick $C=\Theta(\sqrt{(\log(1/\delta)/m)})$, where the hidden constant is sufficiently small and consider the interval $I_{C,m}=[(1/2-C)m,(1/2+C)m-1]$. Without loss of generality, we assume that the two endpoints of $I_{C,m}$ are integers.
We define the one-dimensional distribution $A$ to be:
\begin{itemize}
\item For $x\notin I_{C,m}$, we define $A(x)=\is(m,\delta/m)(x)$.
\item For $x\in I_{C,m}$, we define $A(x)=\is(m,\delta/m)(x)+\int_{x}^{x+1}p(t)dt$, where $p$ is a polynomial of degree at most $k$ satisfying
\begin{align}\label{eq:ising-moment-matching}
\sum_{x\in\Z\cap I_{C,m}}x^i\int_{x}^{x+1}p(t)dt=\sum_{x=0}^{m}(\is(m,0)(x)-\is(m,\delta/m)(x))x^i,
\end{align}
for $0\le i\le k$.
\end{itemize}
Applying Theorem~\ref{thm:moment-matching} with the family of functions $\{G_{m,x}(\delta)\}_{x\in[m]\cup\{0\}}$, we know that there is a unique real polynomial $p$ of degree at most $k$ satisfying the above properties.
Then we need to show that with sufficiently large $m$ (depending on $\delta$), both the $L_1$ and $L_\infty$ norms of $p$ on $[(1/2-C)m,(1/2+C)m]$ are sufficiently small in order to make $A(x)$ non-negative. 
The main technical result of this section is the following lemma, which provides upper bounds on the $L_1$ and $L_\infty$ norms of $p$ on the interval $[(1/2-C)m,(1/2+C)m]$.

\begin{lemma}\label{lem:ising-norm-bound}
Let $1\le k^2\le C_0C^2m$ for some universal constant $C_0>0$ sufficiently small and $m\ge C_1(\log(1/\delta))^3$ for some universal constant $C_1>0$ sufficiently large. Then $\int_{(1/2-C)m}^{(1/2+C)m}|p(t)|dt\le O\left(\frac{\delta k^3}{C^2m}\right)$ and $|p(t^*)|\le O\left(\frac{\delta k^{7/2}}{C^3m^{2}}\right)$, where $t^*=\arg\max_{|t-m/2|\le Cm}|p(t)|$.
\end{lemma}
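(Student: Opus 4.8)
The plan is to follow the proof of Lemma~\ref{lem:bin-norm-bound} almost verbatim, replacing the one-dimensional binary-product profile $F_{m,x}$ by the Ising profile $G_{m,x}(\delta)=\is(m,\delta)(x)$ and Fact~\ref{fact:F-derivative} by Claim~\ref{clm:G-derivative}. Applying Theorem~\ref{thm:moment-matching} to the family $\{G_{m,x}(\delta)\}_{x\in[m]\cup\{0\}}$ with perturbation parameter $\delta/m$ writes $p(t)=\sum_{i=1}^{k}a_iP_i\!\left(\frac{t-m/2}{Cm}\right)$, where $|a_i|\le\left(\frac{2i+1}{2Cm}\right)\left(\beta_i+O(i^2/(Cm))\int_{(1/2-C)m}^{(1/2+C)m}|p(t)|\,dt\right)$ and $\beta_i=\left|\sum_{x=0}^{m}(G_{m,x}(0)-G_{m,x}(\delta/m))P_i\!\left(\frac{x-m/2}{Cm}\right)\right|$. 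Once $\beta_i$ is bounded, one sums over $i$ exactly as in the proof of Lemma~\ref{lem:bin-norm-bound}: Fact~\ref{fact:legendre-poly}(iii) gives $|p(t^*)|\le\sum_i|a_i|$, Fact~\ref{fact:legendre-poly}(vii) gives $\int|p|\le\sum_i|a_i|\cdot Cm\cdot O(1/\sqrt{i})$, and the feedback term $O(i^2/(Cm))\int|p|$ is absorbed into the left-hand side using $k^{O(1)}/(Cm)\le 1/2$, which holds because $k^2\le C_0C^2m$ and $m\ge C_1(\log(1/\delta))^3$ with $C=\Theta(\sqrt{\log(1/\delta)/m})$.

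The core task is to bound $\beta_i$. Taylor-expand $G_{m,x}(\delta/m)-G_{m,x}(0)=\frac{\delta}{m}G'_{m,x}(0)+\frac{\delta^2}{2m^2}G''_{m,x}(\delta_x)$ for some $\delta_x\in[0,\delta/m]$, so $\beta_i\le\beta'_i+\beta''_i$ with $\beta'_i=\frac{\delta}{m}|\sum_x G'_{m,x}(0)P_i(\frac{x-m/2}{Cm})|$ and $\beta''_i=\frac{\delta^2}{2m^2}|\sum_x G''_{m,x}(\delta_x)P_i(\frac{x-m/2}{Cm})|$. By Claim~\ref{clm:G-derivative} and Claim~\ref{clm:h-center}, $G'_{m,x}(0)=\bin(m,1/2)(x)\,h(m,x)$, and since $h(m,x)=2(x-m/2)^2-m/2$ is an even quadratic in $x-m/2$, we get $\beta'_i=\frac{\delta}{m}|\E_{X\sim\bin(m,1/2)}[h(m,X)P_i(\frac{X-m/2}{Cm})]|$; this vanishes for odd $i$ by symmetry, and for even $i$ we expand $P_i$ via Fact~\ref{fact:legendre-poly}(v), bound its coefficients with Fact~\ref{fact:bin-bound-1}, and control the central moments of $X-m/2$ through its sub-Gaussian tail (Claim~\ref{clm:ising-Gaussian-exp}, Fact~\ref{fact:Gaussian-decay}), exactly as in Lemma~\ref{lem:bin-first-order}. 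Crucially, $2\E[(X-m/2)^2]-m/2=0$, so the constant term of the expansion cancels and the surviving terms give $\beta'_i=O(\delta i^{3/2}/(C^2m))$. For $\beta''_i$ we use Claim~\ref{clm:G-derivative} to write $G''_{m,x}(\delta_x)=G_{m,x}(\delta_x)((h(m,x)-\E_{\is(m,\delta_x)}[h])^2-\var_{\is(m,\delta_x)}[h])$, and split the sum into $x\in\Z\cap I_{C,m}$ (where $|P_i|\le1$ by Fact~\ref{fact:legendre-poly}(iii) and $\var_{\is(m,\delta_x)}[h]=O(m^2)$) and $x\notin I_{C,m}$ (where we use $|P_i(y)|\le(4|y|)^i$ by Fact~\ref{fact:legendre-poly}(vi) together with the sub-exponential tail of $h(m,X)$ and the sub-Gaussian tail of $X-m/2$ under $\is(m,\delta_x)$ from Claim~\ref{clm:ising-Gaussian-exp}, plus Cauchy--Schwarz), mirroring Lemma~\ref{lem:bin-second-order}; this yields $\beta''_i=O(\delta^2)$, which is lower order. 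Hence $\beta_i=O(\delta i^{3/2}/(C^2m))$, and summing as above produces $\int|p|=O(\delta k^3/(C^2m))$ and $|p(t^*)|=O(\delta k^{7/2}/(C^3m^2))$.

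The one structural difference from the binary case is that $h(m,x)$ is \emph{quadratic} in $x-m/2$, whereas the analogous factor $x-(1/2)m$ in Fact~\ref{fact:F-derivative} was linear; this raises by two the degree of every polynomial appearing inside the expectations and costs an extra factor of order $k/(C\sqrt{m})\le\sqrt{C_0}<1$ in the final estimates, which is precisely why the bounds here carry an extra factor $k/(C\sqrt{m})$ relative to Lemma~\ref{lem:bin-norm-bound}. I expect the estimate for $\beta'_i$ to be the main obstacle: one must carry the two extra powers of $x-m/2$ through the Legendre expansion, check that the geometric-type series in the expansion index still converges under $k^2\le C_0C^2m$, and verify the cancellation of the leading (constant) term so that only the $O(\delta i^{3/2}/(C^2m))$ contribution survives. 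The tail contribution to $\beta''_i$ for $x\notin I_{C,m}$ is a secondary nuisance, handled because the exponential decay of $\is(m,\delta_x)$ (equivalently, the sub-Gaussian and sub-exponential tails of $X-m/2$ and $h(m,X)$) dominates the $(4|y|)^i$ growth of $P_i$ for our choice of $C$ and range $i\le k$.
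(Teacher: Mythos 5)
Your proposal is correct and follows the paper's proof essentially line by line: the same appeal to Theorem~\ref{thm:moment-matching}, the same Taylor split of $\beta_i$ into $\beta_i'$ and $\beta_i''$ (giving $\beta_i=0$ for odd $i$ by the symmetry $G_{m,x}=G_{m,m-x}$), the same cancellation of the constant Legendre term via $\E_{\bin(m,1/2)}[h(m,X)]=0$ in the $\beta_i'$ estimate, the same $I_{C,m}$/tail split via Facts~\ref{fact:legendre-poly}(iii),(vi) and Claim~\ref{clm:ising-Gaussian-exp} for $\beta_i''$, and the same bootstrap absorbing the $O(i^2/(Cm))\int|p|$ feedback term using $k^{7/2}/(Cm)\le 1/2$. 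The only nit is your side remark that the quadratic $h(m,x)$ "raises by two the degree" relative to the linear $x-m/2$; it raises it by one, and the net loss you correctly identify, $k/(C\sqrt m)$, is one power rather than two.
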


Before we prove Lemma~\ref{lem:ising-norm-bound}, we first use it to prove our main Proposition~\ref{prop:ising-moment-matching}.
The following lemma gives both the lower and upper bound of the ratio between the mass of $\is(m,0)$ and $\is(m,\delta/m)$.
\begin{lemma}\label{lem:ising-moment}
Let $m\in\Z_+$ and $x\in[m]\cup\{0\}$. There is a universal constant $\delta_0>0$ such that for any $0\le\delta\le\delta_0$, we have that
\begin{align*}
e^{-\delta^2/\delta_0^2}\cdot\exp(h(m,x)\delta/m)\le\frac{G_{m,x}(\delta/m)}{G_{m,x}(0)}\le\exp(h(m,x)\delta/m).
\end{align*}
\end{lemma}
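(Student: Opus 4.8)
The first move is to reduce the statement to an inequality about the partition function $Z_m$ alone. Unwinding the definition of $G_{m,x}$ (Definition~\ref{def:ising-sym}),
\begin{align*}
\frac{G_{m,x}(\delta/m)}{G_{m,x}(0)}
= \frac{\binom{m}{x}\exp(h(m,x)\delta/m)/Z_m(\delta/m)}{\binom{m}{x}/Z_m(0)}
= \exp\!\left(\frac{h(m,x)\delta}{m}\right)\cdot\frac{Z_m(0)}{Z_m(\delta/m)} \;.
\end{align*}
Thus it suffices to prove $e^{-\delta^2/\delta_0^2}\le Z_m(0)/Z_m(\delta/m)\le 1$, equivalently $1\le Z_m(\delta/m)/Z_m(0)\le e^{\delta^2/\delta_0^2}$. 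Since $Z_m(0)=2^m$ and hence $\is(m,0)=\bin(m,1/2)$, we can rewrite this ratio as an expectation:
\begin{align*}
\frac{Z_m(\delta/m)}{Z_m(0)}
= \frac{1}{2^m}\sum_{x=0}^{m}\binom{m}{x}\exp\!\left(\frac{h(m,x)\delta}{m}\right)
= \E_{X\sim\is(m,0)}\!\left[\exp\!\left(\frac{h(m,X)\delta}{m}\right)\right] \;.
\end{align*}

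For the lower bound I would apply Jensen's inequality: $\E_{X\sim\is(m,0)}[\exp(h(m,X)\delta/m)]\ge\exp\!\big((\delta/m)\,\E_{X\sim\is(m,0)}[h(m,X)]\big)=1$, where the last equality is exactly Claim~\ref{clm:h-center}. For the upper bound I would use that $Y:=h(m,X)$ with $X\sim\is(m,0)$ is mean-zero (Claim~\ref{clm:h-center}) and, by Claim~\ref{clm:ising-Gaussian-exp} applied at parameter $0$, sub-exponential with $\|Y\|_{\psi_1}\le c_1 m$ for a universal constant $c_1>0$. Then Fact~\ref{fact:center-Gaussian-exp} supplies a universal constant $c>0$ with $\E[\exp(\lambda Y)]\le\exp(c^2\lambda^2\|Y\|_{\psi_1}^2)$ as long as $|\lambda|\le 1/(c\|Y\|_{\psi_1})$. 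Taking $\lambda=\delta/m$, the applicability constraint is implied by $\delta\le 1/(c c_1)$, and under it we obtain $\E_{X\sim\is(m,0)}[\exp((\delta/m)h(m,X))]\le\exp(c^2 c_1^2\delta^2)$.

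Finally one chooses $\delta_0$. Setting $\delta_0 := 1/(c c_1)$ makes everything consistent: the bound $\delta\le\delta_0$ is precisely the constraint needed for the sub-exponential MGF estimate, and $c^2 c_1^2\delta^2\le\delta^2/\delta_0^2$, so $Z_m(\delta/m)/Z_m(0)\le e^{\delta^2/\delta_0^2}$, which combined with the Jensen lower bound gives the claimed two-sided inequality. I do not anticipate a real obstacle here; the only mild subtlety is that $\delta_0$ plays two roles — it must be small enough for the $\psi_1$ MGF bound to be valid at $\lambda=\delta/m$ and it governs the target exponent — but both requirements reduce to the same threshold up to constants, so they are compatible. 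Essentially all the work has been front-loaded into Claim~\ref{clm:h-center}, Claim~\ref{clm:ising-Gaussian-exp}, and Fact~\ref{fact:center-Gaussian-exp}.
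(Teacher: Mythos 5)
Your proposal is correct and follows essentially the same route as the paper: rewrite $Z_m(\delta/m)/Z_m(0)$ as $\E_{X\sim\is(m,0)}[\exp(h(m,X)\delta/m)]$, apply Jensen with Claim~\ref{clm:h-center} for one direction, and apply Fact~\ref{fact:center-Gaussian-exp} via the sub-exponential bound of Claim~\ref{clm:ising-Gaussian-exp} for the other, then choose $\delta_0$ to satisfy both the MGF domain constraint and the target exponent simultaneously. Your explicit remark on the dual role of $\delta_0$ is a slight clarification of a point the paper handles tersely, but the argument is identical.
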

\begin{proof}
By definition, we have that
\begin{align*}
\frac{G_{m,x}(\delta/m)}{G_{m,x}(0)}&=\frac{\exp(h(m,x)\delta/m)}{2^{-m}Z_{m}(\delta/m)}=\frac{\exp(h(m,x)\delta/m)}{2^{-m}\sum_{y=0}^{m}\binom{m}{y}\exp(h(m,y)\delta/m)}\\&=\frac{\exp(h(m,x)\delta/m)}{\E_{Y\sim\is(m,0)}[\exp(h(m,Y)\delta/m)]}.
\end{align*}
The upper bound is due to Claim~\ref{clm:h-center} and Jenson's inequality that $$\E_{Y\sim\is(m,0)}[\exp(h(m,Y)\delta/m)]\ge\exp\left(\E_{Y\sim\is(m,0)}[h(m,Y)\delta/m]\right)=1.$$
To prove the lower bound, by Claim~\ref{clm:ising-Gaussian-exp}, we have that $\|h(m,Y)\|_{\psi_1}\le O(m)$. Therefore, by Fact~\ref{fact:center-Gaussian-exp}, there is a universal constant $\delta_0>0$ such that for every $0\le\delta\le\delta_0$, we have that $\E_{Y\sim\is(m,0)}[\exp(h(m,Y)\delta/m)]\le\exp((m^2/\delta_0^2)(\delta^2/m^2))=\exp(\delta^2/\delta_0^2)$.
\end{proof}

We now bound from above the desired $\chi^2$-divergence:

\begin{lemma}\label{lem:ising-chi}
We have that
\begin{align*}
\chi^2(A,\mathrm{Bin}(m,1/2))\le O\left(\delta^2+\frac{\delta k^3\exp(2\delta C^2m)}{C^2m}+\left(\frac{\delta k^3}{C^2m}\right)\cdot\max_{x\in\Z\cap I_{C,m}}\frac{\int_{x}^{x+1}|p(t)|dt}{G_{m,x}(0)}\right).
\end{align*}
\end{lemma}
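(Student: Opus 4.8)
The plan is to follow the template of the proof of Lemma~\ref{lem:bin-chi}, with $F_{m,x}$ replaced by $G_{m,x}$, Claim~\ref{clm:bin-ratio} replaced by Lemma~\ref{lem:ising-moment}, and Lemma~\ref{lem:bin-norm-bound} replaced by Lemma~\ref{lem:ising-norm-bound}. First I would expand $1+\chi^2(A,\mathrm{Bin}(m,1/2))=\sum_{x=0}^{m}A(x)^2/G_{m,x}(0)$, substitute $A(x)=G_{m,x}(\delta/m)+\mathbb{I}[x\in I_{C,m}]\int_{x}^{x+1}p(t)\,dt$, and square, obtaining three contributions: the unperturbed term $\sum_{x=0}^{m}G_{m,x}(\delta/m)^2/G_{m,x}(0)$, the cross term $2\sum_{x\in\Z\cap I_{C,m}}G_{m,x}(\delta/m)\int_{x}^{x+1}p(t)\,dt/G_{m,x}(0)$, and the quadratic term $\sum_{x\in\Z\cap I_{C,m}}\bigl(\int_{x}^{x+1}p(t)\,dt\bigr)^2/G_{m,x}(0)$. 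I would then bound each in turn.

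For the unperturbed term I would pass to partition functions. Since $G_{m,x}(\delta/m)=\binom{m}{x}\exp(h(m,x)\delta/m)/Z_m(\delta/m)$ and $G_{m,x}(0)=2^{-m}\binom{m}{x}$, one gets $\sum_{x=0}^{m}G_{m,x}(\delta/m)^2/G_{m,x}(0)=2^{m}Z_m(2\delta/m)/Z_m(\delta/m)^2=\E_{Y\sim\is(m,0)}[\exp(2h(m,Y)\delta/m)]/\E_{Y\sim\is(m,0)}[\exp(h(m,Y)\delta/m)]^2$, using $\is(m,0)(y)=2^{-m}\binom{m}{y}$. By Claim~\ref{clm:h-center} we have $\E_{Y\sim\is(m,0)}[h(m,Y)]=0$, so Jensen's inequality gives the denominator $\ge 1$; by Claim~\ref{clm:ising-Gaussian-exp} we have $\|h(m,Y)\|_{\psi_1}=O(m)$, so Fact~\ref{fact:center-Gaussian-exp} applies with $\lambda=2\delta/m$ (which is below the admissible radius $1/(C_2\|h(m,Y)\|_{\psi_1})=\Omega(1/m)$ once $\delta$ is sufficiently small) and bounds the numerator by $\exp(O(\delta^2))\le 1+O(\delta^2)$. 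Hence the unperturbed term is $1+O(\delta^2)$.

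For the cross term I would use the upper bound $G_{m,x}(\delta/m)/G_{m,x}(0)\le\exp(h(m,x)\delta/m)$ from Lemma~\ref{lem:ising-moment}, together with the identity $h(m,x)=2(x-m/2)^2-m/2$, so that on $I_{C,m}$ we have $h(m,x)\le 2C^2m^2$ and thus $\exp(h(m,x)\delta/m)\le\exp(2\delta C^2 m)$. Pulling this factor out, the cross term is at most $2\exp(2\delta C^2 m)\sum_{x\in\Z\cap I_{C,m}}\bigl|\int_{x}^{x+1}p(t)\,dt\bigr|\le 2\exp(2\delta C^2 m)\int_{(1/2-C)m}^{(1/2+C)m}|p(t)|\,dt=O\bigl(\delta k^3\exp(2\delta C^2m)/(C^2m)\bigr)$ by Lemma~\ref{lem:ising-norm-bound}. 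For the quadratic term I would factor $\sum_{x\in\Z\cap I_{C,m}}\bigl(\int_{x}^{x+1}p(t)\,dt\bigr)^2/G_{m,x}(0)\le\bigl(\int_{(1/2-C)m}^{(1/2+C)m}|p(t)|\,dt\bigr)\cdot\max_{x\in\Z\cap I_{C,m}}\int_{x}^{x+1}|p(t)|\,dt/G_{m,x}(0)$, then bound the first factor by $O(\delta k^3/(C^2m))$ via Lemma~\ref{lem:ising-norm-bound}. Summing the three bounds and subtracting the leading $1$ gives the claimed inequality.

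The main obstacle is the unperturbed term, i.e.\ showing $\chi^2(\is(m,\delta/m),\mathrm{Bin}(m,1/2))=O(\delta^2)$: this requires the Ising partition-function identity and, crucially, verifying that $2\delta/m$ lies within the valid range of the centered sub-exponential moment generating function bound of Fact~\ref{fact:center-Gaussian-exp} (which follows from $\|h(m,X)\|_{\psi_1}=O(m)$ and $\delta$ small). The remaining two terms are routine bookkeeping entirely parallel to Lemma~\ref{lem:bin-chi}.
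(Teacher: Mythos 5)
Your proposal is correct and follows essentially the same route as the paper: the same three-term decomposition, the same use of the upper bound $G_{m,x}(\delta/m)/G_{m,x}(0)\le\exp(h(m,x)\delta/m)$ from Lemma~\ref{lem:ising-moment} (with the identity $h(m,x)=2(x-m/2)^2-m/2$ giving $\exp(2\delta C^2 m)$ over $I_{C,m}$), and the same appeal to Lemma~\ref{lem:ising-norm-bound} for the $L_1$ mass of $p$. The only cosmetic difference is in the unperturbed term: you evaluate it exactly as $\E[\exp(2h\delta/m)]/\E[\exp(h\delta/m)]^2$ and invoke Jensen and the sub-exponential MGF bound directly, whereas the paper reaches $\E_{X\sim\is(m,0)}[\exp(2h(m,X)\delta/m)]\le\exp(O(\delta^2))$ by squaring the upper bound of Lemma~\ref{lem:ising-moment} (which packages the same Jensen step); both are equivalent and land on $1+O(\delta^2)$.
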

\begin{proof}
We have the following:
\begin{align*}
&\quad1+\chi^2(A,\mathrm{Bin}(m,1/2))=\sum_{x=0}^{m}\frac{A(x)^2}{G_{m,x}(0)}=\sum_{x=0}^{m}\frac{\left(G_{m,x}(\delta/m)+\mathbb{I}[x\in I_{C,m}]\int_{x}^{x+1}p(t)dt\right)^2}{G_{m,x}(0)}\\&=\sum_{x=0}^{m}\frac{G_{m,x}(\delta/m)^2}{G_{m,x}(0)}+2\sum_{x\in\Z\cap I_{C,m}}\frac{G_{m,x}(\delta/m)\int_{x}^{x+1}p(t)dt}{G_{m,x}(0)}+\sum_{x\in\Z\cap I_{C,m}}\frac{\left(\int_{x}^{x+1}p(t)dt\right)^2}{G_{m,x}(0)}.
\end{align*}
For the first term, by Lemma~\ref{lem:ising-moment}, we have that
\begin{align*}
\sum_{x=0}^{m}\frac{G_{m,x}(\delta/m)^2}{G_{m,x}(0)}&\le\sum_{x=0}^{m}G_{m,x}(0)\exp(2h(m,x)\delta/m)=\E_{X\sim\is(m,0)}[\exp(2h(m,X)\delta/m)]\\&\le\exp(O(\delta^2))\le1+O(\delta^2).
\end{align*}
For the second term, by Lemma~\ref{lem:ising-moment} and Lemma~\ref{lem:ising-norm-bound}, we have that
\begin{align*}
&\quad\sum_{x\in\Z\cap I_{C,m}}\frac{G_{m,x}(\delta/m)\int_{x}^{x+1}p(t)dt}{G_{m,x}(0)}\le\sum_{x\in\Z\cap I_{C,m}}\exp(h(m,x)\delta/m)\cdot\left|\int_{x}^{x+1}p(t)dt\right|\\&\le\exp\left((2C^2m-1/2)\delta\right)\sum_{x\in\Z\cap I_{C,m}}\left|\int_{x}^{x+1}p(t)dt\right|\le\exp(2\delta C^2m)\int_{(1/2-C)m}^{(1/2+C)m}|p(t)|dt\\&\le O\left(\frac{\delta k^3\exp(2\delta C^2m)}{C^2m}\right),
\end{align*}
where the second inequality follows from the fact that $h(m,x)=2x^2-2mx+\frac{m(m-1)}{2}$ attains its maximum at $x=(1/2-C)m$ over the interval $I_{C,m}$.
For the third term, by Lemma~\ref{lem:ising-norm-bound}, we have that
\begin{align*}
\sum_{x\in\Z\cap I_{C,m}}\frac{\left(\int_{x}^{x+1}p(t)dt\right)^2}{G_{m,x}(0)}&\le\int_{(1/2-C)m}^{(1/2+C)m}|p(t)|dt\cdot\max_{x\in\Z\cap I_{C,m}}\frac{\int_{x}^{x+1}|p(t)|dt}{G_{m,x}(0)}\\&\le O\left(\frac{\delta k^3}{C^2m}\right)\cdot\max_{x\in\Z\cap I_{C,m}}\frac{\int_{x}^{x+1}|p(t)|dt}{G_{m,x}(0)}.
\end{align*}
Combining the above results together completes the proof.
\end{proof}

We are now ready to prove Proposition~\ref{prop:ising-moment-matching}. We need to pick $C$ appropriately and check the bounds on $k$ needed for $A(x)$ to satisfy the necessary properties.

\begin{proof}[Proof of Proposition~\ref{prop:ising-moment-matching}]
Let $C=\Theta(\sqrt{\log(1/\delta)/m})$. If $k^3\ge C^2m$, we pick $A=\mathrm{Bin}(m,1/2)$ and obtain $\dtv(A,G_{m,\delta/m})\le O(\delta)\le O\left(\frac{\delta k^3}{\log(1/\delta)}\right)$. Thus, we assume that $k^3\le C^2m$.
In this way, to apply Lemma~\ref{lem:bin-norm-bound}, we need $k^2\le C_0C^2m$ for some universal constant $C_0$ sufficiently small, which will be satisfied as long as $\delta\le\exp(-1/C_0^3)$.

We first show that $A(x)$ is indeed a distribution over $[m]\cup\{0\}$. By definition, $A(x)$ is nonnegative outside the interval $I_{C,m}$. For $x\in\Z\cap I_{C,m}$, we apply Fact~\ref{fact:bin-bound-2}, Lemma~\ref{lem:bin-norm-bound} and Lemma~\ref{lem:ising-moment} to obtain
\begin{align*}
A(x)&=G_{m,x}(\delta/m)+\mathbb{I}[x\in I_{C,m}]\int_{x}^{x+1}p(t)dt\\&\ge e^{-O(\delta^2)}\exp(h(m,x)\delta/m)G_{m,x}(0)-|p(t^*)|=\frac{\binom{m}{x}\exp\left(h(m,x)\delta/m\right)}{2^{m}\exp(O(\delta^2))}-|p(t^*)|\\&\ge\sqrt{\frac{m}{8x(m-x)}}2^{mH(x/m)}\cdot\frac{\exp\left((2x^2/m-2x+(m-1)/2)\delta\right)}{2^m\exp(O(\delta^2))}-O\left(\frac{\delta k^{7/2}}{C^3m^{2}}\right)\\&\ge\sqrt{\frac{1}{2m}}\cdot2^{mH(q)}\cdot\frac{\exp\left((2mq^2-2mq+(m-1)/2)\delta\right)}{2^m\exp(O(\delta^2))}-O\left(\frac{\delta k^{7/2}}{C^3m^{2}}\right)\\&\ge\Omega\left(\sqrt{\frac{1}{m}}\right)\cdot2^{m\left(H(q)+2\delta(q^2-q)-1\right)}\cdot\exp\left((m-1)\delta/2\right)-O\left(\frac{\delta k^{7/2}}{C^3m^{2}}\right),
\end{align*}
where we let $q=x/m$ and apply the fact $x(m-x)\le m^2/4$ in the third inequality. Let $f(q)=H(q)+\lambda(q^2-q)-1$, where $\lambda=2\delta$. We have that $f'(q)=\log_2\left(\frac{1-q}{q}\right)+(2q-1)\lambda$ and $f''(q)=2\lambda-\frac{1}{q(1-q)}\le2\lambda-4<0$ as long as $\lambda<2$, which implies that $f(q)$ is strictly concave over $[0,1]$ and attains its maximum at $q=1/2$. Therefore, we have that
\begin{align*}
A(x)&\ge\Omega\left(\sqrt{\frac{1}{m}}\right)\cdot2^{mf(x/m)}\cdot\exp\left((m-1)\delta/2\right)-O\left(\frac{\delta k^{7/2}}{C^3m^{2}}\right)\\&\ge\Omega\left(\sqrt{\frac{1}{m}}\right)\cdot2^{mf(1/2+C)}\cdot\exp\left((m-1)\delta/2\right)-O\left(\frac{\delta k^{7/2}}{C^3m^{2}}\right)\\&\ge\Omega\left(\sqrt{\frac{1}{m}}\right)\cdot2^{m(H(1/2+C)-1)}\cdot2^{(2mC^2-1/2)\delta}-O\left(\frac{\delta k^{7/2}}{C^3m^{2}}\right)\\&\ge\sqrt{\frac{1}{m}}\cdot\exp\left(-O(C^2m)\right)-O\left(\frac{\delta k^{7/2}}{C^3m^2}\right),
\end{align*}
where the last inequality follows from the Taylor expansion of $H(1/2+C)-H(1/2)$ up to second order terms.
Note that $k^3\le C^2m$, by our choice of $C$, where $C=\Theta(\sqrt{\log(1/\delta)/m})$ for some sufficiently small hidden constant in $\Theta$, we have that $\frac{\delta k^{7/2}}{C^3m^{3/2}}\le O(\delta(\log(1/\delta))^{-1/3})$ and $\exp(-O(C^2m))\ge\delta$.
Therefore, we have that
$$A(x)\ge\sqrt{\frac{1}{m}}\cdot\exp\left(-O(C^2m)\right)-O\left(\frac{\delta k^{7/2}}{C^3m^2}\right)\ge0,\forall x\in\Z\cap I_{C,m}.$$

In addition, by Equation~\eqref{eq:ising-moment-matching}, we know that
\begin{align*}
\sum_{x=0}^{m}A(x)&=\sum_{x=0}^{m}\left(G_{m,x}(\delta/\sqrt{m})+\mathbb{I}[x\in I_{C,m}]\int_{x}^{x+1}p(t)dt\right)\\&=\sum_{x=0}^{m}G_{m,x}(\delta/\sqrt{m})+\int_{(1/2-C)m}^{(1/2+C)m}p(t)dt=1,
\end{align*}
which implies that the distribution $A$ is well-defined. Furthermore, by Equation~\eqref{eq:ising-moment-matching}, we can show that $A$ matches the first $k$ moments of $\bin(m,1/2)$ as follows:
\begin{align*}
\E_{X\sim A}[X^i]&=\sum_{x=0}^{m}{A(x)x^i}=\sum_{x=0}^{m}\left(G_{m,x}(\delta/\sqrt{m})+\mathbb{I}[x\in I_{C,m}]\int_{x}^{x+1}p(t)dt\right)x^i\\&=\sum_{x=0}^{m}G_{m,x}(\delta/\sqrt{m})x^i+\sum_{x\in\Z\cap I_{C,m}}x^i\int_{x}^{x+1}p(t)dt\\&=\sum_{x=0}^{m}G_{m,x}(0)x^i=\E_{X\sim\bin(m,1/2)}[X^i].
\end{align*}
From the previous calculations, we have that $A(x)\ge e^{-O(\delta^2)}\exp(h(m,x)\delta/m)G_{m,x}(0)-|p(t^*)|\ge0,\forall x\in\Z\cap I_{C,m}$, which implies that for every $x\in\Z\cap I_{C,m}$,
\begin{align*}
|p(t^*)|\le e^{-O(\delta^2)}\exp(h(m,x)\delta/m)G_{m,x}(0)\le\exp(2\delta C^2m)G_{m,x}(0).
\end{align*}
Therefore, by Lemma~\ref{lem:ising-chi}, we have that
\begin{align*}
\chi^2(A,\mathrm{Bin}(m,1/2))&\le O\left(\delta^2+\frac{\delta k^3\exp(2\delta C^2m)}{C^2m}+\left(\frac{\delta k^3}{C^2m}\right)\cdot\max_{x\in\Z\cap I_{C,m}}\frac{\int_{x}^{x+1}|p(t)|dt}{G_{m,x}(0)}\right)\\&\le O\left(\delta^2+\delta\left(\exp(2\delta C^2m)+\frac{|p(t^*)|}{G_{m,x}(0)}\right)\right)\le O\left(\delta^2+2\delta\exp(2\delta C^2m)\right)\\&\le O\left(\delta^2+\delta(1+O(\delta\log(1/\delta)))\right)=O(\delta),
\end{align*}
where the last inequality follows from the fact $e^x\le1+2x,\forall x\in[0,\ln2]$.

Finally, to bound the total variation distance $\dtv(A,\is(m,\delta/m))$, we apply Lemma~\ref{lem:ising-norm-bound} to obtain
\begin{align*}
\dtv(A,\is(m,\delta/m)))&=\sum_{x\in\Z\cap I_{C,m}}\left|\int_{x}^{x+1}p(t)dt\right|\le\int_{(1/2-C)m}^{(1/2+C)m}|p(t)|dt\\&\le O\left(\frac{\delta k^3}{C^2m}\right)=O\left(\frac{\delta k^3}{\log(1/\delta)}\right).
\end{align*}
\end{proof}

\paragraph{Proof of Lemma~\ref{lem:ising-norm-bound}}
By Theorem~\ref{thm:moment-matching}, we have that
\begin{align*}
|a_i|\le\left(\frac{2i+1}{2Cm}\right)\left(\beta_i+O\left(\frac{i^2}{Cm}\right)\int_{(1/2-C)m}^{(1/2+C)m}|p(t)|dt\right),
\end{align*}
for all $1\le i\le k$, where $\beta_i=\left|\sum_{x=0}^{m}(G_{m,x}(0)-G_{m,x}(\delta/m))P_i\left(\frac{x-m/2}{Cm}\right)\right|$. To get an upper bound for the $L_1$ and $L_\infty$ norms of the polynomial $p$ over $[(1/2-C)m,(1/2+C)m]$, we only need to upper bound the quantity $\beta_i$.
\begin{lemma}\label{lem:ising-beta}
If $k^2\le C_0C^2m$ for some universal constant $C_0>0$ sufficiently small, then $\beta_i\le O\left(\frac{\delta i^{3/2}}{C^2m}\right),\forall1\le i\le k$.
\end{lemma}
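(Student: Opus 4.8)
The plan is to follow the template of the binary-product estimate (Lemma~\ref{lem:bin-beta}), substituting the derivative formulas for $G_{m,x}$ from Claim~\ref{clm:G-derivative} in place of those for $F_{m,x}$. First I would Taylor-expand $G_{m,x}(\delta/m)-G_{m,x}(0)$ to second order in the parameter $\delta/m$: there is $\delta_x\in[0,\delta/m]$ with $G_{m,x}(0)-G_{m,x}(\delta/m)=-\tfrac{\delta}{m}G'_{m,x}(0)-\tfrac{\delta^2}{2m^2}G''_{m,x}(\delta_x)$, so by the triangle inequality $\beta_i\le\beta_i'+\beta_i''$ where $\beta_i'=\tfrac{\delta}{m}\big|\sum_{x=0}^{m}G'_{m,x}(0)P_i\big(\tfrac{x-m/2}{Cm}\big)\big|$ and $\beta_i''=\tfrac{\delta^2}{2m^2}\big|\sum_{x=0}^{m}G''_{m,x}(\delta_x)P_i\big(\tfrac{x-m/2}{Cm}\big)\big|$. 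It then suffices to prove $\beta_i'\le O(\delta i^{3/2}/(C^2m))$ and $\beta_i''=O(\delta^2)$; since $C^2m=\Theta(\log(1/\delta))$, the second bound is dominated by the first for $\delta$ sufficiently small, which gives the claim.

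For the first-order term, Claim~\ref{clm:G-derivative} together with Claim~\ref{clm:h-center} ($\E_{\is(m,0)}[h(m,Y)]=0$) gives $G'_{m,x}(0)=G_{m,x}(0)\,h(m,x)=\bin(m,1/2)(x)\,h(m,x)$, so $\beta_i'=\tfrac{\delta}{m}\big|\E_{X\sim\bin(m,1/2)}[h(m,X)\,P_i(\tfrac{X-m/2}{Cm})]\big|$. Writing $h(m,x)=2(x-m/2)^2-m/2$, the integrand is an even function of $x-m/2$; since $\bin(m,1/2)$ is symmetric about $m/2$, Fact~\ref{fact:legendre-poly}(iv) forces $\beta_i'=0$ for all odd $i$. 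For even $i$, I would expand $P_i\big(\tfrac{x-m/2}{Cm}\big)$ into its even monomials via Fact~\ref{fact:legendre-poly}(v), bound the coefficients using Fact~\ref{fact:bin-bound-1}, and for the $y^{2\ell}$-term use the estimate $\big|\E_{\bin(m,1/2)}[h(m,X)(X-m/2)^{2\ell}]\big|\le O((c\,m\ell)^{\ell+1})$, which follows from $\|X-m/2\|_{\psi_2}=O(\sqrt m)$ and Fact~\ref{fact:Gaussian-decay}. The hypothesis $k^2\le C_0C^2m$ ensures the resulting series in $\ell$ is geometric with ratio bounded by an absolute constant times $C_0<1$, hence is dominated by its $\ell=1$ term, which equals $\tfrac{\delta}{m}\cdot O(i^{3/2}/C^2)$; this yields $\beta_i'\le O(\delta i^{3/2}/(C^2m))$.

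For the second-order term, Claim~\ref{clm:G-derivative} gives $G''_{m,x}(\delta_x)=G_{m,x}(\delta_x)\big((h(m,x)-\mu_x)^2-\sigma_x^2\big)$ with $\mu_x=\E_{\is(m,\delta_x)}[h(m,Y)]$ and $\sigma_x^2=\var_{\is(m,\delta_x)}[h(m,Y)]$. I would split $\sum_x G''_{m,x}(\delta_x)P_i(\cdot)$ into $x\in\Z\cap I_{C,m}$, where Fact~\ref{fact:legendre-poly}(iii) and $\E_{\is(m,\delta_x)}[(h(m,Y)-\mu_x)^2]=\sigma_x^2=O(m^2)$ (using $\|h(m,Y)\|_{\psi_1}=O(m)$ from Claim~\ref{clm:ising-Gaussian-exp} and Fact~\ref{fact:Gaussian-exp-2}) bound the contribution by $O(m^2)$, and $x\notin I_{C,m}$, where the tail bound $|P_i(y)|\le(4|y|)^i$ from Fact~\ref{fact:legendre-poly}(vi), the sub-Gaussian/sub-exponential tails of $\is(m,\delta_x)$ from Claim~\ref{clm:ising-Gaussian-exp}, Facts~\ref{fact:Gaussian-decay} and~\ref{fact:exp-decay}, and Cauchy--Schwarz give a further $O(m^2)$ contribution --- this is exactly the computation in the proof of Lemma~\ref{lem:bin-second-order}. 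Hence $\big|\sum_x G''_{m,x}(\delta_x)P_i(\cdot)\big|=O(m^2)$ and $\beta_i''=\tfrac{\delta^2}{2m^2}\cdot O(m^2)=O(\delta^2)$, completing the argument.

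The main obstacle will be the even-$i$ analysis of $\beta_i'$: one must carefully track how the monomial coefficients of the scaled Legendre polynomial interact with the super-Gaussian-growing moments $\E[h(m,X)(X-m/2)^{2\ell}]$, and verify that $k^2\le C_0C^2m$ is precisely the condition making the $\ell$-series converge geometrically with its $\ell=1$ term dominant --- this is where the factor $i^{3/2}$ arises (one power of $i$ worse than the binary-product bound $O(\delta\sqrt{i/m}/C)$, since here the relevant derivative of $G$ is quadratic rather than linear in $x-m/2$). The second-order term $\beta_i''$ is routine once the sub-Gaussian/sub-exponential structure of $\is(m,\delta_x)$ established in Claim~\ref{clm:ising-Gaussian-exp} is available.
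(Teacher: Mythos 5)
Your proposal follows essentially the same route as the paper: Taylor-expand $G_{m,x}(\delta/m)-G_{m,x}(0)$ to second order to split $\beta_i\le\beta_i'+\beta_i''$, bound the first-order term via the monomial expansion of the scaled Legendre polynomial (Fact~\ref{fact:legendre-poly}(v)), Fact~\ref{fact:bin-bound-1}, and sub-Gaussian moment bounds yielding a geometric series controlled by $k^2\le C_0C^2m$, and bound the second-order term by $O(\delta^2)$ via the split into $I_{C,m}$ and its complement with sub-exponential tails and Cauchy--Schwarz. The only cosmetic difference is that you observe $\beta_i'=0$ for odd $i$, whereas the paper notes the slightly stronger fact that $\beta_i$ itself vanishes for odd $i$ by the symmetry $G_{m,x}=G_{m,m-x}$ and Fact~\ref{fact:legendre-poly}(iv) (so it only needs to treat even $i$ at all); both are valid, and your final step absorbing $O(\delta^2)$ into $O(\delta i^{3/2}/(C^2m))$ via $C^2m=\Theta(\log(1/\delta))$ matches the paper's intent.
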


We assume $k^2\le C_0C^2m$ for some universal constant $C_0>0$ sufficiently small. Note that by our definition $G_{m,x}(\delta)=G_{m,m-x}(\delta),\forall-1/2<\delta<1/2$, by Fact~\ref{fact:legendre-poly} (iv), we have that $\beta_i=\left|\sum_{x=0}^{m}(G_{m,x}(0)-G_{m,x}(\delta/m))P_i\left(\frac{x-m/2}{Cm}\right)\right|=0$ for any odd $i$.
Hence, we only need to bound $\beta_i$ for every even $i$. We apply Taylor's theorem to expand $G_{m,x}(\delta/m)-G_{m,x}(0)$ up to second order terms:
\begin{align}\label{eq:G-taylor}
\beta_i&=\left|\sum_{x=0}^{m}\left(G_{m,x}(0)-G_{m,x}(\delta/\sqrt{m})\right)P_i\left(\frac{x-m/2}{Cm}\right)\right|\notag\\&=\left|\sum_{x=0}^{m}\left(\frac{\delta G'_{m,x}(0)}{\sqrt{m}}+\frac{G''_{m,x}(\delta_x)\delta^2}{2m}\right)P_i\left(\frac{x-m/2}{Cm}\right)\right|\notag\\&\le\underbrace{\left|\sum_{x=0}^{m}\left(\frac{\delta G'_{m,x}(0)}{\sqrt{m}}\right)P_i\left(\frac{x-m/2}{Cm}\right)\right|}_{\beta'_i}+\underbrace{\left|\sum_{x=0}^{m}\left(\frac{G''_{m,x}(\delta_x)\delta^2}{2m}\right)P_i\left(\frac{x-m/2}{Cm}\right)\right|}_{\beta''_i},
\end{align}
where for any $x\in[m]\cup\{0\}$, $\delta_x=\wt{\delta_x}/\sqrt{m}$ for some $\wt{\delta_x}\in[0,\delta]$.

Hence, in order to bound $\beta_i$, it suffices to bound the terms $\beta'_i:=\frac{\delta}{\sqrt{m}}\left|\sum_{x=0}^{m}G'_{m,x}(0)P_i\left(\frac{x-m/2}{Cm}\right)\right|$ and $\beta_i'':=\frac{\delta^2}{2m}\left|\sum_{x=0}^{m}G''_{m,x}(\delta_x)P_i\left(\frac{x-m/2}{Cm}\right)\right|$. This is done in the following lemmas.

\begin{lemma}\label{lem:ising-first-order}
For every even $i$, we have that $\beta'_i\le O\left(\frac{\delta i^{3/2}}{C^2m}\right)$.
\end{lemma}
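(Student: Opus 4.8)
The plan is to reduce $\beta'_i$ to a single binomial expectation and then control that expectation by expanding $P_i$ into monomials. First I would make $G'_{m,x}(0)$ explicit: by Claim~\ref{clm:G-derivative}, together with Claim~\ref{clm:h-center} and the identity $\is(m,0)=\bin(m,1/2)$, we have $G'_{m,x}(0)=G_{m,x}(0)\,h(m,x)=\bin(m,1/2)(x)\,h(m,x)$, where $h(m,x)=2x^2-2mx+\tfrac{m(m-1)}{2}=2(x-m/2)^2-m/2$. (One checks directly that $h(m,x)=\mathcal{K}_2(x;m)$, the degree-$2$ Kravchuk polynomial; in particular $h$ is even under $x\mapsto m-x$ and has mean zero under $\bin(m,1/2)$.) Consequently $\sum_{x=0}^{m}G'_{m,x}(0)\,P_i\!\big(\tfrac{x-m/2}{Cm}\big)=\E_{X\sim\bin(m,1/2)}\big[(2(X-m/2)^2-m/2)\,P_i\!\big(\tfrac{X-m/2}{Cm}\big)\big]$, and $\beta'_i$ is $\delta/m$ times the absolute value of this quantity. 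Thus the lemma reduces to showing $\big|\E_{\bin(m,1/2)}[(2(X-m/2)^2-m/2)\,P_i(\tfrac{X-m/2}{Cm})]\big|\le O(i^{3/2}/C^2)$.

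To prove this, since $i$ is even I would expand $P_i$ via Fact~\ref{fact:legendre-poly}(v), writing $P_i\!\big(\tfrac{x-m/2}{Cm}\big)=\sum_{s=0}^{i/2}e_s(x-m/2)^{2s}$ with $|e_s|\le 2^{-i}\binom{i}{i/2+s}\binom{i+2s}{2s}(Cm)^{-2s}$, which by Fact~\ref{fact:bin-bound-1} (to bound $\binom{i}{i/2+s}\le\binom{i}{i/2}\le 2^i/\sqrt i$) and $\binom{i+2s}{2s}\le(2i)^{2s}/(2s)!$ is at most $O(1/\sqrt i)\cdot(2i)^{2s}/\big((2s)!(Cm)^{2s}\big)$. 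Multiplying by $2(X-m/2)^2-m/2$ and taking expectations term by term, the $s=0$ term contributes $e_0\big(2\E_{\bin(m,1/2)}[(X-m/2)^2]-m/2\big)=e_0\cdot 0=0$, since $\var_{\bin(m,1/2)}(X)=m/4$. For $s\ge 1$ I would bound $\big|2\E[(X-m/2)^{2s+2}]-\tfrac m2\E[(X-m/2)^{2s}]\big|\le 2\E[(X-m/2)^{2s+2}]+\tfrac m2\E[(X-m/2)^{2s}]$ and invoke sub-Gaussianity of the centered binomial ($\|X-m/2\|_{\psi_2}\le O(\sqrt m)$, e.g.\ from Claim~\ref{clm:ising-Gaussian-exp} at $\delta=0$) together with Fact~\ref{fact:Gaussian-decay} to get $\E[(X-m/2)^{2s}]\le 2(O(m))^s s!$ and $\E[(X-m/2)^{2s+2}]\le 2(O(m))^{s+1}(s+1)!$.

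Collecting these estimates, the $s$-th summand ($s\ge 1$) is at most $\tfrac{O(m)}{\sqrt i}\cdot\tfrac{(s+1)!}{(2s)!}\big(\tfrac{O(i^2)}{C^2m}\big)^s$. Since $(s+1)!/(2s)!\le 1$ for $s\ge 1$, and since the standing hypothesis $k^2\le C_0C^2m$ (with $C_0$ a small absolute constant) forces $O(i^2)/(C^2m)\le 1/2$ for all $i\le k$, the sum over $s\ge 1$ is dominated by its $s=1$ term and is therefore $\le \tfrac{O(m)}{\sqrt i}\cdot\tfrac{O(i^2)}{C^2m}=O(i^{3/2}/C^2)$. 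This gives the required bound on the expectation, hence $\beta'_i\le O\!\big(\tfrac{\delta i^{3/2}}{C^2m}\big)$.

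I expect the only genuinely delicate point to be the term-by-term bookkeeping in the second paragraph: one must recognize that the $s=0$ contribution (the constant term of $P_i$ paired against $h(m,X)$) vanishes \emph{exactly} — equivalently, that $\mathcal{K}_2$ is orthogonal to constants under $\bin(m,1/2)$. Without this observation one only obtains $\beta'_i\le O(\delta)$, which is far too weak; with it, the $s=1$ term governs everything and produces the extra $i^{3/2}/(C^2m)$ gain. The remaining steps — monomial expansion of $P_i$, crude binomial-coefficient bounds, sub-Gaussian moment bounds, and the geometric-series summation — are routine and run exactly parallel to the first-order estimate in the binary case (Lemma~\ref{lem:bin-first-order}).
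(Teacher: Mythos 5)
Your proposal is correct and follows essentially the same route as the paper's own proof: both pass via $G'_{m,x}(0)=G_{m,x}(0)h(m,x)$, expand $P_i$ into even powers of $(x-m/2)$ via Fact~\ref{fact:legendre-poly}(v), exploit the vanishing of the constant-term contribution (which the paper implements by dropping the $j=i/2$ term in the Legendre expansion, exactly because $\E_{\bin(m,1/2)}[h(m,X)]=0$), and then close with the sub-Gaussian moment bound $\E[(X-m/2)^{2s}]\le (O(m))^s s!$ and a geometric-series summation under $k^2\lesssim C^2 m$. Your aside that $h(m,x)=\mathcal{K}_2(x;m)$ is a clean way to phrase the needed cancellation, but it is the same cancellation the paper uses via Claim~\ref{clm:h-center}, and the rest of your bookkeeping reproduces the paper's estimates.
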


\begin{proof}
By Claim~\ref{clm:h-center} and Claim~\ref{clm:G-derivative}, we have that $G'_{m,x}(0)=G_{m,x}(0)h(m,x)$. Applying Claim~\ref{clm:h-center} and Fact~\ref{fact:legendre-poly} (v) by using the change of variables yields
\begin{align*}
\beta'_i&=\left|\sum_{x=0}^{m}\left(\frac{\delta G'_{m,x}(0)}{m}\right)P_i\left(\frac{x-m/2}{Cm}\right)\right|=\frac{\delta}{m}\left|\sum_{x=0}^{m}G_{m,x}(0)h(m,x)P_i\left(\frac{x-m/2}{Cm}\right)\right|\\&=\frac{\delta}{m}\left|\sum_{x=0}^{m}G_{m,x}(0)h(m,x)2^{-i}\sum_{j=0}^{i/2}(-1)^j\binom{i}{j}\binom{2i-2j}{i}\left(\frac{x-m/2}{Cm}\right)^{i-2j}\right|\\&=\frac{\delta}{m}\left|\sum_{x=0}^{m}G_{m,x}(0)h(m,x)2^{-i}\sum_{j=0}^{i/2-1}(-1)^j\binom{i}{j}\binom{2i-2j}{i}\left(\frac{x-m/2}{Cm}\right)^{i-2j}\right|\\&\le\frac{\delta}{m}\sum_{x=0}^{m}G_{m,x}(0)\left|h(m,x)\right|2^{-i}\sum_{j=1}^{i/2}\binom{i}{i/2+j}\binom{i+2j}{2j}\left|\frac{x-m/2}{Cm}\right|^{2j}\\&\le\frac{\delta}{m}\sum_{j=1}^{i/2}\sum_{x=0}^{m}G_{m,x}(0)\left|h(m,x)(x-m/2)^{2j}\right|O\left(\frac{1}{\sqrt{i}}\right)\frac{(i+2j)^{2j}}{(2j)!(Cm)^{2j}}.
\end{align*}
Since $\|X-m/2\|_{\psi_2}\le O(\sqrt{m})$ for $X\sim\bin(m,1/2)$, applying Fact~\ref{fact:Gaussian-decay} and Fact~\ref{fact:bin-bound-1} yields
\begin{align*}
\beta_i&\le\frac{\delta}{m}\sum_{j=1}^{i/2}\sum_{x=0}^{m}G_{m,x}(0)\left|h(m,x)(x-m/2)^{2j}\right|O\left(\frac{1}{\sqrt{i}}\right)\left(\frac{(i+2j)^{2j}}{(2j)!(Cm)^{2j}}\right)\\&=\frac{\delta}{m}\sum_{j=1}^{i/2}\sum_{x=0}^{m}G_{m,x}(0)\left|(2x^2-2mx+m(m-1)/2)(x-m/2)^{2j}\right|O\left(\frac{1}{\sqrt{i}}\right)\left(\frac{(i+2j)^{2j}}{(2j)!(Cm)^{2j}}\right)\\&\le\frac{\delta}{m}\sum_{j=1}^{i/2}\sum_{x=0}^{m}G_{m,x}(0)\left(2(x-m/2)^{2j+2}+m(x-m/2)^{2j}/2\right)O\left(\frac{1}{\sqrt{i}}\right)\left(\frac{(i+2j)^{2j}}{(2j)!(Cm)^{2j}}\right)\\&=\frac{\delta}{m}\sum_{j=1}^{i/2}O\left(\frac{1}{\sqrt{i}}\right)\left(\frac{(i+2j)^{2j}\E_{X\sim\bin(m,1/2)}[2(X-m/2)^{2j+2}+m(X-m/2)^{2j}/2]}{(2j)!(Cm)^{2j}}\right)\\&\le O\left(\frac{\delta}{m\sqrt{i}}\right)\sum_{j=1}^{i/2}\frac{(i+2j)^{2j}}{(2j)!(Cm)^{2j}}\left((2j+2)(j!)(O(m))^{j+1}+(j!)(O(m))^{j+1}\right)\\&\le O\left(\frac{\delta}{\sqrt{i}}\right)\sum_{j=1}^{\infty}\left(O\left(\frac{i}{C\sqrt{m}}\right)\right)^{2j}\le O\left(\frac{\delta i^{3/2}}{C^2m}\right).
\end{align*}
\end{proof}

\begin{lemma}\label{lem:ising-second-order}
For every even $i$, we have that $\beta''_i\le O(\delta^2)$.
\end{lemma}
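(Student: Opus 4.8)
The plan is to follow the proof of the analogous binary statement, Lemma~\ref{lem:bin-second-order}, replacing the derivative formulas of Fact~\ref{fact:F-derivative} by those of Claim~\ref{clm:G-derivative}. Write $\mu(\delta')=\E_{Y\sim\is(m,\delta')}[h(m,Y)]$ and $\sigma^2(\delta')=\var_{Y\sim\is(m,\delta')}[h(m,Y)]$. Claim~\ref{clm:G-derivative} gives $G''_{m,x}(\delta_x)=G_{m,x}(\delta_x)\bigl((h(m,x)-\mu(\delta_x))^2-\sigma^2(\delta_x)\bigr)$, where $\delta_x\in[0,\delta/m]$ is the Taylor remainder point, so by the definition of $\beta''_i$ it is enough to show that $\bigl|\sum_{x=0}^{m}G_{m,x}(\delta_x)\bigl((h(m,x)-\mu(\delta_x))^2-\sigma^2(\delta_x)\bigr)P_i\bigl(\tfrac{x-m/2}{Cm}\bigr)\bigr|=O(m^2)$, from which $\beta''_i=O(\delta^2)$ follows. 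I will use throughout the identity $h(m,x)=2(x-m/2)^2-m/2$; the uniform bounds $\mu(\delta')=O(m)$ and $\sigma^2(\delta')=O(m^2)$ for $0\le\delta'\le1/(2m)$, which follow from Claim~\ref{clm:ising-Gaussian-exp} (it makes $h(m,Y)$ sub-exponential with $\psi_1$-norm $O(m)$, and this applies to every $\delta_x$); and the elementary inequality $G_{m,x}(\delta_x)\le G_{m,x}(0)e^{h(m,x)\delta_x}$, obtained from $Z_m(\delta_x)\ge2^m$ (Jensen's inequality together with Claim~\ref{clm:h-center}).

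As in the binary case, I would split the sum at the interval $I_{C,m}$. For $x\in\Z\cap I_{C,m}$ one has $|P_i|\le1$ (Fact~\ref{fact:legendre-poly}(iii)), and $|h(m,x)|\le2C^2m^2$ gives $h(m,x)\delta_x=O(C^2m\delta)=O(\delta\log(1/\delta))=O(1)$, hence $G_{m,x}(\delta_x)=O(G_{m,x}(0))$. Using $|(h(m,x)-\mu(\delta_x))^2-\sigma^2(\delta_x)|\le2h(m,x)^2+2\mu(\delta_x)^2+\sigma^2(\delta_x)\le2h(m,x)^2+O(m^2)$, the interior contribution is at most $O(1)\bigl(\E_{X\sim\is(m,0)}[h(m,X)^2]+m^2\bigr)=O(m^2)$, since $h(m,X)$ is sub-exponential with $\psi_1$-norm $O(m)$ under $\is(m,0)$.

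For $x\in\overline{I_{C,m}}$, i.e.\ $|x-m/2|>Cm$, I would use $|P_i(y)|\le(4|y|)^i$ for $|y|\ge1$ (Fact~\ref{fact:legendre-poly}(vi)). Here $h(m,x)>2C^2m^2-m/2>0$ because $C^2m=\Theta(\log(1/\delta))\gg1$, so $e^{h(m,x)\delta_x}\le e^{h(m,x)\delta/m}$ and therefore $G_{m,x}(\delta_x)\le G_{m,x}(0)e^{h(m,x)\delta/m}=O(G_{m,x}(\delta/m))$ (the normalization factor $\E_{Y\sim\bin(m,1/2)}[e^{h(m,Y)\delta/m}]$ equals $e^{O(\delta^2)}=O(1)$ by the sub-exponential bound of Fact~\ref{fact:center-Gaussian-exp}). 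Since $|(h(m,x)-\mu(\delta_x))^2-\sigma^2(\delta_x)|=O((x-m/2)^4+m^2)$, the tail contribution is at most $O(1)\,\E_{X\sim\is(m,\delta/m)}\bigl[((X-m/2)^4+m^2)\,|4(X-m/2)/(Cm)|^i\bigr]$, which by Cauchy--Schwarz is $O(1)\sqrt{\E[((X-m/2)^4+m^2)^2]}\cdot\sqrt{\E[|4(X-m/2)/(Cm)|^{2i}]}$. The first factor is $O(m^2)$ because $\|X-m/2\|_{\psi_2}=O(\sqrt m)$ forces $\E[(X-m/2)^8]=O(m^4)$; for the second, $\|4(X-m/2)/(Cm)\|_{\psi_2}=O(1/(C\sqrt m))$ gives $\E[|4(X-m/2)/(Cm)|^{2i}]\le(O(i)/(C^2m))^i\le1$, using $i\le k$ and $k^2\le C_0C^2m$ with $C_0$ small enough that $O(i)/(C^2m)\le O(C_0/k)\le1/4$. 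Hence the tail contribution is $O(m^2)$ as well, and adding the two ranges gives the required bound.

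The step I expect to be the main obstacle is the uniform control of the shifted mass $G_{m,x}(\delta_x)$ with its $x$-dependent Taylor point: one needs a two-sided domination, $G_{m,x}(\delta_x)=O(G_{m,x}(0))$ on the interior $I_{C,m}$ and $G_{m,x}(\delta_x)=O(G_{m,x}(\delta/m))$ on the tail, so that the sums turn into genuine (bounded) expectations --- the sign of $h(m,x)-\mu(\delta')$ on $\overline{I_{C,m}}$ is what makes the tail estimate go through --- together with verifying that the sub-Gaussian and sub-exponential constants from Claim~\ref{clm:ising-Gaussian-exp} apply uniformly over all $\delta_x\in[0,\delta/m]$.
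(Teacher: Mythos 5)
Your proof is correct and follows the same strategy as the paper's: start from Claim~\ref{clm:G-derivative}, split at $I_{C,m}$, use Fact~\ref{fact:legendre-poly}(iii) on the interior and Fact~\ref{fact:legendre-poly}(vi) on the tail, apply Cauchy--Schwarz, and invoke the sub-Gaussian/sub-exponential moment bounds from Claim~\ref{clm:ising-Gaussian-exp}. The one place where you do genuinely more work than the paper is in handling the $x$-dependence of the Taylor remainder point $\delta_x$: the paper writes expressions such as $\sum_{x=0}^m G_{m,x}(\delta_x)[\cdots]\le 2\var_{Y\sim\is(m,\delta_x)}[h(m,Y)]$ and $\E_{X\sim\is(m,\delta_x)}[\cdots]$ as if $\delta_x$ were a single fixed parameter, whereas in fact $\delta_x$ varies with $x$ and the sum is not an expectation over any single $\is(m,\cdot)$. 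Your fix --- dominate $G_{m,x}(\delta_x)=O(G_{m,x}(0))$ on the interior via $|h(m,x)\delta_x|=O(\delta\log(1/\delta))=O(1)$ and $Z_m(\delta_x)\ge 2^m$, and $G_{m,x}(\delta_x)=O(G_{m,x}(\delta/m))$ on the tail via $h(m,x)>0$ there together with $Z_m(\delta/m)=2^m e^{O(\delta^2)}$ --- turns each piece into a genuine expectation over a fixed measure ($\is(m,0)$ or $\is(m,\delta/m)$), so the moment bounds apply cleanly. This is a legitimate and slightly more rigorous route to the same $O(m^2)$ bound; the constants lost in the domination are $O(1)$ and do not affect the final $\beta_i''=O(\delta^2)$.
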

\begin{proof}
By Claim~\ref{clm:G-derivative}, we have that
\begin{align*}
\beta''_i&=\left|\sum_{x=0}^{m}\left(\frac{G''_{m,x}(\delta_x)\delta^2}{2m^2}\right)P_i\left(\frac{x-m/2}{Cm}\right)\right|\\&=\frac{\delta^2}{2m^2}\left|\sum_{x=0}^{m}G_{m,x}(\delta_x)\Bigg(\Big(h(m,x)-\mathop{\E}_{Y\sim\is(m,\delta_x)}[h(m,Y)]\Big)^2-\mathop{\var}_{Y\sim\is(m,\delta_x)}[h(m,Y)]\Bigg)P_i\left(\frac{x-m/2}{Cm}\right)\right|.
\end{align*}

We separate the above sum into $x\in\Z\cap I_{C,m}$ and $x\in[m]\cup\{0\}\setminus I_{C,m}$. We are able to use Fact~\ref{fact:legendre-poly} (iii) to bound the sum for $x\in\Z\cap I_{C,m}$, as follows:
\begin{align*}
&\quad\sum_{x\in\Z\cap I_{C,m}}G_{m,x}(\delta_x)\Big|\Big(h(m,x)-\mathop{\E}_{Y\sim\is(m,\delta_x)}[h(m,Y)]\Big)^2-\mathop{\var}_{Y\sim\is(m,\delta_x)}[h(m,Y)]\Big|\left|P_i\left(\frac{x-m/2}{Cm}\right)\right|\\&\le\sum_{x=0}^{m}G_{m,x}(\delta_x)\Big|\Big(h(m,x)-\mathop{\E}_{Y\sim\is(m,\delta_x)}[h(m,Y)]\Big)^2-\mathop{\var}_{Y\sim\is(m,\delta_x)}[h(m,Y)]\Big|\\&\le2\var_{Y\sim\is(m,\delta_x)}[h(m,Y)]\le O(m^2),
\end{align*}
where the last inequality follows from Fact~\ref{fact:exp-decay}, Fact~\ref{fact:Gaussian-exp-2} and Claim~\ref{clm:ising-Gaussian-exp}.

Now we bound the sum over $x\in\overline{I_{C,m}}$, where $\overline{I_{C,m}}=[m]\cup\{0\}\setminus I_{C,m}$. Note that for $X\sim\is(m,\delta_x)$, we have that $\left\|\frac{4(X-m/2)}{Cm}\right\|_{\psi_2}\le O\left(\frac{1}{C\sqrt{m}}\right)$ and $\left\|h(m,X)\right\|_{\psi_1}\le O(m)$. Therefore, applying Fact~\ref{fact:legendre-poly} (vi) yields
\begin{align*}
&\quad\sum_{x\in\overline{I_{C,m}}}G_{m,x}(\delta_x)\Big|\Big(h(m,x)-\mathop{\E}_{Y\sim G_{m,\delta_x}}[h(m,Y)]\Big)^2-\mathop{\var}_{Y\sim\is(m,\delta_x)}[h(m,Y)]\Big|\left|P_i\left(\frac{x-m/2}{Cm}\right)\right|\\&\le\sum_{x=0}^{m}G_{m,x}(\delta_x)\Big|\Big(h(m,x)-\mathop{\E}_{Y\sim\is(m,\delta_x)}[h(m,Y)]\Big)^2-\mathop{\var}_{Y\sim\is(m,\delta_x)}[h(m,Y)]\Big|\left(\frac{4|x-m/2|}{Cm}\right)^i\\&\le\mathop{\E}_{X\sim\is(m,\delta_x)}\left[\Big(h(m,X)-\mathop{\E}_{Y\sim\is(m,\delta_x)}[h(m,Y)]\Big)^2\left(\frac{4|X-m/2|}{Cm}\right)^i\right]\\&\quad+O(m^2)\cdot\mathop{\E}_{X\sim\is(m,\delta_x)}\left[\left(\frac{4|X-m/2|}{Cm}\right)^i\right]\\&\le\sqrt{\mathop{\E}_{X\sim\is(m,\delta_x)}\left[\Big(h(m,X)-\mathop{\E}_{Y\sim\is(m,\delta_x)}[h(m,Y)]\Big)^4\right]\cdot\mathop{\E}_{X\sim\is(m,\delta_x)}\left[\left(\frac{4|X-m/2|}{Cm}\right)^{2i}\right]}\\&\quad+O(m^2)\cdot\mathop{\E}_{X\sim\is(m,\delta_x)}\left[\left(\frac{4|X-m/2|}{Cm}\right)^i\right]\\&\le\sqrt{O(m^4)\left(O\left(\frac{i}{C^2m}\right)\right)^i}+O(m^2)\left(O\left(\frac{1}{C}\sqrt{\frac{i}{m}}\right)\right)^i\le O(m^2),
\end{align*}
where the third inequality follows from Cauchy-Schwarz and the fourth inequality follows from Fact~\ref{fact:Gaussian-decay}, Fact~\ref{fact:exp-decay}, Fact~\ref{fact:Gaussian-exp-2} and Claim~\ref{clm:ising-Gaussian-exp}.
Combine the above results together, we have that 
\begin{align*}
\beta_i''=\frac{\delta^2}{2m^2}\left|\sum_{x=0}^{m}G''_{m,x}(\delta_x)P_i\left(\frac{x-m/2}{Cm}\right)\right|\le\left(\frac{\delta^2}{2m^2}\right)\cdot O(m^2)=O(\delta^2).
\end{align*}
\end{proof}

Now we are ready to prove Lemma~\ref{lem:ising-norm-bound}.
\begin{proof}[Proof of Lemma~\ref{lem:ising-norm-bound}]
By Theorem~\ref{thm:moment-matching}, we have that
\begin{align*}
|p(t^*)|&\le\sum_{i=1}^{k}|a_i|\le\sum_{i=1}^{k}\left(\frac{2i+1}{2Cm}\right)\beta_i+\sum_{i=1}^{k}\left(\frac{2i+1}{2Cm}\right)O\left(\frac{i^2}{Cm}\right)\int_{(1/2-C)m}^{(1/2+C)m}|p(t)|dt\\&\le\sum_{i=1}^{k}\left(\frac{2i+1}{2Cm}\right)\beta_i+|p(t^*)|\sum_{i=1}^{k}\left(\frac{2i+1}{2}\right)O\left(\frac{i^2}{Cm}\right)\\&\le\sum_{i=1}^{k}\left(\frac{2i+1}{2Cm}\right)\beta_i+O\left(\frac{k^4}{Cm}\right)|p(t^*)|,
\end{align*}
where the first inequality follows from Fact~\ref{fact:legendre-poly} (iii). Similarly, by Theorem~\ref{thm:moment-matching}, we have that
\begin{align*}
\int_{(1/2-C)m}^{(1/2+C)m}|p(t)|dt&\le\sum_{i=1}^{k}|a_i|\int_{(1/2-C)m}^{(1/2+C)m}\left|P_i\left(\frac{t-m/2}{Cm}\right)\right|dt\\&\le Cm\sum_{i=1}^{k}\left(\frac{2i+1}{2Cm}\right)\left(\beta_i+O\left(\frac{i^2}{Cm}\right)\int_{(1/2-C)m}^{(1/2+C)m}|p(t)|dt\right)\int_{-1}^{1}|P_i(y)|dy\\&\le\sum_{i=1}^{k}O(\sqrt{i})\beta_i+\sum_{i=1}^{k}O(\sqrt{i})O\left(\frac{i^2}{Cm}\right)\int_{(1/2-C)m}^{(1/2+C)m}|p(t)|dt\\&\le\sum_{i=1}^{k}O(\sqrt{i})\beta_i+O\left(\frac{k^{7/2}}{Cm}\right)\int_{(1/2-C)m}^{(1/2+C)m}|p(t)|dt,
\end{align*}
where the third inequality follows from Fact~\ref{fact:legendre-poly} (vii). By our assumption on $k,C,m,\delta$, we know that $\frac{k^{7/2}}{Cm}\le\frac{k^4}{Cm}\le1/2$. Therefore, by Lemma~\ref{lem:bin-beta}, we have that
\begin{align*}
&|p(t^*)|\le2\sum_{i=1}^{k}\left(\frac{2i+1}{2Cm}\right)\beta_i\le\sum_{i=1}^{k}\left(\frac{2i+1}{2Cm}\right)O\left(\frac{\delta i^{3/2}}{C^2m}\right)\le O\left(\frac{\delta k^{7/2}}{C^3m^2}\right),\\&
\int_{(1/2-C)m}^{(1/2+C)m}|p(x)|dx\le2\sum_{i=1}^{k}O(\sqrt{i})\beta_i\le2\sum_{i=1}^{k}O(\sqrt{i})O\left(\frac{\delta i^{3/2}}{C^2m}\right)\le O\left(\frac{\delta k^3}{C^2m}\right).
\end{align*}
This completes the proof.
\end{proof}

\subsection{Proof of Lemma~\ref{lem:TV-ising-lower-bound}}\label{ssec:TV-ising-lower-bound}
Let $n=|S|$. Recalling that $G_{n,x}(\delta)=\binom{n}{x}\exp\left(h(n,x)\delta\right)/Z_n(\delta)$, where $h(n,x)=2x^2-2nx+\frac{n(n-1)}{2}$ and $Z_n(\delta)=\sum_{x=0}^{n}\binom{n}{x}\exp(h(n,x)\delta)$. By Claim~\ref{clm:ising-Gaussian-exp}, for any $0\le\delta\le\frac{1}{2n}$, we have that $\|h(n,X)\|_{\psi_1}\le O(n)$ and $\|X-n/2\|_{\psi_2}\le O(\sqrt{n})$ for $X\sim\is(n,\delta)$.

Define $f(\bx)=\sum_{i\in S}x_i,\forall \bx\in\{0,1\}^M$. For $\bX\sim U_M$ and $\bY\sim Q_M^{S,\frac{\delta}{m}}$, by the data processing inequality, we have that
\begin{align*}
\dtv\Big(U_M,Q_M^{S,\frac{\delta}{m}}\Big)\ge\dtv(f(\bX),f(\bY))=\dtv(\is(n,0),\is(n,\delta/m)).
\end{align*}
By the mean value theorem, we have that
\begin{align*}
\dtv(\is(n,0),\is(n,\delta/m))=\frac{1}{2}\sum_{x=0}^{n}\left|G_{n,x}(\delta/m)-G_{n,x}(0)\right|=\frac{1}{2}\sum_{x=0}^{n}\left|G'_{n,x}(0)(\delta/m)+\frac{G''_{n,x}(\delta_x)\delta^2}{2m^2}\right|,
\end{align*}
where for any $x\in[m]\cup\{0\}$, $\delta_x=\wt{\delta_x}/m$ for some $\wt{\delta_x}\in(0,\delta)$.

By elementary calculation, we have that
\begin{align*}
\E_{X\sim\is(n,0)}[h(n,X)^2]&=4\E_{X\sim\bin(n,1/2)}\left[\left((X-n/2)^2-\E_{X\sim\bin(n,1/2)}[(X-n/2)^2]\right)^2\right]\\&=4\left(\E_{X\sim\bin(n,1/2)}\left[(X-n/2)^4\right]-\E_{X\sim\bin(n,1/2)}\left[(X-n/2)^2\right]^2\right)\\&=\left(\frac{3n^2}{4}-\frac{n}{2}\right)-\frac{n^2}{4}=\frac{n^2-n}{2}.
\end{align*}
By Fact~\ref{fact:exp-decay}, we have that $\E_{X\sim\is(n,0)}[h(n,X)^4]\le O(n^4)$. Therefore, by Fact~\ref{fact:E-lower}, we have that
\begin{align*}
\sum_{x=0}^{n}|G'_{n,x}(0)|=\sum_{x=0}^{n}G_{n,x}(0)|h(n,x)|=\E_{X\sim\bin(n,1/2)}[|h(n,X)|]\ge\frac{\E_{X\sim\bin(n,1/2)}[h(n,X)^2]^{3/2}}{\E_{X\sim\bin(n,1/2)}[h(n,X)^4]^{1/2}}\ge\Omega(n).
\end{align*}
In addition, by Fact~\ref{fact:exp-decay}, we have that
\begin{align*}
\sum_{x=0}^{n}|G''_{n,x}(\delta_x)|=&\sum_{x=0}^{n}G_{n,x}(\delta_x)\left|(h(n,x)-\E_{Y\sim\is(n,\delta_x)}[h(n,Y)])^2-\var_{Y\sim\is(n,\delta_x)}[h(n,Y)]\right|\\&\le2\var_{X\sim\is(n,\delta_x)}[h(n,X)]\le O\left(n^2\right).
\end{align*}
Therefore, we have that
\begin{align*}
\dtv\Big(U_M,Q_M^{S,\frac{\delta}{m}}\Big)\ge\frac{1}{2}\sum_{x=0}^{n}\left|G'_{n,x}(0)(\delta/m)+\frac{G''_{n,x}(\delta_x)\delta^2}{2m^2}\right|\ge\Omega\left(\frac{\delta n}{m}\right)-O\left(\frac{n^2\delta^2}{m^2}\right)\ge\Omega(\delta)-O(\delta^2)=\Omega(\delta).
\end{align*}

\end{document}